\documentclass[11pt,letter]{article}

\usepackage{amssymb}
\usepackage{amsmath}
\usepackage{graphicx}
\usepackage{color}
\usepackage{bigints}
\usepackage{array}
\usepackage[normalem]{ulem}
\usepackage{xspace}
\usepackage{enumerate}
\usepackage{hyperref}
\usepackage{numprint}
\usepackage{subcaption}
\usepackage{changepage}
\usepackage{enumerate}
\usepackage{amsthm}
\usepackage{mathrsfs}
\usepackage{algorithm}
\usepackage{algorithmicx}
\usepackage{algpseudocode}
\usepackage{cleveref}
\setlength{\oddsidemargin}{0in}
\setlength{\evensidemargin}{0in}
\setlength{\topmargin}{-0.2in}
\setlength{\headheight}{0cm}
\setlength{\headsep}{0cm}
\setlength{\textheight}{9in}
\setlength{\textwidth}{6.5in}
\setlength{\parskip}{0pt}

\newcommand{\cancel}[1] {}
\newcommand{\+}[1]{\mathcal{#1}}
\newcommand{\reachVI}[2]{\mathcal{W}^{v_{#1}}_{w_{#2}}}
\newcommand{\reachWJ}[2]{\mathcal{W}^{w_{#1}}_{v_{#2}}}
\newcommand{\appReachVI}[3]{\mathcal{A}^{#1}_{#2,w_{#3}}}
\newcommand{\appReachWJ}[3]{\mathcal{A}^{#1}_{#2,v_{#3}}}
\newcommand{\appReachVIArray}[2]{\mathcal{A}^{#1}_{#2}}
\newcommand{\appReachWJArray}[2]{\mathcal{A}^{#1}_{#2}}
\newcommand{\suf}{\zeta^{\text{suf}}_k}
\newcommand{\pre}{\zeta^{\text{pre}}_k}
\newcommand{\Msuf}{\+M^{\text{suf}}_k}
\newcommand{\Mpre}{\+M^{\text{pre}}_k}

\newcommand{\HQ}[1]{{\color{red} Haoqiang: #1}}

\newtheorem{theorem}{\text{Theorem}}
\newtheorem{lemma}{\text{Lemma}}

\graphicspath{{./}{figures/}}

\begin{document}
	
	\begin{titlepage}
		
		\title{Constant Approximation of Fr\'echet Distance in Strongly Subquadratic Time\footnote{Research supported by Research Grants Council, Hong Kong, China (project no. 16208923).}}

		\author{Siu-Wing Cheng\footnote{Department~of~Computer~Science~and~Engineering,
				HKUST, Hong Kong. Email: {\tt scheng@cse.ust.hk}, {\tt haoqiang.huang@connect.ust.hk}}
			\and 
			Haoqiang Huang\footnotemark[2] \and Shuo Zhang\footnote{Gaoling~School~of~Artificial~Intelligence, 
				Renmin~University~of~China, Beijing, China. Email: {\tt zhangshuo1422@ruc.edu.cn} }}

		\date{}

		\maketitle
		
		\begin{abstract}
			Let $\tau$ and $\sigma$ be two polygonal curves in $\mathbb{R}^d$ for any fixed $d$. Suppose that $\tau$ and $\sigma$ have $n$ and $m$ vertices, respectively, and $m\le n$. While conditional lower bounds prevent approximating the Fr\'echet distance between $\tau$ and $\sigma$ within a factor of 3 in strongly subquadratic time, the current best approximation algorithm attains a ratio of $n^c$ in strongly subquadratic time, for some constant $c\in(0,1)$. We present a randomized algorithm with running time $O(nm^{0.99}\log(n/\varepsilon))$ that approximates the Fr\'echet distance within a factor of $7+\varepsilon$, with a success probability at least $1-1/n^6$. We also adapt our techniques to develop a randomized algorithm that approximates the \emph{discrete} Fr\'echet distance within a factor of $7+\varepsilon$ in strongly subquadratic time. They are the first algorithms to approximate the Fr\'echet distance and the discrete Fr\'echet distance  within constant factors in strongly subquadratic time.
		\end{abstract}
		
		\thispagestyle{empty}
	\end{titlepage}

	\section{Introduction}
	\emph{Fr\'echet distance} is a popular and natural distance metric to measure the similarity between two curves~\cite{Alt2009}. It finds many applications in spatio-temporal data mining~\cite{10.1145/3423334.3431451,Buchin2008DetectingCP}.

	% A \emph{Fr\'{e}chet matching} from $\tau$ to $\sigma$ is a matching that realizes $d_F(\tau,\sigma)$.
	
	Consider a polygonal curve $\tau$ in $\mathbb{R}^d$, with vertices $(v_1,v_2,\ldots,v_n)$. A \emph{parameterization} of $\tau$ is a continuous function $\rho : [0,1] \rightarrow \tau$ such that $\rho(0) = v_1$, $\rho(1) = v_n$, and for all $t, t' \in [0,1]$, $\rho[t]$ is not behind $\rho[t']$ from $v_1$ to $v_n$ along $\tau$ if $t\le t'$. Let $\varrho$ be a parameterization for another polygonal curve $\sigma$. The pair $(\rho, \varrho)$ define a \emph{matching} between $\tau$ and $\sigma$ such that $\rho(t)$ is matched to $\varrho(t)$ for all $t\in [0,1]$.
	%A \emph{matching} $\+M$ from $\tau$ to another polygonal curve $\sigma$ is a pair of parameterizations $(\rho,\varrho)$ for $\tau$ and $\sigma$, respectively, and 
	The distance realized by $\+M$ is $d_{\+M}(\tau,\sigma) = \max_{t \in [0,1]} d(\rho(t),\varrho(t))$, where $d(\rho(t),\varrho(t))$ is the Euclidean distance between $\rho(t)$ and $\varrho(t)$. The Fr\'{e}chet distance of $\sigma$ and $\tau$ is $d_F(\sigma,\tau) = \inf_{\+M} d_{\+M}(\tau,\sigma)$. If we restrict $\+M$ to match every vertex of $\sigma$ to a vertex of $\tau$, and vice versa, the resulting $\inf_{\+M}d_{\+M}(\tau, \sigma)$ is the \emph{discrete Fr\'echet distance}, which we denote by $\tilde{d}_F(\tau,\sigma)$. The research on computing and approximating the Fr\'echet and discrete Fr\'echet distances continues for many years. %Refer to Table~\ref{tab:previous works} for the previous results. %{\color{red} Mention the notion of traversal for the discrete Fr\'ecet distance.}
	
	\vspace{2pt}
	
	\noindent\textbf{Exact algorithms.} Suppose that $\tau$ and $\sigma$ have $n$ and $m$ vertices, respectively, and $m\le n$. Since the first algorithm~\cite{Godau1991ANM} for computing $d_F(\tau, \sigma)$ in $O((n^2m+nm^2)\log(mn))$ time in 1991, there is ongoing research effort towards understanding the computational complexities of $d_F(\tau, \sigma)$ and $\tilde{d}_F(\tau, \sigma)$. The first algorithm~\cite{eiter1994computing} for computing $\tilde{d}_F(\tau, \sigma)$ runs in $O(nm)$ time using dynamic programming. One year later, the running time for computing $d_F(\tau, \sigma)$ was improved to $O(nm\log(mn))$~\cite{AG1995}. For roughly two decades, these remain to be the state-of-the-art. 
	
	In 2013, an asymptotic faster algorithm for computing $\tilde{d}_F(\tau,\sigma)$ in $\mathbb{R}^2$ was eventually presented~\cite{AAKS2013}. It runs in $O(mn\log\log n/\log n)$ time on a word RAM machine of $\Theta(\log n)$ word size with constant-time table lookup capability. Then a faster \emph{randomized} algorithm~\cite{buchin2014four} for computing $d_F(\tau, \sigma)$ in $\mathbb{R}^2$ was shown. It runs in $O(mn\sqrt{\log n}(\log\log n)^{3/2})$ expected time on a pointer machine and in $O(mn(\log\log n)^2)$ expected time on a word RAM machine of $\Theta(\log n)$ word size. The quadratic time barrier was broken recently.
	In $\mathbb{R}^1$, $d_F(\tau, \sigma)$ can be computed in $O(m^2\log^2n+n\log n)$ time~\cite{blank2024faster}. It runs in subquadratic time when $m=o(n)$. In $\mathbb{R}^d$ for $d\ge2$, $d_F(\tau, \sigma)$ can be computed in $O(mn(\log\log n)^{2+c}\log n/\log^{1+c}m)$ time for some fixed $c\in(0,1)$~\cite{Cheng2024FrchetDI}. This is the first $o(mn)$-time algorithm when $m=\Omega(n^\varepsilon)$ for some fixed $\varepsilon\in (0,1)$.%i.e., in the imbalanced case. 

	%For a fixed $c\in (0,1)$,~\cite{Cheng2024FrchetDI} showed an $O(mn(\log\log n)^{2+c}\log n/\log^{1+c}m)$-time algorithm to compute $d_F(\tau, \sigma)$ in $\mathbb{R}^d$, which is the first subquadratic-time algorithm when $m=\Omega(n)$.
	
	It is unlikely that these running times can be improved significantly due to the famous negative result~\cite{bringmann2014walking}: for any fixed $c\in (0,1)$ and $d\ge 2$, an $O((mn)^{1-c})$-time algorithm for computing $d_F(\tau, \sigma)$ or $\tilde{d}_F(\tau, \sigma)$ in $\mathbb{R}^d$ would refute the widely accepted \emph{Strong Exponential Time Hypothesis} (SETH). Abboud~and~Bringmann~\cite{abboud_et_al:LIPIcs.ICALP.2018.8} further refined the result on $\tilde{d}_F(\tau, \sigma)$. They showed that $\tilde{d}_F(\tau, \sigma)$ is unlikely to be computed in $O(n^2/\log^{7+c}n)$ time for any $c>0$ assuming $m=n$. %Follow the convention, we call a running time of $O((mn)^{1-\varepsilon})$ \emph{strongly subquadratic}.
	\vspace{2pt}
	
	\noindent\textbf{Approximation algorithms.} Since the revelation of conditional lower bounds above, several studies focused on improving the running time by allowing approximation. For any $\alpha\in [1,n]$, there is an $O(\alpha)$-approximation of $\tilde{d}_F(\tau, \sigma)$ in $O(n\log n+n^2/\alpha)$ time~\cite{bringmann2016approximability} assuming that $m=n$.
	%an $O(\alpha)$-approximation algorithm~\cite{bringmann2016approximability} for computing $\tilde{d}_F(\tau, \sigma)$ in $O(n\log n+n^2/\alpha)$ time was shown assuming that $m=n$. 
	Later, the running time was improved to $O(n\log n+n^2/\alpha^2)$~\cite{chan2018improved} for any $\alpha\in [1,\sqrt{n/\log n}]$. 
	
	Similar trade-off results have been obtained for Fr\'echet distance as well. Assuming that $m=n$, there is an $O(\alpha)$-approximation algorithm for computing $d_F(\tau, \sigma)$ in $O((n^3/\alpha^2)\log^3n)$ time for any $\alpha\in [\sqrt{n},n]$~\cite{colombe2021approximating}. The result was improved to an $O(\alpha)$-approximation algorithm for computing $d_F(\tau, \sigma)$ in $O((n+mn/\alpha)\log^3n)$ time for any $\alpha\in [1,n]$~\cite{van2023subquadratic}. The running time was further improved to $O((n+mn/\alpha)\log^2n)$~\cite{vanderhorst_et_al:LIPIcs.SoCG.2024.63}. However, for any fixed $c\in (0,1)$, all these known approximation schemes can only achieve approximation ratios of $n^c$ in $O\left((mn)^{1-c}\right)$ time for both Fr\'echet and discrete Fr\'echet distances. Refer to Table~\ref{tab:previous works} for a summary. There are better results for some restricted classes of input curves~\cite{AKW2004,aronov2006frechet,driemel2012approximating,bringmann2015improved,gudmundsson2018fast}.
	
	For (conditional) lower bounds, Bringmann~\cite{bringmann2014walking} proved that no strongly subquadratic algorithm can approximate $d_F(\tau, \sigma)$ within a factor of 1.001 unless SETH fails. As for $\tilde{d}_F(\tau, \sigma)$, there is no strongly subquadratic approximation algorithm of ratio less than 1.399 even in one dimension unless SETH fails~\cite{bringmann2016approximability}. Later, it was shown that no strongly subquadratic algorithm can approximate $d_F(\tau, \sigma)$ or $\tilde{d}_F(\tau, \sigma)$ within a factor of 3 even in one dimension unless SETH fails~\cite{buchin2019seth}. %{\color{red} create a table to show the previous results.}

	\begin{table}[h!]
		\centering
		\resizebox{\textwidth}{!}{ 
			\begin{tabular}{c|c|c|c}
				Setting & Solution & Running time &Reference\\ \hline
				$d_F$ & Exact &  $O((n^2m+nm^2)\log(mn))$ & \cite{Godau1991ANM}\\
				$\tilde{d}_F$ & Exact& $O(nm)$ & \cite{eiter1994computing}\\
				$d_F$ & Exact& $O(nm\log(mn))$ & \cite{AG1995}\\
				$\tilde{d}_F$ in $\mathbb{R}^2$& Exact & $O(mn\log\log n/\log n)$ & \cite{AAKS2013}\\
				$d_F$ in $\mathbb{R}^2$, pointer machine & Exact& $O(mn\sqrt{\log n}(\log\log n)^{3/2})$ expected & \cite{buchin2014four}\\ 
				$d_F$ in $\mathbb{R}^2$, wordRAM machine& Exact& $O(mn(\log\log n)^2)$ expected & \cite{buchin2014four}\\
				$d_F$ in $\mathbb{R}$ & Exact & $O(m^2\log^2n+n\log n)$ & \cite{blank2024faster}\\
				$d_F$ & Exact & $O(mn(\log\log n)^{2+\mu}\log n/\log^{1+\mu}m)$ expected & \cite{Cheng2024FrchetDI}\\ 
				$\tilde{d}_F$, $m=n$ & $O(\alpha)$-approx. & $O(n\log n+n^2/\alpha)$ & \cite{bringmann2016approximability}\\
				$\tilde{d}_F$, $m=n$ & $O(\alpha)$-approx.& $O(n\log n+n^2/\alpha^2)$ & \cite{chan2018improved} \\
				$d_F$, $m=n$ & $O(\alpha)$-approx.& $O((n^3/\alpha^2)\log^3n)$ & \cite{colombe2021approximating}\\
				$d_F$ &$O(\alpha)$-approx.&  $O((n+mn/\alpha)\log^3n)$ & \cite{van2023subquadratic}\\
				$d_F$ & $O(\alpha)$-approx.&  $O((n+mn/\alpha)\log^2n)$ & \cite{vanderhorst_et_al:LIPIcs.SoCG.2024.63}\\
				$d_F$& $(7+\varepsilon)$-approx. & $O(nm^{0.99}\log n)$ &Theorem \ref{thm:approx_Frechet}\\
				$\tilde{d}_F$ & $(7+\varepsilon)$-approx. & $O(nm^{0.99}\log n)$ & Theorem  \ref{thm:approx_Dis_Frechet}
				
			\end{tabular}
		}
		\caption{Previous results}
		\label{tab:previous works}
	\end{table}
		
    \vspace{2pt}
	
    \noindent\textbf{Our results.} Our main result is a \emph{randomized} $(7+\varepsilon)$-approximation decision procedure for the Fr\'echet distance running in strongly subquadratic time (Theorem~\ref{thm:Frechet}). Specifically, given two polygonal curves $\tau$, $\sigma$ and a fixed value $\delta$, for any $\varepsilon\in (0,1)$, if our decision procedure returns yes, then $d_F(\tau, \sigma)\le (7+\varepsilon)\delta$; if it returns no, then $d_F(\tau, \sigma)>\delta$ with probability at least $1-n^{-7}$. The technical ideas can be adapted to get a \emph{randomized} strongly subquadratic $(7+\varepsilon)$-approximation decision procedure for the discrete Fr\'echet distance as well (Theorem~\ref{thm:Dis Frechet}). 
	
	%{\color{red} We need to check whether the method for turning the decision procedure into an optimization procedure works for the discrete Fr\'echet as well. Need to check whether it works for the randomized decision procedure.}
	We follow the approaches in~\cite{colombe2021approximating}~and~\cite{bringmann2016approximability} to turn our decision procedures into $(7+\varepsilon)$-approximation algorithms for computing $d_F(\tau, \sigma)$ and $\tilde{d}_F(\tau, \sigma)$, respectively. Since these approaches only introduce extra polylog factors in the running times, the algorithms still run in strongly subquadratic time.
	%Colombe~and~Fox~\cite{colombe2021approximating}[Theorem 11] showed that for any $\varepsilon\in (0,1)$, assuming $m\le n$, if there is an $\alpha$-approximation decision procedure that runs in $T(n,\alpha)$ time, one can use it to approximate $d_F(\tau, \sigma)$ within a factor of $(1+\varepsilon)\alpha$-approximation in $O(T(n,\alpha)\cdot \log (n/\varepsilon))$ time. 
	%Hence, our decision procedures imply randomized  
	They are the first strongly subquadratic algorithms that can achieve constant approximation ratios, which improves the previous ratio of $n^c$ significantly. 
	%We will focus on presenting the approximation algorithm for Fr\'echet distance to show the conceptual ideas. The part about the discrete Fr\'echet distance will be deferred to Appendix~\ref{sec:discrete}. %The presentation of how to use these ideas for the discrete Fr\'echet distance will be deferred to the appendix. %as it is similar to the counterpart for the Fr\'echet distance.
	
	\section{Background}

	Given a polygonal curve $\tau=(v_1, v_2,\ldots, v_n)$ in $\mathbb{R}^d$, let $|\tau|=n$ be its size. Take two points $x,y\in \tau$. We say $x\le_{\tau} y$ if $x$ is not behind $y$ along $\tau$, and we use $\tau[x, y]$ to denote the subcurve of $\tau$ between $x$ and $y$. We use $xy$ to denote an oriented line segment from $x$ to $y$. An array $\+S$ of $n-1$ items is \emph{induced} by $\tau$ if $\+S[i]$ is either a line segment on the edge $v_iv_{i+1}$ or empty for all $i\in [n-1]$. 
	We will also denote $\+S[i]$ by $\+S_{v_i}$. A point $x\in v_iv_{i+1}$ is \emph{covered} by $\+S$ if $x\in \+S_{v_i}$.
	For two points $p, q\in \mathbb{R}^d$, let $d(p,q)$ be the Euclidean distance between $p$ and $q$. For any value $\delta$, we use $\+B(p, \delta)$ to denote the ball of radius $\delta$ centered at $p$.  We present basic knowledge of the Fr\'echet distance in this section.
	
	\vspace{4pt}
	%{\color{red} Use $\+{DA}$ to represent the array}
	%\subsection{Fr\'echet distance}
 	\noindent{\textbf{Reachability propagation}.} The key concept in determining whether $d_F(\tau, \sigma)\le\delta$ is the \emph{reachability interval}. Take points $x, y\in \tau$ and points $p, q\in \sigma$. For any value $r\ge 0$, the pair $(y,q)$ is \emph{$r$-reachable} from $(x,p)$ if $x\le_\tau y$, $p\le_\sigma q$, and $d_F(\tau[x,y], \sigma[p,q])\le r$. When a pair $(x, p)$ is $r$-reachable from $(v_1, w_1)$, we call $(x,p)$ a \emph{$r$-reachable pair}. All points on the edge $w_jw_{j+1}$ that can form $r$-reachable pairs with $x$ constitute the reachability interval of $x$ on $w_jw_{j+1}$ with respect to $r$. %We will omit $\delta$ when it is clear in the context. 
 	It is possible that $x$ may not have a reachability interval on some edge of $\sigma$. The reachability intervals of a point $p$ of $\sigma$ can be defined analogously. 
	
	Alt~and~Godau~\cite{AG1995} introduced the free space diagram for the visualization of reachability information. See Figure.~\ref{fig:FS} for an illustration. Specifically, the free space diagram of $\tau$ and $\sigma$ with respect to $r$ is a parameteric space spanned by $\tau$ and $\sigma$ such that each point $(x,p)$ in the diagram corresponds to a pair of points with $x\in \tau$ and $p\in \sigma$. The point $(x,p)$ locates inside the free space with respect to $r$ if and only if $d(x, p)\le r$. Call a path inside the free space diagram bi-monotone if it moves either upwards or to the right. Given two pairs $(x,p)$ and $(y,q)$, $(y,q)$ is $r$-reachable from $(x, p)$ if and only if there is a bi-monotone path from $(x,p)$ to $(y,q)$ inside the free space.

	The planarity of free space diagram provides us with the following useful lemma.

	\begin{lemma}\label{lem:planarity}
		Take $x_1, x_2, y_1, y_2\in \tau$ and $p, q\in \sigma$. Suppose that $x_1\le_\tau x_2\le_\tau y_2\le_\tau y_1$, and $p\le_\sigma q$. If $(y_1, q)$ is $r$-reachable from $(x_1, p)$, and $(y_2, q)$ is $r$-reachable from $(x_2, p)$, then $(y_2,q)$ is $r$-reachable from $(x_1,p)$ as well.
	\end{lemma}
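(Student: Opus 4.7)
The plan is to work entirely inside the free space diagram of $\tau$ and $\sigma$ with respect to $r$, using the characterization stated just before the lemma: $(u,v)$ is $r$-reachable from $(u',v')$ if and only if there is a bi-monotone path from $(u',v')$ to $(u,v)$ inside the free space. The hypotheses supply two such paths, $\pi_1$ from $(x_1,p)$ to $(y_1,q)$ and $\pi_2$ from $(x_2,p)$ to $(y_2,q)$, and I need to manufacture one from $(x_1,p)$ to $(y_2,q)$.

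First I would observe that both $\pi_1$ and $\pi_2$ lie in the closed axis-aligned rectangle $R'=\tau[x_1,y_1]\times \sigma[p,q]$. This is because a bi-monotone path, starting and ending at prescribed points, is confined to the minimum axis-aligned rectangle spanned by its endpoints; the containments $x_1\le x_2\le y_2\le y_1$ and $p\le q$ place the endpoints of $\pi_2$ inside $R'$. Inside $R'$, $\pi_1$ connects the bottom-left corner to the top-right corner, while $\pi_2$ goes from a point on the bottom edge to a point on the top edge, starting weakly to the right of $\pi_1$'s start and ending weakly to the left of $\pi_1$'s end.

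The one non-routine step is to argue that $\pi_1$ and $\pi_2$ must intersect. The idea is a Jordan-curve argument: extend $\pi_1$ into a bi-monotone curve across the whole plane by attaching a ray going down-and-left from $(x_1,p)$ and a ray going up-and-right from $(y_1,q)$. This extended curve is simple and separates the plane into an upper-left open region $U$ and a lower-right open region $L$. Using $x_2\ge x_1$ and the monotonicity of $\pi_1$ in the $\sigma$-direction, the point $(x_2,p)$ lies in $L\cup \pi_1$; symmetrically, $y_2\le y_1$ forces $(y_2,q)\in U\cup \pi_1$. Since $\pi_2$ is connected and has one endpoint in $\overline{L}$ and the other in $\overline{U}$, continuity forces $\pi_2$ to meet $\pi_1$ at some point $(z,s)$. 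Degenerate cases in which $\pi_1$ runs horizontally along $\sigma$-coordinate $p$ or $q$ only strengthen the conclusion, since then one of the endpoints of $\pi_2$ already lies on $\pi_1$.

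Once such a crossing point $(z,s)$ is located, the proof finishes by concatenation: follow $\pi_1$ from $(x_1,p)$ to $(z,s)$, then follow $\pi_2$ from $(z,s)$ to $(y_2,q)$. Each piece is a sub-arc of a bi-monotone path, hence bi-monotone, and bi-monotonicity is preserved when two such arcs are joined at a common point. The concatenation is thus a bi-monotone path from $(x_1,p)$ to $(y_2,q)$ lying entirely in the free space, which by the characterization above witnesses that $(y_2,q)$ is $r$-reachable from $(x_1,p)$. The main obstacle is exclusively the planarity/crossing argument; everything else is bookkeeping.
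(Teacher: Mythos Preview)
Your proposal is correct and follows essentially the same approach as the paper: both argue that the two bi-monotone paths in the free space diagram must intersect, and then concatenate the prefix of $\pi_1$ with the suffix of $\pi_2$ at the crossing point to obtain the desired path. The paper's proof is a terse two-line version of exactly this idea (it simply asserts the intersection and the resulting concatenation, referring to a figure), whereas you have spelled out the Jordan-curve separation argument that justifies the crossing.
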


	\begin{proof}
		As illustrated in Figure.~\ref{fig:planarity}. The bi-monotone path between $(x_1, p)$ and $(y_1,q)$ intersects the bi-monotone path between $(x_2, p)$ and $(y_2, q)$. It implies that there is a bi-monotone path from $(x_1, p)$ to $(y_2,q)$ as well. Hence, $(y_2,q)$ is $r$-reachable from $(x_1,p)$.
	\end{proof}

	\begin{figure}
		\centering
		\includegraphics[scale=1]{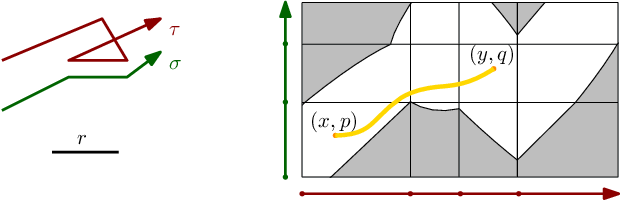}
		\caption{The free space diagram with respect to $r$. The free space is the region in white. The pair $(y,q)$ is $r$-reachable from $(x,p)$, which implies $d_F(\tau[x,y], \sigma[p,q])\le r$.}\label{fig:FS}
	\end{figure}
	
	\begin{figure}
		\centering
		\includegraphics[scale=1.3]{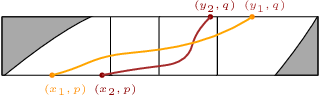}
		\caption{Illustration of proof of Lemma~\ref{lem:planarity}}\label{fig:planarity}
	\end{figure}
	Fix $r=\delta$. Any vertex $v_i$ has at most one reachability interval on every edge of $\sigma$. So does $w_j$. All reachability intervals of $v_i$ can be organized into an array $\+R$ induced by $\sigma$ such that $\+R_{w_j}$ is the reachability interval of $v_i$ on $w_jw_{j+1}$. If $v_i$ does not have a reachability interval on $w_jw_{j+1}$, $\+R_{w_j}$ is empty. We can also organize all reachability intervals of $w_j$ into an array $\+R'$ induced by $\tau$. %in the same way.
	%Let $\reachVI{i}{j}$ be the reachability interval of $v_i$ on $w_jw_{j+1}$. If $v_i$ does not have a reachability interval on $w_jw_{j+1}$ 
	%and let $\reachWJ{j}{i}$ be the reachability interval of $w_j$ on $v_iv_{i+1}$.  
	Given $\tau$, $\sigma$ and $\delta$, Alt~and~Godau~\cite{AG1995} designed an $O(nm)$-time algorithm to compute the reachability intervals for all vertices of $\tau$ and $\sigma$. We will use a variant of this algorithm. We call it \pmb{\sc WaveFront}. Besides $\tau$, $\sigma$ and $\delta$, {\sc WaveFront} accepts two more inputs $\+S$ and $\+S'$, i.e., we will invoke it as \pmb{{\sc WaveFront}$(\tau, \sigma, \delta, \+S, \+S')$}, where $\+S$ and $\+S'$ are arrays induced by $\tau$ and $\sigma$, respectively.

    \cancel{
    We restate the algorithm as procedure {\sc Propagate}$(\tau, \sigma, \delta)$.
	
	\vspace{2pt}
	\noindent\pmb{{\sc Propagate}$(\tau, \sigma, \delta)$.} The procedure is a simple dynamic programming. It returns \cancel{$(\+R^{v_1}, \+R^{v_2}, \ldots, \+R^{v_n})$ and $(\+R^{w_1}, \+R^{w_2}, \ldots,\+R^{w_m})$}$(\+R^{v_i})_{i\in[n]}$ and $(\+R^{w_j})_{j\in[m]}$. %The element $\+R[i][j]$ stores $\reachVI{i}{j}$, and the element $\+R'[j][i]$ stores $\reachWJ{j}{i}$. 
    We first initialize the reachability intervals for $v_1$ and $w_1$. %as the initialization. 
	If $w_1$ belongs to $B(v_1, \delta)$, set $\reachVI{1}{1}=w_1w_2\cap \+B(v_1, \delta)$; otherwise, set $\reachVI{1}{1}=\emptyset$. For all $j\in [2, m-1]$, if $w_j\in \reachVI{1}{j-1}$, set $\reachVI{1}{j}$ to be $w_jw_{j+1}\cap \+B(v_1, \delta)$; otherwise, set $\reachVI{1}{j}=\emptyset$. We compute $\reachWJ{1}{i}$ for all $i\in [n-1]$ in a similar way.  
	
	Next, we present the recurrence of the dynamic programming. Suppose that we have computed $\reachVI{i}{j}$ and $\reachWJ{j}{i}$. We can proceed to compute $\reachVI{i+1}{j}$ and $\reachWJ{j+1}{i}$. Set $\reachVI{i+1}{j}$ and $\reachWJ{j+1}{i}$ to be empty if $\reachVI{i}{j}$ and $\reachWJ{j}{i}$ are empty. For any point $y\in w_jw_{j+1}$, a matching between $\tau[v_1, v_{i+1}]$ and $\sigma[w_1, y]$ either matches $v_i$ to some point in $w_jw_{j+1}$ or matches $w_j$ to some point in $v_iv_{i+1}$. Hence, we should set $\reachVI{i+1}{j}=\emptyset$ if $\reachVI{i}{j}$ and $\reachWJ{j}{i}$ are empty. Similarly, we should also set $\reachWJ{j+1}{i}=\emptyset$ in this case.
	
	%and so does a matching between $\tau[v_1, x]$ and $\sigma[w_1, w_{j+1}]$ for any point $x\in v_iv_{i+1}$.

	Now suppose that $\reachWJ{j}{i}\not=\emptyset$. Pick an arbitrary point $x\in \reachWJ{j}{i}$, $d_F(\tau[v_1, x], \sigma[w_1, w_j])\le \delta$ by definition. For any point $p\in \+B(v_{i+1}, \delta)\cap w_jw_{j+1}$, we can generate a matching between $\tau[v_1, v_{i+1}]$ and $\sigma[w_1, p]$ by concatenating a linear interpolation between $xv_{i+1}$ and $w_jp$ to the Fr\'echet matching between $\tau[v_1, x]$ and $\sigma[w_1, w_j]$. The matching realizes a distance at most $\delta$, which implies that $d_F(\tau[v_1, v_{i+1}], \sigma[w_1, y])\le \delta$. Hence, we set $\reachVI{i+1}{j}=\+B(v_{i+1}, \delta)\cap w_jw_{j+1}$.
	
	In the case where $\reachWJ{j}{i}=\emptyset$ and $\reachVI{i}{j}\not=\emptyset$, for any point $y\in \reachVI{i+1}{j}$, the Fr\'echet matching between $\tau[v_1, v_{i+1}]$ and $\sigma[w_1,y]$ cannot match the vertex $w_j$ to any point in $v_iv_{i+1}$. Hence, both $v_i$ and $v_{i+1}$ are matched to points in $w_jw_{j+1}$. %by the Fr\'echet matching between $\tau[v_1, v_{i+1}]$ and $\sigma[w_1,y]$. 
	All points in $\reachVI{i+1}{j}$ should not lie in front of the start of $\reachVI{i}{j}$ along $\sigma$. Let $\ell^i_{w_j}$ be the start of $\reachVI{i}{j}$. We set $\reachVI{i+1}{j}=\ell^i_{w_j}w_{j+1}\cap \+B(v_{i+1},\delta)$. 	
	
	We can compute $\reachWJ{j+1}{i}$ in a similar way. By executing the recurrence for all $i\in [n]$ and $j\in [m]$, we get reachability intervals for all vertices of $\tau$ and $\sigma$. This completes {\sc Propagate}.
	
	%The finishes the filling of the array $\+R$.
	
 %to fill up the array $\+R'$.
	
	\vspace{2pt}
    }

	Let $\+{W}^{v_i}$ be an array induced by $\sigma$ for all $i\in[n]$, and 
	%, \ldots, \+{W}^{v_n}$ and 
	let $\+{W}^{w_j}$ be an array induced by $\tau$ for all $j\in [m]$. Calling {\sc WaveFront}$(\tau, \sigma, \delta,\+S, \+S')$ returns $(\+{W}^{v_i})_{i\in [n]}$ and $(\+{W}^{w_j})_{j\in[m]}$. For every $\+{W}^{v_i}$, a point $q\in \sigma$ is covered by it if and only there is a point $p$ covered by $\+S'$ with $(v_i, q)$ being $\delta$-reachable from $(v_1, p)$ or a point $x$ covered by $\+S$ with $(v_i, q)$ being $\delta$-reachable from $(x, w_1)$. For every $\+{W}^{w_j}$, a point $y\in \tau$ is covered by it if and only if there is a point $p$ covered by $\+S'$ with $(y, w_j)$ being $\delta$-reachable from $(v_1, p)$ or a point $x$ covered by $\+S$ with $(y, w_j)$ being $\delta$-reachable from $(x, w_1)$. %that satisfy favorable properties in the following lemma. 
\cancel{
    \begin{lemma}\label{lem: wave}
		Given two polygonal curves $\tau$ and $\sigma$ in $\mathbb{R}^d$, a value $\delta>0$, an array $\+S$ induced by $\tau$, and an array $\+S'$ induced by $\sigma$, calling {\sc WaveFront}$(\tau, \sigma, \delta, \+S, \+S')$ returns $(\+{W}^{v_i})_{i\in [n]}$ and $(\+{W}^{w_j})_{j\in[m]}$: 
		
		%$(\+{W}^{v_1}, \+{W}^{v_2}, \ldots, \+{W}^{v_n})$ and $(\+{W}^{w_1}, \+{W}^{w_2}, \ldots,\+{W}^{w_m})$ such that:
	
		\begin{itemize}
			\item $\+{W}^{v_i}$ is an array induced by $\sigma$ for all $i\in [1, n]$. A point $p\in w_jw_{j+1}$ belongs to  $\+W\reachVI{i}{j}$ {\bf if and only if} there is a point $q$ covered by $\+S'$ such that $q\le_{\sigma} p$ and $d_F(\tau[v_1, v_i], \sigma[q, p])\le \delta$ or there is a point $x$ covered by $\+S$ such that $x\le_{\tau} v_i$ and $d_F(\tau[x, v_i], \sigma[w_1, p])\le \delta$.
			\item $\+{W}^{w_j}$ is an array induced by $\tau$ for all $j\in [1, m]$. A point $x\in v_iv_{i+1}$ belongs to $\+W\reachWJ{j}{i}$ {\bf if and only if} there is a point $y$ covered by $\+S$ such that $y\le_{\tau} x$ and $d_F(\tau[y,x], \sigma[w_1, w_j])\le \delta$ or there is a point $p$ covered by $\+S'$ such that $p\le_{\sigma}w_j$ and $d_F(\tau[v_1, x], \sigma[p, w_j])\le \delta$.
		\end{itemize} 
	\end{lemma}
		}

%\HQ{Add figures for illustrating the underlying idea of WaveFront. Add lemma about the planarity of the free space.}

        We present {\sc WaveFront} that runs in $O(nm)$ time. It is based on dynamic programming.%The detailed proof of Lemma~\ref{lem: wave} can be found in Appendix~\ref{sec:proof_wave}.

    \vspace{2pt}
	\noindent\pmb{{\sc WaveFront}$(\tau, \sigma, \delta, \+S, \+S')$.} We first initialize $\+{W}^{v_1}$ and $\+{W}^{w_1}$. Set $\+{W}^{v_1}_{w_1}=w_1w_2\cap \+B(v_1, \delta)$ if $w_1\in \+B(v_1, \delta)$ and $v_1\in S_{v_1}$. Otherwise, if $\+S'_{w_1}\not=\emptyset$, set $\+{W}^{v_1}_{w_1}= s_1w_2\cap \+B(v_1, \delta)$, where $s_1$ is the start of $\+S'_{w_1}$, and set $\+{W}^{v_1}_{w_1}=\emptyset$ if $\+S'_{w_1}=\emptyset$. For all $j\in [2, m-1]$, if $w_j\in \+{W}^{v_1}_{w_{j-1}}$, set $\+{W}^{v_1}_{w_j}=w_jw_{j+1}\cap \+B(v_1, \delta)$; otherwise,  if $\+S'_{w_j}\not=\emptyset$, set $\+{W}^{v_1}_{w_j}= s_jw_{j+1}\cap \+B(v_1, \delta)$, where $s_j$ is the start of $\+S'_{w_j}$, and set $\+{W}^{v_1}_{w_j}=\emptyset$ if $\+S'_{w_j}=\emptyset$. We compute $\+{W}^{w_1}_{v_i}$ for all $i\in [n-1]$ similarly.  
	
	We compute the remaining outputs by the following recurrence. Suppose that we have computed $\+{W}^{v_i}_{w_j}$ and $\+{W}^{w_j}_{v_i}$. We can proceed to compute $\+{W}^{v_{i+1}}_{w_j}$ and $\+{W}^{w_{j+1}}_{v_i}$ as follows. Set $\+{W}^{v_{i+1}}_{w_j}$ and $\+{W}^{w_{j+1}}_{v_i}$ to be empty if $\+{W}^{v_i}_{w_j}$ and $\+{W}^{w_j}_{v_i}$ are empty. The reason is as follows. 
    For any point $q\in w_jw_{j+1}$, take a point $p\in \sigma$ with $p\le_\sigma q$, it is either the case that a matching between $\tau[v_1, v_{i+1}]$ and $\sigma[p, q]$ matches $v_i$ to some point in $w_jw_{j+1}$ or $q\le_\sigma w_j$ and $w_j$ is matched to some point in $v_iv_{i+1}$. So are a point $x\in \tau$ and the matching between $\tau[x,v_i]$ and $\sigma[w_1, q]$. Hence, we should set $\+{W}^{v_{i+1}}_{w_j}=\emptyset$ if $\+{W}^{v_i}_{w_j}$ and $\+{W}^{w_j}_{v_i}$ are empty. Similarly, we should also set $\+{W}^{w_{j+1}}_{v_i}=\emptyset$ in this case.
	
	%and so does a matching between $\tau[v_1, x]$ and $\sigma[w_1, w_{j+1}]$ for any point $x\in v_iv_{i+1}$.

	Now suppose that $\+{W}^{w_j}_{v_i}\not=\emptyset$. Pick an arbitrary point $y\in \+{W}^{w_j}_{v_i}$, there is either a point $x$ covered by $\+S'$ such that $(y, w_j)$ is $\delta$-reachable from $(x, w_1)$ or a point $p$ covered by $\+S$ such that $(y, w_j)$ is $\delta$-reachable from $(v_1, p)$. According to the definition of $\delta$-reachable, $y\in \+B(w_j, \delta)\cap v_iv_{i+1}$. Hence, for any point $q\in \+B(v_{i+1}, \delta)\cap w_jw_{j+1}$, the pair $(v_{i+1}, q)$ is $\delta$-reachable from the pair $(y, w_j)$. Because the Fr\'echet distance between $yv_{i+1}$ and $w_jq$ is at most $\delta$. It implies that $q$ is covered by $\+{W}^{v_{i+1}}_{w_j}$. Hence, we set $\+{W}^{v_{i+1}}_{w_j}=\+B(v_{i+1}, \delta)\cap w_jw_{j+1}$.
	%\HQ{Use the notion of $\delta$-reachable.}

	%we can either generate a matching between $\tau[y, v_{i+1}]$ and $\sigma[w_1, q]$ by concatenating a linear interpolation between $xv_{i+1}$ and $w_jq$ to the Fr\'echet matching between $\tau[y, x]$ and $\sigma[w_1, w_j]$ or generate a matching between $\tau[v_1, v_{i+1}]$ and $\sigma[q,p]$ by concatenating a linear interpolation between $xv_{i+1}$ and $w_jp$ to the Fr\'echet matching between $\tau[v_1, x]$ and $\sigma[q, w_j]$. The matching realizes a distance at most $\delta$, which implies that $p$ is covered by $\+W\reachVI{i+1}{j}$. Hence, we set $\+W\reachVI{i+1}{j}=\+B(v_{i+1}, \delta)\cap w_jw_{j+1}$.
	
	In the case where $\+{W}^{w_j}_{v_i}=\emptyset$ and $\+{W}^{v_i}_{w_j}\not=\emptyset$, for any point $q\in \+{W}^{v_{i+1}}_{w_j}$, suppose that $(v_{i+1}, q)$ is $\delta$-reachable from $(v_1, p)$ or $(x, w_1)$, where $q$ and $x$ are points covered by $\+S'$ and $\+S$, respectively. Since $\+{W}^{w_j}_{v_i}=\emptyset$, no point $x'\in v_iv_{i+1}$ satisfies that $(x', w_j)$ is $\delta$-reachable from $(v_1, p)$ or $(x, w_1)$. It implies that $(v_{i+1}, q)$ must be $\delta$-reachable from $(v_i, q')$, where $q'$ is some point in $\+{W}^{v_i}_{w_j}$. Hence, all points in $\+{W}^{v_{i+1}}_{w_j}$ must not be in front of the start of $\+{W}^{v_i}_{w_j}$ along $\sigma$. Let $\ell^i_{w_j}$ be the start of $\+{W}^{v_i}_{w_j}$. We set $\+{W}^{v_{i+1}}_{w_j}=\ell^i_{w_j}w_{j+1}\cap \+B(v_{i+1},\delta)$.
	
	%suppose that there is a point $p$ covered by $\+S'$ such that $d_F(\tau[v_1, v_{i+1}], \sigma[p, q])\le \delta$ or a point $x$ covered by $\+S$ such that $d_F(\tau[x,v_{i+1}], \sigma[w_1, p])$. 
    %The Fr\'echet matching between $\tau[v_1, v_{i+1}]$ and $\sigma[q,p]$ or between $\tau[x,v_{i+1}]$ and $\sigma[w_1, p]$ cannot match the vertex $w_j$ to any point in $v_iv_{i+1}$. Hence, both $v_i$ and $v_{i+1}$ are matched to points in $w_jw_{j+1}$. %by the Fr\'echet matching between $\tau[v_1, v_{i+1}]$ and $\sigma[w_1,y]$. 
	%All points in $\+W\reachVI{i+1}{j}$ should not lie in front of the start of $\+W\reachVI{i}{j}$ along $\sigma$. Let $\ell^i_{w_j}$ be the start of $\+W\reachVI{i}{j}$. We set $\+W\reachVI{i+1}{j}=\ell^i_{w_j}w_{j+1}\cap \+B(v_{i+1},\delta)$. 	
	
	We can compute $\+{W}^{w_{j+1}}_{v_i}$ in a similar way. By executing the recurrence for all $i\in [n]$ and $j\in [m]$, we get $(\+{W}^{v_i})_{i\in [n]}$ and $(\+{W}^{w_j})_{j\in[m]}$. This completes the description of {\sc WaveFront} which runs in $O(|\tau||\sigma|)$ time.
	
	%The finishes the filling of the array $\+R$.
	
 %to fill up the array $\+R'$.
	\cancel{
	\vspace{2pt}
	
	In our decision algorithm, we will use a variant of {\sc Propagate}. We call it \pmb{\sc WaveFront}. Besides $\tau$, $\sigma$ and $\delta$, {\sc WaveFront} accepts two more inputs $\+S$ and $\+S'$, i.e., we will invoke it as \pmb{{\sc WaveFront}$(\tau, \sigma, \delta, \+S, \+S')$}. The parameters $\+S$ and $\+S'$ are arrays induced by $\tau$ and $\sigma$, respectively. 
	
	Let $\+{W}^{v_i}$ be an array induced by $\sigma$ for all $i\in[n]$, and 
	%, \ldots, \+{W}^{v_n}$ and 
	let $\+{W}^{w_j}$ be an array induced by $\tau$ for all $j\in [m]$.
	Within {\sc WaveFront}, 
	%we will not compute $\+R^{v_1}$ and $\+R^{w_1}$ as the initialization.
	we first initialize $\+{W}^{v_1}$ and $\+{W}^{w_1}$. Set $\+W\reachVI{1}{1}=w_1w_2\cap \+B(v_1, \delta)$ if $w_1\in \+B(v_1, \delta)$ and $v_1\in S_{v_1}$. Otherwise, if $\+S'_{w_1}\not=\emptyset$, set $\+W\reachVI{1}{1}= s_1w_2\cap \+B(v_1, \delta)$, where $s_1$ is the start of $\+S'_{w_1}$, and set $\+W\reachVI{1}{1}=\emptyset$ if $\+S'_{w_1}=\emptyset$. For all $j\in [2, m-1]$, if $w_j\in \+W\reachVI{1}{j-1}$, set $\+W\reachVI{1}{j}=w_jw_{j+1}\cap \+B(v_1, \delta)$; otherwise,  if $\+S'_{w_j}\not=\emptyset$, set $\+W\reachVI{1}{j}= s_jw_{j+1}\cap \+B(v_1, \delta)$, where $s_j$ is the start of $\+S'_{w_j}$, and set $\+W\reachVI{1}{j}=\emptyset$ if $\+S'_{w_j}=\emptyset$. We compute $\+W\reachWJ{1}{i}$ for all $i\in [n-1]$ similarly.
	
	%For all $j\in [m-1]$, $\+W\reachVI{1}{j}=\+S'_{w_j}\cap \+B(v_1, \delta)$, and for all $i\in [n-1]$, $\+W\reachWJ{1}{i}=\+S_{v_i}\cap \+B(w_1, \delta)$.
	%replace the reachability intervals of $v_1$ and $w_1$ by $\+S$ and $\+S'$, respectively. 
	
	We then carry out the same recurrence as {\sc Propagate}. We will compute $\+W\reachVI{i+1}{j}$ and $\+W\reachWJ{j+1}{i}$ based on $\+W\reachVI{i}{j}$ and $\+W\reachWJ{j}{i}$. 
	%, \ldots,\+{W}^{w_m}$. 
	%It means that {\sc WaveFront} will fill up two arrays and return them. 
	%Note that $\+{W}^{v_i}$ is an array induced by $\sigma$ for every $i\in [n]$ and $\+{W}^{w_j}$ is an array induced by $\tau$ fro every $j\in [m]$. 
	The procedure runs in $O(|\tau||\sigma|)$ time. %and returns $(\+{W}^{v_i})_{i\in[n]}$ and $(\+{W}^{w_j})_{j\in[m]}$ as the output. 
	Note that {\sc Propagate}$(\tau, \sigma, \delta)$ is equivalent to {\sc WaveFront}$(\tau, \sigma, \delta, \+R^{w_1}, \+R^{v_1})$.  The output of {\sc WaveFront} possesses the following property. The detailed proof can be found in Appendix~\ref{sec:proof_wave}. %as shown in the following lemma.
	
	\begin{lemma}\label{lem: wave}
		Given two polygonal curves $\tau$ and $\sigma$ in $\mathbb{R}^d$, a value $\delta>0$, an array $\+S$ induced by $\tau$, and an array $\+S'$ induced by $\sigma$, calling {\sc WaveFront}$(\tau, \sigma, \delta, \+S, \+S')$ returns $(\+{W}^{v_i})_{i\in [n]}$ and $(\+{W}^{w_j})_{j\in[m]}$: 
		
		%$(\+{W}^{v_1}, \+{W}^{v_2}, \ldots, \+{W}^{v_n})$ and $(\+{W}^{w_1}, \+{W}^{w_2}, \ldots,\+{W}^{w_m})$ such that:
		
		\begin{itemize}
			\item $\+{W}^{v_i}$ is an array induced by $\sigma$ for all $i\in [1, n]$. A point $p\in w_jw_{j+1}$ belongs to  $\+W\reachVI{i}{j}$ {\bf if and only if} there is a point $q$ covered by $\+S'$ such that $q\le_{\sigma} p$ and $d_F(\tau[v_1, v_i], \sigma[q, p])\le \delta$ or there is a point $x$ covered by $\+S$ such that $x\le_{\tau} v_i$ and $d_F(\tau[x, v_i], \sigma[w_1, p])\le \delta$.
			\item $\+{W}^{w_j}$ is an array induced by $\tau$ for all $j\in [1, m]$. A point $x\in v_iv_{i+1}$ belongs to $\+W\reachWJ{j}{i}$ {\bf if and only if} there is a point $y$ covered by $\+S$ such that $y\le_{\tau} x$ and $d_F(\tau[y,x], \sigma[w_1, w_j])\le \delta$ or there is a point $p$ covered by $\+S'$ such that $p\le_{\sigma}w_j$ and $d_F(\tau[v_1, x], \sigma[p, w_j])\le \delta$.
		\end{itemize} 
	\end{lemma} 
	}
	\cancel{
	\begin{proof}[Proof sketch]
		The proof is based on the induction on $i$ and $j$. We first prove that $\+{W}^{v_1}$ and $\+{W}^{w_1}$ satisfy the properties according to their definitions. It serves as the base case. We then prove the lemma inductively based on the recurrence. The detailed proof can be found in Appendix~\ref{sec:proof_wave}.
	\end{proof}
}
	\cancel{
	\begin{proof}
		We prove the properties for $\+{W}^{v_i}$ and $\+{W}^{w_j}$ by induction on $i$ and $j$. Consider $\+{W}^{v_1}$ and $\+{W}^{w_1}$ as the base case. We first prove that $\+{W}^{v_1}$ holds the property by induction on $j$. For $j=1$, take a point $p\in \+W\reachVI{1}{1}$. When $v_1\in S_{v_1}$ and $w_1\in \+B(v_1, \delta)$, $p\in w_1w_2\cap \+B(v_1, \delta)$ according to the initialization. It implies that the entire line segment $w_1p$ is inside $\+B(v_1, \delta)$. Hence, $d_F(\tau[v_1, v_1], \sigma[w_1, p])\le\delta$. In the remaining case, $\+W\reachVI{1}{1} = S'_{w_1}\cap \+B(v_1, \delta)$. Let $q$ be the start of $\+W\reachVI{1}{1}$. We have $q\le_{\sigma}p$ and $d_F(\tau[v_1, v_1], \sigma[q, p])\le \delta$. We finish the proof for the necessity. As for the sufficiency, for any point $p\in w_1w_2$, when it happens that $d_F(\tau[v_1, v_1], \sigma[q, p])\le \delta$ for some $q\in \+S'_{w_1}$, $\+S'_{w_1}$ cannot be empty, $p$ is behind the start $s$ of $\+S'_{w_1}$ along $w_1w_2$, and $p\in \+B(v_1, \delta)$, Hence, $p\in \+W\reachVI{1}{1}$. In the case $p$ satisfies $d_F(\tau[x, v_1], \sigma[w_1, p])\le \delta$ for some $x\in S_{v_1}$, $x$ must be $v_1$. It implies that $v_1\in S_{v_1}$ and $w_1\in \+B(v_1, \delta)$. Hence, $p$ belongs to $\+W\reachVI{1}{1}$, which is $w_1w_2\cap \+B(v_1, \delta)$.
		
		%that satisfies $d_F(\tau[v_1, v_1], \sigma[q, p])\le \delta$ for some $q$ covered by $\+S'$ or $d_F(\tau[x, v_1], \sigma[w_1, p])\le \delta$, $p$ must belong to $\+B(v_1, \delta)$.
		
		For $j\in [2, m-1]$, assume that $\+W\reachVI{1}{j-1}$ holds the property. If $\+W\reachVI{1}{j}$ is not empty, take a point $p$ in it. If $w_j\in \+W\reachVI{1}{j-1}$, it holds that $d_F(\tau[v_1, v_1], \sigma[q, w_j])$ for $q$ covered by $\+S'$ or $d_F(\tau[x, v_1], \sigma[w_1, w_j])\le \delta$ for $x$ covered by $\+S$. It implies that $d(w_j, v_1)\le \delta$. Provided that $p\in \+B(v_1, \delta)$, the entire line segment $w_jp$ is inside $\+B(v_1, \delta)$. By matching $w_jp$ to $v_1$, we have $d_F(\tau[v_1, v_1], \sigma[q, p])$ or $d_F(\tau[x, v_1], \sigma[w_1, p])\le \delta$. It finish the proof of necessity. For the sufficiency, in the case that $p\in w_jw_{j+1}$ satisfies $d_F(\tau[x, v_1], \sigma[w_1,p])\le \delta$ for $x$ covered by $\+S$, $x$ must be $v_1$. It implies that $w_j\in \+W\reachVI{1}{j-1}$ by the induction hypothesis. Hence, $\+W\reachVI{1}{j}=w_jw_{j+1}\cap \+B(v_1, \delta)$, and $p\in \+W\reachVI{1}{j}$ as $p\in \+B(v_1, \delta)$. If $p$ satisfies $d_F(\tau[v_1, v_1], \sigma[q, p])$ for $q$ covered by $\+S'$, we distinguish two cases based on whether $q\le_{\sigma} w_j$. If $q\le_{\sigma} w_j$, $w_j\in \+W\reachVI{1}{j-1}$ by the induction hypothesis and $p\in \+W\reachVI{1}{j}$ as $p\in \+B(v_1, \delta)$; otherwise, $q\in \+S'_{w_j}$, it implies that $\+S'_{w_j}$ cannot be empty, $p$ is behind the start $s$ of $\+S'_{w_j}$ along $w_jw_{j+1}$, which implies that $p\in \+W\reachVI{1}{j}$. We finish proving that $\+{W}^{v_1}$ satisfies the property. We can prove that $\+{W}^{w_1}$ satisfies the property by induction on $i$ by similar analysis.
		
		Take $\+W\reachVI{i}{j}$ for $i\ge 2$ and any $j\in [m-1]$. Assume that $\+W\reachVI{i-1}{j}$ and $\+W\reachWJ{j}{i-1}$ hold the property. We first prove the necessity. Suppose that $\+W\reachVI{i}{j}\not=\emptyset$. According to the recurrence, either $\+W\reachVI{i-1}{j}$ or $\+W\reachWJ{j}{i-1}$ is not empty. When $\+W\reachWJ{j}{i-1}$ is not empty, $\+W\reachVI{i}{j}=w_jw_{j+1}\cap\+B(v_i,\delta)$ according to the recurrence. Take a point $p\in \+W\reachVI{i}{j}$. For any point $x\in \+W\reachWJ{j}{i-1}$, by induction hypothesis, there is a point $y$ covered by $\+S$ such that $y\le_{\tau} x$ and $d_F(\tau[y,x], \sigma[w_1, w_j])\le \delta$ or there is a point $q$ covered by $\+S'$ such that $q\le_{\sigma}w_j$ $d_F(\tau[v_1, x], \sigma[q, w_j])\le \delta$. Hence, $d(x, w_j)\le \delta$. Given that $d(p, v_i)\le \delta$, the Fr\'echet distance between $w_jp$ and $xv_i$ is at most $\delta$. It implies that $d_F(\tau[y,v_i], \sigma[w_1, p])\le \delta$ or $d_F(\tau[v_1, v_i], \sigma[q, p])\le \delta$. 
		
		When $\+W\reachWJ{j}{i-1}$ is empty and $\+W\reachVI{i-1}{j}$ is not empty, $\+W\reachVI{i}{j}=\ell w_{j+1}\cap \+B(v_i, \delta)$, where $\ell$ is the start of $\+W\reachVI{i-1}{j}$. By induction hypothesis, there is a point $q$ covered by $\+S'$ such that $d_F(\tau[v_1, v_{i-1}], \sigma[q,\ell])\le \delta$ or there is a point $x$ covered by $\+S$ such that $d_F(\tau[x, v_{i-1}], \sigma[w_1, \ell])\le \delta$. Hence, $d(v_{i-1}, \ell)\le \delta$. Given that $d(v_i, p)\le \delta$, the Fr\'echet distance between $v_{i-1}v_i$ and $\ell p$ is at most $\delta$. It implies that $d_F(\tau[v_1, v_i], \sigma[q,p])\le \delta$ or $d_F(\tau[x,v_i], \sigma[w_1, p])\le\delta$. We complete the proof for the necessity for $\+W\reachVI{i}{j}$.
		
		As for the sufficiency, for any point $p\in w_{j-1}w_j$, suppose that there is a point $x$ covered by $\+S$ such that $x\le_\tau v_i$ and $d_F(\tau[x, v_i], \sigma[w_1, p])$. If $x\le_\tau v_{i-1}$, the Fr\'echet matching between $\tau[x, v_i]$ and $\sigma[w_1, p]$ either matches $v_{i-1}$ to some point $q\in w_jw_{j+1}$ or matches $w_j$ to some point $y\in v_{i-1}v_i$. In the former case, $q\in \+W\reachVI{i-1}{j}$ by the hypothesis. It implies that the start of $\+W\reachVI{i-1}{j}$ is not behind $q$ along $w_jw_{j+1}$. As $q\le_\sigma p$, the start of $\+W\reachVI{i-1}{j}$ is not behind $p$ along $w_jw_{j+1}$ as well. Hence, $p$ belongs to $\+W\reachVI{i}{j}$, which is either $w_jw_{j+1}\cap \+B(v_i, \delta)$ or $\ell w_{j+1}\cap \+B(v_i, \delta)$, where $\ell$ is the start of $\+W\reachVI{i-1}{j}$. In the later case, $y$ belongs to $\+W\reachWJ{j}{i-1}$ by induction hypothesis. It implies that $\+W\reachWJ{j}{i-1}$ is not empty. Hence, $\+W\reachVI{i}{j}$ is $w_jw_{j+1}\cap \+B(v_i, \delta)$ and includes $p$. If $v_{i-1}\le_\tau x$, $w_j$ is matched to some point $y\in v_{i-1}v_i$ by the Fr\'echet matching between $\tau[x, v_i]$ and $\sigma[w_1, p]$. We can prove the $p\in \+W\reachVI{i}{j}$ via the analysis above as well. 
		
		In the case that there is a point $q$ covered by $\+S'$ with $d_F(\tau[v_1, v_i], \sigma[q,p])\le\delta$. We can prove that $p\in \+W\reachVI{i}{j}$ in a similar way. Take $\+W\reachWJ{j}{i}$ for $j\ge 2$ and any $i\in [n-1]$. Assume that $\+W\reachVI{i}{j-1}$ and $\+W\reachWJ{j-1}{i}$ hold the property, we can prove that $\+W\reachWJ{j}{i}$ holds the property via a similarly. This completes the proof.
		
		%The Fr\'echet matching must match some point 
		%For any $j\in [m-1]$, according to the recurrence, $\+W\reachVI{2}{j}$ is empty if both $\+W\reachVI{1}{j}$ and $\+W\reachWJ{j}{1}$ are empty. If $\+W\reachWJ{j}{1}\not=\emptyset$, $\+W\reachVI{2}{j}=\+B(v_2, \delta)\cap w_jw_{j+1}$, in the remaining case where $\+W\reachWJ{j}{1}=\emptyset$ and $\+W\reachVI{1}{j}\not=\emptyset$, $\+W\reachVI{2}{j}=\+B(v_2, \delta)\cap \ell w_{j+1}$ where $\ell$ is the start of $\+W\reachVI{1}{j}$. 

		%$\+{W}^{v_1}=\+S'_{w_j}\cap \+B(v_1, \delta)$. 
	\end{proof}
}
	\vspace{4pt}

	\noindent{\textbf{Matching construction}.} Suppose that $d_F(\tau, \sigma)\le \delta$. In our decision procedure, we will compute a matching $\+M$ between $\tau$ and $\sigma$ with $d_{\+M}(\tau, \sigma)\le \delta$ explicitly. We need to store $\+M$ such that for any point $p$ in $\tau$ or $\sigma$, $\+M(p)$ can be accessed efficiently, where $\+M(p)$ is a point matched to $p$ by $\+M$. We use the following lemma. %The proof can be found in Appendix~\ref{sec:proof_simp}.
	
	%Suppose that $d_F(\tau, \sigma)\le \delta$. In our decision procedure, we sometimes need to compute a matching $\+M$ between $\tau$ and $\sigma$ satisfying $d_{\+M}(\tau, \sigma)\le \delta$ explicitly. We can construct such a matching by backtracking the output of {\sc Propagate}$(\tau, \sigma, \delta)$. For a point $p$ in $\tau$ or $\sigma$, let $\+M(p)$ be a point matched to $p$ by $\+M$. We first fix $\+M(v_i)$ and $\+M(w_j)$ for all $i\in [n]$ and $j\in [m]$.
	
	\cancel{
	Provided that $d_F(\tau, \sigma)\le \delta$, $d(v_n, w_m)\le \delta$ and both $\reachVI{n}{m-1}$ and $\reachWJ{m}{n-1}$ are non-empty. We set $\+M(v_n)=w_m$ and $\+M(w_m)=v_n$ as initialization. We then present a recurrence for determining the matching partners for the remaining vertices. Our recurrence always guarantees that $\+M(v_i)$ belongs to $v_i$'s reachability interval for all $i\in [n]$, and so does $\+M(w_j)$ for all $j\in [m]$.
	
	Now suppose that we have determined $\+M(v_i)$ and $\+M(w_j)$ for some $i> 1$ and some $j>1$, but $\+M(v_{i-1})$ and $\+M(w_{j-1})$ have not been determined yet. The recurrence maintains an invariant that $\+M(w_j)\in v_{i-1}v_i$ or $\+M(v_i)\in w_{j-1}w_{j}$. In the case that $\+M(w_j)\in v_{i-1}v_i$, it implies that $\reachWJ{j}{i-1}\not=\emptyset$. Hence, either $\reachVI{i-1}{j-1}$ or $\reachWJ{j-1}{i-1}$ is non-empty. If $\reachVI{i-1}{j-1}\not=\emptyset$, we pick an arbitrary point $p$ in it and set $\+M(v_{i-1})=p$. Because $d_F(\tau[v_1, v_{i-1}],\sigma[w_1, p])\le \delta$ by the definition of $\reachVI{i-1}{j-1}$, and we can match $pw_j$ to $v_{i-1}\+M(w_j)$ by a linear interpolation as $d(p, v_{i-1})\le \delta$. It is clear that the invariant still holds. In the case where $\reachVI{i-1}{j-1}=\emptyset$ and $\reachWJ{j-1}{i-1}\not=\emptyset$. By the procedure {\sc Propagate}$(\tau, \sigma, \delta)$, the start of $\reachWJ{j-1}{i-1}$ cannot be behind $\+M(w_j)$ along $v_{i-1}v_i$. We set $\+M(w_{j-1})$ to be the start of $\reachWJ{j-1}{i-1}$ and match $\+M(w_{j-1})\+M(w_j)$ to $w_{j-1}w_{j}$ by linear interpolation. The invariant remains as well.

	In the case that $\+M(v_i)\in w_{j-1}w_j$, it implies that $\reachVI{i}{j-1}\not=\emptyset$. We can proceed to determine $\+M(w_{j-1})$ or $\+M(v_{i-1})$ in a similar way. We can repeat the above procedure until we have determined $\+M(v_i)$ for all $i\in [n]$ or $\+M(w_j)$ for all $j\in [m]$. Suppose we have matched $v_1$ and there are still some vertices $w_1, w_2,\ldots,w_j$ of $\sigma$ remaining to be matched. Provided that $\+M(v_1)$ belongs to $v_1$'s reachability interval in $\sigma[w_j, w_m]$, it implies that the entire subcurve $\sigma[w_1, w_j]$ locates inside $\+B(v_1, \delta)$. We set $\+M(w_{j'})=v_1$ for all $j'\in [1,j]$. In the case where we have matched $w_1$ and there are still some vertices $v_1, v_2,\ldots, v_i$ of $\tau$ remaining to be matched. We can set $\+M(v_{i'})=w_1$ according to the same analysis. It takes $O(nm)$ time to deal with all vertices of $\tau$ and $\sigma$.
	
	It is sufficient to store $\+M(v_i)$ and $\+M(w_j)$ for all $i\in [n]$ and $j\in [m]$ to store the entire matching $\+M$. Note that all points in $\tau$ and $\sigma$ other than vertices are handled by linear interpolation. In the following lemma, we present that for any points $x\in \tau$ and $p\in \sigma$ that may not be vertices, we can also retrieve $\+M(x)$ and $\+M(p)$ efficiently. 
}

	\begin{lemma}\label{lem:matching}
		Given two curves $\tau$ and $\sigma$ in $\mathbb{R}^d$, suppose that $d_F(\tau, \sigma)\le \delta$. There is an $O(mn)$-time algorithm for computing a matching $\+M$ between $\tau$ and $\sigma$ such that $d_{\+M}(\tau, \sigma)\le \delta$. The matching $\+M$ can be stored in $O(n+m)$ space such that for any point $x\in \tau$, we can retrieve a point $\+M(x)\in \sigma$ in $O(\log m)$ time, and for any point $p\in \sigma$, we can retrieve a point $\+M(p)\in \tau$ in $O(\log n)$ time.
	\end{lemma}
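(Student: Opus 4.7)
The plan is to build $\+M$ by backward traceback along a bi-monotone path in the free space diagram. First invoke {\sc WaveFront}$(\tau, \sigma, \delta, \+S, \+S')$ with $\+S$ covering only $v_1$ and $\+S'$ covering only $w_1$; this returns in $O(mn)$ time the reachability intervals $(\+{W}^{v_i})_{i\in[n]}$ and $(\+{W}^{w_j})_{j\in[m]}$. Since $d_F(\tau, \sigma) \le \delta$, the pair $(v_n, w_m)$ is $\delta$-reachable, so $w_m \in \+{W}^{v_n}_{w_{m-1}}$. I initialise $\+M(v_n) := w_m$, $\+M(w_m) := v_n$, and set the current index pair to $(i, j) = (n, m)$.

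The traceback maintains the invariant that at every stage either $\+M(v_i)$ lies on edge $w_{j-1}w_j$ or $\+M(w_j)$ lies on edge $v_{i-1}v_i$, and the already-fixed matching of $\tau[v_i, v_n]$ to $\sigma[w_j, w_m]$ realises Fr\'echet distance at most $\delta$. At each step, if $\+{W}^{v_{i-1}}_{w_{j-1}} \ne \emptyset$, pick any point $p$ in it, set $\+M(v_{i-1}) := p$, and decrement $i$; otherwise the {\sc WaveFront} recurrence forces $\+{W}^{w_{j-1}}_{v_{i-1}}$ to be non-empty, and I set $\+M(w_{j-1})$ to its starting point and decrement $j$. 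The new leg of the matching fixed in each step is a linear interpolation between two edge endpoint pairs each of distance at most $\delta$, so its instantaneous distance stays at most $\delta$. When $i = 1$ I set $\+M(w_{j'}) := v_1$ for all remaining $j' \le j$, which is valid because $\sigma[w_1, w_j]$ lies in $\+B(v_1, \delta)$; the case $j = 1$ is symmetric. The traceback visits each cell of the reachability table at most once and so runs in $O(mn)$ time.

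The matching is stored as the two arrays $(\+M(v_i))_{i\in[n]}$ and $(\+M(w_j))_{j\in[m]}$, where each point is represented by the index of the containing edge and a scalar offset; the total space is $O(n+m)$. To retrieve $\+M(x)$ for $x \in v_iv_{i+1}$, I read from $\+M(v_i)$ and $\+M(v_{i+1})$ the indices $j_-$ and $j_+$ of the $\sigma$-edges that contain them. The contiguous sub-array $\+M(w_{j_-+1}), \ldots, \+M(w_{j_+})$ gives the points on $v_iv_{i+1}$ where the matching switches to a new $\sigma$-edge; together with $v_i$ and $v_{i+1}$, these partition $v_iv_{i+1}$ into sub-intervals each mapped affinely onto a single edge of $\sigma$. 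A binary search on this sub-array, whose total size summed over all strips is $O(n+m)$ and whose size within one strip is at most $m$, locates the sub-interval containing $x$ in $O(\log m)$ time, after which $\+M(x)$ is returned by one affine interpolation. Queries on $\sigma$ are handled symmetrically in $O(\log n)$ time.

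The main obstacle is proving that the traceback can always advance, i.e.\ that whenever the current frontier pair is $\delta$-reachable, at least one of $\+{W}^{v_{i-1}}_{w_{j-1}}$ and $\+{W}^{w_{j-1}}_{v_{i-1}}$ is non-empty. This follows from the {\sc WaveFront} recurrence itself, under which the current entry becomes non-empty only if one of these two predecessors is, combined with Lemma \ref{lem:planarity}, which guarantees that the chosen predecessor pair can be extended by a bi-monotone path back to $(v_1, w_1)$. Closing this induction yields both the $O(mn)$ construction time and the $d_{\+M}(\tau, \sigma) \le \delta$ bound claimed in the lemma.
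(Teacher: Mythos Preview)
Your proposal is correct and mirrors the paper's proof almost step for step: call {\sc WaveFront} with the single seed $(v_1,w_1)$, backtrack from $(v_n,w_m)$ using the invariant that $\+M(v_i)\in w_{j-1}w_j$ or $\+M(w_j)\in v_{i-1}v_i$, store the two vertex-image arrays in $O(n+m)$ space, and answer point queries by binary search over the breakpoints $\+M(w_j)$ (resp.\ $\+M(v_i)$) lying on the relevant edge. The only cosmetic difference is that the paper branches explicitly on which half of the invariant currently holds before deciding whether to decrement $i$ or $j$ and which point in the predecessor interval to pick, which makes monotonicity of the newly added leg immediate; your ``pick any $p$'' needs that same care, and the appeal to Lemma~\ref{lem:planarity} is then unnecessary since the {\sc WaveFront} recurrence alone guarantees a non-empty predecessor.
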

	
	\begin{proof}
		Set $\+S$ and $\+S'$ to cover only $v_1$ and $w_1$, respectively. We can construct $\+M$ by calling {\sc WaveFront}$(\tau, \sigma, \delta,\+S,\+S')$. Then we go through the output $\+{W}^{v_i}$ and $\+{W}^{w_j}$ in decreasing order of $i$ and $j$. We first determine $\+M(v_i)$ and $\+M(w_j)$ for all vertices $v_i$'s and $w_j$'s.
		
		Since $d_F(\tau, \sigma)\le \delta$, $d(v_n, w_m)\le \delta$ and both $\reachVI{n}{m-1}$ and $\reachWJ{m}{n-1}$ are non-empty. We initialize $\+M(v_n)=w_m$ and $\+M(w_m)=v_n$. We then present a recurrence for determining the matching partners for the remaining vertices. Our recurrence always guarantees that $\+M(v_i)$ belongs to $v_i$'s reachability interval for all $i\in [n]$, and so does $\+M(w_j)$ for all $j\in [m]$.
		
		Now suppose that we have determined $\+M(v_i)$ and $\+M(w_j)$ for some $i> 1$ and some $j>1$, but $\+M(v_{i-1})$ and $\+M(w_{j-1})$ have not been determined yet. The recurrence maintains an invariant that $\+M(w_j)\in v_{i-1}v_i$ or $\+M(v_i)\in w_{j-1}w_{j}$. In the case that $\+M(w_j)\in v_{i-1}v_i$, it implies that $\reachWJ{j}{i-1}\not=\emptyset$. Hence, either $\reachVI{i-1}{j-1}$ or $\reachWJ{j-1}{i-1}$ is non-empty. If $\reachVI{i-1}{j-1}\not=\emptyset$, we pick an arbitrary point $p$ in it and set $\+M(v_{i-1})=p$. Because $d_F(\tau[v_1, v_{i-1}],\sigma[w_1, p])\le \delta$ by the definition of $\reachVI{i-1}{j-1}$, and we can match $pw_j$ to $v_{i-1}\+M(w_j)$ by a linear interpolation as $d(p, v_{i-1})\le \delta$. It is clear that the invariant still holds. In the case where $\reachVI{i-1}{j-1}=\emptyset$ and $\reachWJ{j-1}{i-1}\not=\emptyset$. By the procedure {\sc WaveFront}$(\tau, \sigma, \delta, \+S, \+S')$, the start of $\reachWJ{j-1}{i-1}$ cannot be behind $\+M(w_j)$ along $v_{i-1}v_i$. We set $\+M(w_{j-1})$ to be the start of $\reachWJ{j-1}{i-1}$ and match $\+M(w_{j-1})\+M(w_j)$ to $w_{j-1}w_{j}$ by linear interpolation. The invariant is persevered as well.
		
		In the case that $\+M(v_i)\in w_{j-1}w_j$, it implies that $\reachVI{i}{j-1}\not=\emptyset$. We can proceed to determine $\+M(w_{j-1})$ or $\+M(v_{i-1})$ in a similar way. We can repeat the above procedure until we have determined $\+M(v_i)$ for all $i\in [n]$ or $\+M(w_j)$ for all $j\in [m]$. Suppose we have matched $v_1$ and there are still some vertices $w_1, w_2,\ldots,w_j$ of $\sigma$ remaining to be matched. Provided that $\+M(v_1)$ belongs to $v_1$'s reachability interval in $\sigma[w_j, w_m]$, it implies that the entire subcurve $\sigma[w_1, w_j]$ locates inside $\+B(v_1, \delta)$. We set $\+M(w_{j'})=v_1$ for all $j'\in [1,j]$. In the case where we have matched $w_1$ and there are still some vertices $v_1, v_2,\ldots, v_i$ of $\tau$ remaining to be matched. We can set $\+M(v_{i'})=w_1$ according to the same analysis. It takes $O(nm)$ time to deal with all vertices of $\tau$ and $\sigma$.
		
		We store $\+M(v_i)$ and $\+M(w_j)$ for all $i\in [n]$ and $j\in [m]$. It takes $O(mn)$ time and $O(n+m)$ space, and $d_{\+M}(\tau, \sigma)\le \delta$ according to the procedure. 
		
		Next, we show how to retrieve $\+M(x)$ for any point $x\in \tau$. If $x$ happens to be some vertex $v_i$, we can access $\+M(v_i)$ in $O(1)$ as it is stored explicitly. Otherwise, suppose that $x$ belongs to the edge $v_iv_{i+1}$. We first get $\+M(v_i)$ and $\+M(v_{i+1})$. The entire subcurve $\sigma[\+M(v_i), \+M(v_{i+1})]$ is matched to the edge $v_iv_{i+1}$. If $\sigma[\+M(v_i), \+M(v_{i+1})]$ is a line segment and does not contain any vertices of $\sigma$, it means that the matching between $v_iv_{i+1}$ and $\sigma[\+M(v_i), \+M(v_{i+1})]$ is a linear interpolation between them. We can calculate $\+M(x)$ in $O(1)$ time. 
		
		If $\sigma[\+M(v_i), \+M(v_{i+1})]$ contains vertices $w_j, w_{j+1}, \ldots, w_{j_1}$ of $\sigma$, then $\+M(w_{j'})\in v_iv_{i+1}$ for all $j'\in [j, j_1]$. The points $\+M(w_{j'})$'s partition $v_iv_{i+1}$ into $j_1-j+2 = O(|\sigma|)$ disjoint segments. By a binary search, we can find out $x$ belongs to which segment in $O(\log|\sigma|)$ time. Suppose that $x\in \+M(w_{j'})\+M(w_{j'+1})$. Given that the matching between $\+M(w_{j'})\+M(w_{j'+1})$ and $w_{j'}w_{j'+1}$ is a linear interpolation, we can proceed to calculate $\+M(x)$ in $O(1)$ time.
		
		For any point $p\in \sigma$, we can retrieve $\+M(p)$ in $O(\log|\tau|)$ time similarly.
		
	\end{proof}

	\noindent{\textbf{Curve simplification}.} We will need to simplify $\tau$ to another curve of fewer vertices. 
	%Given $\tau$ and a fixed value $\delta>0$, the \emph{curve simplification problem} under the Fr\'echet distance asks to generate another curve of minimum size that locates within a Fr\'echet distance at most $\delta$ to $\tau_k$.
	The \emph{curve simplification problem} under the Fr\'echet distance has been studied a lot in the literature~\cite{agarwal2005near,bringmann2019polyline,cheng2022curve,cheng2023solving,SHJ,van2018optimal,van2019global,guibas1993approximating}. We employ an almost linear-time algorithm developed in~\cite{SHJ} recently. Any other polynomial time algorithms that can achieve the same performance guarantee can also fit into our framework.
	
	\begin{lemma}[\cite{SHJ}]\label{lem: simp}
		Given a curve $\tau$ in $\mathbb{R}^d$, a fixed value $\delta$, for any $\epsilon\in (0,1)$, there is an $O\left(\epsilon^{-\alpha}n\log (1/\varepsilon)\right)$-time algorithm that computes a curve $\tau'$ of size at most $2(k^*-1)$ %such that $|\tau'|\le 2(|\tau^*|-1)$ and 
		with $d_F(\tau, \tau')\le (1+\varepsilon)\delta$, where $\alpha=2(d-1)\lfloor d/2\rfloor^2+d$ and $k^*=\min \{|\tau''|: d_F(\tau, \tau'')\le \delta\}$.%$\tau^*$ is the curve of minimal size such that $d_F(\tau, \tau^*)\le \delta$.  
	\end{lemma}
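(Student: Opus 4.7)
The plan is to implement a greedy one-pass simplification along $\tau$ using an approximate segment-fit subroutine, and then bound the output size by a charging argument against the optimum simplification $\tau^*$. First, I would design a decision routine that, given a prefix $\tau[v_i, v_j]$ and an anchor point $p$ near $v_i$, tests whether some oriented segment $pq$ satisfies $d_F(\tau[v_i, v_j], pq) \le (1+\varepsilon)\delta$. The standard observation is that a segment $pq$ fits $\tau[v_i, v_j]$ within distance $r$ iff the orthogonal projections of $v_i, \ldots, v_j$ onto the line through $pq$ appear in order and each $v_k$ lies within distance $r$ of $pq$; the set of admissible directions and lengths then forms an intersection of cones/wedges anchored at successive vertices. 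In $\mathbb{R}^d$, an $\varepsilon$-approximate representation of this running intersection can be updated in $O(\varepsilon^{-\alpha})$ amortized time per vertex, with $\alpha = 2(d-1)\lfloor d/2\rfloor^2 + d$.

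Given this subroutine, the simplification proceeds iteratively: set the first vertex of $\tau'$ to $v_1$, then repeatedly extend the current edge of $\tau'$ by exponential-then-binary search, finding the largest $j$ for which a feasible segment exists; the computed endpoint is appended to $\tau'$, and the process restarts from $v_j$. Since each vertex of $\tau$ is revisited $O(\log(1/\varepsilon))$ times across the exponential searches and each touch costs $O(\varepsilon^{-\alpha})$ amortized, the total running time is $O(\varepsilon^{-\alpha}\, n \log(1/\varepsilon))$.

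For the size bound $|\tau'| \le 2(k^*-1)$, I would argue by charging. Let $\tau^*$ be any optimum simplification of size $k^*$ with $d_F(\tau, \tau^*) \le \delta$; each edge of $\tau^*$ certifies that a single segment approximates the corresponding subchain of $\tau$ at distance $\delta < (1+\varepsilon)\delta$. Consider two consecutive greedy breakpoints: if both fall strictly in the interior of the same edge of $\tau^*$, then a translated copy of that optimum edge would be a feasible extension for our subroutine, contradicting the maximality of the greedy step. Hence at most two greedy edges can be charged to any internal vertex of $\tau^*$, which yields the claimed bound of $2(k^*-1)$.

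The main obstacle is the approximate cone-intersection subroutine: exact intersection of $n$ wedges in $\mathbb{R}^d$ cannot be maintained in almost-linear total time, so one must keep an $\varepsilon$-approximate representation of the current feasible region, for instance by sampling candidate directions on an $\varepsilon$-net of the unit sphere in $\mathbb{R}^d$, that is simultaneously cheap to update and tight enough for the charging argument to survive the $(1+\varepsilon)$-slack. Working out the correct net resolution and proving the $\varepsilon^{-\alpha}$ bound is the technical contribution of \cite{SHJ}; as noted before the statement, any alternative polynomial-time simplifier meeting the same size/quality guarantee can be substituted without altering the rest of the framework.
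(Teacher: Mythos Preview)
The paper does not prove this lemma; it is imported from \cite{SHJ} and used as a black box, as the sentence preceding the statement makes explicit, and no proof follows it in the paper. There is therefore nothing in the paper to compare your sketch against. Your outline is a reasonable high-level reconstruction of a greedy one-pass simplification, and you yourself correctly defer the technical core (the $\varepsilon^{-\alpha}$ per-vertex maintenance of the feasible-segment region) to \cite{SHJ}. Two of your intermediate claims are loose as written --- the projection-order criterion is not an exact characterization of $d_F$ between a curve and a segment, and the ``translated copy of the optimum edge'' step in the charging argument needs more care since translation does not preserve Fr\'echet distance --- but since the paper itself treats the lemma as a citation, these are issues with \cite{SHJ}'s internals rather than gaps relative to the present paper.
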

	
	\cancel{
	
	\subsection{Discrete Fr\'echet distance}
	
	We name the discrete counterpart of the reachability interval as the \emph{reachability set}. For a vertex $v_i$ of $\tau$, a vertex $w_j$ of $\sigma$ belongs to $v_i$'s reachability set with respect to $\delta$ if and only if $\tilde{d}_F(\tau, \sigma)\le \delta$. The reachability set of a vertex $w_j$ of $\sigma$ can be defined analogously.
	
	Given a curve $\tau$, a curve $\sigma$ and a value $\delta>0$, we can use a discrete version of {\sc Propagate} to compute the reachability set for all vertices of $\tau$ and $\sigma$. We describe it as {\sc DisPropagate}$(\tau, \sigma, \delta)$. 
	
	\vspace{8pt}
	
	\noindent\pmb{{\sc DisPropagate}$(\tau, \sigma, \delta)$.} The procedure will fill up an $n\times m$ Boolean array $\+{DR}$ such that $\+{DR}[i][j]=1$ if and only if $w_j$ belongs to the reachability set of $v_i$. Note that $v_i$ also belongs to the reachability set of $w_j$ in this case by symmetry. We first initialize $\+{DR}$ by setting all its elements to be 0. We then set $\+{DR}[1][1]$ to be 1 if $d(v_1, w_1)\le \delta$. After the initialization, we use the following recurrence to update $\+{DR}$. For all $i\in [2, n]$ and all $j\in [2, m]$, we set $\+{DR}[i][j]$ to be 1 if one of $\+{DR}[i-1][j]$, $\+{DR}[i][j-1]$ and $\+DR[i-1][j-1]$ is 1 and $d(v_i, w_j)\le \delta$. In all the other cases, we keep $\+{DR}[i][j]$ unchanged. 
	
	\vspace{8pt}
	
	In our decision algorithm for the discrete Fr\'echet distance, we will use the discrete version of {\sc WaveFront}. We call it \pmb{{\sc DisWaveFront}}. We will invoke it in a form of \pmb{{\sc DisWaveFront}}$(\tau, \sigma, \delta, \+S)$. The three components $\tau, \sigma$ and $\delta$ are the same as those in the input of {\sc DisPropagate}. The component $\+S$ is a set of $\sigma$'s vertices. Within {\sc DisWaveFront}, we will replace the initialization by setting all elements in $\+{DR}$ to be 0 and then setting $\+{DR}[1][j]$ to be 1 if the vertex $w_j$ belongs to $\+S$. We then use the same recurrence to update $\+{DR}$ as {\sc DisPropagate}. The procedure {\sc DisWaveFront} also runs in $O(|\tau||\sigma|)$ and returns an array $\+{DR}$ as the output.
	
}
	
	\section{Overview of the decision procedure}\label{sec: overview}
	We focus on describing the decision procedure for the Fr\'echet distance at a high level. The ideas can be adapted to the discrete Fr\'eche distance. 
	
	Given $\tau$ and $\sigma$, %we cannot afford to compute the reachability intervals for all vertices of $\tau$ and $\sigma$ exactly. 
    we carefully select a set of vertices and approximate their reachability intervals. For any $i\in [n]$ and $j\in [m-1]$, a line segment on $w_jw_{j+1}$ is an \emph{$\alpha$-approximate reachability interval} of $v_i$ if it lies inside $\+B(v_i, \delta)$, it contains $v_i$'s reachability interval on $w_jw_{j+1}$, and every point in it forms an $(\alpha\delta)$-reachable pair with $v_i$. %The line segment can be empty if $\reachVI{i}{j}$ is empty. 
	The vertex $v_i$ can have infinitely many $\alpha$-approximate reachability intervals on $w_jw_{j+1}$. For any subcurve $\sigma[w_{j_1}, w_{j_2}]$ of $\sigma$, an array $\+S$ induced by $\sigma[w_{j_1}, w_{j_2}]$ is \emph{$\alpha$-approximate reachable} for $v_i$ 
	%we can organize \emph{$\alpha$-approximate reachability intervals} of $v_i$ on $\sigma[w_{j}, w_{j_1}]$ into an array $\+{AR}^{v_i}$ induced by $\sigma[w_j, w_{j_1}]$ such that 
	if $\+S_{w_{j}}$ is an $\alpha$-approximate reachability interval for all $j\in [j_1, j_2-1]$. We can define approximate reachability intervals for a vertex $w_j$ of $\sigma$ similarly.

	%For a point $x\in \tau$ and a point $y\in \sigma$, the pair $(x, y)$ is $\delta$-reachable if and only $d_F(\tau[v_1, x], \sigma[w_1, y])\le \delta$. Take an edge $w_jw_{j+1}$ of $\sigma$, all points on $w_jw_{j+1}$ that can form $\delta$-reachable pairs with $x$ constitute the reachability interval of $x$ on $w_jw_{j+1}$. It is possible that $x$ may not have a reachability interval on some edge of $\sigma$. For any subcurve $\sigma[w_j, w_{j'}]$, take a collection $(\+I[j], \+I[j-1],\ldots,\+I[j'-1])$ of $(j'-j)$ intervals such that $\+I[j_1]$ is either an interval on the edge $w_{j_1}w_{j_1+1}$ or empty for $j_1\in [j, j'-1]$. These intervals are called \emph{$\alpha$-approximate reachability intervals} in $\sigma[w_j, w_{j'}]$ for $x$ if $\+I[j_1]$ is non-empty and contains $x$'s reachability interval on $w_{j_1}w_{j_1+1}$ when $x$ has a reachability interval on $w_{j_1}w_{j_1+1}$ for all $j_1\in [j, j'-1]$, and all points in the non-empty intervals can form an $(\alpha\delta)$-reachable pair with $x$.  The reachability interval and $\alpha$-approximate reachability intervals of $y$ can be defined analogously. 
	
	For any fixed $\varepsilon\in (0,1)$, our decision procedure aims to compute a $(7+\varepsilon)$-approximate reachability interval for $w_m$ on the edge $v_{n-1}v_n$. %Because we can make the decision by simply checking whether $v_n$ is contained in the interval in $O(1)$ time. 
	We first divide $\tau$ and $\sigma$ into short subcurves. Let $\mu_1$ and $\mu_2$ be two integers whose values will be specified later with $\mu_2=\mu_1^{1-c}$ for some constant $c\in (0,1)$. We assume that $n-1$ and $m-1$ are multiples of $\mu_1$ and $\mu_2$, respectively. Define $a_k=(k-1)\mu_1+1$ for $k\in [(n-1)/\mu_1]$. We divide $\tau$ into a collection $(\tau_1, \tau_2,\ldots, \tau_{(n-1)/\mu_1})$ of $(n-1)/\mu_1$ subcurves such that $\tau_k=\tau[v_{a_k}, v_{a_{k+1}}]$. Every subcurve $\tau_k$ contains $\mu_1$ edges, and $\tau_{k-1}\cap \tau_k=v_{a_k}$. 
	
	We divide $\sigma$ into shorter subcurves. 
    Define $b_l=(l-1)\mu_2+1$ for $l\in [(m-1)/\mu_2]$. We divide $\sigma$ into a collection $(\sigma_1, \sigma_2,\ldots, \sigma_{(m-1)/\mu_2})$ of $(m-1)/\mu_2$ subcurves such that $\sigma_l=\sigma[w_{b_l}, w_{b_{l+1}}]$. Every subcurve $\sigma_l$ contains $\mu_2$ edges, and $\sigma_{l-1}\cap \sigma_l=w_{b_l}$.
	%$\+{AR}^{v_i}_{l, w_j}$
	%For any $i\in[n]$ and any $l\in [(m-1)/\mu_2]$, let $\appReachVIArray{i}{l}$ be an array induced by $\sigma_l$ that is $(7+\varepsilon)$-approximate reachable for $v_i$. For any $j\in [m]$ and any $k\in [(n-1)/\mu_1]$, let $\appReachWJArray{j}{k}$ be an array induced by $\tau_k$ that is $(7+\varepsilon)$-approximate reachable for $w_j$. 
	The heart of our decision procedure is a subroutine {\sc Reach} that works for any $k$ and $l$ and a constant $\beta>1$. %It takes as input a subcurve $\tau[v_{i'}, v_{i''}]$, a subcurve $\sigma[w_{j'}, w_{j''}]$, a collection $\+R_{i'}[j',j''-1]=(R_{i'}[j'], R_{i'}[j'+1],\ldots,R_{i'}[j''-1])$ of intervals where $R_{i'}[j]$ is an interval on $w_jw_{j+1}$ for $j\in [j', j''-1]$, and a collection $\+R'_{j'}[i', i''-1]=(R'_{j'}[i'], R'_{j'}[i'+1],\ldots, R'_{j'}[i''-1])$ where $R'_{j'}[i]$ if a interval on $v_iv_{i+1}$ for $i\in [i', i''-1]$. 
	
	\vspace{2pt}
	
	\noindent\fbox{\parbox{\dimexpr\linewidth-2\fboxsep-2\fboxrule\relax}{\textbf{Procedure {\sc Reach}}

	\vspace{2pt}

	\textbf{Input:} a subcurve $\tau_k$, a subcurve $\sigma_l$, an array $\appReachVIArray{a_k}{l}$ induced by $\sigma_l$ that is $\beta$-approximate reachable for $v_{a_k}$, an array $\appReachWJArray{b_l}{k}$ induced by $\tau_k$ that is $\beta$-approximate reachable for $w_{b_l}$. 
	
	\vspace{2pt}
	
	\textbf{Output:} an array $\appReachVIArray{a_{k+1}}{l}$ induced by $\sigma_l$ that is $\beta$-approximate reachable for $v_{a_{k+1}}$ and an array $\appReachWJArray{b_{l+1}}{k}$ induced by $\tau_k$ that is $\beta$-approximate reachable for $w_{b_{l+1}}$.
	
	%\vspace{4pt}
	
	%Guarantee: If $\+R_{i'}$ and $\+R'_{j''}$ are $(5+\varepsilon)$-approximate reachability intervals of $v_{i'}$ on $\sigma[w_{j'}, w_{j''}]$ and $(5+\varepsilon)$-approximate reachability intervals of $w_{j'}$ on $\tau[v_{i'}, v_{i''}]$, respectively, then $\+R_{i''}$ and $\+R'_{j''}$ are $(5+\varepsilon)$-approximate reachability intervals of $v_{i''}$ on $\sigma[w_{j'}, w_{j''}]$ and $(5+\varepsilon)$-approximate reachability intervals of $w_{j''}$ on $\tau[v_{i'}, v_{i''}]$, respectively.
	}	 
	}
	
	\vspace{2pt}
	
	We will design {\sc Reach} to run in $O(\left(\mu_1\mu_2\right)^{1-c})$ time with $\beta=7+\varepsilon$ for some fixed $c\in (0,1)$ and $\varepsilon\in (0,1)$. %Before we present the implementation of {\sc Reach}, 
	%We show how it helps us realize a subquadratic decision procedure. %We use it to speed up the dynamic programming in {\sc Propagate}$(\tau, \sigma, \delta)$.
	%We focus on solving the following subproblem involving $\tau_k$ and $\sigma_l$ in $o(\mu_1\mu_2)$ time. For $k\in [(n-1)/\mu_1]$ and $l\in [(m-1)/\mu_2]$, given the $\alpha$-approximate reachability intervals in $\sigma_l$ for $v_{a_k}$ and the $\alpha$-approximate reachability intervals in $\tau_k$ for $w_{b_l}$, how to compute the $\alpha$-approximate reachability intervals in $\sigma_l$ for $v_{a_{k+1}}$ and the $\alpha$-approximate reachability intervals in $\tau_k$ for $w_{b_{l+1}}$?
	%The solution of the subproblem can be helpful in the following way. Let $\+P(k,l)$ be the solution procedure for the subproblem involving $\tau_k$ and $\sigma_l$. 
	To use {\sc Reach}, we first compute all reachability intervals of $v_1$ and $w_1$ in $O(n+m)$ time. We then generate $\appReachVIArray{1}{l}$ and $\appReachWJArray{1}{k}$ as follows. For every $l\in [(m-1)/\mu_2]$ and every $j\in [b_l, b_{l+1}-1]$, set $\appReachVI{1}{l}{j}=\reachVI{1}{j}$. By definition, $\appReachVIArray{1}{l}$ is induced by $\sigma_l$ and is $(7+\varepsilon)$-approximate reachable for $v_1$ for all $l\in [(m-1)/\mu_2]$. We can generate $\appReachWJArray{1}{k}$ for $w_1$ and all $k\in [(n-1)/\mu_1]$ similarly. It takes $O(n+m)$ time so far.
	
	We are now ready to invoke {\sc Reach}$(\tau_1, \sigma_1, \appReachVIArray{a_1}{1}, \appReachWJArray{b_1}{1})$ to get $\appReachVIArray{a_2}{1}$ induced by $\sigma_1$ that is $(7+\varepsilon)$-approximate reachable for $v_{a_2}$ and $\appReachWJArray{b_2}{1}$ induced by $\tau_1$ that is $(7+\varepsilon)$-approximate reachable for $w_{b_2}$. We proceed to invoke the procedure {\sc Reach}$(\tau_2, \sigma_1, \appReachVIArray{a_2}{ 1}, \appReachWJArray{b_1}{2})$.  We can repeat the process for all $k\in [(n-1)/\mu_1]$ to get $\appReachWJArray{b_2}{k}$ of $w_{b_2}$ for all $k\in [(n-1)/\mu_1]$.
	
	We proceed to invoke {\sc Reach}$(\tau_1, \sigma_2, \appReachVIArray{a_1}{2}, \appReachWJArray{b_2}{1})$. We can repeat the process above to get $\appReachWJArray{b_3}{k}$ of $w_{b_3}$ for all $k\in [(n-1)/\mu_1]$. In this way, we can get the $(7+\varepsilon)$-approximate reachability intervals for $w_m$ after calling {\sc Reach} for $mn/(\mu_1\mu_2)$ times. In the end, we finish the decision by checking whether $v_n$ is covered by $\appReachWJArray{m}{(n-1)/\mu_1}$ in $O(1)$ time. If so, return yes; otherwise, return no.
	
	The correctness of our decision procedure is guaranteed by the definition of the $(7+\varepsilon)$-approximate reachability interval. The running time is dominated by invocations of {\sc Reach}. If {\sc Reach} runs in $O\left((\mu_1\mu_2)^{1-c} \right)$ time, the total time is $mn/(\mu_1\mu_2)\cdot O\left((\mu_1\mu_2)^{1-c} \right) =O(mn/(\mu_1\mu_2)^c)$. By choosing the values for $\mu_1$ and $\mu_2$ to make $(\mu_1\mu_2)^c=\Omega\left(m^{c_1}\right)$ for some constant $c_1\in (0,1)$, we can realize an $O\left(nm^{1-c_1}\right)$-time decision procedure in the end.
	
	%make $(\mu_1\mu_2)^c=\Omega\left((mn)^{c_1}\right)$ for some constant $c_1\in (0,1)$, we can realize an $O\left((mn)^{1-c_1}\right)$-time decision procedure in the end.
	
	\subsection{Technical overview of {\sc Reach}} 

	Given $\tau_k$, $\sigma_l$, and arrays $\appReachVIArray{a_k}{l}$ and $\appReachWJArray{b_l}{k}$ that are $\alpha$-approximate reachable for $v_{a_k}$ and $w_{b_l}$, respectively, the key to implement the procedure {\sc Reach} is to cover all points in the reachability intervals of $v_{a_{k+1}}$ and $w_{b_{l+1}}$ on $\tau_k$ and $\sigma_l$. We first introduce four types of points in $\tau_k$ and $\sigma_l$ to be covered. The underlying intuition that guides the classification can be found in Figure.~\ref{fig:classification}. We will show that all points in the reachability intervals of $v_{a_{k+1}}$ and $w_{b_{l+1}}$ belong to at least one of these four types. We then describe how to cover points of every type fast.

	\begin{figure}
		\centering
		\includegraphics[scale=0.7]{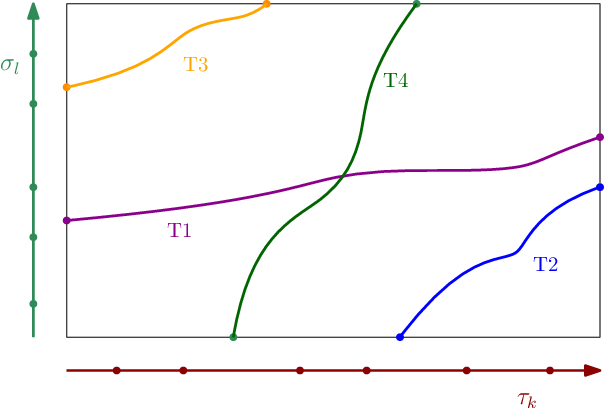}
		\caption{Free space diagram induced by $\tau_k$ and $\sigma_l$ with respect to $\delta$. The input arrays $\appReachVIArray{a_k}{l}$ and $\appReachWJArray{b_l}{k}$ correspond to intervals on the left and bottom boundaries. The targeted output arrays correspond to intervals on the right and upper boundaries where we can reach from $\appReachVIArray{a_k}{l}$ and $\appReachWJArray{b_l}{k}$ via bi-monotone paths in the free space. These paths can be classified into four types depending on where they start and end.}\label{fig:classification}
	\end{figure}
	
	\subsubsection{Classification of target points}
	The points in in $\tau_k$ and $\sigma_l$ that we are going to cover can be classified into the following four types.

	\vspace{4pt}
	
	%\begin{quote}
	\noindent\fbox{\parbox{\dimexpr\linewidth-2\fboxsep-2\fboxrule\relax}{
		\noindent\textbf{T1.} $q\in \sigma_l$: There is a point $p$ covered by $\appReachVIArray{a_k}{l}$ such that $(v_{a_{k+1}},q)$ is $\delta$-reachable from $(v_{a_k},p)$. %$p\le_{\sigma} \ell^{a_{k+1}}_{w_j}$ and $d_F(\tau_k, \sigma[p, \ell^{a_{k+1}}_{w_j}])\le \delta$.
		
		\noindent\textbf{T2.} $q\in \sigma_l$: There is a point $x$ covered by $\appReachWJArray{b_l}{k}$ such that $(v_{a_{k+1}},q)$ is $\delta$-reachable from $(x, w_{b_l})$. %$d_F(\tau[x, v_{a_{k+1}}], \sigma[w_{b_l}, \ell^{a_{k+1}}_{w_j}])\le \delta$.
		
		\noindent\textbf{T3.} $y\in \tau_k$: There is a point $p$ covered by $\appReachVIArray{a_k}{l}$ such that $(y, w_{b_{l+1}})$ is $\delta$-reachable from $(v_{a_k}, p)$. %$d_F(\tau[v_{a_k}, \ell^{b_{l+1}}_{v_i}], \sigma[p, w_{b_{l+1}}])\le \delta$.
		
		\noindent\textbf{T4.} $y\in \tau_k$: There is a point $x$ covered by $\appReachWJArray{b_l}{k}$ such that $(y, w_{b_{l+1}})$ is $\delta$-reachable from $(x, w_{b_l})$. %$x\le_{\tau} \ell^{b_{l+1}}_{v_i}$ and $d_F(\tau[x, \ell^{b_{l+1}}_{v_i}], \sigma_l)\le \delta$.
	%\end{quote}
}}
	
	\vspace{4pt}

	The classification of points is useful due to the following lemma.

	\begin{lemma}\label{lem:cover1}
		Every point in the reachablity intervals of $v_{a_{k+1}}$ on $\sigma_l$ and $w_{b_{l+1}}$ on $\tau_k$ belongs to at least one type.
	\end{lemma}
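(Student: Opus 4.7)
The plan is to argue via the planarity and bi-monotonicity of the free space diagram. Fix any point $q$ in the reachability interval of $v_{a_{k+1}}$ on some edge of $\sigma_l$. By the definition of reachability, there is a bi-monotone path $P$ in the $\delta$-free space from $(v_1,w_1)$ to $(v_{a_{k+1}},q)$. Consider the axis-aligned rectangle $R = \tau_k\times\sigma_l$ in the parameter space: the endpoint $(v_{a_{k+1}},q)$ lies on the right edge of $R$, while $(v_1,w_1)$ lies weakly to the lower-left of $R$. Since $P$ is bi-monotone and terminates on the right edge of $R$, it must first enter $R$ either through the left edge $\{v_{a_k}\}\times\sigma_l$ or through the bottom edge $\tau_k\times\{w_{b_l}\}$.

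In the first subcase, $P$ enters $R$ at some $(v_{a_k},p)$, so the prefix of $P$ certifies that $(v_{a_k},p)$ is $\delta$-reachable from $(v_1,w_1)$; in particular, $p$ lies in the true reachability interval of $v_{a_k}$ on the edge of $\sigma_l$ containing it. Because $\appReachVIArray{a_k}{l}$ is $\beta$-approximate reachable for $v_{a_k}$, its entries contain the true reachability intervals by definition, so $p$ is covered by $\appReachVIArray{a_k}{l}$. The suffix of $P$ then witnesses that $(v_{a_{k+1}},q)$ is $\delta$-reachable from $(v_{a_k},p)$, classifying $q$ as a T1 point. In the second subcase, $P$ enters at some $(x,w_{b_l})$ with $x\in\tau_k$; the symmetric argument using $\appReachWJArray{b_l}{k}$ classifies $q$ as a T2 point.

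For any point $y$ in the reachability interval of $w_{b_{l+1}}$ on an edge of $\tau_k$, the representative $(y,w_{b_{l+1}})$ lies on the top edge of the same rectangle $R$. Repeating the entry-edge argument with the roles of the right and top edges interchanged shows that $y$ is either a T3 point (entry through the left edge, using $\appReachVIArray{a_k}{l}$) or a T4 point (entry through the bottom edge, using $\appReachWJArray{b_l}{k}$). Combining both cases gives the lemma.

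The only subtlety lies in the boundary case $a_k=1$ and $b_l=1$, where $(v_1,w_1)$ sits at the lower-left corner of $R$ rather than strictly outside. Here the ``first entry'' into $R$ is the starting point itself, which lies simultaneously on both the left and bottom edges, so the coverage requirement reduces to whether $(v_1,w_1)$ is contained in the base-case arrays. This is ensured by the initialization described in the overview, which builds $\appReachVIArray{1}{l}$ and $\appReachWJArray{1}{k}$ directly from the true reachability intervals of $v_1$ and $w_1$. Thus the classification is exhaustive in all cases.
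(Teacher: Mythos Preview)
Your proof is correct and follows essentially the same approach as the paper. The paper phrases the dichotomy in terms of the Fr\'echet matching between $\tau[v_1,v_{a_{k+1}}]$ and $\sigma[w_1,q]$ necessarily matching either $v_{a_k}$ to some $p\in\sigma_l$ or $w_{b_l}$ to some $x\in\tau_k$, whereas you phrase it as the bi-monotone path first entering the rectangle $\tau_k\times\sigma_l$ through its left or bottom edge; these are the same observation in different language, and the remainder of both arguments (reachability of the entry point, coverage by the approximate arrays, reachability along the suffix) is identical.
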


	\begin{proof}
		Take a point $q\in\sigma_l$ that belongs to $v_{a_{k+1}}$'s reachability interval. The pair $(v_{a_{k+1}}, q)$ is $\delta$-reachable from $(v_1, w_1)$. The Fr\'echet matching between $\tau[v_1, v_{a_{k+1}}]$ and $\sigma[w_1, q]$ either matches $v_{a_k}$ to some point $p\in \sigma_l$ or matches $w_{b_l}$ to some point $x\in \tau_k$. Since $d_F(\tau[v_1, v_{a_{k+1}}], \sigma[w_1,q])\le\delta$, the points $p$ and $x$ must belong to the reachability intervals of $v_{a_k}$ and $w_{b_l}$, respectively, if they exist. It implies that $p$ and $x$ are covered by $\appReachVIArray{a_k}{l}$ and $\appReachWJArray{b_l}{k}$, respectively. In addition, $(v_{a_{k+1}},q)$ is $\delta$-reachable from either $(v_{a_k}, p)$ or $(x, w_{b_l})$. Hence, $q$ belongs to either type 1 or 2. We can prove that points in $w_{b_{l+1}}$'s reachability intervals on $\tau_k$ belong to type 3 and 4 similarly.
	\end{proof}

	\subsubsection{Covering points of each type}
	While points of each type require separate handling, the speedup of covering the points in T1-T3 is achieved by the use of curve simplification. A simple motivating example is as follows. Given two curves $\zeta_1$ and $\zeta_2$ together with a non-negative value $r$, suppose that $\zeta_1$ can be simplified to another curve $\zeta'_1$ such that $|\zeta'_1|\ll|\zeta_1|$ and $d_F(\zeta_1, \zeta'_1)=O(r)$. We can use $\zeta'_1$ as a surrogate of $\zeta_1$ and compute $d_F(\zeta'_1, \zeta_2)$ to determine whether $d_F(\zeta_1, \zeta_2)\le r$ approximately. It takes $O(|\zeta'_1||\zeta_2|)=o(|\zeta_1||\zeta_2|)$ time excluding the time spent on curve simplification. 
    %In our usage, the time spent on curve simplification can be amortized among multiple invocations of {\sc Reach}. Hence, the quadratic curve simplification procedure does not cause a trouble. 
    A similar idea has been used by Blank~and~Driemel~\cite{blank2024faster} to achieve a faster algorithm in computing Fr\'echet distance in one-dimensional space.
    The remaining type (T4) is the most challenging to deal with. We resort to sampling and preprocessing. The high level idea of handling each type is as follows. %Let $\+I^1$ and $\+I^2$ be two arrays induced by $\sigma_l$. Let $\+I^3$ and $\+I^4$ be two arrays induced by $\tau_k$.  %Note that we do not need any knowledge about these reachability intervals in the procedure.
	
	\vspace{2pt}
	
	\noindent\textbf{Type 1.} We construct an array $\+I^1$ induced by $\sigma_l$ such that $\+I^1$ covers all points of type 1, and every point covered by $\+I^1$ forms a $((7+\varepsilon)\delta)$-reachable pair with $v_{a_{k+1}}$. 
	%and for all $j\in [b_l, b_{l+1}-1]$, if $\reachVI{a_{k+1}}{j}$ is of type 1, then $\reachVI{a_{k+1}}{j}\subset\+I^1_{w_j}$. If $\reachVI{a_{k+1}}{j}$ is of type 1, 
	For any point $q$ of type 1, there is a point $p\in \sigma_l$ such that $d_F(\tau_k, \sigma[p, q])\le \delta$. %Since $\sigma[p, \ell^{a_{k+1}}_{w_j}]$ is a subcurve of $\sigma_l$, 
	Note that $\sigma[p, q]$ contains at most $\mu_2+1$ vertices. Hence, we can simplify $\tau_k$ to a curve $\zeta_k$ of $O(\mu_2)$ vertices in $\tilde{O}(\mu_1^2)$ time\footnote{We hide $\epsilon^{-O(d)}$ and polylog factors in $\tilde{O}$.} such that $d_F(\tau_k, \zeta_k)\le (1+\epsilon)\delta$ by Lemma~\ref{lem: simp}. Then $d_F(\zeta_k,\sigma[p,q])\le(2+\epsilon)\delta$ by the triangle inequality. We use $\zeta_k$ as a surrogate of $\tau_k$. Specifically, let $\+S^k$ be an array induced by $\zeta_k$ with all elements being empty, and let $v'_a$ be the last vertex of $\zeta_k$. We invoke {\sc WaveFront}$(\zeta_k, \sigma_l, (2+\epsilon)\delta, \+S^k, \appReachVIArray{a_k}{l})$ to get the output array $\+{W}^{v'_a}$ for $v'_a$ in $O(\mu_2^2)$ time. We then set $\+I^1=\+{W}^{v'_a}$. %Because it is sufficient to consider points covered by $\appReachVIArray{a_k}{l}$ for $\reachVI{a_{k+1}}{j}$ of type 1. 
    Since all elements in $\+S^k$ are empty, $\+{W}^{v'_a}$ covers all points $p'\in \sigma_l$ such that there is a point $p$ covered by $\appReachVIArray{a_k}{l}$ with $p\le_\sigma p'$ and $d_F(\zeta_k, \sigma[p, p'])\le (2+\epsilon)\delta$. Hence, $q$ is covered by $\+I^1$. %By the triangle inequality, all points in $\reachVI{a_{k+1}}{j}$ can be found. 
    %Since we set $\mu_1>\mu_2$ on purpose and $\zeta_k$ is constructed once to deal with $\sigma_l$ for all $l\in [(m-1)/\mu_2]$, the amortized time for constructing $\+I^1$ in all invocations of {\sc Reach} can be $o(\mu_1\mu_2)$. 
    It may happen that $\tau_k$ cannot be simplified to a curve of $O(\mu_2)$ vertices. In this case, the points of type 1 do not exist, and we do not need to worry about it.

	\vspace{2pt}
	
	\noindent\textbf{Type 2.} We construct an array $\+I^2$ induced by $\sigma_l$ such that $\+I^2$ covers all points of type 2, and every point covered by $\+I^2$ forms a $((7+\varepsilon)\delta)$-reachable pair with $v_{a_{k+1}}$. 
	%and for all $j\in [b_l, b_{l+1}-1]$, if $\reachVI{a_{k+1}}{j}$ is of type 2, then $\reachVI{a_{k+1}}{j}\subset\+I^2_{w_j}$. 
	For any point $q$ of type 2, there is a point $x\in \tau_k$ with $d_F(\tau[x, v_{a_{k+1}}], \sigma[w_{b_l}, q])\le \delta$. Note that $\tau[x, v_{a_{k+1}}]$ is a suffix of $\tau_k$, and $\sigma[w_{b_l}, q]$ has at most $\mu_2+1$ vertices. %and it is within a Fr\'echet distance $\delta$ of a curve of at most $\mu_2+1$ vertices. 
    Hence, $x$ lies in some suffix of $\tau_k$ that can be simplified to a curve of $\mu_2+1$ vertices with respect to $\delta$. 
	%It is sufficient to cover all reachability interval of type 2 based on computation involving such a suffix and $\sigma_l$.
	Therefore, we identify the longest suffix $\tau[v_{i_{\text{suf}}},v_{a_{k+1}}]$ of $\tau_k$ such that using Lemma~\ref{lem: simp} on it returns a curve $\suf$ of at most $2\mu_2$ vertices. Since the point $x$ must belong to $\tau[v_{i_{\text{suf}}},v_{a_{k+1}}]$, we use $\suf$ as a surrogate of the suffix to compute $\+I^2$. Specifically, let $\+S'$ be an array induced by $\sigma_l$ with all elements being empty. Let $\bar{v}_a$ be the last vertex of $\suf$. We also construct an array $\+S$ induced by $\suf$ based on $\appReachWJArray{b_l}{k}$ and a matching $\Msuf$ with $d_{\Msuf}(\tau[v_{i_{\text{suf}}},v_{a_{k+1}}],\suf)\le(1+\epsilon)\delta$. We ensure that $\Msuf(x)$ is covered by $\+S$. The detailed construction of $\+S$ in $O(\mu_1)$ time will be described in the next section. We invoke {\sc WaveFront}$(\suf, \sigma_l, (2+\epsilon)\delta, \+S, \+S')$ to get the output array $\+{W}^{\bar{v}_a}$ for $\bar{v}_a$ in $O(\mu_2^2)$ time. We then set $\+I^2=\+{W}^{\bar{v}_a}$. %Because we do not need to consider points covered by $\appReachVIArray{a_k}{l}$ for $\reachVI{a_{k+1}}{j}$ of type 2. 
    By the triangle inequality, $d_F(\suf[\Msuf(x),\bar{v}_a],\sigma[w_{b_l}, q])\le(2+\epsilon)\delta$. Hence, $q$ is covered by $\+I^2$. %Since $\suf$ is constructed once to deal with $\sigma_l$ for all $l\in [(m-1)/\mu_2]$, the amortized time for constructing $\+I^2$ in all invocations of {\sc Reach} can be $o(\mu_1\mu_2)$.  %Ideally, if we can identify all such points $x$, we will not miss any reachability intervals of type 2.  We only construct $\suf$ once for all subcurves $\sigma_l$.

	\vspace{2pt}
	
	\noindent\textbf{Type 3.} We construct an array $\+I^3$ induced by $\tau_k$ such that $\+I^3$ covers all points of type 3, and every point covered by $\+I^3$ forms a $((7+\varepsilon)\delta)$-reachable pair with $w_{b_{l+1}}$. 
	%and for all $i\in [a_k, a_{k+1}-1]$, if $\reachWJ{b_{l+1}}{i}$ is of type 3, then $\reachWJ{b_{l+1}}{i}\subset\+I^3_{v_i}$. 
	The construction of $\+I^3$ is analogous to that of $\+I^2$. For any point $y$ of type 3, there is a point $p\in \sigma_l$ such that $d_F(\tau[v_{a_k},y], \sigma[p,w_{b_{l+1}}])\le \delta$. Note that $\tau[v_{a_k},y]$ is a prefix of $\tau_k$. We identify the longest prefix of $\tau_k$ such that using Lemma~\ref{lem: simp} on that prefix returns a curve $\pre$ of at most $2\mu_2$ vertices. Since $y$ must belong to this prefix, we proceed to use $\pre$ as a surrogate to invoke {\sc WaveFront}$(\pre, \sigma_l, (2+\epsilon)\delta, \+S, \appReachVIArray{a_k}{l})$ to construct $\+I_3$ in $O(\mu_2^2)=o(\mu_1\mu_2)$ time. The array $\+S$ is induced by $\pre$, and all elements in it are empty. %Since $\pre$ is constructed once to deal with $\sigma_l$ for all $l\in [(m-1)/\mu_2]$, the amortized time for constructing $\+I^3$ in all invocations of {\sc Reach} can be $o(\mu_1\mu_2)$.
	
	%We only construct $\suf$ once for all subcurves $\sigma_l$.
	
	%The approach for producing $\+I_3$ is analogous to that of type 2. For any reachability interval $\reachWJ{b_{l+1},i}$ of type 3, it holds that there exists a point $y\in \sigma_l$ such that $d_F(\tau[v_{a_k},x], \sigma[y,w_{b_{l+1}}])\le \delta$ for all points $x\in \reachWJ{b_{l+1},i}$. Provided that $\tau[v_{a_k},x]$ and $\sigma[y,w_{b_{l+1}}]$ are $\tau_k$'s prefix and $\sigma_l$'s suffix, respectively. We then identify the longest prefix of $\tau_k$ such that invoking the algorithm in Lemma~\ref{lem: simp} with this prefix returns a curve $\pre$ of at most $2\mu_2-2$ vertices. Since this prefix must contain $x$, it sufficient to use $\pre$ as a surrogate of this prefix and carry out a procedure involving $\appReachVIArray{a_k,l}$, $\sigma_l$ and $\pre$ to produce $\+I_3$ in $O(\mu_2^2+\mu_1)$ time.
	
	\vspace{2pt}
	
	\noindent\textbf{Type 4.}  We construct an array $\+I^4$ such that $\+I^4$ covers all points of type 4, and every point covered by $\+I^4$ forms a $((7+\varepsilon)\delta)$-reachable pair with $w_{b_{l+1}}$. %and for all $i\in [a_k, a_{k+1}-1]$, if $\reachWJ{b_{l+1}}{i}$ is of type 4, then $\reachWJ{b_{l+1}}{i}\subset\+I^4_{v_i}$. %The construction of $\+I^4$ is the most intricate part in the decision procedure. 
    It is unlikely that the use of curve simplification can accelerate the handling of points of type 4. Because these points may be scattered over $\tau_k$, and the entire $\tau_k$ may not be simplified significantly.
	%we can find a surrogate for the portion of $\tau_k$ related to reachability intervals of type 4 to reduce the running time. 
	We achieve the speedup via several data structures. The key data structure is the one that allows us to answer the following query fast.

    \vspace{2pt}
        
        \noindent\pmb{{\sc Cover}$(\tau', \delta',\+S)$.} Given a subcurve $\tau'$ of $\tau_k$, an array $\+S$ induced by $\tau_k$, and $\delta'>0$. For any $\epsilon\in(0,1)$, the query output is an array $\bar{\+S}$ induced by $\tau_k$ that satisfies the following properties:
	
	\begin{itemize}
		\item for any point $y\in\tau_k$, if there is a point $x$ covered by $\+S$ such that $x\le_{\tau} y$ and $d_F(\tau[x,y], \tau')\le \delta'$, then $y$ is covered by $\bar{\+S}$;
		\item for any point $y$ covered by $\bar{\+S}$, there is a point $x$ covered by $\+S$ such that $x\le_{\tau} y$ and $d_F(\tau[x,y], \tau')\le (1+\epsilon)\delta'$.
	\end{itemize}

    We present in Lemma~\ref{lem:cover} how to preprocess $\tau_k$ to answer a query {\sc Cover} in $O(\mu_1)$ time. To use it, note that if there is a $y\in \tau_k$ of type 4, there is a subcurve $\tau''=\tau[x,y]$ of $\tau_k$ such that $d_F(\tau',\sigma_l)\le \delta$. %Suppose that we have found $\tau'$. 
    Assume that we can find $\tau''$ efficiently. We use $\tau''$ as a surrogate of $\sigma_l$ and execute a query {\sc Cover}$(\tau'', 2\delta, \appReachWJArray{b_l}{k})$. We then assign the output to $\+I^4$. %The definition of {\sc Cover} is as follows.
    For any point $y'$ of type 4, there is a point $x'$ covered by $\appReachWJArray{b_l}{k}$ such that $d_F(\tau[x', y'],\sigma_l)\le \delta$ by definition. By the triangle inequality, $d_F(\tau[x', y'], \tau'')\le 2\delta$. Hence, $y'$ is covered by $\+I^4$ by the definition of {\sc Cover}. 
	
	%Suppose that we can identify $\tau'$, we present in Lemma~\ref{lem:cover} how to preprocess $\tau_k$ to answer a query {\sc Cover} in $O(\mu_1)$ time. 
    The bottleneck is how to find $\tau''$ for $\sigma_l$ efficiently. The nearest neighbor data structures~\cite{BDNP2022,CH2023,cheng2023solving,mirzanezhad2020approximate} under the Fr\'echet distance may help. However, all these data structures have exponential dependency on the size of the query curve, i.e., $|\sigma_l|=\mu_2$, in either the space complexity or the query time. We cannot afford to use them.
	
	%the neighbor for $\sigma_l$, which is the classic nearest neighbor problem under the Fr\'echet distance. It requires us to construct a data structure for $S$ and answer a query with $\sigma_l$ by its nearest neighbor in $S$. Unfortunately, all the known data structures~\cite{BDNP2022,CH2023,cheng2023solving,mirzanezhad2020approximate} have exponential dependency on the size of the query curve, i.e., $|\sigma_l|=\mu_2$, in either the space complexity or the query time, which is inefficient.
	We use sampling.
	%We bypass the challenge via the following idea. 
	Take a curve $\zeta$ with $|\zeta|\le \mu_1$. An edge $v_iv_{i+1}$ of $\tau_k$ is defined to be \emph{marked} by $\zeta$ if there is some subcurve $\tau'$ of $\tau_k$ such that $\tau'\cap v_iv_{i+1}\not=\emptyset$ and $d_F(\tau',\zeta)\le\delta$. Intuitively, it is easier to find a subcurve of $\tau_k$ close to $\zeta$ given an edge marked by $\zeta$, because we can focus on the subcurves around this edge. We present in Lemma~\ref{lem: marked-edge} how to find a subcurve $\tau''$ of $\tau_k$ such that $d_F(\tau'',\zeta)\le(3+2\epsilon)\delta$ in $O(|\zeta|^4)$ time given an edge marked by $\zeta$. For small enough $|\zeta|$, $O(|\zeta|^4)=o(\mu_1)$.
	
	%For any subcurve $\sigma'$ of $\sigma_l$, an edge $v_iv_{i+1}$ of $\tau_k$ is \emph{marked} by $\sigma'$ if there is some subcurve $\tau'$ of $\tau_k$ such that $\tau'\cap v_iv_{i+1}\not=\emptyset$ and $d_F(\tau',\sigma')\le\delta$. Intuitively, it is easier to find a subcurve of $\tau_k$ close to $\sigma'$ given an edge marked by $\sigma'$, because we can focus on the subcurves around this edge. We present in Lemma~\ref{lem: marked-edge} how to find a subcurve $\tau''$ of $\tau_k$ such that $d_F(\tau'',\sigma')\le(3+2\epsilon)\delta$ in $O(\mu_2^4)$ time if we have an edge marked by $\sigma'$. For small enough $\mu_2$, $O(\mu_2^4)=o(\mu_1)$.%We can make $O(\mu_2^4)=o(\mu_1)$ by choosing small enough $\mu_2$.
	
	%The problem is how to compute edges  subcurves of $\sigma_l$. 
    A curve $\zeta$ is called \emph{$\omega$-dense} if it marks at least $\omega$ edges of $\tau_k$. By choosing a suitable value for $\omega$, for any $\omega$-dense $\zeta$, we can sample a small set of $\tau_k$'s edges to include at least one edge marked by $\zeta$ with high probability. We employ the observation in the following way.
	
	Take an integer $\mu_3<\mu_2$. We divide $\sigma_l$ into a collection $(\sigma_{l,1}, \sigma_{l,2},\ldots,\sigma_{l, (\mu_2-1)/\mu_3})$ of subcurves of $\mu_3$ edges such that $\sigma_{l,r}=\sigma[w_{b_{l,r}}, w_{b_{l,r+1}}]$, where $b_{l,r}=b_l+(r-1)\mu_3+1$ for $r\in [(\mu_2-1)/\mu_3]$. 
	%if there exists a subcurve of $\tau_k$ within a Fr\'echet distance at most $\delta$ to $\sigma_{l,r}$ such that the subcurve contains some point(s) on this edge. 
	%We call $\sigma_{l,r}$ \emph{$d$-dense} if it marks at least $d$ edges of $\tau_k$. %are marked by $\sigma_{l,r}$. 
	%By choosing a suitable value for $d$, for any $d$-dense $\sigma_{l,r}$, we can sample a small set of $\tau_k$'s edges to include at least one edge marked by $\sigma_{l,r}$ with high probability. 
	%suppose that we sample a set $E$ of $O()$ edges of $\tau_k$ independently at random with replacement. With a high probability, $E$ will contain a marked edge. 
	%Given a edge marked by $\sigma_{l,r}$, we presented in Lemma~\ref{lem: marked-edge} that we can find a subcurve of $\tau_k$ locating within a Fr\'echet distance at most $2\delta$ to $\sigma_{l,r}$ in $O(\mu_3^2)$ time. 
	Suppose that all $\sigma_{l,r}$'s are $\omega$-dense. For every $\sigma_{l,r}$, we can find a subcurve $\tau'_{l,r}$ of $\tau_k$ with $d_F(\tau'_{l,r},\sigma_{l,r})\le (3+2\epsilon)\delta$ with high probability via sampling. We then use the sequence $(\tau'_{l,1},\tau'_{l,2},\ldots,\tau'_{l,(\mu_2-1)/\mu_3})$ as a surrogate of $\sigma_l$. We do not require that the subcurve sequence follows the ordering along $\tau_k$. We use it as follows. We first execute a query {\sc Cover}$(\tau'_{l,1}, (4+2\epsilon)\delta, \epsilon, \appReachWJArray{b_l}{k})$ to get an array $\+S^1$. For any $r\ge 2$, suppose that we have gotten $\+S^{r-1}$, we proceed to execute a query {\sc Cover}$(\tau'_{l,r}, (4+2\epsilon)\delta, \+S^{r-1})$ to get $\+S^r$. In the end, we set $\+I^4=\+S^{(\mu_2-1)/\mu_3}$. It takes $O(\mu_1\mu_2/\mu_3)$ time.
 
    If some $\sigma_{l,r}$ is not $\omega$-dense, we find all edges marked by it by calling {\sc WaveFront}$(\tau_k, \sigma_{l,r},\delta,\+S, \+S')$, where $\+S$ is an array induced by $\tau_k$ with $\+S_{v_i}=v_iv_{i+1}\cap\+B(w_{b_l,r},\delta)$ and $\+S'$ is an array induced by $\sigma_{l,r}$ with all elements being empty. It takes $O(\mu_1\mu_3)$ time. There must be an edge marked by both $\sigma_{l,r}$ and $\sigma_l$ as $\sigma_l$ contains $\sigma_{l,r}$. Hence, we can use Lemma~\ref{lem: marked-edge} with every edge marked by $\sigma_{l,r}$ to find a subcurve $\tau'$ of $\tau_k$ with $d_F(\tau',\sigma_l)\le(3+2\epsilon)\delta$ in $O(\omega\mu_2^4)$ time. We execute {\sc Cover}$(\tau', (4+2\epsilon)\delta, \appReachWJArray{b_l}{k})$ and assign the output to $\+I^4$ in $O(\mu_1)$ time.
	%any subcurve of $\tau_k$ within a Fr\'echet distance $\delta$ to $\sigma_l$ must contain  edge marked by $\sigma_{l,r}$.
	%a close subcurve of $\tau_k$ for every $\sigma_{l,r}$ with a high probability. 
	%During the sampling phase, if we failed to get a marked edge for some $\sigma_{l,r}$, it implies that $\sigma_{l,r}$ is not $d$-dense with high probability. In this case, we can identify all edges marked by $\sigma_{l,r}$. We observe that any subcurve of $\tau_k$ within a Fr\'echet distance at most $\delta$ to $\sigma_l$ must contain some edge marked by $\sigma_{l,r}$. In other words, we can use these marked edges to guide the search for the subcurve close to $\sigma_l$. Given there are at most $d$ marked edges, we can find such a subcurve for $\sigma_l$ in $O(d\mu_2^2)$ time. 
	%Since we can either find subcurves of $\tau_k$ that are close to all $\sigma_{l,r}$'s or find a single subcurve of $\tau_k$ that is close to the entire $\sigma_l$, we can finally utilize the preprocessing of $\tau_k$. 
    We will use the following Chernoff bound. %in the analysis.
	
	\begin{lemma}\label{lem:Chernoff}
		Let $X=\sum_{i=1}^{k} X_i$, where $X_i=1$ with probability $p_i$, and $X_i=0$ with probability $1-p_i$, and all $X_i$'s are independent. Let $\beta=E(X)$. Then $\mathbb{P}(X\le (1-\alpha)\beta)\le e^{-\beta\alpha^2/2}$ for all $\alpha\in (0,1)$. 
	\end{lemma}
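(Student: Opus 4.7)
The plan is to prove this by the standard exponential Markov (Chernoff) argument for the lower tail of a sum of independent Bernoullis. The only nonstandard piece is verifying that the optimized bound simplifies to $e^{-\beta\alpha^2/2}$; everything else is routine.

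First, I would fix any $t>0$, to be chosen later, and rewrite the event $\{X\le(1-\alpha)\beta\}$ as $\{e^{-tX}\ge e^{-t(1-\alpha)\beta}\}$. Applying Markov's inequality gives
\[
\mathbb{P}(X\le(1-\alpha)\beta)\le e^{t(1-\alpha)\beta}\,\mathbb{E}\!\left[e^{-tX}\right].
\]
Next, by independence of the $X_i$'s, $\mathbb{E}[e^{-tX}]=\prod_{i=1}^{k}\mathbb{E}[e^{-tX_i}]=\prod_{i=1}^{k}(1-p_i(1-e^{-t}))$. Using $1-x\le e^{-x}$ coordinatewise and $\sum_i p_i=\beta$, this product is bounded by $\exp(-(1-e^{-t})\beta)$. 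Combining,
\[
\mathbb{P}(X\le(1-\alpha)\beta)\le\exp\!\Big(\beta\bigl(t(1-\alpha)-(1-e^{-t})\bigr)\Big).
\]

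Then I would choose $t$ to minimize the exponent. Differentiating shows the optimum occurs at $e^{-t}=1-\alpha$, i.e.\ $t=\ln\tfrac{1}{1-\alpha}$, which is valid for any $\alpha\in(0,1)$. Substituting yields the exponent $\beta\bigl((1-\alpha)\ln(1-\alpha)+\alpha\bigr)$ with a leading minus sign, so
\[
\mathbb{P}(X\le(1-\alpha)\beta)\le\exp\!\Big(-\beta\bigl((1-\alpha)\ln\tfrac{1}{1-\alpha}-\alpha\bigr)\Big).
\]

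The only step that needs any real work is verifying $(1-\alpha)\ln\tfrac{1}{1-\alpha}-\alpha\ge \alpha^2/2$ for $\alpha\in(0,1)$. This I would do by expanding $\ln\tfrac{1}{1-\alpha}=\sum_{j\ge 1}\alpha^j/j$ and multiplying by $1-\alpha$, which telescopes to $\alpha-\sum_{j\ge 2}\alpha^j/(j(j-1))$; subtracting $\alpha$ then gives $\sum_{j\ge 2}\alpha^j/(j(j-1))\ge \alpha^2/2$, with the $j=2$ term alone supplying the needed $\alpha^2/2$. Plugging this back gives exactly $\mathbb{P}(X\le(1-\alpha)\beta)\le e^{-\beta\alpha^2/2}$. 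No step here is genuinely hard; the main point to be careful about is the sign bookkeeping when optimizing in $t$ and the Taylor expansion inequality at the end.
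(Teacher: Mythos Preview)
The paper does not prove this lemma at all; it is quoted as the standard lower-tail Chernoff bound without proof. Your argument is the textbook derivation and is correct in substance.

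One small caveat on the bookkeeping you yourself flagged: after optimizing at $e^{-t}=1-\alpha$, the exponent is $\beta\bigl((1-\alpha)\ln\tfrac{1}{1-\alpha}-\alpha\bigr)$ with \emph{no} additional minus sign, and your own Taylor computation shows this quantity equals $-\sum_{j\ge 2}\alpha^j/(j(j-1))\le -\alpha^2/2$. Equivalently, the inequality you actually need is $(1-\alpha)\ln(1-\alpha)+\alpha\ge \alpha^2/2$, not $(1-\alpha)\ln\tfrac{1}{1-\alpha}-\alpha\ge \alpha^2/2$; your series argument proves exactly the former once the sign is tracked. The two sign slips cancel, so the conclusion stands.
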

	%{\color{red} We can solve the instance involving $\tau_k$ and $\sigma_l$ by solving $\mu_2/\mu_3$ instances involving $\tau_k$ and $\sigma_{l,r}$ for all $r\in [(\mu_2-1)/\mu_3]$.}
	
	\section{Subquadratic decision algorithm for Fr\'echet distance}
	%Our decision algorithm consists of a preprocessing phase that facilitates the implementation of {\sc Reach} and the dynamic programming that uses {\sc Reach}. %We set $\mu_1=m^{0.24}$, $\mu_2=m^{0.02}$, $\mu_3=m^{0.01}$, and $\omega=m^{0.12}$ for presentation. 
     %The dynamic programming using {\sc Reach} has been presented in the overview (Section~\ref{sec: overview}). We focus on presenting the preprocessing and the implementation of {\sc Reach} in this section.
	
	\subsection{Preprocessing}\label{sec:preprocessing}
	Fix $\tau_k$ and $\varepsilon\in(0,1)$. We preprocess $\tau_k$ to construct $\zeta_k$, $\pre$, $\suf$, and several data structures. %that are useful. %for constructing $\+I^1$, $\+I^2$, $\+I^3$ and $\+I^4$. 
	
	\vspace{2pt}
	
	\noindent\textbf{Preprocessing for $\+I^1$, $\+I^2$ and $\+I^3$.} We generate $\zeta_k$, $\pre$ and $\suf$ first. Set $\epsilon=\varepsilon/10$ in Lemma~\ref{lem: simp}. We first call the algorithm in Lemma~\ref{lem: simp} with $\tau_k$ and $\delta$. If the algorithm returns a curve $\zeta_k$ of at most $2\mu_2$ vertices with $d_F(\tau_k, \zeta_k)\le (1+\epsilon)\delta$, we are done; otherwise, set $\zeta_k$ to be null. Let $\alpha=2(d-1)\lfloor d/2\rfloor^2+d$. It takes $O\left(\epsilon^{-\alpha}\mu_1\log(1/\varepsilon)\right)$ time. To generate $\pre$, we try to invoke the curve simplification algorithm with $\tau[v_{a_k}, i]$ and $\delta$ for all $i\in [a_k+1, a_{k+1}]$. We then identify the maximum $i$ such that the algorithm returns a curve of at most $2\mu_2$ vertices and set $\pre$ to be the corresponding output. It takes $O\left(\epsilon^{-\alpha}\mu_1^2\log(1/\varepsilon)\right)$ time as there are $O(\mu_1)$ prefixes to try. We can generate $\suf$ by trying all suffixes of $\tau_k$ in $O\left(\epsilon^{-\alpha}\mu_1^2\log(1/\varepsilon)\right)$ time in the same way. Let $\tau[v_{a_k}, v_{i_{\text{pre}}}]$ and $\tau[v_{i_{\text{suf}}}, v_{a_{k+1}}]$ be the corresponding prefix and suffix of $\tau_k$ for $\pre$ and $\suf$, respectively.
	
	We also need to compute and store a matching $\Mpre$ between $\tau[v_{a_k}, v_{i_{\text{pre}}}]$ and $\pre$ and a matching $\Msuf$ between $\tau[v_{i_{\text{suf}}}, v_{a_{k+1}}]$ and $\suf$ such that $\Mpre$ and $\Msuf$ realize distances at most $(1+\epsilon)\delta$. Given that $d_F(\tau[v_{a_k}, v_{i_{\text{pre}}}], \pre)\le (1+\epsilon)\delta$ and $d_F(\tau[v_{i_{\text{suf}}}, v_{a_{k+1}}], \suf)\le (1+\epsilon)\delta$ by Lemma~\ref{lem: simp}, we use the algorithm in Lemma~\ref{lem:matching} which runs in $O(\mu_1\mu_2)$ time.

	\vspace{2pt}
	
	\noindent\textbf{Preprocessing for $\+I^4$.} We first preprocess $\tau_k$ so that for any $l\in [(m-1)/\mu_2]$ and any subcurve $\sigma'=\sigma[w_j, w_{j_1}]$ of $\sigma_l$, given an edge $v_iv_{i+1}$ of $\tau_k$ marked by $\sigma'$, we can find a subcurve of $\tau_k$ that is close to $\sigma'$ efficiently. Recall that an edge $v_iv_{i+1}$ of $\tau_k$ being marked by $\sigma'$ means that there is a subcurve $\tau'$ of $\tau_k$ such that $\tau'\cap v_iv_{i+1}\not=\emptyset$ and $d_F(\tau',\sigma')\le \delta$.
	
	For every vertex $v_i$ of $\tau_k$, we identify the longest suffix of $\tau[v_{a_k}, v_i]$ and the longest prefix of $\tau[v_i, v_{a_{k+1}}]$ such that running the algorithm in Lemma~\ref{lem: simp} on them returns a curve of at most $2\mu_2$ vertices with respect to $\delta$, respectively. It takes $O\left(\epsilon^{-\alpha}\mu_1^2\log(1/\varepsilon)\right)$ time as there are $O(\mu_1)$ prefixes and suffixes to try. Let $\tau[\bar{v}_i, v_i]$ and $\tau[v_i, \tilde{v}_i]$ be the corresponding suffix and prefix, respectively. Let $\bar{\zeta}_i$ and $\tilde{\zeta}_i$ be the simplified curves for $\tau[\bar{v}_i, v_i]$ and $\tau[v_i, \tilde{v}_i]$, respectively. %It satisfies that $d_F(\tau[\bar{v}_i, v_i], \bar{\zeta}_i)\le (1+\varepsilon)\delta$ and $d_F(\tau[v_i, \tilde{v}_i], \zeta_{\bar{i}})\le (1+\varepsilon)\delta$. 
	By Lemma~\ref{lem:matching}, we further construct a matching $\bar{\+M}_i$ between $\tau[\bar{v}_i, v_i]$ and $\bar{\zeta}_i$ and a matching $\tilde{\+M}_i$ between $\tau[v_i, \tilde{v}_i]$ and $\tilde{\zeta}_i$ in $O(\mu_1\mu_2)$ time. It takes $O\left(\epsilon^{-\alpha}\mu_1^3\log(1/\varepsilon)\right)$ time to process all $v_i$'s.
	
	Next, we present how to find a subcurve of $\tau_k$ that is close to $\sigma'$ based on the above preprocessing. Suppose that $v_iv_{i+1}$ is marked by $\sigma'$. There is a subcurve $\tau[x,y]$ of $\tau_k$ such that $d_F(\tau[x,y], \sigma')\le \delta$ and $\tau[x,y]\cap v_iv_{i+1}\not=\emptyset$. It holds that $\bar{v}_i\le_\tau x$. Otherwise, there is a vertex $v_{i'}\le_\tau \bar{v}_i$ such that calling the algorithm in Lemma~\ref{lem: simp} on $\tau[v_{i'}, v_i]$ returns a curve of at most $2\mu_2$ vertices, which is a contradiction. We have $y\le_\tau \tilde{v}_{i+1}$ via the similar analysis. Hence, $\tau[x,y]$ is a subcurve of $\tau[\bar{v}_i, \tilde{v}_{i+1}]$. 
	
	Intuitively, we can concatenate $\bar{\zeta}_i$ and $\tilde{\zeta}_{i+1}$ to get a new curve $\zeta'$ of at most $4\mu_2$ vertices as the surrogate of $\tau[\bar{v}_i, \tilde{v}_{i+1}]$. It is sufficient to find a subcurve of $\zeta'$ that is close to $\sigma'$, and then retrieve a subcurve of $\tau[\bar{v}_i, \tilde{v}_{i+1}]$ that is close to $\sigma'$ quickly using $\bar{\+M}_i$ and $\tilde{\+M}_{i+1}$. We formalize the intuition in the following lemma. %The proof can found in Appendix~\ref{sec:proof_marked}. 
	
	\begin{lemma}\label{lem: marked-edge}
		We can preprocess $\tau_k$ in $O\left(\epsilon^{-\alpha}\mu_1^3\log(1/\varepsilon)\right)$ time such that given any subcurve $\sigma'$ of $\sigma_l$ and an edge $v_iv_{i+1}$ of $\tau_k$, there is an $O(\mu_2^4)$-time algorithm that returns null or a subcurve $\tau'$ of $\tau_k$ with $d_F(\tau', \sigma')\le (3+2\epsilon)\delta$. If the algorithms returns null, $v_iv_{i+1}$ is not marked by $\sigma'$.
	\end{lemma}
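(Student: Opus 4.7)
My plan is to leverage the preprocessing that has been described in the paragraphs just above the statement. For every vertex $v_i$ of $\tau_k$ we have already computed, in total time $O(\epsilon^{-\alpha}\mu_1^3\log(1/\varepsilon))$, the longest-suffix simplification $\bar{\zeta}_i$ of $\tau[\bar{v}_i, v_i]$ and the longest-prefix simplification $\tilde{\zeta}_i$ of $\tau[v_i, \tilde{v}_i]$, each of at most $2\mu_2$ vertices, together with matchings $\bar{\+M}_i$ and $\tilde{\+M}_i$ of realized Fr\'echet distance at most $(1+\epsilon)\delta$ that, by Lemma~\ref{lem:matching}, support $O(\log \mu_1)$-time point-image queries. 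The preceding analysis also established that any subcurve $\tau[x,y]$ of $\tau_k$ that meets $v_iv_{i+1}$ and lies within Fr\'echet distance $\delta$ of $\sigma'$ is contained in $\tau[\bar{v}_i,\tilde{v}_{i+1}]$. This localization is what enables a short surrogate curve.

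For the query, first form the concatenation $\zeta' = \bar{\zeta}_i \oplus v_iv_{i+1} \oplus \tilde{\zeta}_{i+1}$ in $O(\mu_2)$ time; it has at most $4\mu_2+2$ vertices. Gluing $\bar{\+M}_i$, the identity on $v_iv_{i+1}$, and $\tilde{\+M}_{i+1}$ yields a matching $\+M^{*}$ between $\tau[\bar{v}_i,\tilde{v}_{i+1}]$ and $\zeta'$ of realized distance at most $(1+\epsilon)\delta$. Then search for a subcurve of $\zeta'$ close to $\sigma'$ by enumeration: for every ordered pair $(e_1,e_2)$ of edges of $\zeta'$ with $e_1$ not after $e_2$ along $\zeta'$, run an Alt--Godau propagation on the $(2+\epsilon)\delta$-free-space diagram of $\zeta'$ and $\sigma'$ to decide whether some bi-monotone path has its lower end inside $e_1\times\{\text{first vertex of }\sigma'\}$ and its upper end inside $e_2\times\{\text{last vertex of }\sigma'\}$. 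Each such decision runs in $O(|\zeta'|\cdot|\sigma'|)=O(\mu_2^2)$ time, and there are $O(\mu_2^2)$ edge pairs, totalling $O(\mu_2^4)$. If a pair succeeds, read off witnesses $x'\in e_1$ and $y'\in e_2$, translate them to $x,y\in\tau_k$ by $\bar{\+M}_i$ or $\tilde{\+M}_{i+1}$ in $O(\log\mu_1)$ time, and return $\tau'=\tau[x,y]$; otherwise return null.

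Correctness rests on two triangle inequalities. If $\tau'$ is returned, then $d_F(\tau',\zeta'[x',y'])\le(1+\epsilon)\delta$ through $\+M^{*}$ and $d_F(\zeta'[x',y'],\sigma')\le(2+\epsilon)\delta$ by construction, so $d_F(\tau',\sigma')\le(3+2\epsilon)\delta$. Conversely, if $v_iv_{i+1}$ is marked by $\sigma'$ via some $\tau[x,y]\subseteq \tau[\bar{v}_i,\tilde{v}_{i+1}]$, then $\+M^{*}$ maps $\tau[x,y]$ to a subcurve $\zeta'[x',y']$ with $d_F(\zeta'[x',y'],\sigma')\le(2+\epsilon)\delta$; when the enumeration visits the edge pair containing $x'$ and $y'$, the restricted Alt--Godau decision succeeds, so the algorithm does not return null. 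The main obstacle I anticipate is twofold: first, the witness endpoints $x',y'$ are almost never vertices of $\zeta'$, which forces the enumeration to range over edge pairs (not vertex pairs) and the free-space decision to permit the bi-monotone path to start and end anywhere on those edges; second, gluing the three matchings if $\bar{\zeta}_i$ or $\tilde{\zeta}_{i+1}$ does not terminate exactly at $v_i$ or $v_{i+1}$, which I handle by inserting $O(1)$ trivial connector edges into $\zeta'$, preserving both the size bound and the $(1+\epsilon)\delta$ matching bound.
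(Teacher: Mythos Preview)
Your proposal is correct and follows essentially the same approach as the paper: build a short surrogate $\zeta'$ of $O(\mu_2)$ vertices for $\tau[\bar v_i,\tilde v_{i+1}]$, enumerate $O(\mu_2^2)$ candidate start/end edge pairs and test each with an $O(\mu_2^2)$ Alt--Godau decision at threshold $(2+\epsilon)\delta$, then pull the witness endpoints back through the stored matchings; the two triangle inequalities you state are exactly the paper's. The only cosmetic difference is that the paper joins $\bar\zeta_i$ directly to $\tilde\zeta_{i+1}$ by a single segment (which is automatically within $(1+\epsilon)\delta$ of $v_iv_{i+1}$ since its endpoints are), thereby sidestepping your connector-edge patch, and it phrases the enumeration via explicit candidate point sets $X,Y$ (first entry into $\+B(w_j,(2+\epsilon)\delta)$ and last exit from $\+B(w_{j_1},(2+\epsilon)\delta)$ on each edge) rather than edge pairs; these are equivalent formulations with the same $O(\mu_2^4)$ bound.
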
 
	
	\begin{proof}
		%Recall that $\mu_1=m^{0.24}$ and $\mu_2=m^{0.02}$. The construction of $\bar{\zeta}_i$, $\zeta_{\bar{i}}$, $\bar{\+M}_i$, and $\+M_{\bar{i}}$ takes $O\left(\varepsilon^{-O(d)}\mu_1^4\log\mu_1\log\log\mu_1\right)$ time for all $i\in [a_k, a_{k+1}]$. By substituting $\mu_1$ into the term, we get the preprocessing time. 
			The preprocessing time follows the construction of $\bar{\zeta}_i$, $\tilde{\zeta}_i$,$\bar{\+M}_i$, and $\tilde{\+M}_i$ directly. We proceed to show how to find $\tau'$.
			
		Suppose that we are given a subcurve $\sigma'=\sigma[w_j, w_{j_1}]$ of $\sigma_l$ and an edge $v_iv_{i+1}$. If there is subcurve of $\tau[\bar{v}_i,\tilde{v}_{i+1}]$ within a Fr\'echet distance $\delta$ to $\sigma'$, we present how to find a subcurve $\tau'$ of $\tau[\bar{v}_i,\tilde{v}_{i+1}]$ with $d_F(\tau', \delta')\le (3+2\epsilon)\delta$ based on $\bar{\zeta}_i$, $\tilde{\zeta}_{i+1}$, $\bar{\+M}_i$, and $\tilde{\+M}_{i+1}$. We first construct a new curve $\zeta'$ by appending $\tilde{\zeta}_{i+1}$ to $\bar{\zeta}_i$. That is, we join the last vertex of $\bar{\zeta}_i$ and the first vertex of $\tilde{\zeta}_{i+1}$ by a line segment to generate $\zeta'$. Since the new line segment is within a Fr\'echet distance $(1+\epsilon)\delta$ to the edge $v_iv_{i+1}$, we have $d_F(\tau[\bar{v}_i, \tilde{v}_{i+1}], \zeta')\le (1+\epsilon)\delta$. Given that there is a subcurve of $\tau[\bar{v}_i, \tilde{v}_{i+1}]$ within a Fr\'echet distance $\delta$ to  $\sigma'$, there is a subcurve $\zeta''$ of $\zeta'$ with $d_F(\zeta'', \sigma')\le (2+\epsilon)\delta$ by the triangle inequality. 
		
		We aim to find such a $\zeta''$. For every edge of $\zeta'$, if it intersects the ball $\+B(w_j, (2+\epsilon)\delta)$, we take the minimum point $x$ in the intersection with respect to $\le_{\zeta'}$. We insert $x$ into a set $X$. If this edge intersects the ball $\+B(w_{j_1}, (2+\epsilon)\delta)$, we take the maximum point $y$ in the intersection with respect to $\le_{\zeta'}$, and insert $y$ into another set $Y$. 
		
		The existence of $\zeta''$ implies the existence of some subcurve that starts from a point in $X$, ends at a point in $Y$, and locates within a Fr\'echet distance $(2+\epsilon)\delta$ to $\sigma'$. Let $x$ be the point in $X$ such that $x$ and the start of $\zeta''$ are on the same edge of $\zeta'$. Let $y$ be the point in $Y$ such that $y$ and the end of $\zeta''$ are on the same edge of $\zeta'$. By definition of $X$ and $Y$, $\zeta'[x,y]$ includes $\zeta''$. We can extend the Fr\'echet matching between $\zeta''$ and $\sigma'$ to a matching between $\zeta'[x,y]$ and $\sigma'$ by matching the line segment between $x$ and the start of $\zeta''$ to $w_j$, and matching the line segment between the end of $\zeta''$ and $y$ to $w_{j_1}$. The matching realizes a distance at most $(2+\epsilon)\delta$.
	
	We test all subcurves of $\zeta'$ starting from some point in $X$ and ending at some point in $Y$. There are $O(|\zeta'|^2)=O(\mu_2^2)$ subcurves to be tested. For every subcurve, we check whether the Fr\'echet distance between it and $\sigma'$ is at most $(2+\epsilon)\delta$. If so, we return it as $\zeta''$. If there is not a satisfactory $\zeta''$ after trying all $O(\mu_2^2)$ subcurves, it means that there is no subcurve of $\zeta'$ within a Fr\'echet distance $(2+\epsilon)\delta$ to $\sigma'$. It implies that no subcurve of $\tau[\bar{v}_i, \tilde{v}_{i+1}]$ is within a Fr\'echet distance $\delta$ to $\sigma'$. Hence, the edge $v_iv_{i+1}$ is not marked by $\sigma'$. We return null. It takes $O(\mu_2^4)$ time. 
	
	Suppose that we have found $\zeta''=\zeta'[x,y]$. We proceed to find a subcurve $\tau'$ of $\tau_k$ within a Fr\'echet distance $(1+\epsilon)\delta$ to $\zeta''$ based on $\bar{\+M}_i$ and $\tilde{\+M}_{i+1}$. Recall that $\zeta'$ is a concatenation of $\bar{\zeta}_i$ and $\tilde{\zeta}_{i+1}$. It means that $x$ must locate in $\bar{\zeta}_i$, or $\tilde{\zeta}_{i+1}$, or the new line segment that joins the last vertex of $\bar{\zeta}_i$ and the first vertex of $\tilde{\zeta}_{i+1}$. So does $y$. We define a point $x'$ in $\tau_k$ that corresponds to $x$ as follows. We set $x'=\bar{\+M}_i(x)$ if $x\in \bar{\zeta}_i$, set $x'=\tilde{\+M}_{i+1}(x)$ if $x\in \tilde{\zeta}_{i+1}$, and set $x'$ to be the point matched to $x$ by the Fr\'echet matching between $v_iv_{i+1}$ and the new line segment otherwise. We can define $y'$ for $y$ similarly. It is clear from the construction that $d_F(\tau[x',y'], \zeta'')\le (1+\varepsilon)\delta$. Hence, $d_F(\tau[x',y'], \sigma')\le (3+2\epsilon)\delta$ by the triangle inequality. We set $\tau'=\tau[x', y']$. It takes an extra time of $O(\log\mu_1)=O(\log m)$. We complete the proof.
	
\end{proof}
	
	\cancel{
	We present how to find a subcurve of $\tau_k$ that is close to $\sigma'$ based on $\bar{\zeta}_i$, $\tilde{\zeta}_{i+1}$, $\bar{\+M}_i$, and $\tilde{\+M}_{i+1}$. We first construct a new curve $\zeta'$ by appending $\tilde{\zeta}_{i+1}$ to $\bar{\zeta}_i$. That is, we join the last vertex of $\bar{\zeta}_i$ and the first vertex of $\tilde{\zeta}_{i+1}$ by a line segment to generate $\zeta'$. Since the new line segment is within a Fr\'echet distance $(1+\varepsilon)\delta$ to the edge $v_iv_{i+1}$, we can derive that $d_F(\tau[\bar{v}_i, \tilde{v}_{i+1}], \zeta')\le (1+\varepsilon)\delta$. Given that there is a subcurve $\tau'$ of $\tau[\bar{v}_i, \tilde{v}_{i+1}]$ with $d_F(\tau', \sigma')$, there is a subcurve $\zeta''$ of $\zeta'$ with $d_F(\zeta'', \sigma')\le (2+\varepsilon)\delta$ by the triangle inequality. We aim to find such a $\zeta''$.
	
	For every edge of $\zeta'$, if it intersects the ball $\+B(w_j, (2+\varepsilon)\delta)$, we take the minimum point $x$ in the intersection with respect to $\le_{\zeta'}$. We insert $x$ into a set $X$. If this edge intersects the ball $\+B(w_{j_1}, (2+\varepsilon)\delta)$, we take the maximum point $y$ in the intersection with respect to $\le_{\zeta'}$, and insert $y$ into another set $Y$. The existence of $\zeta''$ implies the existence of some subcurve that starts from a point in $X$, ends at a point in $Y$, and locates within a Fr\'echet distance $(2+\varepsilon)\delta$ to $\sigma'$. Because there are a point $x\in X$ in the same edge of $\zeta'$ as the start of $\zeta''$ and a point $y\in Y$ in the same edge of $\zeta'$ as the end of $\zeta''$. By definition of $X$ and $Y$, $\zeta'[x,y]$ includes $\zeta''$. We can extend the Fr\'echet matching between $\zeta''$ and $\sigma'$ to a matching between $\zeta'[x,y]$ and $\sigma'$ by matching the line segment between $x$ and the start of $\zeta''$ to $w_j$, and matching the line segment between the end of $\zeta''$ and $y$ to $w_{j_1}$. The matching realizes a distance at most $(2+\varepsilon)\delta$.
	
	We test all subcurves of $\zeta'$ starting from some point in $X$ and ending at some point in $Y$. There are $O(|\zeta'|^2)=O(\mu_2^2)$ subcurves to be tested. For every subcurve, we check whether the Fr\'echet distance between it and $\sigma'$ is at most $(2+\varepsilon)\delta$. If so, we return it as $\zeta''$. It takes $O(\mu_2^4)$ time.
	
	Suppose that $\zeta''=\zeta'[x,y]$. We finally find a subcurve of $\tau_k$ within a Fr\'echet distance $(1+\varepsilon)\delta$ to $\zeta''$ based on $\bar{\+M}_i$ and $\tilde{\+M}_{i+1}$. Recall that $\zeta'$ is a concatenation of $\bar{\zeta}_i$ and $\tilde{\zeta}_{i+1}$. It means that $x$ must locates in $\bar{\zeta}_i$, or $\tilde{\zeta}_{i+1}$, or the new line segment that joins the last vertex of $\bar{\zeta}_i$ and the first vertex of $\tilde{\zeta}_{i+1}$. So does $y$. We define a point $x'$ in $\tau_k$ that corresponds to $x$ as follows. We set $x'=\bar{\+M}_i(x)$ if $x\in \bar{\zeta}_i$, set $x'=\tilde{\+M}_{i+1}(x)$ if $x\in \tilde{\zeta}_{i+1}$, and set $x'$ to be the point matched to $x$ by the Fr\'echet matching between $v_iv_{i+1}$ and the new line segment otherwise. We can define $y'$ for $y$ similarly. It is clear from the construction that $d_F(\tau[x',y'], \zeta'')\le (1+\varepsilon)\delta$. Hence, $d_F(\tau[x',y'], \sigma')\le (3+2\epsilon)\delta$ by the triangle inequality.
	
}

	%The start and end of $\tau'$ may not be vertices.

    \cancel{
	
	\noindent\pmb{{\sc Cover}$(\tau', \delta',\+S)$.} The expected answer is another array $\bar{\+S}$ induced by $\tau_k$ satisfies the properties:
	
	\begin{itemize}
		\item any point $x$ in $\tau_k$ must be covered by $\bar{\+S}$ if there is a point $y$ covered by $\+S$ with $y\le_{\tau} x$ and $d_F(\tau[y,x], \tau')\le \delta'$;
		\item all points $x$ covered by $\bar{\+S}$ satisfy that there is a point $y$ covered by $\+S$ with $y\le_{\tau} x$ and $d_F(\tau[y,x], \tau')\le (1+\epsilon)\delta'$.
	\end{itemize}
	
	}
	
	%We preprocess $\tau_k$ to answer queries with subcurves of $\tau_k$ and intervals on $\tau_k$. Specifically, given a subcurve $\tau[x, y]$ of $\tau_k$ and a sequence $\+S=(S_1, S_2, \ldots, S_{\mu_1-1})$ of intervals consisting of one interval on every edge of $\tau_k$, we aim to return another sequence $(S_1', S_2', \ldots, S_{\mu_1-1})$ of intervals in $O(\mu_1)$ time after preprocessing, where $S_i'$ is an interval on the edge $v_iv_{i+1}$ of $\tau_k$. For each $S_i'$, if a point $p$ on $v_iv_{i+1}$ satisfies that there is another point $p'\le_{\tau_k} p$ in some $S_{i'}$ with $d_F(\tau[p', p], \tau[x,y])\le (2+\varepsilon)\delta$, then $p$ must belong to $S_i'$; and all points in $S_i'$ satisfy that there is another point $p'\le_{\tau_k} p$ in some $S_{i'}$ with $d_F(\tau[p', p], \tau[x,y])\le (4+\varepsilon)\delta$.
	
	%We present a useful observation that inspires the construction of our data structure. 
	Next, we preprocess $\tau_k$ to answer the query {\sc Cover}$(\tau',\delta',\+S)$. For any subcurve $\tau'=\tau[x, y]$ of $\tau_k$, if $\tau'$ has at most 2 edges, we can take an array $\+S'$ induced by $\tau'$ with all elements being empty, and call {\sc WaveFront}$(\tau_k, \tau', \delta', \+S, \+S')$ to answer the query {\sc Cover}$(\tau', \delta', \+S)$ in $O(\mu_1)$ time. Specifically, we use the output array $\+{W}^y$ returned by the invocation of {\sc WaveFront} as the answer. %The feasibility is ensured by Lemma~\ref{lem: wave}. 
	
	When $\tau'$ has more than 2 edges, it must contain at least 2 vertices of $\tau_k$. That is, $\tau[x,y]=(x, v_i, v_{i+1},\ldots, v_{i_1}, y)$. It is natural to process $xv_i$, the vertex-to-vertex subcurve $\tau[v_i, v_{i_1}]$ and $v_{i_1}y$ progressively. Since both $xv_i$ and $v_{i_1}y$ are line segments, the bottleneck is the handling of $\tau[v_i, v_{i_1}]$. Given that $\tau_k$ has $O(\mu_1^2)$ vertex-to-vertex subcurves, we can afford to precompute all subcurves that are close to $\tau[v_i, v_{i_1}]$ to speed up the query process. The details are given in Appendix~\ref{sec:cover}.
	
	\cancel{
	\noindent\underline{\bf Data structure.} We carry out precomputation for every vertex-to-vertex subcurve of $\tau_k$ to construct several arrays induced by $\tau_k$ as our data structure. Specifically, for any pair of integers $i, i_1\in [a_k, a_{k+1}]$, take the subcurve $\tau[v_i, v_{i_1}]$. We proceed to take an edge $v_{i'}v_{i'+1}$ of $\tau_k$. We aim to identify all points $x$ in $\tau_k$ such that we can find a point $y\in v_{i'}v_{i'+1}$ satisfying that $y\le_\tau x$ and $d_F(\tau[y, x], \tau[v_i, v_{i_1}])\le \delta'$. We first check the intersection $\+B(v_i, \delta')\cap v_{i'}v_{i'+1}$. If it is empty, we set $\+D^{i,i_1}_{i'}$ to be null. Otherwise, we discretize the intersection to get a sequence $(p_1, p_2,\ldots, p_a)$ of points inside the intersection such that $p_1$ is the start of intersection, $d(p_1, p_{a'})=(a'-1)\epsilon\delta'$ for all $a'\in[a]$, and $a=\lfloor\frac{L}{\epsilon\delta'}\rfloor+1$, where $L$ is the length of the intersection. It will be clear in the query algorithm that this dicretization helps us deal with arbitrary $\+S$ appearing in the query {\sc Cover}. For every $p_{a'}$, we construct an array $\+A^{a'}$ induced by $\tau_k$ such that $\+A^{a'}_{v_{i'}}=\+B(v_i, \delta')\cap p_{a'}v_{i'+1}$ and all the other elements in $\+A^{a'}$ are empty. We also take an array $\+A'$ induced by $\tau[v_i, v_{i_1}]$ such that all elements in $\+A'$ are empty. We can finally invoke {\sc WaveFront}$(\tau_k, \tau[v_i, v_{i_1}], \delta', \+A^{a'}, \+A')$. Let $\bar{\+A}^{i, i_1}_{i',a'}$ be the output array for the vertex $v_{i_1}$. Let {\sc Max}$[i, i_1,i',a]$ be the maximum $i^*\in [a_k, a_{k+1}]$ such that the corresponding line segment $\bar{\+A}^{i, i_1}_{i',a', v_{i^*}}$ on $v_{i^*}v_{i^*+1}$ is non-empty. We let $\+D^{i, i_1}_{i'}$ contain $\bar{\+A}^{i, i_1}_{i',a'}$ for all $a'\in [a]$. It takes $O(a\mu_1\cdot(i_1-i+1))=O(\mu_1^2/\epsilon)$ time to construct $\+D^{i,i_1}_{i'}$. Since there are $O(\mu_1^3)$ distinct combinations of $i, i_1$ and $i'$. It takes $O(\mu_1^5/\epsilon)$ time to construct $\+D^{i, i_1}_{i'}$ and the array {\sc Max} for all $i, i_1$ and $i'$ as the data structure.

	\vspace{2pt}
	
	\noindent\underline{\bf Query algorithm.} Given an arbitrary subcurve $\tau'$ of $\tau_k$ and an arbitrary array $\+S$ induced by $\tau_k$, we present how to answer the query {\sc Cover} in $O(\mu_1)$ time by using the data structure.
	
	%We present how to answer queries with a subcurve $\tau[x, y]$ of $\tau_k$ and an interval sequence $\+S=(S_1, S_2,\ldots, S_{\mu_1-1})$ efficiently using the trie. 
	Suppose that $\tau'=\tau[x,y]$. Note that $x$ and $y$ may not be vertices of $\tau_k$. In the case where $\tau[x,y]$ has at most 2 edges, 
	%we invoke the procedure {\sc WaveFront} directly. 
	let $\+S'$ be an array induced by $\tau[x, y]$ with all elements being empty. We call {\sc WaveFront}$(\tau_k, \tau[x,y], \delta', \+S, \+S')$. Let $\bar{\+S}$ be the output array for $y$. By Lemma~\ref{lem: wave}, $\bar{\+S}$ is a feasible answer for the query. It takes $O(\mu_1)$ time.
	
	%Set $\+S' = (\+A[|\tau[x,y]|][1],\+A[|\tau[x,y]|][2],\ldots, \+A[|\tau[x,y]|][\mu_1-1])$.
	
	When $\tau[x,y]$ has more than 2 edges, it must contain at least 2 vertices of $\tau_k$. That is, $\tau[x,y]=(x, v_i, v_{i+1},\ldots, v_{i_1}, y)$ with $i_1>i$. We process $xv_i$, the vertex-to-vertex subcurve $\tau[v_i, v_{i_1}]$, and $v_{i_1}y$ progressively. We first call {\sc WaveFront}$(\tau_k, xv_i, \delta', \+S, (\emptyset))$ to get the output array $\+S^1$ induced by $\tau_k$ for $v_i$. By Lemma~\ref{lem: wave}, any point $x'$ in $\tau_k$ is covered by $\+S^1$ if and only if there is a point $y'$ covered by $\+S$ such that $y'\le_\tau x'$ and $d_F(\tau[y', x'], xv_i)\le \delta$. 
	
	%Let $\+A$ be the output that contains the corresponding intervals for $x$ and $v_i$. We construct an intermediate interval sequence $\+S_1=(S_{1,1}, S_{1,2},\ldots, S_{1,\mu_1-1})$ by setting $S_{1, a}=\+A[2][a]$ for all $a\in [\mu_1-1]$. Note that any non-empty $S_{1,a}$ locates within the intersection $\+B(v_i, (2+\varepsilon)\delta)\cap v_av_{a+1}$ by Lemma~\ref{lem: wave}. 
	
	We then deal with $\tau[v_i, v_{i_1}]$ to construct another intermediate array $\+S^2$ based on $\+S^1$ and $\+D^{i, i_1}_{i'}$ for all $i'\in [a_k, a_{k+1}-1]$. We first process $\+S^1$ to facilitate the use of $\+D^{i, i_1}_{i'}$. Initialize a new array $\+{NS}$ induced by $\tau_k$. For every edge $v_{i'}v_{i'}$ of $\tau_k$, check whether $\+S^1_{v_{i'}}$ intersects the ball $\+B(v_i, \delta')$. If the intersection is empty, set $\+{NS}_{v_{i'}}$ to be empty; otherwise, we can access the sequence $(p_1, p_2,\ldots, p_a)$ deriving from the discretization of $\+B(v_i, \delta')\cap v_{i'}v_{i'+1}$ to find the maximum $a'$ such that $p_{a'}v_{i'+1}$ includes $\+S^1_{v_{i'}}\cap \+B(v_i, \delta')$. Note that the distance between $p_{a'}$ and the start of this intersection is at most $\epsilon\delta'$. We set $\+{NS}_{v_{i'}}$ to be $p_{a'}v_{i'+1}\cap \+B(v_i, \delta')$.
	
	Next, initialize all elements in $\+S^2$ to be empty. We traverse $\+{NS}$ to update $\+S^2$ progressively. We use $i'$ to index current element in $\+{NS}$ within the traversal, and use the $i^*$ to index the element being updated in $\+S^2$. Initialize both $i'$ and $i^*$ to be $a_k$. For $\+{NS}_{v_{i'}}$, if it is empty, we increase $i'$ by 1. Otherwise, repeat setting $\+S^2_{v_{i^*}}$ to be empty and increasing $i^*$ by 1 until $i^*$ equals to $i'$ if $i^*<i'$. We then repeat setting $\+S^2_{v_{i^*}}$ to be $\bar{\+A}^{i, i_1}_{i', a', v_{i^*}}$ and increasing $i^*$ by one until $i^*$ equals to {\sc Max}$(i, i_1, i', a')$, where $p_{a'}$ is the start of $\+{NS}_{v_{i'}}$. We stop updating $\+S^2$ if $i^*$ becomes $a_{k+1}$. Since every element in $\+S^2$ is updated once and each update takes $O(1)$ time. We finish constructing $\+S^2$ in $O(\mu_1)$ time.
	
	{\color{red} Prove properties of $\+S^2$.}
	
	In the end, we invoke {\sc WaveFront}$(\tau_k, v_{i_1}y, \delta', \+S^2, (\emptyset))$, and set $\bar{\+S}$ to be the output array for $y$ in $O(\mu_1)$ time. The array $\bar{\+S}$ is a feasible answer for the query via the following analysis. 
	
	{\color{red} Insert the analysis.}
}
	%We then construct another sequence $\+S_2$ of $\mu_1-1$ intervals based on $\+S_1$ and $\tau[v_i, v_{i'}]$. Initialize $a=1$ and $a*=1$. We traverse the sequence $\+S_1$ to fill up $\+S_2$. We use $a$ and $a^*$ to index $\+S_1$ and $\+S_2$, respectively. We guarantee that $a*\ge a$ during the traversal. If $S_{1,a}$ is empty, increase $a$ by 1; otherwise, compare $A^*(i,i', a)$ to $a^*$. If $A^*(i, i', a)< a^*$, increase $a$ by 1 without inserting any interval to $\+S_2$. If $a^*$ become smaller than $a$ after the increase of $a$, we repeat setting $\+S_{2, a^*}$ to be empty and increasing $a^*$ by one until $a^*$ be equal to $a$.
	
	%In the remaining case where $A^*(i, i', a)\ge a^*$, we append new intervals to $\+S_2$ as follows. If $a^*$ equals to $a$, set $\+S_{2, a^*}$ to be empty is the interval indexed by $(i,i', a, a^*)$ in the trie is empty or the end of the indexed interval is in front of the start of $\+S_{1,a}$, and set $\+S_{2, a^*}$ to be $[p,q]$ otherwise, where $p$ and $q$ are the start and the end of $\+S_{1,a}$ and the end of the indexed interval, respectively. If $a^*>a$, we set $\+S_{2,a^*}$ to be the interval indexed by $(i, i', a, a^*)$. We increase $a^*$ by 1 after determining $\+S_{2, a^*}$. We repeat determining $\+S_{2, a^*}$ and increasing $a^*$ until $a^*$ becomes $A^*(i, i', a)+1$. This finishes the handling of $\+S_{1,a}$, and we increase $a$ by 1 to keep traversing $\+S_1$. We repeat increasing $a$ until $a^*$ reaches $\mu_1$ as it means all interval in $\+S_2$ are determined. This finishes constructing $\+S_2$.
	
	\begin{lemma}\label{lem:cover}
		Fix $\tau_k$ and $\delta'$. For any $\epsilon\in (0,1)$, there is a data structure of size $O(\mu_1^4/\epsilon)$ and preprocessing time $O(\mu_1^5/\epsilon)$ that answers in $O(\mu_1)$ time the query {\sc Cover} for any $\tau'$ and $\+S$. 
	\end{lemma}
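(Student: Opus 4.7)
The plan is to reduce the query for a general $\tau' = \tau[x,y]$ to the question on a vertex-to-vertex subcurve of $\tau_k$, for which a table can be precomputed in advance. If $\tau'$ has at most two edges, a direct call to {\sc WaveFront}$(\tau_k, \tau', \delta', \+S, \+S_0)$ with an all-empty $\+S_0$ induced by $\tau'$ answers the query in $O(\mu_1)$ time, so from now on assume $\tau' = (x, v_i, v_{i+1}, \ldots, v_{i_1}, y)$ with $i_1 > i$. Two further calls of {\sc WaveFront} handle the leading edge $xv_i$ and the trailing edge $v_{i_1}y$ at cost $O(\mu_1)$ each; what remains is to propagate reachability through the long middle piece $\tau[v_i, v_{i_1}]$ in $O(\mu_1)$ time without rerunning {\sc WaveFront} on it.

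During preprocessing, for every pair of indices $i, i_1 \in [a_k, a_{k+1}]$ with $i < i_1$ and every edge $v_{i'}v_{i'+1}$ of $\tau_k$, I would discretize the intersection $\+B(v_i, \delta') \cap v_{i'}v_{i'+1}$ into $O(1/\epsilon)$ points $p_1, \ldots, p_a$ spaced at distance $\epsilon\delta'$ along the edge. For each $p_{a'}$ I form an array $\+A^{a'}$ on $\tau_k$ whose only non-empty slot is $\+A^{a'}_{v_{i'}} = p_{a'}v_{i'+1} \cap \+B(v_i, \delta')$, invoke {\sc WaveFront}$(\tau_k, \tau[v_i, v_{i_1}], \delta', \+A^{a'}, \+A')$ with an all-empty $\+A'$ induced by $\tau[v_i, v_{i_1}]$, and store the resulting output array for the terminal vertex $v_{i_1}$. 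Each invocation takes $O(\mu_1(i_1 - i))$ time and yields an array of size $O(\mu_1)$. There are $O(\mu_1^3)$ triples and $O(1/\epsilon)$ starting points per triple, so preprocessing time is $O(\mu_1^5/\epsilon)$ and storage is $O(\mu_1^4/\epsilon)$.

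To answer a query, I would first obtain an intermediate array $\+S^1$ by calling {\sc WaveFront} on $xv_i$ with seed $\+S$ in $O(\mu_1)$ time. Next I produce $\+S^2$ with a single left-to-right sweep over $\+S^1$: for each edge $v_{i'}v_{i'+1}$ where $\+S^1_{v_{i'}} \cap \+B(v_i, \delta')$ is non-empty, I round its leading endpoint forward (under $\le_\tau$) to the nearest precomputed point $p_{a'}$ and paste the stored output array of the triple $(i, i_1, i')$ into $\+S^2$ from index $i'$ onward, taking for each slot the entry reaching furthest along its edge whenever multiple seed edges contribute. A two-pointer walk over the indices of $\+S^1$ and $\+S^2$ accomplishes this in $O(\mu_1)$ total time. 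Finally, {\sc WaveFront} on $v_{i_1}y$ seeded by $\+S^2$ yields $\bar{\+S}$ in $O(\mu_1)$ time.

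For correctness, soundness follows because each stored array corresponds to an actual bi-monotone free-space path at distance $\delta'$ in $(\tau_k, \tau[v_i, v_{i_1}])$, and the rounding shifts the start by at most $\epsilon\delta'$ in Euclidean distance; concatenating the prefix edge $xv_i$, the rounding segment of length $\epsilon\delta'$, and the suffix edge $v_{i_1}y$ inflates the realized distance to at most $(1+\epsilon)\delta'$ by the triangle inequality. Completeness follows because any witness matching for a pair $(x', y')$ that the specification forces to be covered by $\bar{\+S}$ must pass through $v_i$ at some point in $\+S^1_{v_{i'}} \cap \+B(v_i, \delta')$ and reach $v_{i_1}$ at a slot the precomputed table still records after rounding forward. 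The main obstacle is the sweep step: to keep it linear while still recovering the union of reachabilities from every seed edge, I rely on Lemma~\ref{lem:planarity}, which implies that on each target edge the reachable set is prefix-closed with respect to $\le_\tau$, so a two-pointer merge keeping only the furthest-reaching entry loses no slot that the full union of precomputed arrays would cover.
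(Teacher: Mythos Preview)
Your approach mirrors the paper's construction almost exactly: the same three-piece decomposition of $\tau'$ into $xv_i$, the vertex-to-vertex middle $\tau[v_i,v_{i_1}]$, and $v_{i_1}y$; the same per-edge discretization of $\+B(v_i,\delta')\cap v_{i'}v_{i'+1}$ into $O(1/\epsilon)$ seed points; the same 4D table of {\sc WaveFront} outputs indexed by $(i,i_1,i',a')$; and the same two-pointer sweep justified by Lemma~\ref{lem:planarity}. The paper additionally stores an auxiliary array {\sc Max}$[i,i_1,i',a']$ recording the last non-empty slot of each stored output, which is what lets the merge run in $O(\mu_1)$; you do not name it, but your two-pointer walk implicitly needs it.

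There is one concrete error: the rounding direction. You round the start $\ell$ of $\+S^1_{v_{i'}}\cap\+B(v_i,\delta')$ \emph{forward} along $\le_\tau$ to the nearest discretization point $p_{a'}\ge_\tau\ell$. This breaks completeness, not soundness. In the free-space diagram of $(\tau_k,\tau[v_i,v_{i_1}])$, moving the seed to the right along the bottom boundary can only \emph{shrink} the set of points reachable on the top boundary; a witness path that enters through some point of $[\ell,p_{a'})$ may reach a slot on the top that $p_{a'}$ cannot. Your claim that ``the precomputed table still records [the slot] after rounding forward'' is therefore unsupported. The paper rounds \emph{backward}, taking the largest discretization point $p_b\le_\tau\ell$; then reachability from $p_b$ contains reachability from $\ell$ (the segment $p_b\ell$ lies in $\+B(v_i,\delta')$ by convexity), so completeness is immediate, and the $(1+\epsilon)\delta'$ slack is spent on soundness---precisely where you placed it, so your soundness paragraph is in fact the correct argument for backward rounding. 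Flip the rounding direction and your proof goes through as written.
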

	%Next, we present useful data structures for the construction of $\+I_4$.
	
	%\HQ{Polish the part about Cover.}
	
	\subsection{Implementation of {\sc Reach}}
	Given $\tau_k$, $\sigma_l$, an array $\appReachVIArray{a_k}{l}$ induced by $\sigma_l$ that is $(7+\varepsilon)$-approximate reachable for $v_{a_k}$ and an array $\appReachWJArray{b_l}{k}$ induced by $\tau_k$ that is $(7+\varepsilon)$-approximate reachable for $w_{b_l}$, we implement {\sc Reach} to compute an array $\appReachVIArray{a_{k+1}}{l}$ induced by $\sigma_l$ that is $(7+\varepsilon)$-approximate reachable for $v_{a_{k+1}}$ and an array $\appReachWJArray{b_{l+1}}{k}$ induced by $\tau_k$ that is $(7+\varepsilon)$-approximate reachable for $w_{b_{l+1}}$.
	%$(5+\varepsilon)$-approximate reachability intervals $\appReachVIArray{a_{k+1},l}$ of $v_{a_{k+1}}$ on $\sigma_l$ and the $(5+\varepsilon)$-approximate reachability intervals $\appReachWJArray{b_{l+1},k}$ of $w_{b_{l+1}}$ on $\tau_k$. 
	As discussed in the overview, we first construct $\+I^1$, $\+I^2$, $\+I^3$ and $\+I^4$ to cover points of types $1-4$, respectively.
	
	\vspace{2pt}
	
	%{\color{red} Polish later the specify the output of wavefront more clearly.}
	
	\noindent\textbf{Construction of $\+I^1$.} %We use the procedure {\sc WaveFront} to generate $\+I_1$. Specifically, 
	If $\zeta_k$ is not null, let $\+S^k$ be an array induced by $\zeta_k$ such that all its elements are empty. We invoke {\sc WaveFront}$(\zeta_k, \sigma_l, (2+\epsilon)\delta, \+S^k,\appReachVIArray{a_k}{l})$. Let $\+{W}^{v'_a}$ be the output array for the last vertex $v'_a$ of $\zeta_k$.
	%that contains the corresponding intervals for all vertices of $\zeta$. 
	Set $\+I^1=\+{W}^{v'_a}$. If $\zeta_k$ is null, set all elements in $\+I^1$ to be empty. It runs in $O(\mu_2^2)$ time. The array $\+I^1$ satisfies the following property. %The proof is deferred to Appendix~\ref{sec:proof_I_1}.
	
	%We prove the performance guarantee on $\+I^1$ in the following lemma.
	
	\begin{lemma}~\label{lem: I_1}
		We can construct $\+I^1$ in $O(\mu_2^2)$ time such that $\+I^1$ covers all points of type 1, and every point covered by $\+I^1$ forms a $((7+\varepsilon)\delta)$-reachable pair with $v_{a_{k+1}}$. 
	\end{lemma}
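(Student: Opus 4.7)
The plan is to verify the three assertions of the lemma in turn: the running time, the coverage of type 1 points, and the $(7+\varepsilon)\delta$-reachability of every point covered by $\+I^1$.

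First, the running time is immediate from the sizes of the inputs to the wavefront call. When $\zeta_k$ is not null it has at most $2\mu_2$ vertices and $|\sigma_l|=\mu_2+1$, so {\sc WaveFront}$(\zeta_k,\sigma_l,(2+\epsilon)\delta,\+S^k,\appReachVIArray{a_k}{l})$ runs in $O(\mu_2^2)$ time; in the null branch, initializing $\+I^1$ to all empty takes only $O(\mu_2)$ time.

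For coverage of type 1 points, I would handle the two cases of $\zeta_k$ separately. If $\zeta_k$ is null, I claim no type 1 point can exist: a putative type 1 point $q$ with witness $p$ covered by $\appReachVIArray{a_k}{l}$ would give $d_F(\tau_k,\sigma[p,q])\le\delta$, and since $\sigma[p,q]$ has at most $\mu_2+1$ vertices, Lemma~\ref{lem: simp} would have returned a simplified curve of size at most $2\mu_2$, contradicting that $\zeta_k$ is null. When $\zeta_k$ is not null, I combine $d_F(\tau_k,\sigma[p,q])\le\delta$ with $d_F(\tau_k,\zeta_k)\le(1+\epsilon)\delta$ via the triangle inequality to get $d_F(\zeta_k,\sigma[p,q])\le(2+\epsilon)\delta$, and then apply the characterization of the {\sc WaveFront} output (with $\appReachVIArray{a_k}{l}$ playing the role of $\+S'$ and covering $p$) to conclude that $q$ is covered by $\+{W}^{v'_a}=\+I^1$.

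For the approximation guarantee, I take any $q'$ covered by $\+I^1$. Since every element of $\+S^k$ is empty, the wavefront characterization forces a witness $p'$ covered by $\appReachVIArray{a_k}{l}$ with $p'\le_\sigma q'$ and $d_F(\zeta_k,\sigma[p',q'])\le(2+\epsilon)\delta$. A second application of the triangle inequality, using $d_F(\tau_k,\zeta_k)\le(1+\epsilon)\delta$, yields $d_F(\tau_k,\sigma[p',q'])\le(3+2\epsilon)\delta$. Because $\appReachVIArray{a_k}{l}$ is $(7+\varepsilon)$-approximate reachable for $v_{a_k}$, the pair $(v_{a_k},p')$ is $(7+\varepsilon)\delta$-reachable from $(v_1,w_1)$; concatenating the corresponding bi-monotone path with the one witnessing $d_F(\tau_k,\sigma[p',q'])\le(3+2\epsilon)\delta$ at the shared corner $(v_{a_k},p')$ produces a bi-monotone path from $(v_1,w_1)$ to $(v_{a_{k+1}},q')$ in the free space with threshold $\max\{(7+\varepsilon),(3+2\epsilon)\}\delta=(7+\varepsilon)\delta$, using $\epsilon=\varepsilon/10$.

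The main obstacle is controlling the cumulative slack so that the concatenated path stays within the $(7+\varepsilon)\delta$ budget. What makes the argument go through, and worth stating explicitly, is that concatenation of two bi-monotone paths at a common corner takes the maximum rather than the sum of their free-space thresholds, so the generous prefix budget $(7+\varepsilon)\delta$ inherited from $\appReachVIArray{a_k}{l}$ comfortably absorbs the $(3+2\epsilon)\delta$ tail contributed by the simplification and the wavefront call.
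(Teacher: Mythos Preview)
Your proposal is correct and follows essentially the same approach as the paper's proof: both handle the null-$\zeta_k$ case by arguing via Lemma~\ref{lem: simp} that no type~1 point can exist, both use the triangle inequality with the $(1+\epsilon)\delta$ simplification error to push type~1 points into the {\sc WaveFront} output, and both concatenate the prefix matching (of cost at most $(7+\varepsilon)\delta$) with the tail matching (of cost at most $(3+2\epsilon)\delta$) and take the maximum. The only cosmetic difference is that you phrase the concatenation step in terms of bi-monotone paths in the free space diagram while the paper phrases it in terms of matchings; these are equivalent.
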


	\begin{proof}
		The running time follows the construction procedure. We focus on proving $\+I^1$'s property.
	
		If $\zeta_k$ is null, it means that $\tau_k$ is at a Fr\'echet distance more than $\delta$ to all curves of at most $\mu_2$ vertices, which implies that points of type 1 do not exist. According to the construction procedure, all elements in $\+I^1$ are set to be empty. There is nothing to prove.
	
		When $\zeta_k$ is not null, suppose that there is a point $q$ of type 1. By definition, $q\in \sigma_l$, and there is a point $p\in \sigma_l$ covered by $\appReachVIArray{a_k}{l}$ such that $d_F(\tau_k, \sigma[p, q])\le \delta$. 
		%By the definition of type 1 reachability interval, for the start $\ell^{a_{k+1}}_{w_j}$ of $\reachVI{a_{k+1}}{j}$, there is a point $q'$ covered by $\appReachVIArray{a_k}{l}$ such that $q'\le_{\sigma_l} \ell^{a_{k+1}}_{w_j}$ and $d_F(\tau_k, \sigma[q', \ell^{a_{k+1}}_{w_j}])\le \delta$. We have $q'\le_{\sigma_l} p$ as $\ell^{a_{k+1}}_{w_j}\le_{\sigma_l} p$. In addition, since both $\ell^{a_{k+1}}_{w_j}$ and $p$ are within a distance $\delta$ to $v_{a_{k+1}}$, we can extend the Fr\'echet matching between $\tau_k$ and $\sigma[q', \ell^{a_{k+1}}_{w_j}]$ to a matching between $\tau_k$ and $\sigma[q', p]$ by matching the line segment $\ell^{a_{k+1}}_{w_j}p$ to $v_{a_{k+1}}$. The matching realize a distance at most $\delta$. We finish proving the claim. 
		%We then reveal the connection between $\reachVI{a_{k+1}}{j}$ and $\zeta_k$ via the triangle inequality. 
		Given that $d_F(\tau_k, \zeta_k)\le (1+\epsilon)\delta$,  $d_F(\zeta_k, \sigma[p, q])\le(2+\epsilon)\delta$ by the triangle inequality. Recall that $\+I^1$ equals to $\+{W}^{v'_a}$, which is the output array for the last vertex of $\zeta_k$ returned by {\sc WaveFront}$(\zeta_k,\sigma_l, (2+\epsilon)\delta, \+S^k, \appReachVIArray{a_k}{l})$, where all elements in $\+S^k$ are empty. Hence, $q$ is covered by $\+I^1$ according to the definition of {\sc WaveFront}.
		
		%By Lemma~\ref{lem: wave}, all points $p'\in w_jw_{j+1}$ that satisfy that there is a point $q$ covered by $\appReachVIArray{a_k}{l}$ such that $q\le_{\sigma_l} p'$ and $d_F(\zeta_k, \sigma[q,p'])\le (2+\epsilon)\delta$ belong to $\+I^1_{w_j}=\+A^1_{w_j}$. Hence, $\reachVI{a_{k+1}}{j}\subset \+I^1_{w_j}$.
	
		Next, by the definition of {\sc WaveFront}, for any point $q'$ covered by $\+I^1$, there is a point $p'$ covered by $\appReachVIArray{a_k}{l}$ such that $p'\le_{\sigma_l} q'$ and $d_F(\zeta_k, \sigma[p',q'])\le (2+\epsilon)\delta$. By the triangle inequality, $d_F(\tau_k, \sigma[p',q'])\le (3+2\epsilon)\delta$ as $d_F(\tau_k, \zeta_k)\le (1+\epsilon)\delta$. Recall that $\epsilon=\varepsilon/10$. Provided that $\appReachVIArray{a_k}{l}$ is $(7+\varepsilon)$-approximate reachable for $v_{a_k}$, it holds that $d_F(\tau[v_1, v_{a_k}], \sigma[w_1, p'])\le(7+\varepsilon)\delta$. It implies that we can construct a matching between $\tau[v_1, v_{a_{k+1}}]$ and $\sigma[w_1, q']$ by concatenating the Fr\'echet matching between $\tau_k$ and $\sigma[p',q']$ to the Fr\'echet matching between $\tau[v_1, v_{a_k}]$ and $\sigma[w_1, p']$. The matching realizes a distance at most $(7+\varepsilon)\delta$. Hence, all points in $\+I^1_{w_j}$ can form $((7+\varepsilon)\delta)$-reachable pairs with $v_{a_{k+1}}$.
	\end{proof}

	\noindent\textbf{Construction of $\+I^2$.} %We also use the procedure {\sc WaveFront} to generate $\+I_2$. We first prepare the input for {\sc WaveFront}. 
	We invoke {\sc WaveFront}$(\suf, \sigma_l, (2+\epsilon)\delta, \+S, \+S')$ to construct $\+I^2$. As discussed in the overview, $\+S'$ is an array induced by $\sigma_l$ with all elements being empty. The array $\+S$ is induced by $\suf$ such that for any point $x\in \tau[v_{i_{\text{suf}}},v_{a_{k+1}}]$ covered by $\appReachWJArray{b_l}{k}$, $\Msuf(x)$ is covered by $\+S$. We present how to construct $\+S$ based on $\appReachWJArray{b_l}{k}$ and $\Msuf$. %Recall that $\suf$ is a simplification of the suffix $\tau[v_{i_{\text{suf}}}, v_{a_{k+1}}]$. We also have a matching $\Msuf$ between $\suf$ and $\tau[v_{i_{\text{suf}}}, v_{a_{k+1}}]$ with $d_{\Msuf}(\tau[v_{i_{\text{suf}}}, v_{a_{k+1}}], \suf)\le (1+\varepsilon)\delta$. 
	For every edge $\bar{v}_e\bar{v}_{e+1}$ of $\suf$, we find the first point $x'$ and the last point $y'$ in $\bar{v}_e\bar{v}_{e+1}$ such that $\Msuf(x')$ and $\Msuf(y')$ are covered by $\appReachWJArray{b_l}{k}$. We then set $\+S_{\bar{v}_e}$ to be $x'y'$. %If the edge $v'_ev'_{e+1}$ does not contain any point $x''$ with $\Msuf(x'')$ being inside some line segment in $\appReachWJArray{b_l}{k}$, 
	We set $\+S_{\bar{v}_e}$ to be empty if $x'$ and $y'$ do not exist.
	
	We present how to find $x'$ and $y'$. Let $\ell_i$ and $r_i$ be the start and end of $\appReachWJ{b_l}{k}{i}$, respectively. We set $\ell_i$ and $r_i$ to be null if $\appReachWJ{b_l}{k}{i}$ is empty. For ease of presentation, we assume that $\ell_i$ and $r_i$ are non-empty for all $i\in [i_{\text{suf}}, a_{k+1}-1]$ as we can exclude all elements of null easily. By Lemma~\ref{lem:matching}, we can get $\Msuf(\ell_i)$ and $\Msuf(r_i)$ for all $i\in [i_{\text{suf}}, a_{k+1}-1]$ in $O(\mu_1\log\mu_2)$ time. Since \cancel{$\ell_{i_{\text{suf}}}\le_{\tau}r_{i_{\text{suf}}}\le_{\tau}\ell_{i_{\text{suf}}+1}\le_{\tau}r_{i_{\text{suf}}+1}\le_{\tau}\ldots\le_{\tau}\ell_{a_{k+1}-1}\le_{\tau}r_{a_{k+1}-1}$}$\ell_{i_{\text{suf}}},r_{i_{\text{suf}}},\ell_{i_{\text{suf}}+1},\ldots,\ell_{a_{k+1}-1},r_{a_{k+1}-1}$ are sorted along $\tau$, the points matched to them by $\Msuf$ are sorted along $\suf$ as well. %The sequence\cancel{$\Msuf(\ell_{i_{\text{suf}}})\le_{\suf}\Msuf(r_{i_{\text{suf}}})\le_{\suf}\Msuf(\ell_{i_{\text{suf}}+1})\le_{\suf}\Msuf(r_{i_{\text{suf}}+1})\le_{\suf}\ldots\le_{\suf}\Msuf(\ell_{a_{k+1}-1})\le_{\suf}\Msuf(r_{a_{k+1}-1})$} $\Msuf(\ell_{i_{\text{suf}}}),\Msuf(r_{i_{\text{suf}}}),\Msuf(\ell_{i_{\text{suf}}+1}),\Msuf(r_{i_{\text{suf}}+1}),\ldots,\Msuf(\ell_{a_{k+1}-1}),\Msuf(r_{a_{k+1}-1})$ is sorted $\suf$ as $\Msuf$ is a matching between $\tau[v_{i_{\text{suf}}}, v_{a_{k+1}}]$ and $\suf$. 
     Note that for every edge $\bar{v}_e\bar{v}_{e+1}$, $x'$ and $y'$ belong to either $\{\bar{v}_e, \bar{v}_{e+1}\}$ or the points matched to some $\ell_{i}$ or $r_{i}$ by $\Msuf$. We first check whether $\Msuf(\bar{v}_e)$ is covered by $\appReachWJArray{b_l}{k}$. If so, we set $x'$ to be $\bar{v}_e$; otherwise, we traverse the sorted point sequence on $\suf$ %$\Msuf(\ell_{i_{\text{suf}}}),\Msuf(r_{i_{\text{suf}}}),\Msuf(\ell_{i_{\text{suf}}+1}),\Msuf(r_{i_{\text{suf}}+1}),\ldots,\Msuf(\ell_{a_{k+1}-1}),\Msuf(r_{a_{k+1}-1})$ 
	to identify the first point on $\bar{v}_e\bar{v}_{e+1}$ that is matched to some $\ell_i$ by $\Msuf$ and assign its value to $x'$. We can determine $y'$ by checking $\Msuf(\bar{v}_{e+1})$ and traversing the sorted sequence similarly. This finishes how to construct $\+S$ in $O(\mu_1\log\mu_2)$ time.

    We finally invoke {\sc WaveFront}$(\suf, \sigma_l, (2+\epsilon)\delta, \+S, \+S')$. Let $\+{W}^{\bar{v}_a}$ be the output array for the last vertex $\bar{v}_a$ of $\suf$. Set $\+I^2=\+{W}^{\bar{v}_a}$. %The proof of the following lemma can be found in Appendix~\ref{sec:proof_I_2}.
	
	\begin{lemma}~\label{lem: I_2}
		We can construct $\+I^2$ in $O(\mu_1\log \mu_2+\mu_2^2)$ time such that $\+I^2$ covers all points of type 2, and every point covered by $\+I^2$ forms a $((7+\varepsilon)\delta)$-reachable pair with $v_{a_{k+1}}$. 
	\end{lemma}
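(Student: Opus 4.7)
The plan is to verify the running time and then the two coverage directions in sequence. The running time is immediate: building $\+S$ requires only locating, for each edge of $\suf$, the extremal points whose $\Msuf$-preimages in $\tau_k$ are covered by $\appReachWJArray{b_l}{k}$; using Lemma~\ref{lem:matching} on the $O(\mu_1)$ endpoints of the intervals $\appReachWJ{b_l}{k}{i}$ and scanning the resulting sorted $\suf$-sequence against the edges of $\suf$ yields $O(\mu_1 \log \mu_2)$, after which the call to {\sc WaveFront}$(\suf, \sigma_l, (2+\epsilon)\delta, \+S, \+S')$ adds $O(|\suf| \cdot |\sigma_l|) = O(\mu_2^2)$.

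For the first coverage claim, I will fix a type-2 point $q$ with witness $x$ covered by $\appReachWJArray{b_l}{k}$ satisfying $d_F(\tau[x, v_{a_{k+1}}], \sigma[w_{b_l}, q]) \le \delta$. Since $\sigma[w_{b_l}, q]$ has at most $\mu_2+1$ vertices, Lemma~\ref{lem: simp} forces the simplification of $\tau[x, v_{a_{k+1}}]$ to yield a curve of at most $2\mu_2$ vertices, so the maximality in the definition of $\tau[v_{i_{\text{suf}}}, v_{a_{k+1}}]$ gives $v_{i_{\text{suf}}} \le_\tau x$. Hence $\Msuf(x) \in \suf$ is defined, and by monotonicity of $\Msuf$ together with the extremal definition of the endpoints of $\+S$ on each edge of $\suf$, the point $\Msuf(x)$ lies inside the segment prescribed by $\+S$ on its edge and is therefore covered. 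The triangle inequality combined with $d_F(\suf[\Msuf(x), \bar{v}_a], \tau[x, v_{a_{k+1}}]) \le (1+\epsilon)\delta$ then produces $d_F(\suf[\Msuf(x), \bar{v}_a], \sigma[w_{b_l}, q]) \le (2+\epsilon)\delta$, so the output specification of {\sc WaveFront} places $q$ into $\+I^2$.

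The main obstacle is the second coverage claim. Given $q$ covered by $\+I^2$, {\sc WaveFront} returns some $x' \in \+S$ with $d_F(\suf[x', \bar{v}_a], \sigma_l[w_{b_l}, q]) \le (2+\epsilon)\delta$, but $\Msuf(x')$ itself need not be covered by $\appReachWJArray{b_l}{k}$, so $x'$ cannot be directly promoted to a covered $\tau_k$-witness. I plan to anchor instead at the first point $\hat{x}$ on the $\suf$-edge containing $x'$ whose $\Msuf$-image is covered; this $\hat{x}$ exists by nonemptiness of $\+S$ there, satisfies $\hat{x} \le_{\suf} x'$, and yields that $(\Msuf(\hat{x}), w_{b_l})$ is $((7+\varepsilon)\delta)$-reachable from $(v_1, w_1)$. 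To match $\tau[\Msuf(\hat{x}), v_{a_{k+1}}]$ with $\sigma_l[w_{b_l}, q]$ I will split at $\Msuf(x')$ and pair the prefix $\tau[\Msuf(\hat{x}), \Msuf(x')]$ entirely with the single point $w_{b_l}$. The key bound is $d(z, w_{b_l}) \le (3+2\epsilon)\delta$ for every such $z$, derived from $d(z, \Msuf(z)) \le (1+\epsilon)\delta$ (from the matching) and $d(\Msuf(z), w_{b_l}) \le (2+\epsilon)\delta$; the latter follows by convexity of Euclidean balls along the straight segment from $\hat{x}$ to $x'$ inside one edge of $\suf$, using $d(\hat{x}, w_{b_l}) \le (2+\epsilon)\delta$ (by triangle inequality through $\Msuf(\hat{x})$, whose distance to $w_{b_l}$ is at most $\delta$ since approximate reachability intervals lie in $\+B(w_{b_l}, \delta)$) and $d(x', w_{b_l}) \le (2+\epsilon)\delta$ (the first matched pair of a Fr\'echet matching realizing distance $(2+\epsilon)\delta$). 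The suffix $\tau[\Msuf(x'), v_{a_{k+1}}]$ is within $(3+2\epsilon)\delta$ of $\sigma_l[w_{b_l}, q]$ by triangle inequality on $\Msuf$ and the {\sc WaveFront} output. Concatenating these three pieces with the $((7+\varepsilon)\delta)$-matching from $(v_1, w_1)$ to $(\Msuf(\hat{x}), w_{b_l})$, and noting $3+2\epsilon < 7+\varepsilon$ since $\epsilon = \varepsilon/10$, gives the required $((7+\varepsilon)\delta)$-reachable pair with $v_{a_{k+1}}$.
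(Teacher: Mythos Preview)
Your proof is correct and follows essentially the same approach as the paper. The only cosmetic difference is in the second direction: the paper first extends the matching on the $\suf$-edge from $x'$ back to $\hat{x}$ (showing the whole segment lies in $\mathcal{B}(w_{b_l},(2+\epsilon)\delta)$) and then lifts the entire $d_F$ bound to $\tau$ in one triangle-inequality step, whereas you lift the endpoints to $\tau$ first and bound the prefix $\tau[\Msuf(\hat{x}),\Msuf(x')]$ pointwise via the same convexity argument on $\suf$; both routes use the identical ingredients and yield the same $(3+2\epsilon)\delta$ bound.
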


	\begin{proof}
		The running time follows the construction procedure. We focus on proving $\+I^2$'s property.
		
		Suppose that there is a point $q$ of type 2. By definition, $q\in \sigma_l$, and there is a point $x$ covered by $\appReachWJArray{b_l}{k}$ with $d_F(\tau[x, v_{a_{k+1}}], \sigma[w_{b_l}, q])\le \delta$. %By the definition of type 2 reachability interval, for the start $\ell^{a_{k+1}}_{w_j}$ of $\reachVI{a_{k+1}}{j}$, there is a point $x$ covered by $\appReachWJArray{b_l}{k}$ such that $d_F(\tau[x, v_{a_{k+1}}], \sigma[w_{b_l}, \ell^{a_{k+1}}_{w_j}])\le \delta$. Since both $\ell^{a_{k+1}}_{w_j}$ and $p$ are within a distance $\delta$ to $v_{a_{k+1}}$, we can extend the Fr\'echet matching between $\tau[x, v_{a_{k+1}}]$ and $\sigma[w_{b_l}, \ell^{a_{k+1}}_{w_j}]$ to a matching between $\tau[x, v_{a_{k+1}}]$ and $\sigma[w_{b_l}, p]$ by matching the line segment $\ell^{a_{k+1}}_{w_j}p$ to $v_{a_{k+1}}$. The matching realizes a distance at most $\delta$. We finish proving the claim.
		%We then reveal the connection between $\reachVI{a_{k+1}}{j}$ and $\suf$, $\+S$. 
		Let $\bar{v}_a$ be the last vertex of $\suf$. Note that $x\in \tau[v_{i_{\text{suf}}}, v_{a_{k+1}}]$. Hence, $\suf$ contains $\Msuf(x)$. Let $\bar{x}$ denote $\Msuf(x)$ for ease of notation. Given that the matching $\Msuf$ realizes a distance at most $(1+\epsilon)\delta$, we have $d_F(\tau[x, v_{a_{k+1}}], \suf[\bar{x}, \bar{v}_a])\le (1+\epsilon)\delta$. By the triangle inequality, $d_F(\suf[\bar{x}, \bar{v}_a], \sigma[w_{b_l}, q])\le(2+\epsilon)\delta$. 
		
		Suppose that $\Msuf(x)$ is on the edge $\bar{v}_e\bar{v}_{e+1}$ of $\suf$. Since $x$ is covered by $\appReachWJArray{b_l}{k}$, $\bar{x}$ must be covered by $\+S_{\bar{v}_e}=x'y'$ by the construction of $\+S$. %We argue that $d_F(\suf[x', \bar{v}_a], \sigma[w_{b_l}, p])\le (2+\epsilon)\delta$. Suppose that $x'$ is $\Msuf(y)$ for some $y$ covered by $\appReachWJArray{b_l}{k}$. We have $d(x', y)\le (1+\epsilon)\delta$. Provided that $\appReachWJArray{b_l}{k}$ is $(7+\varepsilon)$-approximate reachable for $w_{b_l}$, all points covered by $\appReachWJArray{b_l}{k}$ must be inside the ball $\+B(w_{b_l},\delta)$ by definition. Hence, $d(y, w_{b_l})\le \delta$. Therefore, $d(x', w_{b_l})\le(2+\epsilon)\delta$ by the triangle inequality. Note that $d(\bar{x}, w_{b_l})\le (2+\epsilon)\delta$ as well. The entire line segment $x'\bar{x}$ is inside the ball $\+B(w_{b_l}, (2+\epsilon)\delta)$. We can generate a matching between $\suf[x', \bar{v}_a]$ and $\sigma[w_{b_l},p]$ by matching the line segment $x'\bar{x}$ to $w_{b_l}$ and matching $\suf[\bar{x}, \bar{v}_a]$ to $\sigma[w_{b_l}, q]$ by their Fr\'echet matching. The matching realizes a distance of at most $(2+\epsilon)\delta$. We finish proving the argument.
	   Given that $\+I^2$ equals to the output array $\+{W}^{\bar{v}_a}$ returned by {\sc WaveFront}$(\suf,\sigma_l,(2+\epsilon)\delta,\+S,\+S')$, $q$ must be covered by $\+I^2$ by the definition of {\sc WaveFront}.
	
		Next, take any point $q'$ covered by $\+I^2$. Recall that $\+I^2$ equals to $\+{W}^{\bar{v}_a}$. By the definition of {\sc WaveFront}, there is a point $x''$ covered by $\+S$ such that the Fr\'echet distance between $\suf[x'', \bar{v}_a]$ and $\sigma[w_{b_l},q']$ is at most $(2+\epsilon)\delta$ as all elements in $\+S'$ are empty. Suppose that $x''$ is on some edge $\bar{v}_e\bar{v}_{e+1}$ of $\suf$. Take the start $x'$ of $\+S_{\bar{v}_e}$. We proceed to prove that $d_F(\suf[x', \bar{v}_a], \sigma[w_{b_l}, p])\le (2+\epsilon)\delta$. It is sufficient to prove that the entire line segment $x'x''$ is inside the ball $\+B(w_l, (2+\epsilon)\delta)$. Because we can generate a matching between $\suf[x', \bar{v}_a]$ and $\sigma[w_{b_l},p]$ by matching the line segment $x'x''$ to $w_{b_l}$ and matching $\suf[x'', \bar{v}_a]$ to $\sigma[w_{b_l}, q']$ by their Fr\'echet matching. The matching realizes a distance of at most $(2+\epsilon)\delta$. Note that $d(w_{b_l}, x'')\le (2+\epsilon)\delta$. Suppose that $x'=\Msuf(y)$ for some $y$ covered by $\appReachWJArray{b_l}{k}$. We have $d(y, x')\le (1+\epsilon)\delta$ and $d(y, w_{b_l})\le \delta$ as $\appReachWJArray{b_l}{k}$ is $(7+\varepsilon)$-approximate reachable for $w_{b_l}$. By the triangle inequality, $d(x', w_{b_l})\le(2+\epsilon)\delta$. Hence, the entire line segment $x'x''$ is inside the ball $\+B(w_{b_l}, (2+\epsilon)\delta)$, and $d_F(\suf[x', \bar{v}_a], \sigma[w_{b_l},q'])\le (2+\epsilon)\delta$.
	
		By the triangle inequality, $d_F(\tau[y, v_{a_{k+1}}], \sigma[w_{b_l},q'])\le (3+2\epsilon)\delta$ as $d_F(\tau[y, v_{a_{k+1}}], \suf[x', \bar{v}_a])\le (1+\epsilon)\delta$. Recall that $\epsilon=\varepsilon/10$.
		%By definition, $x'$ equals to $\Msuf(y)$ for some $y$ covered by $\appReachWJArray{b_l}{k}$. We proceed to prove that $d_F(\tau[y, v_{a_{k+1}}], \sigma[w_{b_l}, p])\le (3+2\varepsilon)\delta$. It is sufficient to prove that the entire line segment $y\Msuf(x'')$ is inside the ball $\+B(w_{b_l}, (3+2\varepsilon)\delta)$.  For the point $y$, given that $d(y, x')\le (1+\varepsilon)\delta$ and $d(w_{b_l})$
		Provided that $\appReachWJArray{b_l}{k}$ is $(7+\varepsilon)$-approximate reachable for $w_{b_l}$, $d_F(\tau[v_1, y], \sigma[w_1, w_{b_l}])\le(7+\varepsilon)\delta$. It implies that we can construct a matching between $\tau[v_1, v_{a_{k+1}}]$ and $\sigma[w_1, q']$ by concatenating the Fr\'echet matching between $\tau[y, v_{a_{k+1}}]$ and $\sigma[w_{b_l},q']$ to the Fr\'echet matching between $\tau[v_1, y]$ and $\sigma[w_1, w_{b_l}]$. The matching realizes a distance at most $(7+\varepsilon)\delta$. Hence, all points in $\+I^2_{w_j}$ can form $((7+\varepsilon)\delta)$-reachable pairs with $v_{a_{k+1}}$.
	
		\end{proof}

	\noindent\textbf{Construction of $\+I^3$.} %We first prepare the input for {\sc WaveFront}. 
	 We invoke {\sc WaveFront}$(\pre, \sigma_l, (2+\epsilon)\delta, \+S, \appReachVIArray{a_k}{l})$, where $\+S$ is an array induced by $\pre$ with all elements being empty. Let $\+{W}^{w_{b_{l+1}}}$ be the output array for $w_{b_{l+1}}$. %The element $\+A'[\mu_2][i]$ is an interval on the $i$-th edge of $\pre$ computed for $w_{b_{l+1}}$ for all $i\in [|\pre|-1]$. 

     Suppose that $\pre=(\tilde{v}_1,\tilde{v}_2,\ldots,\tilde{v}_b)$. Note that $\+{W}^{b_{l+1}}$ is induced by $\pre$. A point $\tilde{x}\in \pre$ is covered by $\+{W}^{b_{l+1}}$ if and only if there is a point $p$ covered by $\appReachVIArray{a_k}{l}$, and $d_F(\pre[\tilde{v}_1,\tilde{x}],\sigma[p,w_{b_{l+1}}])\le (2+\epsilon)\delta$ by the definition of {\sc WaveFront}. As discussed in the overview, for any point $y$ of type 3, there is a point $p$ covered by $\appReachVIArray{a_k}{l}$ such that $d_F(\tau[v_{a_k}, y], \sigma[p, w_{b_{l+1}}])\le \delta$. Note that the Fr\'echet distance between $\pre[\tilde{v}_1, \Mpre(y)]$ and $\sigma([p, w_{b_{l+1}}])$ is at most $(2+\epsilon)\delta$ by the triangle inequality. It implies that $\Mpre(y)$ is covered by $\+{W}^{w_{b_{l+1}}}$. 
	 
	 We compute $\+I^3$ such that all points covered by $\+{W}^{w_{b_{l+1}}}$ are matched to some points in $\+I^3$ by $\Mpre$.
 	%We compute $\+I^3$ based on the sequence $\+{W}^{w_{b_{l+1}}}$ and %Recall that $\+I^3$ is an array induced by $\tau_k$, and $\pre$ is a simplification of the prefix $\tau[v_{a_k}, v_{i_{\text{pre}}}]$. The construction of $\+I^3$ is based on 
	%the matching $\Mpre$ with $d_{\Mpre}(\tau[v_{a_k}, v_{i_{\text{pre}}}], \pre)\le (1+\epsilon)\delta$.
    For $i\in [i_{\text{pre}}, a_{k+1}-1]$, we set $\+I^3_{v_i}$ to be empty as there cannot be any point of type 3 on $v_iv_{i+1}$. For every edge $v_iv_{i+1}$ of $\tau[v_{a_k}, v_{i_{\text{pre}}}]$, we find the first point $x'$ and last point $y'$ in $v_iv_{i+1}$ such that $\Mpre(x')$ and $\Mpre(y')$ are covered by $\+{W}^{w_{b_{l+1}}}$. We then set $\+I^3_{v_i}$ to be $x'y'$. If $x'$ and $y'$ do not exist, we set $\+I^3_{v_i}$ to be empty.
	
	We present how to find $x'$ and $y'$ for every edge $v_iv_{i+1}$. Let $s_e$ and $e_e$ be the start and end of $\+{W}^{w_{b_{l+1}}}_{\tilde{v}_e}$, respectively. We set $s_e$ and $e_e$ to be null if $\+{W}^{w_{b_{l+1}}}_{\tilde{v}_e}$ is empty. For ease of presentation, we assume that $s_e$ and $e_e$ are not null for all edges $v\tilde{v}_e\tilde{v}_{e+1}$ ofr $\pre$ as we can exclude all elements of null easily. By Lemma~\ref{lem:matching}, we can get $\Mpre(s_e)$ and $\Mpre(e_e)$ for all $b\in [b-1]$ in $O(\mu_2\log\mu_1)$ time. Since $s_1,e_1, s_2,\ldots, s_{b-1},e_{b-1}$ are sorted along $\pre$,\cancel{$s_1\le_{\pre} e_1\le_{\pre} s_2\le_{\pre}e_2\le_{\pre}\ldots\le_{\pre} s_{|\pre|-1}\le_{\pre}e_{|\pre|-1}$} the points matched to them by $\Mpre$ are sorted along $\tau[v_1, v_{i_{\text{pre}}}]$ as well.\cancel{$\Mpre(s_1)\le_{\tau}\Mpre(e_1)\le_{\tau}\Mpre(s_2)\le_{\tau}\Mpre(e_2)\le_{\tau}\ldots\le_{\tau}\Mpre(s_{|\pre|-1})\le_{\tau}\Mpre(e_{|\pre|-1})$} Note that $x'$ and $y'$ belong to either $\{v_i, v_{i+1}\}$ or the points matched to some $s_e$ or $e_e$ by $\Mpre$. We first check whether $\Mpre(v_i)$ is covered by $\+{W}^{w_{b_{l+1}}}$. If so, we set $x'=v_i$; otherwise, we traverse the sorted point sequence on $\tau[v_1, v_{i_\text{pre}}]$ to identify the first point on $v_iv_{i+1}$ that is matched to some $s_e$ by $\Mpre$ and assign its value to $x'$. We can determine $y'$ by checking $\Mpre(v_{i+1})$ and traversing the sorted sequence similarly. %, in $O(\mu_1)$ time. 
	%We finish constructing $\+I^3$ in $O(\mu_1)$ time.
	The array $\+I^3$ satisfies the following property. 
    %The proof of the following lemma can found in Appendix~\ref{sec:proof_I_3}.
	
	\begin{lemma}~\label{lem: I_3}
		We can construct $\+I^3$ in $O(\mu_1+\mu_2^2)$ time such that $\+I^3$ covers all points of type 3, and every point covered by $\+I^3$ forms a $((7+\varepsilon)\delta)$-reachable pair with $w_{b_{l+1}}$. 
	\end{lemma}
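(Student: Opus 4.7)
The plan is to mirror the three-step template of Lemma~\ref{lem: I_2}, with $\pre$ playing the role of $\suf$ and $w_{b_{l+1}}$ playing the role of $v_{a_{k+1}}$. For the running time, {\sc WaveFront}$(\pre,\sigma_l,(2+\epsilon)\delta,\+S,\appReachVIArray{a_k}{l})$ costs $O(|\pre||\sigma_l|)=O(\mu_2^2)$; the $O(\mu_2)$ queries that retrieve $\Mpre(s_e)$ and $\Mpre(e_e)$ via Lemma~\ref{lem:matching} cost $O(\mu_2\log\mu_1)$; and the merge that assigns $x'$ and $y'$ to every edge $v_iv_{i+1}$ of $\tau[v_{a_k},v_{i_{\text{pre}}}]$ scans the $O(\mu_1)$ edges and the $O(\mu_2)$ preimages in their common $\le_\tau$ order for $O(\mu_1+\mu_2)$ work. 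These sum to $O(\mu_1+\mu_2^2)$.

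For coverage, fix a type 3 point $y$ on edge $v_iv_{i+1}$: by definition, there is $p$ covered by $\appReachVIArray{a_k}{l}$ with $d_F(\tau[v_{a_k},y],\sigma[p,w_{b_{l+1}}])\le\delta$, and $|\sigma[p,w_{b_{l+1}}]|\le\mu_2+1$ forces $y\in\tau[v_{a_k},v_{i_{\text{pre}}}]$ by the maximality used to define $\pre$, so that $\Mpre(y)$ is defined. The triangle inequality with $d_F(\tau[v_{a_k},y],\pre[\tilde{v}_1,\Mpre(y)])\le(1+\epsilon)\delta$ gives $d_F(\pre[\tilde{v}_1,\Mpre(y)],\sigma[p,w_{b_{l+1}}])\le(2+\epsilon)\delta$, so by the characterization of {\sc WaveFront} the point $\Mpre(y)$ lies in $\+{W}^{w_{b_{l+1}}}$. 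Thus $y$ is itself a point on $v_iv_{i+1}$ whose $\Mpre$-image is covered, and the definitions of $x',y'$ as the extremal such points imply $x'\le_\tau y\le_\tau y'$, i.e., $y\in \+I^3_{v_i}$.

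For the approximation factor, take $\hat{y}$ covered by $\+I^3$, so $\hat{y}\in x'y'$ on some edge $v_iv_{i+1}$. By {\sc WaveFront} there is $p$ covered by $\appReachVIArray{a_k}{l}$ with $d_F(\pre[\tilde{v}_1,\Mpre(x')],\sigma[p,w_{b_{l+1}}])\le(2+\epsilon)\delta$; in particular $d(\Mpre(x'),w_{b_{l+1}})\le(2+\epsilon)\delta$, which together with $d(x',\Mpre(x'))\le(1+\epsilon)\delta$ yields $d(x',w_{b_{l+1}})\le(3+2\epsilon)\delta$, and the symmetric bound holds for $y'$. Convexity of $\+B(w_{b_{l+1}},(3+2\epsilon)\delta)$ places the entire segment $x'\hat{y}$ in this ball, so matching $x'\hat{y}$ to $w_{b_{l+1}}$ extends the triangle-inequality matching for $\tau[v_{a_k},x']$ vs $\sigma[p,w_{b_{l+1}}]$ to one for $\tau[v_{a_k},\hat{y}]$ vs $\sigma[p,w_{b_{l+1}}]$ of the same quality $(3+2\epsilon)\delta$. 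Concatenating with the $(7+\varepsilon)\delta$-matching for $\tau[v_1,v_{a_k}]$ vs $\sigma[w_1,p]$ supplied by the $(7+\varepsilon)$-approximate reachability of $\appReachVIArray{a_k}{l}$, and using $\epsilon=\varepsilon/10$ so that $3+2\epsilon<7+\varepsilon$, certifies that $(\hat{y},w_{b_{l+1}})$ is $((7+\varepsilon)\delta)$-reachable.

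The main obstacle is this last ``endpoints-to-interior'' step: $\+I^3$ only records the two extremal covered $\Mpre$-preimages on each edge, so there is no direct {\sc WaveFront} certificate for an intermediate $\hat{y}$. The convexity of the Euclidean ball around $w_{b_{l+1}}$, linked to $x'$ via the $\Mpre$-distance bound, is what lets us propagate coverage to every $\hat{y}\in x'y'$ without storing an $O(\mu_1)$-sized description per edge, which is exactly what keeps the per-call cost of {\sc Reach} subquadratic.
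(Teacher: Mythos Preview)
Your proposal is correct and follows essentially the same approach as the paper's own proof: the same {\sc WaveFront} call on $\pre$, the same triangle-inequality argument through $\Mpre$ to place $\Mpre(y)$ in $\+{W}^{w_{b_{l+1}}}$, the same endpoint-distance bound $d(x',w_{b_{l+1}}),d(y',w_{b_{l+1}})\le(3+2\epsilon)\delta$ via $\Mpre$, and the same extension of the matching along $x'\hat{y}$ to $w_{b_{l+1}}$ before concatenating with the $(7+\varepsilon)\delta$-certificate from $\appReachVIArray{a_k}{l}$. Your running-time accounting is slightly more explicit than the paper's (which simply says it follows from the construction), but the bound and the argument are the same.
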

	
	\begin{proof}
		The running time follows the construction procedure. We focus on proving $\+I^3$'s property.
		
		Suppose that there is a point $y$ of type 3. By definition, 
		%According to the construction procedure,  $\+I^3_{v_i}$ is determined by $\+{W}^{w_{b_{l+1}}}$ instead of being set to $\emptyset$ directly. We first claim that for any point $x\in \reachWJ{b_{l+1}}{i}$, 
		there is a point $p$ covered by $\appReachVIArray{a_k}{l}$ with $d_F(\tau[v_{a_k}, y], \sigma[p, w_{b_{l+1}}])\le \delta$. %By the definition of type 3 reachability interval, for the start $\ell^{b_{l+1}}_{v_i}$ of $\reachWJ{b_l}{i}$, there is a point $p$ covered by $\appReachVIArray{a_k}{l}$ with $d_F(\tau[v_{a_k}, \ell^{b_{l+1}}_{v_i}], \sigma[p, w_{b_{l+1}}])\le \delta$. Since $\ell^{b_{l+1}}_{v_i}\le_{\tau} x$ and both $\ell^{b_{l+1}}_{v_i}$ and $x$ are within a distance $\delta$ to $w_{b_{l+1}}$, we can extend the Fr\'echet matching between $\tau[v_{a_k}, \ell^{b_{l+1}}_{v_i}]$ and $\sigma[p, w_{b_{l+1}}]$ to a matching between $\tau[v_{a_k},x]$ and $\sigma[p, w_{b_{l+1}}]$ by matching the line segment $\ell^{b_{l+1}}_{v_i}x$ to $w_{b_{l+1}}$. The matching realizes a distance at most $\delta$. We finish proving the claim.
		%We then reveal the connection between $\reachWJ{b_{l+1}}{i}$ and $\pre$, $\+{W}^{w_{b_{l+1}}}$. Specifically, 
        Provided that $\sigma[p, w_{b_{l+1}}]$ has at most $\mu_2+1$ vertices, $y$ must belong to $\tau[v_{a_k}, v_{i_{\text{pre}}}]$.
		We first prove that $\Mpre(y)$ is covered by $\+{W}^{w_{b_{l+1}}}$. Recall that $\+{W}^{w_{b_{l+1}}}$ is the output array for $w_{b_{l+1}}$ returned by {\sc WaveFront}$(\pre,\sigma_l,(2+\epsilon)\delta, \+S, \appReachVIArray{a_k}{l})$. 
        %Given that $d_{\Mpre}(\tau[v_{a_k}, v_{i_{\text{pre}}}], \pre)\le (1+\epsilon)\dela$t, the subcurve $\tau[v_{a_k}, y]$ can be matched to $\pre[\tilde{v}_1, \Mpre(y)]$ within a distance of $(1+\epsilon)\delta$, which 
        Given that $\Mpre$ realizes a distance at most $(1+\epsilon)\delta$, it holds that $d_F(\tau[v_{a_k}, y], \pre[\tilde{v}_1, \Mpre(y)])\le(1+\epsilon)\delta$. Hence, $d_F(\pre[\tilde{v}_1, \Mpre(y)], \sigma[p, w_{b_{l+1}}])\le(2+\epsilon)\delta$ by the triangle inequality. Provided that $p$ is covered by $\appReachVIArray{a_k}{l}$, and $\+{W}^{w_{b_{l+1}}}$ is returned by the invocation of {\sc WaveFront}$(\pre, \sigma_l, (2+\epsilon)\delta, \+S, \appReachVIArray{a_k}{l})$, $\Mpre(y)$ must be covered by $\+{W}^{w_{b_{l+1}}}$. %by Lemma~\ref{lem: wave}.
	
		Suppose that $y\in v_iv_{i+1}$. Given that $\Mpre(y)$ is covered by $\+{W}^{w_{b_{l+1}}}$, $\+I^3_{v_i}$ is not empty by construction. In addition, the start $x'$ and end $y'$ of $\+I^3_{v_i}$ satisfies that $x'\le_{\tau} x\le_{\tau} y'$. Hence, $y\in \+I^3_{v_i}$.
	
		Next, suppose that $\+I^3_{v_i}=x'y'$ is not empty. By construction, $i$ belongs to $[a_k, i_{\text{pre}}-1]$, and both $\Mpre(x')$ and $\Mpre(y')$ are covered by $\+{W}^{w_{b_{l+1}}}$. Hence, $\Mpre(x')$ satisfies that there is some point $p$ covered by $\appReachVIArray{a_k}{l}$ such that $d_F(\pre[\tilde{v}_1, \Mpre(x')], \sigma[p, w_{b_{l+1}}])\le (2+\epsilon)\delta$. So does $\Mpre(y')$. It implies that $d(w_{b_{l+1}}, \Mpre(x'))\le (2+\epsilon)\delta$ and $d(w_{b_{l+1}}, \Mpre(y'))\le (2+\epsilon)\delta$. Provided that $d(x', \Mpre(x'))\le (1+\epsilon)\delta$ and $d(y', \Mpre(y'))\le (1+\epsilon)\delta$, both $d(x', w_{b_{l+1}})$ and $d(y', w_{b_{l+1}})$ are at most $(3+2\epsilon)\delta$ by the triangle inequality. Hence, $\+I^3_{v_i}$ is inside the ball $\+B(w_{b_{l+1}}, (3+2\epsilon)\delta)$.
	
		We proceed to prove that $(x', w_{b_{l+1}})$ is $((7+\varepsilon)\delta)$-reachable. %Since $d_F(\pre[\tilde{v}_1, \Mpre(x')], \sigma[p, w_{b_{l+1}}])\le(2+\epsilon)\delta$ for a point $p$ covered by $\appReachVIArray{a_k}{l}$, and $d_F(\tau[v_{a_k}, x'], \pre[v''_1, \Mpre(x')])\le (1+\epsilon)\delta$, 
		Observe that by the triangle inequality,
		we have $d_F(\tau[v_{a_k}, x'], \sigma[p, w_{b_{l+1}}])\le (3+2\epsilon)\delta$. Recall that $\epsilon=\varepsilon/10$. Given that $\appReachVIArray{a_k}{l}$ is $(7+\varepsilon)$-approximate reachable for $v_{a_k}$, it holds that $d_F(\tau[v_1, v_{a_k}], \sigma[w_1, p])\le (7+\varepsilon)\delta$. Hence, we can construct a matching between $\tau[v_1, x']$ and $\sigma[w_1, w_{b_{l+1}}]$ by concatenating the Fr\'echet matching between $\tau[v_{a_k}, x']$ and $\sigma[p, w_{b_{l+1}}]$ and the Fr\'echet matching between $\tau[v_1, v_{a_k}]$ and $\sigma[w_1, p]$. The matching realizes a distance at most $(7+\varepsilon)\delta$. Therefore, $d_F(\tau[v_1, x'], \sigma[w_1, w_{b_{l+1}}])\le (7+\varepsilon)\delta$, it means that $(x', w_{b_{l+1}})$ is $((7+\varepsilon)\delta)$-reachable.
	
		For any point $y\in \+I^3_{v_i}$, we can extend the Fr\'echet matching between $\tau[v_1, x']$ and $\sigma[w_1, w_{b_{l+1}}]$ to a matching between $\tau[v_1, y]$ and $\sigma[w_1, w_{b_{l+1}}]$ by matching the line segment $x'y$ to $w_{b_{l+1}}$. The matching realizes a distance at most $(7+\varepsilon)\delta$. Therefore, $d_F(\tau[v_1, y], \sigma[w_1, w_{b_{l+1}}])\le (7+\varepsilon)\delta$, it means that $(y, w_{b_{l+1}})$ is $((7+\varepsilon)\delta)$-reachable.
	
	\end{proof}
	%and we have a matching  
	
	%For every $\+A'[\mu_2][i]$, if it is not empty, let $s$ and $e$ be its start and end, respectively. We first identify $\Mpre(s)$ and $\Mpre(e)$ on $\tau[v_{a_k}, v_{i_{\text{pre}}}]$ in $O(\log\mu_1)$ time.  If $\Msuf(s)$ and $\Msuf(e)$ belong to the same edge of $\tau[v_{a_k}, v_{i_{\text{pre}}}]$, we associate the interval $[\Mpre(s),\Mpre(e)]$ with the edge. Otherwise, $\tau[\Mpre(s),\Mpre(e)]$ is a subcurve of $\tau$ that intersects multiple edges of $\tau[v_{a_k}, v_{i_{\text{pre}}}]$. For each edge of $\tau[v_{a_k}, v_{i_{\text{pre}}}]$ that intersects the subcurve, we associate the intersection, which is an interval, to the edge. After processing all $\+A'[\mu_2][i]$ for $i\in [i_{\text{pre}}-1]$, every edge of $\tau[v_{a_k},v_{i_{\text{pre}}}]$ will be associated with some intervals. The number of all intervals associated to all edges of $\suf$ is $O(\mu_1+\mu_2)$. 
	
	%For every $i'\in [i_{\text{pre}}-1]$, consider the edge $v_{i'}v_{i'+1}$ of $\tau[v_{a_k}, v_{i_{\text{pre}}}]$. If no interval is associated to it, we set the $i'$-th element in $\+I_3$ to be empty; otherwise, we sort all endpoints of the intervals associated to it and pick the leftmost endpoint $p^*$, and set the $i'$-th element to be $[p^*, e']$, where $e'$ is the end of the intersection $\+B(v_{b_{l+1}}, (3+2\varepsilon)\delta)\cap v_{i'}v_{i'+1}$. For all $i'\in [i_{\text{pre}}, \mu_1-1]$, we set the $i'$-th element in $\+I_3$ to be empty. We finish constructing $\+I_3$. The total running time is $O(\mu_1\log\mu_1+\mu_2^2)$.
	
	\noindent\textbf{Construction of $\+I^4$.} %We first find either a single subcurve $\tau'$ of $\tau_k$ with $d_F(\tau', \sigma_l)\le (3+2\varepsilon)\delta$ or single subcurves $\tau'_{l,r}$ of $\tau_k$ with $d_F(\tau'_{l,r}, \sigma_{l,r})\le (3+2\varepsilon)\delta$ for all $r\in [(\mu_2-1)/\mu_3]$. 
	As in the overview, we divide $\sigma_l$ into subcurves $(\sigma_{l,1}, \sigma_{l,2},\ldots,\sigma_{l, (\mu_2-1)/\mu_3})$ of $\mu_3$ edges. For every $\sigma_{l,r}$ and a constant $c>0$ whose value will be specified later, we sample a set of $2c\log n\cdot \mu_1/\omega$ edges of $\tau_k$ independently with replacement. We run the algorithm in Lemma~\ref{lem: marked-edge} on every $\sigma_{l,r}$ and every sampled edge. It takes $O(\mu_2^5\mu_1\cdot\log n/(\omega\mu_3))$ time. In the case that we succeed in finding $\tau'_{l,r}$ for every $\sigma_{l,r}$ via sampling, we get $(\tau'_{l,r})_{r\in [(\mu_2-1)/\mu_3]}$ as a surrogate of $\sigma_l$. 
	
	Otherwise, take some $\sigma_{l,r}$ for which we fail to get a $\tau'_{l,r}$. It implies that no sampled edge is marked by $\sigma_{l,r}$. By Lemma~\ref{lem:Chernoff}, $\sigma_{l,r}$ is not $\omega$-dense with probability at least $1-n^{-10}$ for sufficiently large $c$. Conditioned on that $\sigma_{l,r}$ is not $\omega$-dense. %We identify all points $x$ in $\tau_k$ such that there is a point $y$ in $\tau_k$ with $y\le_\tau x$ and $d_F(\tau[y, x], \sigma_{l,r})\le \delta$. 
    Let $\+S$ be an array induced by $\tau_k$ such that $\+S_{v_i}=v_iv_{i+1}\cap \+B(w_{b_{l,r}}, \delta)$. Let $\+S'$ be an array induced by $\sigma_{l,r}$ with all elements being empty. We invoke {\sc WaveFront}$(\tau_k, \sigma_{l,r}, \delta, \+S, \+S')$ to get the output array $\+{W}^{w_{b_{l,r+1}}}$ for the vertex $w_{b_{l, r+1}}$ in $O(\mu_1\mu_3)$ time. Let $E$ contain all edges $v_iv_{i+1}$ such that $\+{W}^{w_{b_{l,r+1}}}_{v_i}\not=\emptyset$. Every edge in $E$ is marked by $\sigma_{l,r}$. Hence, $|E|<\omega$. If there is an edge of $\tau_k$ marked by $\sigma_l$, $E$ must contain one because $\sigma_l$ contains $\sigma_{l,r}$. We run the algorithm in Lemma~\ref{lem: marked-edge} on $\sigma_l$ and every edge in $E$ to find $\tau'$ with $d_F(\tau',\sigma_l)\le(3+2\epsilon)\delta$. It takes $O(\omega\mu_2^4)$ time.
	
	%We always choose values for $\mu_1, \mu_2$ and $d$ such that $\mu_1\ge \mu_2^5$ and $\mu_1\ge d\mu_2^4$. 
	If there is a subcurve $\tau''$ of $\tau_k$ such that $d_F(\tau'',\sigma_l)\le\delta$, we find either $(\tau'_{l,r})_{r\in[(\mu_2-1)/\mu_3]}$ or $\tau'$ as a surrogate of $\sigma_l$ with probability at least $1-n^{-10}$. We then use the query {\sc Cover} as discussed in the overview to get $\+I^4$ in $O(\mu_1\mu_2/\mu_3)$ time. Otherwise, we set all elements in $\+I^4$ to be empty because points of type 4 do not exist.
 %find a single subcurve $\tau'$ of $\tau_k$ such that $d_F(\tau', \sigma_l)\le (3+2\epsilon)\delta$ or subcurves $\tau'_{l,r}$ of $\tau_k$ for all $r\in [(\mu_2-1)/\mu_3]$ such that $d_F(\tau'_{l,r}, \sigma_{l,r})\le (3+2\epsilon)\delta$. When we get $\tau'$, we execute a query {\sc Cover}$(\tau', (4+2\epsilon)\delta, \appReachWJArray{b_l}{k})$ and set $\+I^4$ to be the answer in $O(\mu_1)$ time. %We then set $\+I^4$ to be the answer returned for the query. 
	%When we get $\tau'_{l,r}$ for all $\sigma_{l,r}$'s, we execute {\sc Cover} queries with $\tau'_{l,1}, \tau'_{l,2}, \ldots, \tau'_{l, (\mu_2-1)/\mu_3}$ progressively. Specifically, we first carry out a query {\sc Cover}$(\tau'_{l,1}, (4+2\epsilon)\delta, \epsilon, \appReachWJArray{b_l}{k})$ in $O(\mu_1)$ time to get an array $\+S^1$. For any $r\in [2, (\mu_2-1)/\mu_3]$, suppose that we have gotten $\+S^{r-1}$, we proceed to execute a query {\sc Cover}$(\tau'_{l,r}, (4+2\epsilon)\delta, \+S^{r-1})$ to get $\+S^r$. In the end, we set $\+I^4=\+S^{(\mu_2-1)/\mu_3}$. It takes $O(\mu_1\mu_2/\mu_3)$ time. 
    %The proof of the following lemma can be found in Appendix~\ref{sec:proof_I_4}.
	
	\begin{lemma}\label{lem: I_4}
		We can construct $\+I^4$ in $O(\mu_1(\mu_3+\mu_2/\mu_3+\mu_2^5\log n/(\omega\mu_3))+\omega\mu_2^4)$ time such that $\+I^4$ covers all points of type 4 with probability at least $1-n^{-10}$, and every point covered by $\+I^4$ forms a $((7+\varepsilon)\delta)$-reachable pair with $w_{b_{l+1}}$.
	\end{lemma}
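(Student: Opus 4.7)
The plan is to verify three claims in the following order: the running-time bound, that $\+I^4$ covers all type~4 points with probability at least $1-n^{-10}$, and that every point covered by $\+I^4$ forms a $((7+\varepsilon)\delta)$-reachable pair with $w_{b_{l+1}}$. The construction described before the lemma branches on whether sampling finds a surrogate for every $\sigma_{l,r}$, so each claim requires checking both branches and tracking how the approximations compound.

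For the running time, sum four contributions: (i) sampling together with $O(\log n\cdot\mu_1/\omega)$ calls of Lemma~\ref{lem: marked-edge} for each $\sigma_{l,r}$ contributes $O(\mu_2^5\mu_1\log n/(\omega\mu_3))$ across $(\mu_2-1)/\mu_3$ subcurves; (ii) at most one {\sc WaveFront}$(\tau_k,\sigma_{l,r},\delta,\+S,\+S')$ call runs in $O(\mu_1\mu_3)$ time; (iii) the follow-up Lemma~\ref{lem: marked-edge} calls on the at most $\omega$ edges of $E$ against $\sigma_l$ cost $O(\omega\mu_2^4)$; and (iv) $O(\mu_2/\mu_3)$ (respectively, a single) {\sc Cover} queries, each in $O(\mu_1)$ time by Lemma~\ref{lem:cover}, contribute $O(\mu_1\mu_2/\mu_3)$. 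Summing matches the stated bound.

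For the probabilistic coverage, call the bad event the failure of sampling on some $\omega$-dense $\sigma_{l,r}$. Conditional on $\sigma_{l,r}$ being $\omega$-dense, each of the $2c\log n\cdot\mu_1/\omega$ independently sampled edges of $\tau_k$ is marked with probability at least $\omega/\mu_1$, so the probability none is marked is at most $(1-\omega/\mu_1)^{2c\log n\cdot\mu_1/\omega}\le e^{-2c\log n}=n^{-2c}$; a union bound over at most $\mu_2/\mu_3\le n$ subcurves makes the bad event occur with probability at most $n^{-10}$ for suitable $c$. Conditioned on the good event, the algorithm either finds every $\tau'_{l,r}$ with $d_F(\tau'_{l,r},\sigma_{l,r})\le(3+2\epsilon)\delta$ by Lemma~\ref{lem: marked-edge}, or the fall-through triggers on some non-$\omega$-dense $\sigma_{l,r}$ so that the {\sc WaveFront} call collects all $|E|<\omega$ edges marked by $\sigma_{l,r}$; if any subcurve of $\tau_k$ lies within Fr\'echet distance $\delta$ to $\sigma_l$, restricting its Fr\'echet matching with $\sigma_l$ to the portion facing $\sigma_{l,r}$ yields a sub-subcurve within Fr\'echet distance $\delta$ to $\sigma_{l,r}$ that intersects at least one edge of $E$, which is therefore marked by $\sigma_l$ as well, and Lemma~\ref{lem: marked-edge} delivers a surrogate $\tau'$ with $d_F(\tau',\sigma_l)\le(3+2\epsilon)\delta$. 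Given valid surrogates, coverage of any type~4 point $y$ follows by splitting its witnessing Fr\'echet matching $d_F(\tau[x,y],\sigma_l)\le\delta$ along $(\sigma_{l,r})_r$ into a chain $x=x_0\le_\tau\cdots\le_\tau x_{(\mu_2-1)/\mu_3}=y$ with $d_F(\tau[x_{r-1},x_r],\sigma_{l,r})\le\delta$; the triangle inequality upgrades each to $d_F(\tau[x_{r-1},x_r],\tau'_{l,r})\le(4+2\epsilon)\delta$, and the first property of {\sc Cover} inductively places $x_r\in\+S^r$, so $y\in\+I^4$. The single-query branch is identical with only one step.

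For the reachability, take any $y$ covered by $\+I^4$. The second property of {\sc Cover} iterated along the chain produces $x=x_0\le_\tau\cdots\le_\tau x_{(\mu_2-1)/\mu_3}=y$ with $x$ covered by $\appReachWJArray{b_l}{k}$ and $d_F(\tau[x_{r-1},x_r],\tau'_{l,r})\le(1+\epsilon)(4+2\epsilon)\delta$; combining with $d_F(\tau'_{l,r},\sigma_{l,r})\le(3+2\epsilon)\delta$ via the triangle inequality and concatenating the matchings bounds $d_F(\tau[x,y],\sigma_l)$ by $7\delta+8\epsilon\delta+2\epsilon^2\delta$, which is at most $(7+\varepsilon)\delta$ because $\epsilon=\varepsilon/10$ and $\varepsilon\in(0,1)$. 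One final concatenation with the matching witnessing that $\appReachWJArray{b_l}{k}$ is $(7+\varepsilon)$-approximate reachable for $w_{b_l}$ yields $d_F(\tau[v_1,y],\sigma[w_1,w_{b_{l+1}}])\le(7+\varepsilon)\delta$, so $(y,w_{b_{l+1}})$ is $((7+\varepsilon)\delta)$-reachable. The hard part throughout is keeping the compounded blow-ups---Lemma~\ref{lem: marked-edge}'s factor of $3+2\epsilon$, two triangle inequalities, and {\sc Cover}'s $(1+\epsilon)$ blow-up---within the budget of $7+\varepsilon$, which is precisely why $\epsilon=\varepsilon/10$ was selected in Section~\ref{sec:preprocessing}.
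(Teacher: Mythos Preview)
Your proposal is correct and follows essentially the same approach as the paper's proof. The only cosmetic difference is that you construct the chain $x_0\le_\tau\cdots\le_\tau x_{(\mu_2-1)/\mu_3}$ explicitly up front (by splitting one Fr\'echet matching at the vertices $w_{b_{l,r}}$) and then concatenate, whereas the paper phrases both the coverage and the reachability arguments as inductions on $r$; these are the same argument unrolled versus rolled. Your running-time breakdown and probability discussion are more detailed than the paper's own proof (which defers these to the construction text), but match it in content.
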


	\begin{proof}
		The running time follows the construction procedure. We focus on proving $\+I^4$'s property.
		
		Suppose that there is a point $y$ of type 4. By definition, there is a point $x$ covered by $\+A^{b_l}_k$ such that $d_F(\tau[x,y],\sigma_l)\le \delta$. It implies that $\sigma_l$ is within a Fr\'echet distance $\delta$ to some subcurve of $\tau_k$. Hence, with probability at least $1-n^{-10}$, we can either find a single subcurve $\tau'$ of $\tau_k$ such that $d_F(\tau', \sigma_l)\le (3+2\epsilon)\delta$ or subcurves $\tau'_{l,r}$ of $\tau_k$ for every $r\in [(\mu_2-1)/\mu_3]$ such that $d_F(\tau'_{l,r}, \sigma_{l,r})\le (3+2\epsilon)\delta$.
		
		When we find $\tau'$, $\+I^4$ equals to the answer of {\sc Cover}$(\tau', (4+2\epsilon)\delta, \appReachWJArray{b_l}{k})$. Note that $y$ satisfies that there is a point $x$ covered by $\appReachWJArray{b_l}{k}$ such that $d_F(\tau[x, y], \sigma_l)\le \delta$. By the triangle inequality, $d_F(\tau[x,y], \tau')\le (4+2\epsilon)\delta$. Hence, $y$ must be covered by $\+I^4$ by the definition of the query {\sc Cover}.
		
		When we get $\tau'_{l,1}, \tau'_{l,2}, \ldots, \tau'_{l, (\mu_2-1)/\mu_3}$, we will construct a sequence $\+S^1, \+S^2,\ldots, \+S^{(\mu_2-1)/\mu_3}$ and set $\+I_4=\+S^{(\mu_2-1)/\mu_3}$. We prove that $\+S^r$ contains all points $y'\in \tau_k$ satisfying that there is a point $x$ covered by $\appReachWJArray{b_l}{k}$ with $x\le_\tau y'$ and $d_F(\tau[x, y'], \sigma[w_{b_l}, w_{b_l, r+1}])\le \delta$ by induction on $r$. 
		
		When $r=1$, the analysis is the same as the analysis in the case where we find $\tau'$. When $r\ge 2$, assume that $\+S^{r-1}$ satisfies the property. Take any point $y'\in \tau_k$ satisfying that there is a point $x$ covered by $\appReachWJArray{b_l}{k}$ with $x\le_\tau y'$ and $d_F(\tau[x,y'], \sigma[w_{b_l}, w_{b_l, r+1}])\le \delta$. The Fr\'echet matching between $\tau[x,y']$ and $\sigma[w_{b_l}, w_{b_l, r+1}]$ must matching the vertex $w_{b_l, r}$ to some point $z\in \tau[x, y']$. It implies that $d_F(\tau[x,z], \sigma[w_{b_l}, w_{b_l,r}])\le \delta$ and $d_F(\tau[z, y'], \sigma[w_{b_l, r}, w_{b_l, r+1}])\le \delta$. By the induction hypothesis, the point $z$ is covered by $\+S^{r-1}$. By the construction procedure, $\+S^r$ is the answer for {\sc Cover}$(\tau'_{l,r}, (4+2\epsilon)\delta, \+S^{r-1})$. Since $\sigma_{l,r}=\sigma[w_{b_l,r}, w_{b_l, r+1}]$ is within a Fr\'echet distance $(3+2\epsilon)\delta$ to $\tau'_{l,r}$, by the triangle inequality, $d_F(\tau'_{l,r}, \sigma_{l,r})\le (4+2\epsilon)\delta$. Hence, $y'$ must be covered by $\+S^r$ by the definition of the query {\sc Cover} as $z$ is covered by $\+S^{r-1}$. We finish proving the property for all $\+S^r$'s. Since $\+I^4=\+S^{(\mu_2-1)/\mu_3}$ and $w_{b_l, (\mu_2-1)/\mu_3+1}=w_{b_{l+1}}$, all points of type 4 are covered by $\+I^4$. 
		
		Next, we prove that all points covered by $\+I^4$ can form $((7+\varepsilon)\delta)$-reachable pairs with $w_{b_{l+1}}$. In the case where we find $\tau'$, $\+I^4$ equals to the answer of the query {\sc Cover}$(\tau', (4+2\epsilon)\delta, \appReachWJArray{b_l}{k})$. By definition of {\sc Cover}, all points $y$ covered by $\+I^4$ satisfied that there is a point $x$ covered by $\appReachWJArray{b_l}{k}$ such that $x\le_\tau y$ and $d_F(\tau[x,y], \tau')\le (1+\epsilon)\cdot (4+2\epsilon)\delta$. Given that $d_F(\tau', \sigma_l)\le (3+2\epsilon)\delta$, we have $d_F(\tau[x,y], \sigma_l)\le (7+8\epsilon+2\epsilon^2)\delta$ by the triangle inequality. Note that $\epsilon=\varepsilon/10$. Hence, $d_F(\tau[x,y], \sigma_l)\le (7+\varepsilon)\delta$. Given that $\appReachWJArray{b_l}{k}$ is $(7+\varepsilon)$-approximate reachable for $w_{b_l}$, it holds that $d_F(\tau[v_1, x], \sigma[w_1, w_{b_l}])\le (7+\varepsilon)\delta$. Hence, we can construct a matching between $\tau[v_1, y]$ and $\sigma[w_1, w_{b_{l+1}}]$ by concatenating the Fr\'echet matching between $\tau[v_1,x]$ and $\sigma[w_1, w_{b_l}]$ and the Fr\'echet matching between $\tau[x,y]$ and $\sigma_l$. The matching realizes a distance at most $(7+\varepsilon)\delta$. Therefore, $d_F(\tau[v_1, y], \sigma[w_1, w_{b_{l+1}}])\le (7+\varepsilon)\delta$, it means that $(y, w_{b_{l+1}})$ is $((7+\varepsilon)\delta)$-reachable.
		
		When we get $\tau'_{l,1}, \tau'_{l,2}, \ldots, \tau'_{l, (\mu_2-1)/\mu_3}$, we will construct a sequence $\+S^1, \+S^2,\ldots, \+S^{(\mu_2-1)/\mu_3}$ and set $\+I_4=\+S^{(\mu_2-1)/\mu_3}$. We prove that all points $y'$ covered by $\+S^r$ form $((7+\varepsilon)\delta)$-reachable pairs with $w_{b_l, r+1}$ by induction on $r$.
		
		When $r=1$, the analysis is the same as the analysis in the case where we find $\tau'$. When $r\ge 2$, assume that $\+S^{r-1}$ satisfies the property. Take any point $y'$ covered by $\+S^r$. Since $\+S^r$ is the answer for {\sc Cover}$(\tau'_{l,r},(4+2\epsilon)\delta,\+S^{r-1})$, by the definition of the query {\sc Cover}, we can find a point $x$ covered by $\+S^{r-1}$ such that $x\le_\tau y'$ and $d_F(\tau[x,y'], \tau'_{l,r})\le (1+\epsilon)\cdot(4+2\epsilon)\delta$. Given that $d_F(\tau'_{l,r}, \sigma_{l,r})\le (3+2\epsilon)\delta$, we have $d_F(\tau[x,y'], \sigma_{l,r})\le (7+8\epsilon+2\epsilon^2)\delta$ by the triangle inequality. Note that $\epsilon=\varepsilon/10$. Hence, $d_F(\tau[x,y'], \sigma_{l,r})\le (7+\varepsilon)\delta$. By the induction hypothesis, it holds that $d_F(\tau[v_1, x], \sigma[w_1, w_{b_l,r}])\le (7+\varepsilon)\delta$. Hence, we can construct a matching between $\tau[v_1, y']$ and $\sigma[w_1, w_{b_l, r+1}]$ by concatenating the Fr\'echet matching between $\tau[v_1, x]$ and $\sigma[w_1, w_{b_l, r}]$ and the Fr\'echet matching between $\tau[x,y']$ and $\sigma_{l,r}$. The matching realizes a distance at most $(7+\varepsilon)\delta$. Therefore, $d_F(\tau[v_1, y'], \sigma[w_1, w_{b_l, r+1}])\le (7+\varepsilon)\delta$. We finish proving the property for all $\+S^r$'s.  Since $\+I^4=\+S^{(\mu_2-1)/\mu_3}$ and $w_{b_l, (\mu_2-1)/\mu_3+1}=w_{b_{l+1}}$, all points covered by $\+I^4$ can form $((7+\varepsilon)\delta)$-reachable pairs with $w_{b_{l+1}}$.
	\end{proof}
	
	Next, we construct $\appReachVIArray{a_{k+1}}{l}$ based on $\+I^1$ and $\+I^2$ and construct $\appReachWJArray{b_{l+1}}{k}$ based on $\+I^3$ and $\+I^4$. For every $j\in [b_l, b_{l+1}-1]$, we set $\appReachVI{a_{k+1}}{l}{j}$ to be empty if both $\+I^1_{w_j}$ and $\+I^2_{w_j}$ are empty. Otherwise, take the minimum point $p$ with respect to $\le_\sigma$ in $\+I^1_{w_j}\cup\+I^2_{w_j}$. We set $\appReachVI{a_{k+1}}{l}{j}=pw_{j+1}\cap \+B(v_{a_{k+1}}, \delta)$. %It takes $O(\mu_2)$ time to construct $\appReachVIArray{a_{k+1}}{l}$. 
	For every $i\in [a_k, a_{k+1}-1]$, we set $\appReachWJ{b_{l+1}}{k}{i}$ to be empty if both $\+I^3_{v_i}$ and $\+I^4_{v_i}$ are empty. Otherwise, take the minimum point $x$ with respect to $\le_\tau$ in $\+I^3_{v_i}\cup\+I^4_{v_i}$. We set $\appReachWJ{b_{l+1}}{k}{i}=xv_{i+1}\cap \+B(w_{b_{l+1}}, \delta)$. We finish implementing {\sc Reach}. 
	
	Putting Lemma~\ref{lem:cover1},~\ref{lem: marked-edge},~\ref{lem:cover},~\ref{lem: I_1},~\ref{lem: I_2},~\ref{lem: I_3} and~\ref{lem: I_4} together, we have the following lemma. We treat both $\epsilon$ and $d$ as fixed constants.
	
	\begin{lemma}\label{lem:reach}
		There is an algorithm that preprocesses every $\tau_k$ in $O(\mu_1^5)$ time to implement the procedure {\sc Reach} in $O(\mu_1(\mu_3+\mu_2/\mu_3+\mu_2^5\log n/(\omega\mu_3))+\omega\mu_2^4)$ time with success probability at least $1-n^{-10}$.
	\end{lemma}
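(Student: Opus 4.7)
The plan is to assemble the earlier lemmas into a single proof. First I would account for the preprocessing cost. Section~\ref{sec:preprocessing} already bounds the cost of producing $\zeta_k$, $\pre$, $\suf$ and the matchings $\Mpre,\Msuf$, together with the suffix/prefix simplifications $\bar{\zeta}_i,\tilde{\zeta}_i$ and matchings $\bar{\+M}_i,\tilde{\+M}_i$ for every vertex $v_i$ of $\tau_k$, by $O(\mu_1^3)$ after suppressing the $\epsilon^{-\alpha}\log(1/\varepsilon)$ factors. On top of that, Lemma~\ref{lem:cover} demands $O(\mu_1^5/\epsilon)$ preprocessing for the data structure answering {\sc Cover} queries. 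Treating $\epsilon$ and $d$ as fixed constants, the total is absorbed into $O(\mu_1^5)$, as claimed.

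Next I would execute {\sc Reach} as described at the end of Section 4.2: build $\+I^1,\+I^2,\+I^3,\+I^4$ via Lemmas~\ref{lem: I_1}--\ref{lem: I_4} and then merge them into $\appReachVIArray{a_{k+1}}{l}$ and $\appReachWJArray{b_{l+1}}{k}$. The construction times sum to the stated bound: the $O(\mu_2^2)$, $O(\mu_1\log\mu_2)$, and $O(\mu_1)$ contributions from $\+I^1,\+I^2,\+I^3$ are swallowed by the $\mu_1\mu_3+\mu_1\mu_2/\mu_3$ terms inside $\+I^4$'s complexity under the intended parameter regime $\mu_2=\mu_1^{1-c}$, $\mu_3=o(\mu_2)$. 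The final merge (taking the $\le_\sigma$-minimum point in $\+I^1_{w_j}\cup\+I^2_{w_j}$, intersecting with $\+B(v_{a_{k+1}},\delta)$, and symmetrically for $w_{b_{l+1}}$) costs only $O(\mu_1+\mu_2)$, which is absorbed. The only randomized step is in Lemma~\ref{lem: I_4}, so the overall success probability is at least $1-n^{-10}$.

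The substantive work is verifying that the merged arrays are genuinely $(7+\varepsilon)$-approximate reachable. Consider a nonempty entry $\appReachVI{a_{k+1}}{l}{j}=pw_{j+1}\cap\+B(v_{a_{k+1}},\delta)$ with $p$ the minimum in $\+I^1_{w_j}\cup\+I^2_{w_j}$. The containment in $\+B(v_{a_{k+1}},\delta)$ is immediate by construction. For any $q$ in this segment, Lemmas~\ref{lem: I_1} and~\ref{lem: I_2} guarantee that $(v_{a_{k+1}},p)$ is $((7+\varepsilon)\delta)$-reachable, which forces $d(v_{a_{k+1}},p)\le(7+\varepsilon)\delta$; since $d(v_{a_{k+1}},q)\le\delta$ and $\+B(v_{a_{k+1}},(7+\varepsilon)\delta)$ is convex, the segment $pq$ lies inside the ball, so we can extend the matching that realizes $(v_{a_{k+1}},p)$ by matching $pq$ to $v_{a_{k+1}}$. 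Containment of $v_{a_{k+1}}$'s true reachability interval on $w_jw_{j+1}$ is where Lemma~\ref{lem:cover1} enters: every point of that interval belongs to type T1 or T2, hence is covered by $\+I^1\cup\+I^2$, so it lies weakly after $p$ along $\sigma$ and already inside $\+B(v_{a_{k+1}},\delta)$, so it is contained in $pw_{j+1}\cap\+B(v_{a_{k+1}},\delta)$. The symmetric argument using $\+I^3,\+I^4$ and types T3, T4 establishes $(7+\varepsilon)$-approximate reachability of $\appReachWJArray{b_{l+1}}{k}$ for $w_{b_{l+1}}$.

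The main obstacle I expect is being careful in this merge verification: the three $\alpha$-approximate reachability conditions are similar enough that a sloppy argument can elide the ``contains the true reachability interval'' clause, which is the clause that genuinely depends on Lemma~\ref{lem:cover1}. Once this verification is pinned down, the rest is a straightforward addition of running times and propagation of the single failure probability from Lemma~\ref{lem: I_4}.
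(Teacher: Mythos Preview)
Your proposal is correct and follows essentially the same approach as the paper, which simply states that the lemma follows by ``putting Lemmas~\ref{lem:cover1},~\ref{lem: marked-edge},~\ref{lem:cover},~\ref{lem: I_1},~\ref{lem: I_2},~\ref{lem: I_3} and~\ref{lem: I_4} together'' while treating $\epsilon$ and $d$ as constants. In fact you supply more detail than the paper does: the paper leaves the merge verification (that $\appReachVI{a_{k+1}}{l}{j}=pw_{j+1}\cap\+B(v_{a_{k+1}},\delta)$ really is a $(7+\varepsilon)$-approximate reachability interval) implicit, whereas you spell out precisely how Lemma~\ref{lem:cover1} yields the containment clause and how the convexity argument yields the $((7+\varepsilon)\delta)$-reachability of every point in the segment.
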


	We then carefully choose values for parameters $\mu_1, \mu_2, \mu_3$, and $\omega$ to get the running time of {\sc Reach} to be $o(\mu_1\mu_2)$. We set $\mu_1=m^{0.24}$, $\mu_2=m^{0.02}$, $\mu_3=m^{0.01}$, and $\omega=m^{0.12}$ so that $O(\mu_1(\mu_3+\mu_2/\mu_3+\mu_2^5\log n/(\omega\mu_3))+\omega\mu_2^4)=O(m^{0.25})$ and $\mu_1\mu_2=m^{0.26}$.
	%With a subquadratic implementation of the procedure {\sc Reach}. 
	%In the end, we use {\sc Reach} to carry out the dynamic programming 
	As shown in the overview, we call {\sc Reach} for $mn/(\mu_1\mu_2)$ times to get a decision procedure. It succeeds if all these invocations of {\sc Reach} succeed. %We choose suitable values for $\mu_1> \mu_2>\mu_3$ and $d$ to get our main theorem. 
	
	%Recall that we need to guarantee $\mu_1\ge \mu_2^5$ and $\mu_1\ge d\mu_2^4$. We set $\mu_1=n^{0.24}$, $\mu_2=n^{0.04}$, $\mu_3=d=n^{0.03}$.
	
	\begin{theorem}\label{thm:Frechet}
		Given two polygonal curves $\tau$ and $\sigma$ in $\mathbb{R}^d$ for some fixed $d$, there is a randomized $(7+\varepsilon)$-approximate decision procedure for determining $d_F(\tau, \sigma)$ in $O(nm^{0.99})$ time with success probability as least $1-n^{-7}$. 
	\end{theorem}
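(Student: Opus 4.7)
The plan is to assemble the decision procedure from the building blocks described in Section~\ref{sec: overview}: initialize the first row and column of approximate reachability arrays from the exact reachability intervals of $v_1$ and $w_1$, invoke {\sc Reach} on the grid of $(\tau_k,\sigma_l)$ blocks in row-major order, and read off the answer from a single entry of the final array. Concretely, we first compute all reachability intervals of $v_1$ and $w_1$ in $O(n+m)$ time, cut them along the block boundaries to form $\appReachVIArray{1}{l}$ for all $l$ and $\appReachWJArray{1}{k}$ for all $k$ (each is trivially $(7+\varepsilon)$-approximate reachable since they are the exact reachability intervals). We then invoke {\sc Reach}$(\tau_k,\sigma_l,\appReachVIArray{a_k}{l},\appReachWJArray{b_l}{k})$ for $k=1,\ldots,(n-1)/\mu_1$ and $l=1,\ldots,(m-1)/\mu_2$, scanning in an order that guarantees both inputs are already available (e.g., processing blocks in row-major order for fixed $l$ before advancing $l$). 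After all $nm/(\mu_1\mu_2)$ calls, we obtain $\appReachWJArray{m}{(n-1)/\mu_1}$ which is $(7+\varepsilon)$-approximate reachable for $w_m$; the procedure returns yes iff $v_n$ is covered by it.

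For correctness, an easy induction on the call order combined with Lemma~\ref{lem:cover1} shows that every point in the true reachability interval of $v_{a_{k+1}}$ on $\sigma_l$ and of $w_{b_{l+1}}$ on $\tau_k$ is covered by the output arrays, and conversely every covered point forms a $((7+\varepsilon)\delta)$-reachable pair with the corresponding vertex. Thus if $v_n$ is covered, then $(v_n,w_m)$ is $((7+\varepsilon)\delta)$-reachable, so $d_F(\tau,\sigma)\le(7+\varepsilon)\delta$; if $d_F(\tau,\sigma)\le\delta$, the inductive coverage guarantees that $v_n$ lies in the array and the procedure answers yes (conditioned on all random choices inside {\sc Reach} succeeding).

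For the running time, the preprocessing of every $\tau_k$ (Section~\ref{sec:preprocessing} and the data structure of Lemma~\ref{lem:cover}) costs $O(\mu_1^5)$, summing to $O(n\mu_1^4)$ over all $k$. By Lemma~\ref{lem:reach}, each {\sc Reach} call runs in $O(\mu_1(\mu_3+\mu_2/\mu_3+\mu_2^5\log n/(\omega\mu_3))+\omega\mu_2^4)$ time, so the total time is
\[
O\!\left(n\mu_1^4+\tfrac{nm}{\mu_1\mu_2}\cdot\bigl(\mu_1(\mu_3+\mu_2/\mu_3+\mu_2^5\log n/(\omega\mu_3))+\omega\mu_2^4\bigr)\right).
\]
Plugging in $\mu_1=m^{0.24}$, $\mu_2=m^{0.02}$, $\mu_3=m^{0.01}$, and $\omega=m^{0.12}$ (and verifying that $\mu_2>\mu_3$ and that $\omega<\mu_1$), each term inside the parenthesis is $O(m^{0.25})$ up to logarithmic factors, while $\mu_1\mu_2=m^{0.26}$, giving a total of $O(nm^{0.99}\log n)$; the preprocessing contributes $O(n m^{0.96})$ and is absorbed. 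The main obstacle is merely the bookkeeping of these exponents: one must check that every individual term of {\sc Reach}'s time bound is strictly smaller than $\mu_1\mu_2$ by a factor of $m^{\Omega(1)}$, which constrains the feasible window for each parameter and is the reason $\mu_1$ is taken noticeably larger than $\mu_2$.

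For the success probability, Lemma~\ref{lem:reach} gives a failure probability of at most $n^{-10}$ per call; the number of calls is $nm/(\mu_1\mu_2)\le n^2$, so a union bound yields an overall failure probability of at most $n^{-8}\le n^{-7}$. The hidden constant in the Chernoff bound (Lemma~\ref{lem:Chernoff}) can be tuned in the sampling step of {\sc Reach}'s type-4 construction to drive the per-call failure probability below $n^{-10}$ without affecting the asymptotics. This completes the construction and analysis.
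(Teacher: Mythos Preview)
Your proposal is correct and mirrors the paper's argument essentially step for step: the same initialization from the exact reachability of $v_1$ and $w_1$, the same block-by-block sweep invoking {\sc Reach}, the same parameter choices $\mu_1=m^{0.24}$, $\mu_2=m^{0.02}$, $\mu_3=m^{0.01}$, $\omega=m^{0.12}$, and the same union bound over at most $n^2$ calls against the per-call failure probability $n^{-10}$. The only cosmetic discrepancy is that you state the final bound as $O(nm^{0.99}\log n)$, whereas the paper absorbs the $\log n$ (coming from the sampling term $\mu_1\mu_2^5\log n/(\omega\mu_3)=m^{0.21}\log n$) into the dominant $m^{0.25}$ term, yielding the cleaner $O(nm^{0.99})$ stated in the theorem.
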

	
	We use the approach developed in~\cite{colombe2021approximating} to get an approximation algorithm. For any $\varepsilon\in (0,1)$, it gets an $((1+\varepsilon)\alpha)$-approximate algorithm for computing $d_F(\tau, \sigma)$ by carrying out $O(\log(n/\varepsilon))$ instances of any $\alpha$-approximate decision procedure. In our case, the approximate algorithm succeeds if all these instances succeed.
	
	\begin{theorem}\label{thm:approx_Frechet}
		Given two polygonal curves $\tau$ and $\sigma$ in $\mathbb{R}^d$ for some fixed $d$, there is a randomized $(7+\varepsilon)$-approximate algorithm for computing $d_F(\tau, \sigma)$ in $O(nm^{0.99}\log{n/\varepsilon})$ time with success probability at least $1-n^{-6}$.
	\end{theorem}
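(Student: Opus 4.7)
The plan is to invoke the Colombe--Fox reduction~\cite{colombe2021approximating} as a black box on top of the decision procedure of Theorem~\ref{thm:Frechet}. That reduction turns any $\alpha$-approximate decision procedure for $d_F$ into an $((1+\varepsilon)\alpha)$-approximation algorithm using $O(\log(n/\varepsilon))$ decision calls plus low-order auxiliary work. So the proof reduces to composing this wrapper with Theorem~\ref{thm:Frechet}, rescaling the slack parameter, and handling the failure probability by a union bound.

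First I would invoke Theorem~\ref{thm:Frechet} with a slightly shrunken slack $\varepsilon'$, treating it as a $(7+\varepsilon')$-approximate decision procedure, and feed it into the Colombe--Fox reduction with its own slack $\varepsilon''$. The resulting approximation ratio is $(1+\varepsilon'')(7+\varepsilon')$. Choosing $\varepsilon' = \varepsilon'' = \varepsilon/16$ (any sufficiently small constant fraction of $\varepsilon$ works) makes this at most $7+\varepsilon$ for every $\varepsilon\in(0,1)$, so the approximation factor is correct by construction.

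Second, I would bound the running time. Each of the $O(\log(n/\varepsilon))$ decision calls costs $O(nm^{0.99})$ by Theorem~\ref{thm:Frechet}, and the reduction's preprocessing and per-iteration overhead contribute only near-linear terms in $n$. Summing over the $O(\log(n/\varepsilon))$ iterations gives total time $O(nm^{0.99}\log(n/\varepsilon))$, matching the claimed bound.

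Finally, I would control the success probability via a union bound over the decision calls. Each call fails with probability at most $n^{-7}$ by Theorem~\ref{thm:Frechet}, and since the calls use independent random bits, the probability that some call fails is at most $O(n^{-7}\log(n/\varepsilon))\le n^{-6}$ for $n$ sufficiently large. The only minor obstacle is checking that the Colombe--Fox reduction is genuinely black-box with respect to the decision procedure's randomness (so independence holds for the union bound) and that its polylogarithmic overhead is absorbed into the $\log(n/\varepsilon)$ factor already in the bound; both follow directly from the statement of the reduction in~\cite{colombe2021approximating}.
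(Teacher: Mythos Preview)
Your proposal is correct and matches the paper's approach essentially verbatim: the paper simply invokes the reduction of~\cite{colombe2021approximating} to convert the $(7+\varepsilon)$-approximate decision procedure of Theorem~\ref{thm:Frechet} into a $((1+\varepsilon)\cdot(7+\varepsilon))$-approximation via $O(\log(n/\varepsilon))$ decision calls, and notes that the algorithm succeeds when all calls do. One tiny remark: the union bound over the $O(\log(n/\varepsilon))$ calls does not require independence of the random bits, so you can drop that caveat.
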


	\section{Subquadratic decision algorithm for discrete Fr\'echet distance}\label{sec:discrete}

	%Let $\tau$ and $\sigma$ be two polygonal curves of vertices in $\mathbb{R}^d$ with $n$ and $m$ vertices, respectively. 
	We extend our technical framework to the discrete Fr\'echet distance.
	Recall that the  \emph{discrete} Fr\'echet distance of $\tau, \sigma$ is $\tilde{d}_F(\tau, \sigma) = \inf_{\+M}d_{\+M}(\tau, \sigma)$, where $\+M$ is a matching that matches at least one vertex of $\sigma$ to a vertex of $\tau$, and vice versa. We first present the discrete counterpart of {\sc WaveFront}, and the matching construction and curve simplification of the discrete Fr\'echet distance.
	
	%Given a boolean array $\+S$ of $n$ items \emph{induced} by $\tau$, let $\+{S}_{v_i}$ be the corresponding element for $v_i$. A vertex $v_x$ is \emph{covered} by $\+S$ if $\+S_{v_x} = 1$.

	We only need to consider vertices of $\tau$ and $\sigma$ for determining $\tilde{d}_F(\tau, \sigma)$. Take $i, i'\in [n]$ and $j, j'\in [m]$ with $i\le i'$ and $j\le j'$. For any value $r\ge 0$, we call $(v_{i'}, w_{j'})$ is \emph{$r$-reachable} from $(v_i, w_j)$ if $\tilde{d}_F(\tau[v_i, v_{i'}], \sigma[w_j, w_{j'}])\le r$. When a pair $(v_i, w_j)$ is $r$-reachable from $(v_1, w_1)$, we call it a \emph{$r$-reachable pair}.

	%A vertex $v_i$ of $\tau$ and a vertex $w_j$ of $\sigma$ can form a \emph{$\delta$-reachable pair} $(v_i, w_j)$ if and only if $\tilde{d}_F(\tau[v_1, v_i], \sigma[w_1, w_j]) \leq \delta$. For a vertex $v_i$ of $\tau$, a vertex $w_j$ of $\sigma$ belongs to the \emph{reachability set} of $v_i$ with respect to $\delta$ if and only if $(v_i, w_j)$ is a reachable pair. We use an array $\+{DR}^{v_i}$ induced by $\sigma$ to indicate whether $w_j$ belongs to the reachability set of $v_i$. Specially, $\+{DR}^{v_i}_{w_j} = 1$ if $(v_i, w_j)$ is a reachable pair; otherwise, $\+{DR}^{v_i}_{w_j} = 0$. We can define the reachability set of a vertex $w_j$ of $\sigma$ analogously. We use $\+{DR}^{w_j}$ to represent it. 

	We name the discrete counterpart of {\sc WaveFront} as {\sc DisWave}. Fix $r=\delta$. Given $\tau$, $\sigma$, a set of some $\tau$'s vertices $\+S$, and a set of some $\sigma$'s vertices $\+S'$, {\sc DisWave}$(\tau, \sigma, \delta, \+S, \+S')$ aims to return $\bigl(\+{DW}^{v_i}\bigr)_{i\in[n]}$ and $\bigl(\+{DW}^{w_j}\bigr)_{j\in [m]}$. For every $i\in [n]$, $\+{DW}^{v_i}$ is a set of vertices of $\sigma$ such that $w_j$ belongs to $\+{DW}^{v_i}$ if and only if $(v_i, w_j)$ is $\delta$-reachable from $(v_1, w_{j'})$ for some $w_{j'}\in \+S'$ or $(v_i, w_j)$ if $\delta$-reachable from $(v_{i'}, w_1)$ for some $v_{i'}\in \+S$. For every $j\in [m]$, $\+{DW}^{w_j}$ is a set of vertices of $\tau$ such that $v_i$ belongs to $\+{DW}^{w_j}$ if and only if $(v_i, w_j)$ is $\delta$-reachable from $(v_1, w_{j'})$ for some $w_{j'}\in \+S'$ or $(v_i, w_j)$ if $\delta$-reachable from $(v_{i'}, w_1)$ for some $v_{i'}\in \+S$. 

	%Given $\tau$, $\sigma$ and a value $\delta>0$, we can use a discrete version of \pmb{\sc Propagate} to compute the reachability sets for all vertices of $\tau$ and $\sigma$. We describe it as \pmb{\sc DisPropagate$(\tau, \sigma, \delta)$}. 

	We present {\sc DisWave} that runs in $O(mn)$ time. It is also based on dynamic programming. We assume that given arbitrary sets $\+S$ and $\+S'$ of $\tau$'s vertices and $\sigma$'s vertices, respectively, we can determine whether $v_i\in \+S$ and $w_j\in \+S$ in $O(1)$ time. Because we can always represent $\+S$ by a binary array $\+A$ of size $n$ such that $\+A[i]=1$ if and only if $v_i\in \+S$, so does $\+S'$.

	\vspace{4pt}

	\noindent \pmb{\sc DisWave$(\tau, \sigma, \delta, \+S, \+S')$.} We first initialize $\+{DW}^{v_1}$ as $\emptyset$. For $w_1$, insert $w_1$ into $\+{DW}^{v_1}$ if $w_1\in\+{S}'$ and $d(v_1, w_1) \leq \delta$. For $j \in [2, m]$, insert $w_j$ into $\+{DW}^{v_1}$ if either $w_j\in \+{S}'$ and $d(v_1, w_j) \leq \delta$ or $w_{j-1}\in \+{DW}^{v_1}$ and $d(v_1, w_j) \leq \delta$. %We construct $\+{DW}^{w_1}$ in a similar way.
	
	%Then we set $\+{DW}^{w_j}_{v_1} = 1$ if and only if $\+{DW}^{v_1}_{w_j} = 1$. We initialize $\+{DW}^{w_1}$ in a similar way.

	%We first initialize the reachability set for $v_1$ and $w_1$. If $d(v_1, w_1) \leq \delta$, set $\+{DR}^{v_1}_{w_1} = 1$; otherwise, set $\+{DR}^{v_1}_{w_1} = 0$. For all $j \in [2, m]$, if $d(v_1, w_j) \leq \delta$ and $\+{DR}^{v_1}_{w_{j-1}} = 1$, set $\+{DR}^{v_1}_{w_j} = 1$; otherwise, set $\+{DR}^{v_1}_{w_j} = 0$. 
	%We compute $\+{DR}^{w_1}_{v_i}$ for all $i \in [n]$ in a similar way.

	We then use the following recurrence to construct $\+{DW}^{v_i}$ for $i \in [2, n]$. We insert $w_j$ into $\+{DW}^{v_i}$ if and only if $d(v_i, w_j) \leq \delta$ and $w_j\in \+{DW}^{v_{i-1}}$ or $w_{j-1}\in\+{DW}^{v_i}$ or $w_{j-1}\in \+{DW}^{v_{i-1}}$. 
	
	Finally, for any $i\in [n]$ and $j\in [m]$, we insert $v_i$ into $\+{DW}^{w_j}$ if and only if $w_j\in \+{DW}^{v_i}$.

	%By executing the recurrence for all $i \in [n]$ and $j \in [m]$, we get reachability sets for all $v_i$ and $w_j$. This completes {\sc{DisPropagate}}.

	\vspace{4pt}

	%In our decision algorithm for the discrete Fr\'echet distance, we will use the discrete version of {\sc WaveFront}. We call it \pmb{{\sc DisWaveFront}}. We will invoke it in a form of {{\sc DisWaveFront}}$(\tau, \sigma, $ $\delta, \+S, \+{S}')$. The three parameters $\tau, \sigma$ and $\delta$ are the same as those in the input of {\sc DisPropagate}. The parameters $\+S$ and $\+{S}'$ are arrays induced by $\tau$ and $\sigma$, respectively.
	\cancel{
	Let $\+{DW}^{v_i}$ be an array induced by $\sigma$ for all $i \in [n]$, and $\+{DW}^{w_j}$ be an array induced by $\tau$ for all $j \in [m]$.
	Within {\sc DisWaveFront}, we first initialize $\+{DW}^{v_1}$. For $j = 1$, set $\+{DW}^{v_1}_{w_1} = 1$ if $\+{S}'_{w_1} = 1$ and $d(v_1, w_1) \leq \delta$; otherwise, set $\+{DW}^{v_1}_{w_1} = 0$. For $j \in [2, m]$, set $\+{DW}^{v_1}_{w_j} = 1$ if either $\+{S}'_{w_j} = 1$ and $d(v_1, w_j) \leq \delta$ or $\+{DW}^{v_1}_{w_{j-1}} = 1$ and $d(v_1, w_j) \leq \delta$. Then we set $\+{DW}^{w_j}_{v_1} = 1$ if and only if $\+{DW}^{v_1}_{w_j} = 1$. We initialize $\+{DW}^{w_1}$ in a similar way.

	We then use the same recurrence to update $\+{DW}^{v_i}$ for $i \in [2, n]$ and $\+{DW}^{w_j}$ for $j \in [2, m]$ as {\sc DisPropagate}. 
	Note that {\sc DisPropagate}$(\tau, \sigma, \delta)$ is equivalent to {\sc DisWaveFront}$(\tau, \sigma, $ $\delta, \+{DR}^{w_1}, \+{DR}^{v_1})$. The procedure {\sc DisWaveFront} also runs in $O(nm)$ and returns arrays $(\+{DW}^{v_i})_{i \in [n]}$ and $(\+{DW}^{w_j})_{j \in [m]}$ as the output. The output possesses some favorable properties as shown in the following lemma.

	\begin{lemma}
	\label{lem: DisWaveFront proverty}
		Given two polygonal curves $\tau, \sigma$ in $\mathbb{R}^d$, a value $\delta > 0$ and an array $\+{S}$ induced by $\tau$ and an array $\+{S}'$ induced by $\sigma$, calling {\sc DisWaveFront}($\tau$, $\sigma$, $\delta$, $\+{S}$, $\+{S}'$) returns $(\+{DW}^{v_i})_{i \in [n]}$ and $(\+{DW}^{w_j})_{j \in [m]}$:
		\begin{itemize}
    		\item $\+{DW}^{v_i}$ is an array induced by $\sigma$ for all $i \in [n]$, $\+{DW}^{v_i}_{w_j} = 1$ {\bf if and only if} there exists a vertex $w_p$ covered by $\+{S}'$ such that $w_p \leq_{\sigma} w_j$ and $\tilde{d}_F(\tau[v_1, v_i], \sigma[w_p, w_j]) \leq \delta$ or there exists a vertex $v_x$ covered by $\+{S}$ such that $v_x \leq v_i$ and $\tilde{d}_F(\tau[v_x, v_i], \sigma[w_1, w_j]) \leq \delta$.
    		\item $\+{DW}^{w_j}$ is an array induced by $\tau$ for all $j \in [m]$, $\+{DW}^{w_j}_{v_i} = 1$ {\bf if and only if} there exists a vertex $v_x$ covered by $\+{S}$ such that $v_x \leq_{\tau} v_i$ and $\tilde{d}_F(\tau[v_x, v_i], \sigma[w_1, w_j]) \leq \delta$ or there exists a vertex $w_p$ covered by $\+{S}'$ such that $w_p \leq_{\sigma} w_j$ and $\tilde{d}_F(\tau[v_1, v_i], \sigma[w_p, w_j]) \leq \delta$.
		\end{itemize}
	\end{lemma}

	\begin{proof}
		We prove the properties for $\+{DW}^{v_i}$ and $\+{DW}^{w_j}$ by induction on $i$ and $j$.  Consider $\+{DW}^{v_1}$ and $\+{DW}^{w_1}$ as the base case. We first prove that $\+{DW}^{v_1}$ satisfies the property. According to the initialization, we have $\+{DW}^{v_1}_{w_1}$ holds the property. For $j \in [2, m]$, suppose $\+{DW}^{v_1}_{w_j} = 1$, we have either $\+{DW}^{v_i}_{w_{j-1}} = 1$ and $d(v_1, w_j) \leq \delta$ or $w_j \in \+{S}'$ and $d(v_1, w_j) \leq \delta$. If $w_j \in \+{S}'$ and $d(v_1, w_j)\leq \delta$, we have $\+{DW}^{v_1}_{w_j}$ satisfies the necessity; otherwise, we match $w_j$ to $v_1$. Then, we check whether $w_{j-1} \in \+{S}'$, if so, we have $\tilde{d}_F(\tau[v_1, v_1], \sigma[w_{j-1}, w_j]) \leq \delta$ and we finish the proof of necessity; otherwise, we match $w_{j-1}$ to $v_1$ and repeat the verification for $w_{j-2}$. This process continues until there is a vertex $w_p$ such that either $p = 1$ or $w_p \in \+{S}'$, then we get a matching $\+{M}$ such that $\tilde{d}_F(\tau[v_1, v_1], \sigma[w_p, w_j]) \leq  d_{\+{M}}(\tau[v_1, v_1], \sigma[w_p, w_j]) \leq \delta$. We finish proving that $\+{DW}^{v_1}$ satisfies the necessity. For the sufficiency, suppose there exists a vertex $w_p$ covered by $\+{S}'$ such that $w_p \leq_{\sigma} w_j$ and $\tilde{d}_F(\tau[v_1, v_1], \sigma[w_p, w_j]) \leq \delta$. Obviously, we have $\+{DW}^{v_1}_{w_j} = 1$ according to the initialization. We can prove that $\+{DW}^{w_1}$ satisfies the property by similar analysis.

		Take $\+{DW}^{v_i}_{w_j}$ for $i \ge 2$ and $j \in [m]$. Assume $\+{DW}^{v_{i-1}}_{w_j}, \+{DW}^{v_i}_{w_{j-1}}$, $\+{DW}^{v_{i-1}}_{w_{j-1}}$, $\+{DW}_{v_{i-1}}^{w_j}$, $\+{DW}_{v_i}^{w_{j-1}}$ and $\+{DW}_{v_{i-1}}^{w_{j-1}}$ satisfy the necessity. Suppose $\+{DW}^{v_i}_{w_j} = 1$, we have at least one of $\+{DW}^{v_{i-1}}_{w_j}, \+{DW}^{v_i}_{w_{j-1}}$, $\+{DW}^{v_{i-1}}_{w_{j-1}}$, $\+{DW}_{v_{i-1}}^{w_j}$, $\+{DW}_{v_i}^{w_{j-1}}$ and $\+{DW}_{v_{i-1}}^{w_{j-1}}$ equal to 1 and $d(v_i, w_j) \leq \delta$. We assume $\+{DW}^{v_i}_{w_{j-1}} = 1$, since $\+{DW}^{v_i}_{w_{j-1}}$ satisfies the necessity, there exists a vertex $w_p$ covered by $\+{S}'$ such that $w_p \leq_{\sigma} w_{j-1}$ and $\tilde{d}_F(\tau[v_1, v_i], \sigma[w_p, w_{j-1}]) \leq \delta$. Given that $d(v_i, w_j) \leq \delta$, we have $\tilde{d}_F(\tau[v_1, v_i], \sigma[w_p, w_{j}]) \leq \delta$. We can prove that $\+{DW}^{w_j}_{v_i}$ for $j \in [2, m]$ satisfies the necessity in a same way.

		Next, we prove the sufficiency for $\+{DW}^{v_i}_{w_j}$. Suppose $\+{DW}^{v_{i-1}}_{w_j}, \+{DW}^{v_i}_{w_{j-1}}$, $\+{DW}^{v_{i-1}}_{w_{j-1}}$, $\+{DW}_{v_{i-1}}^{w_j}$, $\+{DW}_{v_i}^{w_{j-1}}$ and $\+{DW}_{v_{i-1}}^{w_{j-1}}$ satisfy the sufficiency and  there exists a vertex $w_p$ covered by $\+{S}'$ such that $\tilde{d}_F(\tau[v_1, v_i], \sigma[w_p, w_j]) \leq \delta$, there is a matching $\+{M}$ between $\tau[v_1, v_i]$ and $\sigma[w_p, w_j]$ such that $d_{\+{M}}(\tau[v_1, v_i], \sigma[w_p, w_j]) \leq \delta$. At least one pair among $(v_i, w_{j-1})$, $(v_{i-1}, w_j)$ and $(v_{i-1}, w_{j-1})$ is matched by $\+{M}$. Suppose $\+{M}$ matches $v_{i-1}$ to $w_j$, we have $\tilde{d}_F(\tau[v_1, v_{i-1}], \sigma[w_p, w_j]) \leq \delta$. Since $\+{DW}^{v_{i-1}}_{w_j}$ satisfies the sufficiency, we have $\+{DW}^{v_{i-1}}_{w_j} = 1$. Since $d(v_i, w_j) \leq \delta$, we have $\+{DW}^{v_i}_{w_j} = 1$. We can prove that $\+{DW}^{w_j}_{v_i}$ for $j \in [2, m]$ satisfies the sufficiency in a same way.

% If $i = 1$, we finish the proof. When $i \ge 2$, for $k \in [2, i-1]$, suppose $\+{DW}^{v_k}_{\+{M}(v_k)} = 1$. We want to prove $\+{DW}^{v_{k+1}}_{\+{M}(v_{k+1})} = 1$. If $\+M(v_{k+1}) \leq \+{M}(v_k)+1$ we have $\+{DW}^{v_{k+1}}_{\+{M}(v_{k+1})} = 1$, since $d(v_{k+1}, \+M(v_{k+1})) \leq \delta$ and $\+{DW}^{v_k}_{\+{M}(v_{k})} = 1$. If $\+M(v_{k+1}) > \+{M}(v_k)+1$, for all $t \in [\+{M}(v_{k}), \+{M}(v_{k+1})]$, we have $\+{M}(w_t) = v_k$. 
	\end{proof}

	}

	\noindent\textbf{Matching construction.} Suppose that $\tilde{d}_F(\tau, \sigma)\le \delta$. We use the following lemma to compute a matching $\+M$ between $\tau$ and $\sigma$ satisfying $d_{\+M}(\tau, \sigma)\le \delta$ explicitly. Note that $\+{M}$ matches each vertex of $\tau$ to a vertex of $\sigma$ and vice verse. We can construct such a matching by backtracking the output of {\sc DisWave}$(\tau, \sigma, \delta, \+S, \+S')$, where $\+S=\{v_1\}$ and $\+S'=\{w_1\}$. For a vertex $p$ of $\tau$ or $\sigma$, let $\+M(p)$ be a vertex matched to $p$ by $\+M$.

	\begin{lemma}~\label{lem: discrete matching}
		Given two polygonal curves $\tau, \sigma$ in $\mathbb{R}^d$, suppose $\tilde{d}_F(\tau, \sigma) \leq \delta$. There is an $O(nm)$-time algorithm for computing a matching $\+M$ between $\tau, \sigma$ such that $d_{\+M}(\tau, \sigma) \leq \delta$. $\+M$ can be stored in $O(n+m)$ space such that for any vertex $p$ of $ \tau $ or $ \sigma$, we can retrieve a vertex $\+{M}(p)$ in $O(1)$ time.
	\end{lemma}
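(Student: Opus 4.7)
The plan is to mirror the approach in Lemma~\ref{lem:matching}, using \textsc{DisWave} as the engine and then backtracking through its output. We first invoke \textsc{DisWave}$(\tau, \sigma, \delta, \{v_1\}, \{w_1\})$ in $O(nm)$ time, producing $(\+{DW}^{v_i})_{i\in [n]}$ and $(\+{DW}^{w_j})_{j\in [m]}$. Since $\tilde{d}_F(\tau,\sigma)\le \delta$, we have $w_m\in \+{DW}^{v_n}$ and $v_n\in \+{DW}^{w_m}$, so we initialize $\+M(v_n)=w_m$ and $\+M(w_m)=v_n$.

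The backtracking traverses pairs $(v_i, w_j)$ starting from $(v_n, w_m)$, maintaining the invariant that $w_j \in \+{DW}^{v_i}$ (equivalently $(v_i,w_j)$ is $\delta$-reachable from $(v_1,w_1)$) and that both $\+M(v_i)$ and $\+M(w_j)$ have been set to values realizing distance at most $\delta$. By the recurrence used by \textsc{DisWave}, whenever $(i,j)\neq(1,1)$ and $w_j\in \+{DW}^{v_i}$, at least one of $w_{j-1}\in \+{DW}^{v_{i-1}}$, $w_j\in \+{DW}^{v_{i-1}}$, $w_{j-1}\in \+{DW}^{v_i}$ holds. We pick any such predecessor pair and move to it, recording: going to $(v_{i-1}, w_j)$ sets $\+M(v_{i-1})=w_j$; going to $(v_i, w_{j-1})$ sets $\+M(w_{j-1})=v_i$; going to $(v_{i-1}, w_{j-1})$ sets both. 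If we reach the left boundary $i=1$ before exhausting $\sigma$, then $w_j\in \+{DW}^{v_1}$ for the current $j$, which forces $d(v_1, w_{j'})\le \delta$ for all $j'\le j$ by the initialization of $\+{DW}^{v_1}$; we then set $\+M(w_{j'})=v_1$ for all such $j'$. The symmetric case for $j=1$ is handled analogously. Each cell is visited at most once, so the backtracking runs in $O(n+m)$ additional time, and the resulting matching $\+M$ satisfies $d_{\+M}(\tau,\sigma)\le \delta$ because every recorded pair lies within distance $\delta$ by construction.

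For storage and retrieval, we keep two arrays of sizes $n$ and $m$ storing $\+M(v_i)$ and $\+M(w_j)$ respectively. Since the lemma requires retrieval only for vertices $p$ of $\tau$ or $\sigma$, a single array lookup yields $\+M(p)$ in $O(1)$ time, and the total space is $O(n+m)$.

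The main obstacle is only bookkeeping: confirming that the three cases of the backtracking recurrence (horizontal, vertical, diagonal step) always admit a valid move and that the invariant on already-set entries remains consistent when several vertices of one curve are matched to a single vertex of the other along the boundary. These follow directly from the defining recurrence of \textsc{DisWave} and the definition of $\delta$-reachability between vertex pairs, so no new ideas beyond those already used for the continuous case in Lemma~\ref{lem:matching} are required.
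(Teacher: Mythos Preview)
Your proposal is correct and follows essentially the same approach as the paper: both invoke \textsc{DisWave}$(\tau,\sigma,\delta,\{v_1\},\{w_1\})$, backtrack from $(v_n,w_m)$ through the dynamic-programming predecessors to $(v_1,w_1)$, record the matched pairs, and store the result in two linear-size arrays for $O(1)$ vertex lookup. Your boundary handling (when one index reaches $1$ before the other) is spelled out more explicitly than in the paper, but the underlying argument is the same.
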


	\begin{proof}	
		We can backtrack the output $\bigl(\+{DW}^{v_i}\bigr)_{i\in[n]}$ of {\sc DisWave}$(\tau, \sigma, \delta, \+S, \+S')$ to construct $\+M$, where $\+S=\{v_1\}$ and $\+S'=\{w_1\}$. For every vertex $p$ of $\tau$ or $\sigma$, we compute $\+{M}(p)$ and store it. Let $\+{L}$ be an empty sequence.
		Since $\tilde{d}_F(\tau, \sigma) \leq \delta$, we have $w_m\in \+{DW}^{v_n}$, we add $(v_n, w_m)$ to $\+{L}$. According to the procedure of {\sc DisWave}, it must be true that $w_m\in \+{DW}^{v_{n-1}}$ or $w_{m-1}\in \+{DR}^{v_n}$ or $w_{m-1}\in \+{DW}^{v_{n-1}}$. Suppose $w_m\in \+{DW}^{v_{n-1}}$, we add $(v_{n-1}, w_m)$ to $\+{L}$. Then it must happen that $w_m\in \+{DW}^{v_{n-2}}$ or $w_{m-1}\in \+{DW}^{v_{n-1}}$ or $w_{m-1}\in \+{DW}^{v_{n-2}}$. We repeat this process until we add $(v_1, w_1)$ to $\+{L}$. Finally, we get a sequences for pairs of vertices from $(v_n, w_m)$ to $(v_1, w_1)$.

		Traverse $\+L$ in order, for each pair $(v_i, w_j)$, if $\+{M}(v_i)$ doesn't exist, we set $\+{M}(v_i) = w_j$; if $\+{M}(w_j)$ doesn't exist, we set $\+{M}(w_j) = v_i$; if neither $\+{M}(v_i)$ nor $\+{M}(w_j)$ exists, we set $\+{M}(v_i) = w_j$ and $\+{M}(w_j) = v_i$. According to the procedure of {\sc DisWave}$(\tau, \sigma, \delta,\+S, \+S')$, we have $d(v_i, w_j) \leq \delta$ for all pairs $(v_i, w_j) \in \+L$. So, the matching $\+M$ satisfies $d_{\+{M}}(\tau, \sigma) \leq \delta$. We store $\+M$ in a linear space, then for any vertex $p \in \tau \cup \sigma$, we can retrieve a vertex $\+{M}(p)$ in $O(1)$ time.
	\end{proof}

	\noindent\textbf{Curve simplification.} We will need to simplify a curve $\tau$ under the discrete Fr\'echet distance. We employ an approximate algorithm developed in \cite[Lemma 23]{filtser2023approximate}

	\begin{lemma}[Lemma 23~\cite{filtser2023approximate}]~\label{lem: discrete simplification}
		Let $\tau$ be a curve of $n$ vertices in $\mathbb{R}^d$. Given $\delta > 0$, and $\epsilon \in (0, 1)$, there is an algorithm that returns a curve $\tau'$ of size at most $k^*$ in $O(\frac{d \cdot n \log n}{\epsilon} + n \cdot \epsilon^{-4.5} \log \frac{1}{\epsilon})$ time such that 
		$\tilde{d}_{F}(\tau, \tau') \leq (1+\epsilon)\delta$, where $k^*=\min\{|\tau''|:\tilde{d}_F(\tau, \tau'')\le \delta\}$. %Furthermore, for every curve $\tau'$ with $|\tau'| < |\tau^*|$, it holds that $\tilde{d}_{F}(\tau, \tau') > \delta$.
	\end{lemma}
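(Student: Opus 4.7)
The plan is to simplify $\tau$ by selecting a subsequence of its own vertices greedily, accelerated by an approximate range-furthest-point oracle. First, I would record a simple characterization: if $\tau'=(v_{i_1}, v_{i_2},\ldots, v_{i_k})$ is a subsequence of $\tau$, then $\tilde{d}_F(\tau,\tau')\le r$ iff for every consecutive pair $(v_{i_j}, v_{i_{j+1}})$ there exists a split index $s\in[i_j,i_{j+1}]$ such that $d(v_p, v_{i_j})\le r$ for all $p\in[i_j, s]$ and $d(v_p, v_{i_{j+1}})\le r$ for all $p\in[s+1, i_{j+1}]$. This drops out of unrolling the discrete Fr\'echet recurrence on a subsequence.

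The algorithm is the natural greedy: set $i_1=1$; given $i_j$, choose $i_{j+1}$ to be the largest index for which the split characterization above holds at radius $(1+\epsilon)\delta$; repeat until $v_n$ is reached. I would prove the size bound in two steps. (a)~Any subsequence simplification of $\tau$ of size $k^*_{\mathrm{sub}}$ at radius $\delta$ is matched in length by the greedy run at the larger radius $(1+\epsilon)\delta$, via a standard forward-exchange lemma in which the greedy $i_{j+1}$ dominates the optimum's pointer at every step. (b)~To pass from the subsequence optimum to the unrestricted optimum $k^*$, take any $\tau''$ of size $k^*$ at radius $\delta$ and snap each vertex of $\tau''$ to a $\tau$-vertex matched to it by the optimal discrete Fr\'echet matching; each snap moves a vertex by at most $\delta$, and by distributing the snap error between the two halves of each split one shows the snapped subsequence is feasible at radius $(1+\epsilon)\delta$ after a harmless rescaling of $\epsilon$.

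For running time, I would preprocess $\tau$ in $O(dn\log n/\epsilon)$ time to build a $(1+\epsilon)$-approximate furthest-point data structure that, given any center $c$ and contiguous index range $[a,b]$, decides whether $\max_{p\in[a,b]} d(v_p, c)\le(1+\epsilon)\delta$ in $O(\epsilon^{-4.5}\log(1/\epsilon))$ time; this is the $(1+\epsilon)$-ANN-style primitive invoked in \cite{filtser2023approximate}. Each greedy step then proceeds as (i)~a forward sweep for the rightmost $s$ with $v_p\in \+B(v_{i_j},(1+\epsilon)\delta)$ for all $p\in[i_j,s]$, amortised $O(1)$ across all steps because the pointer $s$ only moves right, and (ii)~an exponential-then-binary search for the largest $i_{j+1}$ such that $v_p\in\+B(v_{i_{j+1}},(1+\epsilon)\delta)$ for all $p\in[s+1, i_{j+1}]$, which calls the oracle $O(\log n)$ times per step. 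Charging $O(n)$ oracle calls overall gives the claimed $O(dn\log n/\epsilon + n\cdot\epsilon^{-4.5}\log(1/\epsilon))$ bound.

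The main obstacle is step~(b) of the size argument: naively snapping the unrestricted optimum's vertices to $\tau$'s vertices costs an inflation factor of $2$ rather than $1+\epsilon$, so the crude snap alone does not match $k^*$. Closing this gap is the heart of \cite[Lemma~23]{filtser2023approximate} and requires a careful charging in which the snap error at each $\tau''$-vertex is split between the two adjacent halves of the matching so that the accumulated error along any discrete Fr\'echet alignment stays inside the $(1+\epsilon)$ slack.
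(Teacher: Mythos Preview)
The paper does not prove this lemma at all: it is quoted verbatim as Lemma~23 of \cite{filtser2023approximate} and used as a black box, with no accompanying argument. There is therefore no ``paper's own proof'' to compare against; your sketch is not competing with anything here.

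As for the sketch itself, the structure is reasonable, but you have correctly located the real content in step~(b) and then deferred it back to \cite{filtser2023approximate}. Naive snapping of an unrestricted optimal $\tau''$ to vertices of $\tau$ genuinely costs a factor~$2$ in radius, not $1+\epsilon$: if $u$ is a vertex of $\tau''$ matched to $v_i\in\tau$, then $d(u,v_i)\le\delta$, and any other $v_p$ that was within $\delta$ of $u$ is only guaranteed to be within $2\delta$ of $v_i$. Your one-line description of ``splitting the snap error between the two halves'' does not by itself explain how the accumulated error stays $(1+\epsilon)\delta$ rather than $2\delta$; the actual mechanism in \cite{filtser2023approximate} is more delicate and is exactly the part you have not reproduced. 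So as a standalone proof your proposal is incomplete at the step you yourself flag, which is fine given that the present paper also treats this lemma as an imported result.
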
	

	As with the Fr\'echet distance, we divide $\tau$ and $\sigma$ into short subcurves and implement a discrete counterpart of {\sc Reach} called {\sc DisReach}. Let $\mu_1$ and $\mu_2$ be two integers with $\mu_1>\mu_2$. Define $a_k=(k-1)\mu_1+1$ for $k\in [(n-1)/\mu_1]$. We divide $\tau$ into a collection $(\tau_1, \tau_2,\ldots, \tau_{(n-1)/\mu_1})$ of $(n-1)/\mu_1$ subsequences such that $\tau_k=\tau[v_{a_k}, v_{a_{k+1}}]$. Note that every subsequence $\tau_k$ contains $\mu_1+1$ vertices, and $\tau_{k-1}\cap \tau_k=v_{a_k}$. 

	We divide $\sigma$ into shorter subsequences. Define $b_l=(l-1)\mu_2+1$ for $l\in [(m-1)/\mu_2]$. We divide $\sigma$ into a collection $(\sigma_1, \sigma_2,\ldots, \sigma_{(m-1)/\mu_2})$ of $(m-1)/\mu_2$ subsequences such that $\sigma_l=\sigma[w_{b_l}, w_{b_{l+1}}]$. Every subsequence $\sigma_l$ contains $\mu_2+1$ vertices, and $\sigma_{l-1}\cap \sigma_l=w_{b_l}$.

% \subsection{Overview of the decision procedure}
	For any $v_i$ and $\alpha > 1$, we call a set of $\sigma$'s vertices \emph{$\alpha$-approximate reachability set} if all vertices of $\sigma$ that can form $\delta$-reachable pairs with $v_i$ are in this set, and every vertex $w_j$ in this set satisfies that $(v_i, w_j)$ is an $(\alpha\delta)$-reachable pair, and $d(v_i, w_j) \leq \delta$. For any subcurve $\sigma[w_j, w_{j_1}]$, a set $\+{S}$ of vertices of $\sigma[w_j, w_{j_1}]$ is \emph{$\alpha$-approximate reachable} for $v_i$ if $\+S$ is an intersection between the set of all vertices of $\sigma[w_j, w_{j_1}]$ and some $\alpha$-approximate reachability set of $v_i$.
	%if there is an $\alpha$-approximate reachability set $P$ of $v_i$ such that a vertex $w_{j'}$ is covered by $\+S$ if and only if $w_{j'}$ belongs to $P \cap \sigma[w_j, w_{j_1}]$. 
	We can define the approximate reachability set for a vertex $w_j$ of $\sigma$ similarly. 

	%For any $\varepsilon \in (0, 1)$, the decision procedure aims to compute a $(7+\varepsilon)$-approximate reachability set for $w_m$. We first split $\tau$ and $\sigma$ into short subsequences. 

	%For any $i\in[n]$ and any $l\in [(m-1)/\mu_2]$, let $\+{DA}^{v_i}_{l}$ be an array induced by $\sigma_l$ that is $(7+\varepsilon)$-approximate reachable for $v_i$. For any $j\in [m]$ and any $k\in [(n-1)/\mu_1]$, let $\+{DA}^{w_j}_{k}$ be an array induced by $\tau_k$ that is $(7+\varepsilon)$-approximate reachable for $w_j$.
	%We present the discrete counterpart \pmb{\sc DisReach} of {\sc Reach} for any $k$ and $l$. We design {\sc DisReach} to run in $O(\mu_1\mu_2)^{1-c}$ time for some constant $c \in (0,1)$.

	We implement the following subroutine {\sc DisReach} in $O(\mu_1\mu_2)^{1-c}$ time for some constant $c \in (0,1)$. We manage to get the value of $\alpha$ to be $7+\varepsilon$ for some $\varepsilon\in (0,1)$.

	\vspace{4pt}
	
	\noindent\fbox{\parbox{\dimexpr\linewidth-2\fboxsep-2\fboxrule\relax}{\textbf{Procedure {\sc DisReach}}

	\vspace{4pt}

	\textbf{Input:} a subcurve $\tau_k$, a subcurve $\sigma_l$, an set $\+{DA}^{a_k}_l$ that is $\alpha$-approximate reachable for $v_{a_k}$, an array $\+{DA}^{b_l}_{k}$ induced by $\tau_k$ that is $\alpha$-approximate reachable for $w_{b_l}$. 
	
	\vspace{4pt}
	
	\textbf{Output:} an array $\+{DA}^{a_{k+1}}_{l}$ induced by $\sigma_l$ that is $\alpha$-approximate reachable for $v_{a_{k+1}}$ and an array $\+{DA}^{b_{l+1}}_{k}$ induced by $\tau_k$ that is $\alpha$-approximate reachable for $w_{b_{l+1}}$.
 	}
 	}

	\vspace{4pt}
% \subsubsection{Classification of reachable vertices}
	
	As with the design of {\sc Reach}, we classify vertices of $\tau_k$ and $\sigma_l$ that we aim to cover into four types and deal with each type separately. 

	%Suppose that $w_j$ belongs to the reachability set of $v_{a_{k+1}}$ on $\sigma_l$, the discrete Fr\'echet matching $\tau[v_1, v_{a_{k+1}}]$ and $\sigma[w_1, w_j]$ either matches $v_{a_{k}}$ to some vertex in $\sigma_l$ or matches $w_{b_l}$ to some vertex in $\tau_k$. If $v_{a_k}$ is matched to a vertex $w_p$ in $\sigma_l$, the vertex $w_p$ belongs to the reachability set of $v_{a_k}$ on $\sigma_l$ by definition.
	% And we have $\tilde{d}_F(\tau_k, \sigma[w_p, w_j]) \leq \delta$.
	%Since $\+{DA}^{v_{a_k}}_{l}$ is $(7+\varepsilon)$approximate-reachable for $v_{a_k}$, $w_p$ must be covered by $\+{DA}^{v_{a_k}}_{l}$. If $w_{b_l}$ is matched to a vertex $v_x$ in $\tau_k$, the vertex $v_x$ belongs to the reachability set of $w_{b_l}$ on $\tau_k$ by definition, and $v_x$ must be covered by $\+{DA}^{w_{b_l}}_{k}$. We classify vertices of $\sigma_l$ that belong to $v_{a_{k+1}}$'s reachability set into type 1 and type 2 based on the existences of vertices $w_p$ and $v_x$. We classify vertices that belong to $w_{b_{l+1}}$'s reachability set on $\tau_k$ into type 3 and type 4 similarly.

	\vspace{4pt}

	\noindent\fbox{\parbox{\dimexpr\linewidth-2\fboxsep-2\fboxrule\relax}{
% \begin{quote}
	\noindent\textbf{T1.} $w_j$: there is a vertex $w_{j'}\in \+{DA}^{a_k}_{l}$ such that $(v_{a_{k+1}}, w_j)$ is $\delta$-reachable from $(v_{a_k}, w_{j'})$.
	%$\+{DR}^{v_{a_{k+1}}}_{w_j} = 1$ and there is a vertex $w_p$ covered by $\+{DA}^{v_{a_k}}_{l}$ such that $w_p \leq_{\sigma} w_j$ and $\tilde{d}_F(\tau_k, \sigma[w_p, w_j]) \leq \delta.$

	\noindent\textbf{T2.} $w_j$: there is a vertex $v_{i'}\in \+{DA}_{k}^{b_l}$ such that $(v_{a_{k+1}}, w_j)$ is $\delta$-reachable from $(v_{i'}, w_{b_l})$. %$\tilde{d}_F(\tau[v_x, v_{a_{k+1}}],$ $ \sigma[w_{b_l}, w_j]) \leq \delta$.
	%$\+{DR}^{v_{a_{k+1}}}_{w_j} = 1$ and there is a vertex $v_x$ covered by $\+{DA}_{k}^{w_{b_l}}$ such that $\tilde{d}_F(\tau[v_x, v_{a_{k+1}}],$ $ \sigma[w_{b_l}, w_j]) \leq \delta$.

	\noindent\textbf{T3.} $v_i$: there is a vertex $w_{j'}\in \+{DA}^{a_k}_{l}$ such that $(v_i, w_{b_{l+1}})$ is $\delta$-reachable from $(v_{a_k}, w_{j'})$. %$\tilde{d}_F(\tau[v_{a_k}, v_i],$ $ \sigma[w_p, w_{b_{l+1}}]) \leq \delta$.
	
	%$\+{DR}_{v_i}^{w_{b_{l+1}}} = 1$ and there is a vertex $w_p$ covered by $\+{DA}^{v_{a_k}}_{l}$ such that $\tilde{d}_F(\tau[v_{a_k}, v_i],$ $ \sigma[w_p, w_{b_{l+1}}]) \leq \delta$.

	\noindent\textbf{T4.} $v_i$: there is a vertex $v_{i'}\in \+{DA}_{k}^{b_l}$ such that $(v_i, w_{b_{l+1}})$ is $\delta$-reachable from $(v_{i'}, w_{b_l})$. %$v_x \leq_{\tau} v_i$ and  $\tilde{d}_F(\tau[v_x, v_i],$ $ \sigma_l) \leq \delta.$
	
	%$\+{DR}_{v_i}^{w_{b_{l+1}}} = 1$ and there is a vertex $v_x$ covered by $\+{DA}_{k}^{w_{b_l}}$ such that $v_x \leq_{\tau} v_i$ and  $\tilde{d}_F(\tau[v_x, v_i],$ $ \sigma_l) \leq \delta.$
% \end{quote}
	}}

	\vspace{4pt}

	Via a similar analysis in the proof of Lemma~\ref{lem:cover1}, we can show that every vertex of $\sigma_l$ that can form a $\delta$-reachable pair with $v_{a_{k+1}}$ must belongs to type 1 or 2, and every vertex of $\tau_k$ that can form a $\delta$-reachable pair with $w_{b_{l+1}}$ must belongs to type 3 or 4.
	%It is possible that some vertex $w_j$ satisfies the requirements of both type 1 and type 2, because the discrete Fr\'echet matching between $\tau[v_1, v_{a_{k+1}}]$ and $\sigma[w_1, w_j]$ is not unique. In this case, we will classify it as type 1. It is also possible that some  vertex $v_i$ satisfies the requirements of both type 3 and type 4. In this case, we will classify it as type 3.

	\vspace{4pt}

% \subsection{Subquadratic decision algorithm for Discrete Fr\'echet distance}
%Let $\+I^1$ and $\+I^2$ be Boolean arrays induced by $\sigma_l$, and let $\+I^3$ and $\+I^4$ be Boolean arrays induced by  $\tau_k$. 
%We construct $\+I^1$ such that for every $w_j$, $\+{I}^1_{w_j} = 1$ if $w_j$ is of type 1, and $(w_j, v_{a_{k+1}})$ is a $(7+\varepsilon)\alpha$-reachable pair if $\+{I}^1_{w_j} = 1$.
%Similarly, we construct $\+{I}^2, \+{I}^3, \+{I}^4$. 

	Our decision algorithm for computing the discrete Fr\'echet distance also consists of a preprocessing phase, and a dynamic programming that uses {\sc DisReach}. To implement {\sc DisReach}, we construct the sets $\+I^1-\+I^4$ to covers vertices of T1-T4, respectively.
% We still set $\mu_1=m^{0.24}$, $\mu_2=m^{0.02}$, $\mu_3=m^{0.01}$, and $\omega=m^{0.12}$ for presentation.
% \subsubsection{Preprocessing}
%We preprocess $\tau_k$ to generate $\zeta_k, \zeta_k^{\text{suf}}, \zeta_k^{\text{pre}}$ and several data structures. 

	\vspace{4pt}

	\noindent\textbf{Preprocessing for $\+I^1$, $\+I^2$ and $\+I^3$.} We first generate $\zeta_k$, $\zeta_k^{\text{suf}}$ and $\zeta_k^{\text{pre}}$. Set $\epsilon = \varepsilon / 10$ in \Cref{lem: discrete simplification}, we first call the algorithm in \Cref{lem: discrete simplification}  with $\tau_k$ and $\delta$. If the algorithm returns a curve of at most $\mu_2 + 1$ vertices, we set $\zeta_k$ to be the curve returned; otherwise, set $\zeta_k$ to be null. It takes $O(d \cdot \mu_1 \log \mu_1/\epsilon + \mu_1\cdot\epsilon^{-4.5}\log (1/\epsilon))$ time. To generate $\zeta_k^{\text{suf}}$, we invoke the curve simplification algorithm with $\tau[v_i, v_{a_{k+1}}]$ and $\delta$ for all $i\in [a_k+1, a_{k+1}]$. We then identify the minimum $i$ such that the algorithm returns a curve of at most $\mu_2+1$ vertices and set $\zeta_k^{\text{suf}}$ to be the output. It takes $O(d \cdot \mu_1^2 \log \mu_1/\epsilon + \mu_1^2\cdot\epsilon^{-4.5}\log (1/\epsilon))$ time as there are $O(\mu_1)$ suffixes to try. We can generate $\zeta_k^{\text{pre}}$ by trying all prefixes of $\tau_k$ in $O(d \cdot \mu_1^2 \log \mu_1/\epsilon + \mu_1^2\cdot\epsilon^{-4.5}\log (1/\epsilon))$ time in the same way. Let $\tau[v_{a_k},v_{i_{\text{pre}}}]$ and $\tau[v_{i_{\text{suf}}}, v_{a_{k+1}}]$ be the corresponding prefix and suffix of $\zeta_k^{\text{pre}}$ and $\zeta_k^{\text{suf}}$, respectively.

	We also need to store a matching $\Mpre$ such that $d_{\Mpre}(\tau[v_{a_k}, v_{i_{\text{pre}}}], \zeta_k^{\text{pre}})\le (1+\epsilon)\delta$ and a matching $\Msuf$ such that $d_{\Msuf}(\tau[v_{i_{\text{suf}}}, v_{a_{k+1}}], \zeta_k^{\text{suf}})\le (1+\epsilon)\delta$. 
%Given that $\tilde{d}_F(\tau[v_{a_k}, v_{i_{\text{pre}}}], \zeta_k^{\text{pre}})\le (1+\epsilon)\delta$ and $\tilde{d}_F(\tau[v_{i_{\text{suf}}}, v_{a_{k+1}}], \zeta_k^{\text{suf}})\le (1+\epsilon)\delta$ by \Cref{lem: discrete simplification}, 
	We use the algorithm in \Cref{lem: discrete matching} to finish it in $O(\mu_1\mu_2)$ time.

	\vspace{4pt}

	\noindent\textbf{Preprocessing for $\+I^4$.} 
	For any curve $\zeta$, a vertex $v_i$ of $\tau_k$ is defined to be \emph{marked} by $\zeta$ if there is a subcurve $\tau'$ of $\tau_k$ that contains $v_i$ such that $\tilde{d}_F(\tau', \zeta) \leq \delta$. We call $\zeta$ \emph{$\omega$-dense} if it marks at least $\omega$ vertices of $\tau_k$.
	We first preprocess $\tau_k$ so that for any $l\in [(m-1)/\mu_2]$ and any subcurve $\sigma'=\sigma[w_{j_1}, w_{j_2}]$ of $\sigma_l$, given a vertex marked by $\sigma'$, we can find a subcurve of $\tau_k$ that is close to $\sigma'$ efficiently. 
	
	For every vertex $v_i$ of $\tau_k$, we identify the longest suffix of $\tau[v_{a_k}, v_i]$ and the longest prefix of $\tau[v_i, v_{a_{k+1}}]$ such that running the algorithm in \Cref{lem: discrete simplification} on them returns a curve of at most $\mu_2+1$ vertices with respect to $\delta$. It takes $O\left(d \cdot \mu_1^2 \log \mu_1/\epsilon + \mu_1^2\cdot\epsilon^{-4.5}\log (1/\epsilon) \right)$ time as there are $O(\mu_1)$ prefixes and suffixes to try. Let $\tau[\bar{v}_i, v_i]$ and $\tau[v_i, \tilde{v}_i]$ be the corresponding suffix and prefix, respectively. Let $\bar{\zeta}_i$ and $\tilde{\zeta}_i$ be the simplified curves for $\tau[\bar{v}_i, v_i]$ and $\tau[v_i, \tilde{v}_i]$, respectively. We have $d_F(\tau[\bar{v}_i, v_i], \bar{\zeta}_i)\le (1+\epsilon)\delta$ and $d_F(\tau[v_i, \tilde{v}_i], \tilde{\zeta}_i)\le (1+\epsilon)\delta$. By \Cref{lem: discrete matching}, we further construct a matching $\bar{\+M}_i$ between $\tau[\bar{v}_i, v_i]$ and $\bar{\zeta}_i$ and a matching $\tilde{\+M}_i$ between $\tau[v_i, \tilde{v}_i]$ and $\tilde{\zeta}_i$ in $O(\mu_1\mu_2)$ time. It takes $O\left(d \cdot \mu_1^3 \log \mu_1/\epsilon + \mu_1^3\cdot\epsilon^{-4.5}\log (1/\epsilon) \right)$ time to process all $v_i$'s.

	Next, we present how to find a subcurve of $\tau_k$ that is close to $\sigma'$ based on the above preprocessing. Suppose that $v_i$ is marked by $\sigma'$. We observe that there must be a subcurve $\tau'$ of $\tau[\bar{v}_i, \tilde{v}_i]$ with $\tilde{d}_F(\tau', \sigma')\le \delta$. By definition, there is a subcurve $\tau[v_x,v_y]$ of $\tau_k$ such that $\tilde{d}_F(\tau[v_x, v_y], \sigma')\le \delta$ and $v_i \in \tau[v_x, v_y]$. It holds that $\bar{v}_i\le_\tau v_x$. Otherwise, there is smaller $v_{i'} <_{\tau} \bar{v}_i$ such that calling algorithm in \Cref{lem: discrete simplification} with $\tau[v_{i'}, v_i]$ returns a curve of at most $\mu_2+1$ vertices, which is a contradiction. We have $v_y\le_\tau \tilde{v}_i$ via the similar analysis.  Hence, $\tau[v_x,v_y]$ is a subcurve of $\tau[\bar{v}_i, \tilde{v}_i]$. Intuitively, we can join the last vertex of $\bar{\zeta}_i$ and  the first vertex of $\tilde{\zeta}_i$ to get a new curve $\zeta'$ of at most $2\mu_2+2$ vertices. Obviously, $\tilde{d}_F(\tau[\bar{v}_i, \tilde{v}_i], \zeta') \leq (1+\epsilon)\delta$. Since there is a subcurve $\tau'$ of $\tau[\bar{v}_i, \tilde{v}_i]$ such that $\tilde{d}_F(\tau', \sigma') \leq \delta$, there is a subcurve $\zeta''$ of $\zeta'$ such that $\tilde{d}_F(\zeta'', \sigma') \leq (2+\epsilon)\delta$. 

	We aim to find a $\zeta''$. For every vertex of $\zeta'$ if it  belongs to the $\+{B}(w_j, (2+\epsilon)\delta)$, we insert it into a set $X$. If this vertex belongs to the ball $\+{B}(w_{j_1}, (2+\epsilon)\delta)$, we insert it into another set $Y$.  The existence of $\zeta''$ implies the existence of some subcurve that starts from a point in $X$, ends at a point in $Y$, and locates within a discrete Fr\'echet distance $(2+\epsilon)\delta$ of $\sigma'$. We test all subcurves of $\zeta'$ starting from some point in $X$ and ending at some point in $Y$. There are $O(|\zeta'|^2)=O(\mu_2^2)$ subcurves to be tested. For every subcurve, we check whether the discrete Fr\'echet distance between it and $\sigma'$ is at most $(2+\epsilon)\delta$. If so, we return it as $\zeta''$. It takes $O(\mu_2^4)$ time.

	Suppose $\zeta'' = \zeta'[x, y]$. We finally find a subcurve of $\tau_k$ within a discrete Fr\'echet distance $(1+\epsilon)\delta$ to $\zeta''$ based on $\bar{\+M}_i$ and $\tilde{\+M}_i$. Since $\zeta'$ is a concatenation of $\bar{\zeta}_i$ and $\tilde{\zeta}_i$. If $x \in \bar{\zeta}_i$, we set $x' = \bar{\+{M}}_i(x)$; otherwise, we set $x' = \tilde{\+M}_i(x)$. We can define $y'$ for $y$ similarly. And we have $\tilde{d}_F(\tau[x', y'], \zeta'') \leq (1+\epsilon)\delta$. Hence, $\tilde{d}_F(\tau[x', y'], \sigma') \leq (3+2\epsilon)\delta$ by the triangle inequality. 

	\begin{lemma}\label{lem: marked-vertex}
    	We can preprocess $\tau_k$ in $O\left(d \cdot \mu_1^3 \log \mu_1 /\epsilon + \mu_1^3\cdot\epsilon^{-4.5}\log(1/\epsilon) \right)$ time such that for any subcurve $\sigma'$ of $\sigma_l$ and a vertex of $\tau_k$, there is an $O(\mu_2^4)$-time algorithm that returns a subcurve $\tau'$ of $\tau_k$ with $d_F(\tau', \sigma')\le (3+2\epsilon)\delta$ or null. If the algorithm returns null, the vertex is not marked by $\sigma'$. 
	\end{lemma}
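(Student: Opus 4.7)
The plan is to follow the structure of the proof of Lemma~\ref{lem: marked-edge} (the continuous analog) with appropriate substitutions for the discrete setting. The preprocessing bound follows immediately from what was already described just above the statement: for every vertex $v_i$ of $\tau_k$, we identify the longest suffix $\tau[\bar v_i, v_i]$ of $\tau[v_{a_k}, v_i]$ and the longest prefix $\tau[v_i, \tilde v_i]$ of $\tau[v_i, v_{a_{k+1}}]$ whose simplification via Lemma~\ref{lem: discrete simplification} yields a curve of at most $\mu_2+1$ vertices, call them $\bar\zeta_i$ and $\tilde\zeta_i$, and we build matchings $\bar{\+M}_i$ and $\tilde{\+M}_i$ using Lemma~\ref{lem: discrete matching}. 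Trying all $O(\mu_1)$ prefixes/suffixes per vertex and running the simplification costs $O(d\mu_1\log\mu_1/\epsilon + \mu_1\epsilon^{-4.5}\log(1/\epsilon))$ per trial, while each matching costs $O(\mu_1\mu_2)$; summing over the $O(\mu_1)$ vertices gives the claimed preprocessing time, since the matching term $O(\mu_1^3\mu_2)$ is absorbed by the simplification term once we set $\epsilon$ constant (and recalling $\mu_2\le \mu_1$).

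For the query, given $\sigma'=\sigma[w_{j_1},w_{j_2}]$ and a vertex $v_i$, I would first form $\zeta'$ by joining the last vertex of $\bar\zeta_i$ to the first vertex of $\tilde\zeta_i$ by a single edge, so that $|\zeta'|\le 2\mu_2+2$ and $\tilde d_F(\tau[\bar v_i,\tilde v_i],\zeta')\le(1+\epsilon)\delta$ (the edge added is in discrete Fr\'echet distance at most $(1+\epsilon)\delta$ of $v_i$). The key structural observation, which I would prove exactly as in Lemma~\ref{lem: marked-edge}, is that if $v_i$ is marked by $\sigma'$ then the witnessing subcurve $\tau[v_x,v_y]$ of $\tau_k$ must satisfy $\bar v_i\le_\tau v_x\le_\tau v_i\le_\tau v_y\le_\tau\tilde v_i$, for otherwise the maximality of the suffix $\tau[\bar v_i,v_i]$ or the prefix $\tau[v_i,\tilde v_i]$ would be violated. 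Combined with the triangle inequality this produces a subcurve $\zeta''$ of $\zeta'$ with $\tilde d_F(\zeta'',\sigma')\le(2+\epsilon)\delta$.

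To locate such a $\zeta''$, I collect into sets $X$ and $Y$ the vertices of $\zeta'$ that lie in $\+B(w_{j_1},(2+\epsilon)\delta)$ and in $\+B(w_{j_2},(2+\epsilon)\delta)$ respectively. Because the discrete Fr\'echet matching must match $w_{j_1}$ to the start vertex of $\zeta''$ and $w_{j_2}$ to the end vertex of $\zeta''$, these start and end vertices necessarily belong to $X$ and $Y$. Enumerate every pair $(x,y)\in X\times Y$ with $x\le_{\zeta'}y$, of which there are $O(\mu_2^2)$, and test $\tilde d_F(\zeta'[x,y],\sigma')\le(2+\epsilon)\delta$ by the standard $O(|\zeta'||\sigma'|)=O(\mu_2^2)$ dynamic program; the total cost is $O(\mu_2^4)$. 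If no pair passes, return null; by the contrapositive of the observation above, the vertex cannot have been marked by $\sigma'$.

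Finally, given a witnessing $\zeta''=\zeta'[x,y]$, I map the endpoints back to $\tau_k$: define $x'=\bar{\+M}_i(x)$ if $x\in\bar\zeta_i$ and $x'=\tilde{\+M}_i(x)$ otherwise, and likewise for $y'$. By the triangle inequality $\tilde d_F(\tau[x',y'],\zeta'')\le(1+\epsilon)\delta$, hence $\tilde d_F(\tau[x',y'],\sigma')\le(3+2\epsilon)\delta$, and we return $\tau'=\tau[x',y']$. Retrieving each matched vertex takes $O(1)$ by Lemma~\ref{lem: discrete matching}, so the query time remains $O(\mu_2^4)$. The main obstacle, as in the continuous case, is the containment argument of the second paragraph: one must carefully verify that the maximality of the suffix/prefix forces the witness subcurve to live inside $\tau[\bar v_i,\tilde v_i]$ so that the concatenated simplification $\zeta'$ really does serve as a faithful surrogate; everything else is a straightforward discrete replay of the continuous proof.
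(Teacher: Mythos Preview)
Your proposal is correct and follows essentially the same approach as the paper: the paper likewise concatenates $\bar\zeta_i$ and $\tilde\zeta_i$ into a surrogate $\zeta'$ of at most $2\mu_2+2$ vertices, uses the maximality of the suffix/prefix to force any witnessing $\tau[v_x,v_y]$ inside $\tau[\bar v_i,\tilde v_i]$, enumerates the $O(\mu_2^2)$ candidate subcurves of $\zeta'$ with endpoints in $X\times Y$ and tests each in $O(\mu_2^2)$ time, and finally pulls the successful $\zeta''$ back to $\tau_k$ via $\bar{\+M}_i,\tilde{\+M}_i$. The only cosmetic slip is your phrase ``joining \dots\ by a single edge'': in the discrete setting one simply concatenates the two vertex sequences (both glue-vertices are within $(1+\epsilon)\delta$ of $v_i$), but this does not affect the argument.
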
 

	The last part of the preprocessing is a data structure for answering the query {\bf{\sc DisCover}} with a vertex-to-vertex subcurve $\tau'$ of $\tau_k$, $\delta'>0$, and a set $\+S$ of vertices of $\tau_k$. {\sc DisCover} is a discrete version of {\sc Cover}. Its definition is as follows.

	\vspace{2pt}

	\noindent\pmb{{\sc DisCover}$(\tau', \delta', \+S)$.} The query output is another set $\bar{\+S}$ of $\tau_k$'s vertices such that a vertex $v_i$ belongs to $\bar{\+S}$ if and only if there is a vertex $v_{i'}\in \+{S}$ such that $v_{i'}\leq_{\tau} v_i$ and $\tilde{d}_F(\tau[v_{i'}, v_i], \tau') \leq \delta'$. 

	Suppose that $\tau'=\tau[v_{i_1}, v_{i_2}]$. Note that $\bar{\+S}$ is equal to the output array $\+{DW}^{v_{i_2}}$ returned by the invocation {\sc DisWaveFront}$(\tau_k, \tau', \delta', \+S, \+S')$ for the last vertex $v_{i_2}$ of $\tau'$, where $\+S'$ is an emptyset.

	\vspace{4pt}

	\noindent\underline{\bf Data structure.} For every vertex-to-vertex subcurve $\tau[v_{i_1}, v_{i_2}]$ of $\tau_k$, take a vertex $v_i$ of $\tau_k$. We aim to identify all vertices $v_{i'}$ in $\tau_k$ such that $v_{i}\le_\tau v_{i'}$ and $\tilde{d}_F(\tau[v_i, v_{i'}], \tau[v_{i_1}, v_{i_2}])\le \delta'$. We organize all such $v_{i'}$'s into a set and assign it to an element $\+D[i_1, i_2, i]$ in a 3D array. This array $\+D$ is our data structure. Every element of $\+D$ can be constructed by an invocation of {\sc DisWave}. %in $O(\mu_1^2)$ time, we can finish the construction of $\+D$ in $O(\mu_1^5)$ time.
	
	%Our data structure consists of arrays induced by $\tau_k$ for every vertex-to-vertex subcurve of $\tau_k$. Specifically, for any pair of integers $i, i_1\in [a_k, a_{k+1}]$ with $i_1 \ge i$, take the subcurve $\tau[v_i, v_{i_1}]$. We proceed to take a vertex $v_{i'}$ of $\tau_k$. We aim to identify all vertices $v_x$ in $\tau_k$ such that $v_{i'}\le_\tau v_x$ and $\tilde{d}_F(\tau[v_{i'}, v_x], \tau[v_i, v_{i_1}])\le \delta'$. 
	%Let $\+{D}[i, i_1, i']$ be an array induced by $\tau_k$, if $v_x$ in $\tau_k$ satisfies that $v_{i'} \leq_{\tau} v_x$ and $\tilde{d}_F(\tau[v_{i'}, v_x], \tau[v_i, v_{i_1}]) \leq \delta'$, $\+{D}[i, i_1, i']_{v_x} = 1$; otherwise, $\+{D}[i, i_1, i']_{v_x} = 0$.

	%Let $\+{A}$ be an array induced by $\tau_k$, we set $\+{A}_{v_{i'}} = 1$ and all other elements to be $0$. Let $\+{A}'$ be an array induced by $\tau[v_i, v_{i_1}]$ and we set all elements being $0$.
	%We invoke {\sc DisWaveFront}($\tau_k$, $\tau[v_{i}, v_{i_1}]$, $\delta'$, $\+{A}$, $\+{A}'$) and get output array $\+{DW}^{v_{i_1}}$ of $v_{i_1}$. Set $\+{D}[i, i_1, i'] = \+{DW}^{v_{i_1}}$. This finishes the construction of $\+{D}[i, i_1, i']$.

	We also maintain a 3D array {\sc Max}. For any $i_1, i_2, i$, let $i^*$ be the maximum integer such that $v_{i^*}\in \+{D}[i_1,i_2, i]$, we set {\sc Max}$[i_1, i_2, i] = i^*$. If $\+{D}[i_1,i_2, i]$ is empty, we set {\sc Max}$[i_1, i_2, i] = -1$.

	For every subcurve $\tau[v_{i_1}, v_{i_2}]$ and every vertex $v_{i}$, it takes $O\left(\mu_1(i_2 - i_1) \right)$ time to determine every element of $\+{D}[i_1, i_2, i]$ and {\sc Max}$[i_1, i_2, i]$. Since there are $O(\mu_1^3)$ distinct combinations of $i_1, i_2$ and $i$. It takes $O(\mu_1^5)$ time to construct $\+{D}$ and {\sc Max} as the data structures.

	\cancel{
	We present a useful property of $\+D$ and {\sc Max}. 
	\begin{lemma}
    	\label{lem: dis property of D and MAX}
    	For any $\tau[v_i, v_{i_1}]$ and $v_{i'}$, let $i^* = ${\sc Max}$[i, i_{1}, i']$. If $i^* \neq -1$, take any $\bar{i} < i^*$. For all $i'' > i'$, either {\sc Max}$[i, i_{1}, i''] = -1$ or $\+{D}[i, i_1, i'']_{v_{\bar{i}}} \leq \+{D}[i, i_1, i']_{v_{\bar{i}}}$. 
	\end{lemma}
	\begin{proof}
		Given $i, i_1, i'$, suppose $i^*=${\sc Max}$[i, i_1, i']$. Take $i'' > i'$. If {\sc Max}$[i, i_1, i''] = -1$, there is nothing to prove. Suppose {\sc Max}$[i, i_1, i''] \neq -1$, we want to prove for any $\bar{i} < i^*$, $\+{D}[i, i_1, i'']_{v_{\bar{i}}} \leq \+{D}[i, i_1, i']_{v_{\bar{i}}}$.  We prove this by contradiction. Assume there exists $\bar{i} < i^*$ such that $\+{D}[i, i_1, i'']_{v_{\bar{i}}} > \+{D}[i, i_1, i']_{v_{\bar{i}}}$, that means $\+{D}[i, i_1, i'']_{v_{\bar{i}}} = 1$ and $\+{D}[i, i_1, i']_{v_{\bar{i}}} = 0$. Thus, there exist discrete Fr\'echet matching $\+{M}'$ and $\+{M}''$ such that $d_{\+{M}'}(\tau[i, i_1], \tau[i', i^*]) \leq \delta'$ and $d_{\+{M}''}(\tau[i, i_1], \tau[i'', \bar{i}]) \leq \delta'$. Since $i'' > i'$ and $\bar{i} < i^*$, we have $\tau[i'', \bar{i}]$ is a subcurve of $\tau[i', i^*]$. It means that there is a vertex $v_z \in \tau[v_i, v_{i_1}]$ such that $\+{M}'(v_z) = \+{M}''(v_z)$. Otherwise, $M''(v_{i_1})$ is strictly in fromt of $M'(v_{i_1})$ along $\tau$, and $\+{M}''$ can not be a matching between $\tau[v_{i''}, v_{\bar{i}}]$ and $\tau[v_{i}, v_{i_1}]$. 

		Thus, we can construct a discrete Fr\'echet matching $\+{M}$ between $\tau[v_i, v_{i_1}]$ and $\tau[v_{i'}, v_{\bar{i}}]$ by matching $\tau[v_i, v_z]$ to $\tau[v_{i'}, \+{M}'(v_z)]$ according to $\+{M}'$ and matching $\tau[v_z, v_{i_1}]$ to $\tau[\+{M}'(v_z), v_{\bar{i}}]$ according to $\+{M}''$. It is clear that $d_{\+{M}}(\tau[v_i, v_{i_1}], \tau[v_{i'}, v_{\bar{i}}]) \leq \delta'$, which contradicts $\+{D}[i, i_1, i']_{v_{\bar{i}}} = 0$. 
	\end{proof}
	}

	\noindent\underline{\bf Query algorithm.} Given an arbitrary subcurve $\tau' = \tau[v_{i_1}, v_{i_2}]$ of $\tau_k$ and an arbitrary set $\+S$ of $\tau_k$'s vertices, we present how to answer the query {\sc DisCover} in $O(\mu_1)$ time by using the data structures. Basically, the feasible solution $\bar{\+S}$ is the union of $\+D[i_1, i_2, i]$ for all $v_i\in \+S$.

	Initialize $\bar{\+S}=\emptyset$. We traverse $\+S$ to update $\bar{\+{S}}$ progressively. Assume that all vertices in $\+S=\{v_{a_1}, v_{a_2},\ldots, v_{a_{|\+S|}}\}$ are sorted with $a_1<a_2<\ldots<a_{|\+S|}$. We use $i$ to index the current element in $\+{S}$ within the traversal, and use $i^*$ to store the maximum integer such that $v_{i^*}\in \bar{\+S}$. First, set $\bar{\+S}=\+D[i_1, i_2, a_1]$, set $i^*=${\sc Max}$[i_1, i_2, a_1]$, and increase $i$ to $a_2$. Suppose that we have updated $\bar{\+S}$ to be the union $\bigcup_{r\in [b]} \+D[i_1, i_2, a_r]$, and $i=a_{b+1}$. If {\sc Max}$[i_1, i_2, i]\le i^*$, increase $i$ to $a_{b+1}$. Otherwise, insert all vertices in $\+D[i_1, i_2, i]$ with indices larger than $i^*$ into $\bar{\+S}$, and update $i^*$ to {\sc Max}$[i_1, i_2, i]$. We stop after all vertices in $\+S$ have been processed. It takes $O(\mu_1)$ time.
	
	%and use $b$ to index the element being updated in $\bar{\+{S}}$. Initialize both $i'$ and $b$ to be $a_k$. For $\+{S}$, if $\+{S}_{v_{i'}} = 0$ or {\sc Max}$[i, i_1, i'] = -1$, we increase $i'$ by 1. Otherwise, if $b < i'$, repeat setting $\bar{\+{S}}_{v_{b}} = 0$ and increasing $b$ by $1$ until $b$ equals to $i'$. Then, we set $\bar{\+S}_{v_{b}} = \+{D}[i, i_1, i']_{v_{b}}$ and increasing $b$ by $1$ until $b$ equals to {\sc Max}$[i, i_1, i'] + 1$, then we increase $i'$ by $1$ and repeat the procedure above. We stop updating $\bar{\+{S}}$ until $i' = a_{k+1} + 1$ or $b = a_{k+1}+1$. If $i' = a_{k+1} + 1$ and $b < a_{k+1}+1$, we set all elements of $\bar{\+{S}}[v_{b}, v_{a_{k+1}}] = 0$. By Lemma~\ref{lem: dis property of D and MAX}, a vertex $v_x$ is covered by $\bar{\+S}$ if and only if there is another vertex $v_{i'}$ covered by $\+S$ such that $v_x$ is covered by $\+D[i, i_1, i']$.

	%Since every element of $\bar{\+S}$ is updated once and each update takes $O(1)$ time. We finish constructing $\bar{\+{S}}$ in $O(\mu_1)$ time. Thus, we get the following lemma:

	\begin{lemma}\label{lem:dis-cover}
    	Fix $\tau_k$ and $\delta'$. There is a data structure of size $O(\mu_1^4)$ and preprocessing time $O(\mu_1^5)$ that answers the query {\sc DisCover} for any $\tau'$ and $\+S$ in $O(\mu_1)$ time . 
	\end{lemma}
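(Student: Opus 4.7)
The plan is to verify the three bounds in the lemma separately, relying on the data-structure description that precedes the statement. The size bound $O(\mu_1^4)$ is immediate: the array $\+D$ has $O(\mu_1^3)$ entries indexed by $(i_1,i_2,i)$, and each entry $\+D[i_1,i_2,i]$ is a subset of the $O(\mu_1)$ vertices of $\tau_k$, while {\sc Max} has only $O(\mu_1^3)$ scalar entries. The preprocessing bound $O(\mu_1^5)$ follows by building every column $\bigl(\+D[i_1,i_2,i]\bigr)_{i\in[a_k,a_{k+1}]}$ with a single call to {\sc DisWave}$(\tau_k,\tau[v_{i_1},v_{i_2}],\delta',\{v_1,\ldots\},\emptyset)$; that call costs $O(\mu_1(i_2-i_1))=O(\mu_1^2)$, and there are $O(\mu_1^2)$ subcurves $\tau[v_{i_1},v_{i_2}]$ to handle. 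The entries of {\sc Max} are read off from $\+D$ in the same time budget.

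For correctness of the query algorithm, I would first argue the identity
$\bar{\+S}=\bigcup_{v_i\in\+S}\+D[i_1,i_2,i]$, which is just the definition of {\sc DisCover} unfolded via the definition of $\+D$. The progressive loop plainly inserts only elements of this union, so the only thing to verify is that \emph{no} element of the union is missed.

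The main obstacle, and the heart of the proof, is to justify that testing $\textsc{Max}[i_1,i_2,a_{b+1}]\le i^\ast$ is enough to decide whether $\+D[i_1,i_2,a_{b+1}]$ contributes any new vertex. For this I would establish a discrete planarity lemma in the spirit of Lemma~\ref{lem:planarity}: if $a_r<a_{r+1}$ and $v_z\in\+D[i_1,i_2,a_{r+1}]$ with $z\le z^\ast:=\textsc{Max}[i_1,i_2,a_r]$, then $v_z\in\+D[i_1,i_2,a_r]$. To prove this, fix a discrete Fr\'echet matching $\+M_1$ between $\tau[v_{a_r},v_{z^\ast}]$ and $\tau[v_{i_1},v_{i_2}]$ and a matching $\+M_2$ between $\tau[v_{a_{r+1}},v_z]$ and $\tau[v_{i_1},v_{i_2}]$, both realizing distance at most $\delta'$. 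Because $a_r<a_{r+1}$ and $z\le z^\ast$, the two vertex-monotone matchings, viewed as lattice paths in the $\mu_1\times(i_2-i_1+1)$ grid, must cross at some common vertex $v_c\in\tau[v_{i_1},v_{i_2}]$; concatenating the prefix of $\+M_1$ up to $v_c$ with the suffix of $\+M_2$ from $v_c$ produces a vertex-monotone matching between $\tau[v_{a_r},v_z]$ and $\tau[v_{i_1},v_{i_2}]$ of cost at most $\delta'$, which is exactly what puts $v_z$ into $\+D[i_1,i_2,a_r]$.

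Given this monotonicity, the query loop is clearly correct and runs in $O(\mu_1)$ time: whenever $\textsc{Max}[i_1,i_2,a_{b+1}]\le i^\ast$ we may safely skip $v_{a_{b+1}}$ by the lemma above; otherwise we walk through $\+D[i_1,i_2,a_{b+1}]$, but only insert vertices with index strictly larger than $i^\ast$, so every vertex of $\tau_k$ is inserted into $\bar{\+S}$ at most once, and the total work across all $|\+S|\le\mu_1$ iterations is $O(\mu_1)$. Assuming sorted access to $\+S$ (which we can afford as part of the caller, or by presorting $\+S$ once at the start), this completes the proof of the three claimed bounds.
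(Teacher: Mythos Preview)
Your approach is essentially the paper's: build the $3$-dimensional table $\+D$ and {\sc Max}, then answer a query by merging the sets $\+D[i_1,i_2,i]$ for $v_i\in\+S$ via the greedy sweep, justified by the discrete planarity/crossing lemma you spell out (the paper states the algorithm but leaves this lemma implicit, so your write-up is in fact more complete on the correctness side).

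There is one slip in your preprocessing argument. A \emph{single} call {\sc DisWave}$(\tau_k,\tau[v_{i_1},v_{i_2}],\delta',\{v_{a_k},\dots,v_{a_{k+1}}\},\emptyset)$ does \emph{not} recover the whole column $(\+D[i_1,i_2,i])_i$: the output $\+{DW}^{v_{i_2}}$ is only the \emph{union} $\bigcup_i \+D[i_1,i_2,i]$, because {\sc DisWave} tracks reachability from \emph{some} start in $\+S$, not from each start separately. To build each $\+D[i_1,i_2,i]$ you need one call per triple with $\+S=\{v_i\}$, costing $O(\mu_1(i_2-i_1))$ each; over all $O(\mu_1^3)$ triples this is $O(\mu_1^5)$, matching the lemma's bound (and the paper's construction). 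With that correction the argument is complete.
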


% \subsubsection{Implementation of {\sc DisReach}}
	%As discussed above, given $\tau_k$, $\sigma_l$, an array $\+{DA}^{v_{a_{k}}}_{l}$ induced by $\sigma_l$ that is $(7+\varepsilon)$-reachable for $v_{a_k}$ and an array $\+{DA}_{k}^{w_{b_l}}$ induced by $\tau_k$ that is $(7+\varepsilon)$-reachable for $w_{b_l}$, we implement {\sc DisReach} to compute an array $\+{DA}^{v_{a_{k+1}}}_{l}$ induced by $\sigma_l$ that is $(7+\varepsilon)$-reachable for $v_{a_{k+1}}$ and an array $\+{DA}_{k}^{w_{b_{l+1}}}$ induced by $\tau_k$ that is $(7+\varepsilon)$-reachable for $w_{b_{l+1}}$. We first construct $\+I^1$, $\+I^2$, $\+I^3$ and $\+I^4$.
	Finally, we present how to construct $\+I^1-\+I^4$, respectively. Assume that $\+{DA}^{a_{k}}_{l}$ is $(7+\varepsilon)$-reachable for $v_{a_k}$ and $\+{DA}_{k}^{b_l}$ is $(7+\varepsilon)$-reachable for $w_{b_l}$. 
	
	\vspace{4pt}

	\noindent\textbf{Construction of $\+I^1$.}  
	If $\zeta_k$ is not null,  let $\+{S}^k$ be an emptyset. We invoke {\sc DisWave}$(\zeta_k, \sigma_l, (2+\epsilon)\delta,  \+{S}^k, \+{DA}^{a_k}_l)$. Let $\+A^1$ be the output set for the last vertex of $\zeta_k$. Set $\+I^1 = \+{A}^1$. If $\zeta_k$ is null, set $\+I^1$ to be $\emptyset$. It runs in $O(\mu_2^2)$ time. %The array $\+I^1$ satisfies the following property.

	\begin{lemma}~\label{lem: Dis I_1}
		We can construct $\+I^1$ in $O(\mu_2^2)$ time such that $\+I^1$ contains all vertices of type 1, and every vertex $w_j \in \+I^4$ forms a $((7+\varepsilon)\delta)$-reachable pair with $v_{a_{k+1}}$. 
	\end{lemma}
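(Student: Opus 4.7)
The plan is to mirror the proof of Lemma~\ref{lem: I_1} in the discrete setting, substituting {\sc DisWave} for {\sc WaveFront} and the discrete Fr\'echet distance with the corresponding triangle inequality bounds from Lemma~\ref{lem: discrete simplification}. The running time is immediate from the construction: a single call to {\sc DisWave} on the pair $(\zeta_k, \sigma_l)$ with $|\zeta_k| = O(\mu_2)$ and $|\sigma_l| = O(\mu_2)$ takes $O(\mu_2^2)$ time, so it remains to verify the two semantic properties of $\+I^1$.

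First I would handle the case $\zeta_k = \textrm{null}$. By Lemma~\ref{lem: discrete simplification}, this means no curve of at most $\mu_2+1$ vertices lies within discrete Fr\'echet distance $\delta$ of $\tau_k$. But if some $w_j$ were of type 1, then by definition there is $w_{j'} \in \+{DA}^{a_k}_l$ with $\tilde{d}_F(\tau_k, \sigma[w_{j'}, w_j]) \le \delta$, and $\sigma[w_{j'}, w_j]$ is a subcurve of $\sigma_l$ with at most $\mu_2+1$ vertices, a contradiction. So T1 is empty and $\+I^1 = \emptyset$ is trivially correct.

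Next, when $\zeta_k$ is not null, I would prove completeness (every T1 vertex is in $\+I^1$) by taking any T1 vertex $w_j$, picking the witness $w_{j'} \in \+{DA}^{a_k}_l$ with $\tilde{d}_F(\tau_k, \sigma[w_{j'}, w_j]) \le \delta$, and applying the triangle inequality against $\tilde{d}_F(\tau_k, \zeta_k) \le (1+\epsilon)\delta$ to get $\tilde{d}_F(\zeta_k, \sigma[w_{j'}, w_j]) \le (2+\epsilon)\delta$. Since $\+S^k = \emptyset$ and $\+{DA}^{a_k}_l$ is fed as the second seed, the defining property of {\sc DisWave} then places $w_j$ into the output set $\+A^1$ for the last vertex of $\zeta_k$, hence into $\+I^1$.

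Finally, for the reachability guarantee, I would take any $w_j \in \+I^1$; by the {\sc DisWave} specification there is $w_{j'} \in \+{DA}^{a_k}_l$ with $\tilde{d}_F(\zeta_k, \sigma[w_{j'}, w_j]) \le (2+\epsilon)\delta$, giving $\tilde{d}_F(\tau_k, \sigma[w_{j'}, w_j]) \le (3+2\epsilon)\delta$ by the triangle inequality. Since $\+{DA}^{a_k}_l$ is $(7+\varepsilon)$-approximate reachable for $v_{a_k}$, we have $\tilde{d}_F(\tau[v_1, v_{a_k}], \sigma[w_1, w_{j'}]) \le (7+\varepsilon)\delta$, and concatenating the two discrete Fr\'echet matchings at $(v_{a_k}, w_{j'})$ yields a matching between $\tau[v_1, v_{a_{k+1}}]$ and $\sigma[w_1, w_j]$ realizing distance $\max\{(7+\varepsilon)\delta, (3+2\epsilon)\delta\} = (7+\varepsilon)\delta$, using $\epsilon = \varepsilon/10$. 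I do not foresee a real obstacle here: the only subtlety is checking that the choice $\epsilon = \varepsilon/10$ keeps $3+2\epsilon < 7+\varepsilon$, which is immediate, so the argument is essentially a transliteration of the continuous case.
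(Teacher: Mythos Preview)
Your proposal is correct and follows essentially the same approach as the paper: handle the $\zeta_k=\text{null}$ case by noting that no subcurve of $\sigma_l$ can be within $\delta$ of $\tau_k$, then for the non-null case use the triangle inequality in both directions (first $(1+\epsilon)+\delta$ to get membership in $\+I^1$, then $(2+\epsilon)+(1+\epsilon)$ to bound the distance back to $\tau_k$) and concatenate with the $(7+\varepsilon)\delta$-reachability of $\+{DA}^{a_k}_l$. The paper's proof is virtually identical, including the use of $\epsilon=\varepsilon/10$ to ensure $3+2\epsilon<7+\varepsilon$.
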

	\begin{proof}
		The running time follows the construction procedure. We focus on proving $\+I^1$'s property. 
  
		If $\zeta_k$ is null, it means that $\tau_k$ at a discrete Fr\'echet distance more than $\delta$ to all curves of at most $\mu_2+1$ vertices, which implies that vertices of type 1 do not exist. According to the above procedure, $\+I^1=\emptyset$. There is nothing to prove. 

		If $\zeta_k$ is not null, we have $\+I^1 = \+A^1$. Suppose $w_j$ belongs to type 1. There is a vertex $w_{j'} \in \sigma_l\in \+{DA}^{a_k}_l$ such that $w_{j'} \leq_\sigma w_j$ and $\tilde{d}_F(\tau_k, \sigma[w_{j'}, w_j]) \leq \delta$ by definition. According to \Cref{lem: discrete simplification}, $\tilde{d}_F(\tau_k, \zeta_k) \leq (1+\epsilon)\delta$. We have $\tilde{d}_F(\zeta_k, \sigma[w_{j'}, w_j]) \leq (2 + \epsilon) \delta$ via the triangle inequality. Recall that $\+{A}^1$ is the output array for the last vertex of $\zeta_k$. By definition of {\sc DisWave}, $w_j\in \+A^1$. Hence, $w_j$ belongs to $\+I^1$.
		
		%$\+A^{1}_{w_j} = 1$ if and only if there exists a vertex $w_p \leq_{\sigma} w_j$ covered by $\+{DA}^{v_{a_k}}_l$ such that $\tilde{d}_{F}(\zeta_k, \sigma[w_p, w_j]) \leq (2+\epsilon)\delta$. So, we have $\+I^1_{w_j} = \+{A}^1_{w_j} = 1$.

		Next, we prove that every vertex $w_j \in \+{I}^1$ can form a $((7+\varepsilon)\delta)$-reachable pair with $v_{a_{k+1}}$. 
		Since $\+{S}^k$ is empty, by the definition of {\sc DisWave}, there exists $w_{j'} \leq_\sigma w_j$ such that $\tilde{d}_F(\zeta_k, \sigma[w_{j'}, w_j]) \leq (2+\epsilon)\delta$ and $w_{j'}\in \+{DA}^{a_k}_l$. Note that $\tilde{d}_{F}(\tau[v_1, v_{a_k}], \sigma[w_1, w_{j'}]) \leq (7 + \varepsilon) \delta$. 
		Since $\epsilon = \varepsilon/10$, by triangle inequality, we have $\tilde{d}_F(\tau_k, \sigma[w_{j'}, w_j]) \leq (3+2\epsilon)\delta < (7+\varepsilon)\delta$. It implies that we can construct a matching between $\tau[v_1, v_{a_{k+1}}]$ and $\sigma[w_1, w_j]$ by concatenating the discrete Fr\'echet matching between $\tau_k$ and $\sigma[w_{j'},w_j]$ to that between $\tau[v_1, v_{a_k}]$ and $\sigma[w_1, w_{j'}]$. The matching realizes a distance at most $(7+\varepsilon)\delta$. Thus, $(v_{a_{k+1}}, w_j)$ is a $(7 + \varepsilon)$-reachable pair. 
 	\end{proof}

	\noindent\textbf{Construction of $\+I^2$.}
	We first construct a set $\+S$ of vertices of $\suf$ based on $\+{DA}_k^{b_l}$. Recall that $\suf$ is a simplification of the suffix $\tau[v_{i_{\text{suf}}}, v_{a_{k+1}}]$. We also have a matching $\Msuf$ with $d_{\Msuf}(\tau[v_{i_{\text{suf}}}, v_{a_{k+1}}], \zeta_k^{\text{suf}})\le (1+\epsilon)\delta$. According to \Cref{lem: discrete matching}, for every $v_i \in [v_{i_{\text{suf}}}, v_{a_{k+1}}]$, we can retrieve $\Msuf(v_i)$ of $\zeta_k^{\text{suf}}$ in $O(1)$ time. Initialize $\+S$ to be empty. For every $v_i\in \+{DA}_k^{b_l}$, we insert $\Msuf(v_i)$ into $\+S$. We finish constructing $\+{S}$ in $O(\mu_1)$ time.

	%For all $i \in [i_{\text{suf}}, v_{a_{k+1}}]$, we set $\+S_{\Msuf(v_i)} = 1$ if $\+{DA}_{v_i}^{w_{b_l}} = 1$. We set all other elements in $\+{S}$ to be $0$. 

	Let $\+{S}'$ be an emptyset, we invoke {\sc DisWave}($\suf$, $\sigma_l$, $(2+\epsilon)\delta$, $\+{S}$, $\+{S}'$). Let $\+{DW}^{\bar{v}_a}$ be the output set for the last vertex $\bar{v}_a$ of $\suf$. Set $\+{I}^2 = \+{DW}^{\bar{v}_a}$.  We construct $\+{I}^2$ in $O(\mu_2^2 + \mu_1)$ time. %The array $\+I^2$ satisfies the following property.

	\begin{lemma}~\label{lem: Dis I_2}
    	We can construct $\+I^2$ in $O(\mu_1 + \mu_2^2)$ time such that $\+I^2$ contains all vertices of type 2, and every vertex $w_j \in \+I^2$ covered forms a $((7+\varepsilon)\delta)$-reachable pair with $v_{a_{k+1}}$. 
	\end{lemma}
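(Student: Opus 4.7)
The plan is to mirror the argument used for the continuous analogue (Lemma~\ref{lem: I_2}), adapting it to the discrete setting with the vertex-to-vertex matching $\Msuf$. The running time is immediate from the construction: building $\+S$ requires one $O(1)$ retrieval $\Msuf(v_i)$ per vertex $v_i \in \+{DA}_k^{b_l}$, giving $O(\mu_1)$ by Lemma~\ref{lem: discrete matching}, while the single call to {\sc DisWave} on two curves of $O(\mu_2)$ vertices takes $O(\mu_2^2)$ time, for a total of $O(\mu_1+\mu_2^2)$.

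For containment, I would take an arbitrary vertex $w_j$ of type~2. By definition there exists $v_{i'} \in \+{DA}_k^{b_l}$ with $\tilde{d}_F(\tau[v_{i'}, v_{a_{k+1}}], \sigma[w_{b_l}, w_j]) \le \delta$. Since $\sigma[w_{b_l}, w_j]$ has at most $\mu_2 + 1$ vertices and realizes distance $\delta$ from $\tau[v_{i'}, v_{a_{k+1}}]$, Lemma~\ref{lem: discrete simplification} applied to $\tau[v_{i'}, v_{a_{k+1}}]$ would return a curve of at most $\mu_2+1$ vertices; by the maximality in the choice of $i_{\text{suf}}$ this forces $v_{i'} \in \tau[v_{i_{\text{suf}}}, v_{a_{k+1}}]$. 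Hence $\Msuf(v_{i'})$ is a well-defined vertex of $\suf$, and by construction $\Msuf(v_{i'}) \in \+S$. Using $d_{\Msuf}(\tau[v_{i_{\text{suf}}}, v_{a_{k+1}}], \suf) \le (1+\epsilon)\delta$ together with the triangle inequality, $\tilde{d}_F(\suf[\Msuf(v_{i'}), \bar{v}_a], \sigma[w_{b_l}, w_j]) \le (2+\epsilon)\delta$. The specification of {\sc DisWave} then places $w_j$ into the output set $\+{DW}^{\bar{v}_a} = \+I^2$.

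For the reachability direction, take any $w_j \in \+I^2$. Because $\+S'$ is empty, the specification of {\sc DisWave} yields some $\bar{x} \in \+S$ with $\tilde{d}_F(\suf[\bar{x}, \bar{v}_a], \sigma[w_{b_l}, w_j]) \le (2+\epsilon)\delta$. By the construction of $\+S$, write $\bar{x} = \Msuf(v_y)$ for some $v_y \in \+{DA}_k^{b_l}$; the triangle inequality through $\Msuf$ gives $\tilde{d}_F(\tau[v_y, v_{a_{k+1}}], \sigma[w_{b_l}, w_j]) \le (3+2\epsilon)\delta$. Since $\+{DA}_k^{b_l}$ is $(7+\varepsilon)$-approximate reachable for $w_{b_l}$, we have $\tilde{d}_F(\tau[v_1, v_y], \sigma[w_1, w_{b_l}]) \le (7+\varepsilon)\delta$. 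Concatenating the two discrete matchings at the shared vertex $w_{b_l}$ produces a matching of $\tau[v_1, v_{a_{k+1}}]$ and $\sigma[w_1, w_j]$ of cost $\max\{(7+\varepsilon)\delta,(3+2\epsilon)\delta\} = (7+\varepsilon)\delta$ since $\epsilon = \varepsilon/10$, certifying that $(v_{a_{k+1}},w_j)$ is $((7+\varepsilon)\delta)$-reachable.

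The only delicate point is the first step of containment: confirming that the relevant $v_{i'}$ really lies inside the suffix for which $\suf$ is a valid approximation. This is exactly where the maximality in the definition of $i_{\text{suf}}$ pays off, and it is a clean analogue of the continuous argument. Everything else is bookkeeping with the triangle inequality and the definitions of {\sc DisWave} and $\Msuf$.
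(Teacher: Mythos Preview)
Your proof is correct and follows essentially the same approach as the paper's own proof: both establish containment via the triangle inequality through $\Msuf$ and the specification of {\sc DisWave}, and both establish reachability by pulling a witness $\bar{x}\in\+S$ back to a vertex $v_y\in\+{DA}_k^{b_l}$ and concatenating matchings. Your treatment is in fact slightly more careful than the paper's in one place: you explicitly argue, via the maximality of $i_{\text{suf}}$ and Lemma~\ref{lem: discrete simplification}, that the type-2 witness $v_{i'}$ must lie inside $\tau[v_{i_{\text{suf}}},v_{a_{k+1}}]$ so that $\Msuf(v_{i'})$ is well-defined, whereas the paper's proof leaves this implicit.
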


	\begin{proof}
		The running time follows the construction procedure. We focus on proving $\+I^2$'s property. 

		By definition, if $w_j$ belongs to type 2, there exists a vertex $v_{i'}\in \+{DA}^{b_l}_k$ such that $\tilde{d}_F(\tau[v_{i'}, v_{a_{k+1}}],$ $\sigma[w_{b_l}, w_j]) \leq \delta$.
		Let the last vertex of $\suf$ be $\bar{v}_a$, we have $\Msuf(v_{a_{k+1}}) = \bar{v}_a$.
		According to \Cref{lem: discrete simplification}, $\tilde{d}_F(\tau[v_{i'}, v_{a_{k+1}}], \suf[\Msuf(v_{i'}), \bar{v}_a]) \leq (1+\epsilon)\delta$. Then $\tilde{d}_F(\sigma[w_{b_l}, w_j],\suf[\Msuf(v_{i'}), \bar{v}_a]) \leq (2+\epsilon)\delta$ by triangle inequality. According to the construction of $\+{S}$, $\Msuf(v_{i'})$ is covered by $\+{S}$. Thus, $w_j\in \+{I}^2$ by the definition of {\sc DisWave}.

		Next, we prove that every vertex $w_j\in \+{I}^2$ can form a $((7 + \varepsilon) \delta)$-reachable pair with $v_{a_{k+1}}$. Since $\+{S}'$ is empty, there exists a vertex $p\in \+{S}$ such that $\tilde{d}_F(\sigma[w_{b_l}, w_j],\suf[p, \bar{v}_a]) \leq (2+\epsilon)\delta$ by the definition of {\sc DisWave}. 
		% Note that $\Msuf(p)$ is covered by $\+{DA}^{w_{b_l}}_k$, according to the construction procedure of $\+{S}$.
		According to \Cref{lem: discrete simplification}, we have $\tilde{d}_F(\tau[\Msuf(p), v_{a_{k+1}}], \zeta_k^{\text{suf}}[p, \bar{v}_a]) \leq (1 + \epsilon)\delta$. By triangle inequality, we have $\tilde{d}_F(\tau[\Msuf(p), v_{a_{k+1}}], \sigma[w_{b_{l}}, w_j]) \leq (3 + 2 \epsilon)\delta < (7+\varepsilon)\delta$. According to the procedure of constructing $\+{S}$, $\Msuf(p)$ is covered by $\+{DA}^{b_l}_{k}$, which means $\tilde{d}_F(\tau[v_1, \Msuf(p)], \sigma[w_1, w_{b_l}]) \leq (7+\varepsilon)\delta$. We can construct a matching between $\tau[v_1, v_{a_{k+1}}]$ and $\sigma[w_1, w_j]$ by concatenating the discrete Fr\'echet matching between $\tau[\Msuf(p), v_{a_{k+1}}]$ and $\sigma[w_{b_l},w_j]$ to the discrete Fr\'echet matching between $\tau[v_1, \Msuf(p)]$ and $\sigma[w_1, w_{b_l}]$. The matching realizes a distance at most $(7+\varepsilon)\delta$. Thus, we have $(v_{a_{k+1}}, w_j)$ is a $(7 + \varepsilon)$-reachable pair.  
	\end{proof}

	\noindent\textbf{Construction of $\+I^3$.} Let $\+{S}$ be an emptyset. We invoke {\sc DisWave}($\pre$, $\sigma_l$, $(2+\epsilon)\delta$, $\+{S}$, $\+{DA}^{a_k}_{l}$). Let $\+{DW}^{w_{b_{l+1}}}$ be the output set for $w_{b_{l+1}}$. For all $i \in [a_{k}, i_{\text{pre}}]$, insert $v_i$ into $\+I^3$ if $\Mpre(v_i)\in \+{DW}^{w_{b_{l+1}}}$. It takes $O(\mu_2^2 + \mu_1)$ time. %The array $\+I^3$ satisfies the following property.

	\begin{lemma}~\label{lem: Dis I_3}
    	We can construct $\+I^3$ in $O(\mu_2^2+\mu_1)$ time such that $\+I^3$ contains all vertices of type 3, and every vertex $v_i$ covered by  $\+I^3$ forms a $((7+\varepsilon)\delta)$-reachable pair with $w_{b_{l+1}}$. 
	\end{lemma}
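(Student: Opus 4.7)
The plan is to mirror the structure of Lemma~\ref{lem: I_3} (the continuous counterpart) using the discrete ingredients developed above. The running time is immediate: the call to {\sc DisWave}$(\pre,\sigma_l,(2+\epsilon)\delta,\+S,\+{DA}^{a_k}_l)$ costs $O(\mu_2^2)$ since $|\pre|\le \mu_2+1$ and $|\sigma_l|=\mu_2+1$, and walking through $i\in[a_k,i_{\text{pre}}]$ together with the $O(1)$-time lookup of $\Mpre(v_i)$ guaranteed by Lemma~\ref{lem: discrete matching} contributes an extra $O(\mu_1)$. So all the work is in showing the two set-inclusions.

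For containment of type-3 vertices, I would start from the definition: there is $w_{j'}\in \+{DA}^{a_k}_l$ with $\tilde{d}_F(\tau[v_{a_k},v_i],\sigma[w_{j'},w_{b_{l+1}}])\le \delta$. Since $\sigma[w_{j'},w_{b_{l+1}}]$ has at most $\mu_2+1$ vertices, a discrete matching argument (identical in spirit to the one used in Lemma~\ref{lem: marked-vertex}) forces $v_i\in\tau[v_{a_k},v_{i_{\text{pre}}}]$; otherwise the maximality of the prefix selected by Lemma~\ref{lem: discrete simplification} would be violated. Thus $\Mpre(v_i)$ is defined. Applying $\tilde{d}_F(\tau[v_{a_k},v_i],\pre[\tilde{v}_1,\Mpre(v_i)])\le (1+\epsilon)\delta$ and the triangle inequality yields $\tilde{d}_F(\pre[\tilde{v}_1,\Mpre(v_i)],\sigma[w_{j'},w_{b_{l+1}}])\le (2+\epsilon)\delta$. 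Since $w_{j'}\in\+{DA}^{a_k}_l$ and $\+S=\emptyset$, the defining property of {\sc DisWave} then puts $\Mpre(v_i)$ into $\+{DW}^{w_{b_{l+1}}}$, so the construction inserts $v_i$ into $\+I^3$.

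For the reachability guarantee, take any $v_i\in \+I^3$. Then $\Mpre(v_i)\in \+{DW}^{w_{b_{l+1}}}$, so by the output property of {\sc DisWave} with empty $\+S$ there exists $w_{j'}\in \+{DA}^{a_k}_l$ with $w_{j'}\le_\sigma w_{b_{l+1}}$ and $\tilde{d}_F(\pre[\tilde{v}_1,\Mpre(v_i)],\sigma[w_{j'},w_{b_{l+1}}])\le (2+\epsilon)\delta$. Combining with $\tilde{d}_F(\tau[v_{a_k},v_i],\pre[\tilde{v}_1,\Mpre(v_i)])\le (1+\epsilon)\delta$ via the triangle inequality gives $\tilde{d}_F(\tau[v_{a_k},v_i],\sigma[w_{j'},w_{b_{l+1}}])\le (3+2\epsilon)\delta$. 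The $(7+\varepsilon)$-approximate reachability of $\+{DA}^{a_k}_l$ for $v_{a_k}$ supplies $\tilde{d}_F(\tau[v_1,v_{a_k}],\sigma[w_1,w_{j'}])\le (7+\varepsilon)\delta$. Concatenating these two discrete matchings at the pair $(v_{a_k},w_{j'})$ produces a matching between $\tau[v_1,v_i]$ and $\sigma[w_1,w_{b_{l+1}}]$ of distance at most $(7+\varepsilon)\delta$, using $\epsilon=\varepsilon/10$ to absorb the $3+2\epsilon$ term. Hence $(v_i,w_{b_{l+1}})$ is $((7+\varepsilon)\delta)$-reachable.

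The only subtle step is the first one in the second paragraph, confirming that every type-3 vertex is actually inside $\tau[v_{a_k},v_{i_{\text{pre}}}]$ so that $\Mpre(v_i)$ exists; the rest is a routine discrete adaptation of Lemma~\ref{lem: I_3}.
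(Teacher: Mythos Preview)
Your proposal is correct and follows essentially the same approach as the paper's proof: invoke {\sc DisWave} on $\pre$ and $\sigma_l$, use the matching $\Mpre$ and the triangle inequality twice, and concatenate with the $(7+\varepsilon)$-approximate reachability of $\+{DA}^{a_k}_l$. You even make explicit the point that a type-3 vertex must lie in $\tau[v_{a_k},v_{i_{\text{pre}}}]$ (so that $\Mpre(v_i)$ is defined), which the paper's discrete proof leaves implicit though its continuous analogue (Lemma~\ref{lem: I_3}) spells it out.
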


	\begin{proof}
		The running time follows the construction procedure. We focus on proving $\+I^3$'s property. 

		Take a vertex $v_i$ of type 3, there is a vertex $w_{j'}\in \+{DA}^{a_k}_l$ such that $\tilde{d}_F(\tau[v_{a_k}, v_i], \sigma[w_{j'}, w_{b_{l+1}}]) \leq \delta$. According to \Cref{lem: discrete simplification}, we have $\tilde{d}_F(\tau[v_{a_k}, v_i], \pre[\Mpre(v_{a_k}), \Mpre(v_i)]) \leq (1+\epsilon)\delta$. By triangle inequality, we have $\tilde{d}_F(\zeta_k^{\text{pre}}[\Mpre(v_{a_k}), \Mpre(v_{i})], \sigma[w_{j'}, w_{b_{l+1}}]) \leq (2+\epsilon)\delta$. By the definition of {\sc DisWave}, $\Mpre(v_i)$ must belong to $\+{DW}^{w_{b_{l+1}}}$. Hence, $v_i\in \+I^3$.
		
		%According to \Cref{lem: DisWaveFront proverty}, we have $\+{DW}_{\Mpre(v_i)}^{w_{b_{l+1}}} = 1$. Thus, we have $\+{I}^3_{v_i} = 1$.

		Next, we prove that every vertex $v_i\in \+{I}^3$ can form a  $((7+\varepsilon)\delta)$-reachable pair with $w_{b_{l+1}}$. By the contruction procedure, we have $\Mpre(v_i)\in \+{DW}^{w_{b_{l+1}}}$. Since $\+S$ is empty, there exists a vertex $w_{j'}\in \+{DA}^{a_k}_l$ such that $\tilde{d}_{F}(\pre[\Mpre(v_{a_k}), \Mpre(v_i)], \sigma[w_{j'}, w_{b_{l+1}}]) \leq (2+\epsilon)\delta$ by the definition of {\sc DisWave}. According to \Cref{lem: discrete simplification}, we have $\tilde{d}_F(\tau[v_{a_k}, v_i], \pre[\Mpre(v_{a_k}), \Mpre(v_i)]) \leq (1+\epsilon)\delta$. By triangle inequality, we have $\tilde{d}_F(\tau[v_{a_{k}}, v_i], \sigma[w_{j'}, w_{b_{l+1}}]) \leq (3+2\epsilon)\delta < (7+\varepsilon)\delta$. Since $w_{j'}\in \+{DA}^{a_k}_l$, we have $\tilde{d}_F(\tau[v_1, v_{a_k}], \sigma[w_1, w_{j'}]) \leq (7+\varepsilon)\delta$. We can construct a matching between $\tau[v_1, v_i]$ and $\sigma[w_1, w_{b_{l+1}}]$ by concatenating the discrete Fr\'echet matching between $\tau[v_{a_k}, v_{i}]$ and $\sigma[w_{j'}, w_{b_{l+1}}]$ to that between $\tau[v_1, v_{a_k}]$ and $\sigma[w_1, w_{j'}]$. The matching realizes a distance at most $(7+\varepsilon)\delta$. Thus, we have $(v_i, w_{b_{l+1}})$ is a $(7+\varepsilon)$-reachable pair.
	\end{proof}

	\noindent\textbf{Construction of $\+I^4$.} We divide $\sigma_l$ into a collection $(\sigma_{l,1}, \sigma_{l,2},\ldots,\sigma_{l, (\mu_2-1)/\mu_3})$ of subcurves of $\mu_3$ edges such that $\sigma_{l,r}=\sigma[w_{b_{l,r}}, w_{b_{l,r+1}}]$, where $b_{l,r}=b_l+(r-1)\mu_3+1$ for $r\in [(\mu_2-1)/\mu_3]$. For every $\sigma_{l,r}$ and a constant $c>0$ whose value will be specified later, we sample a set of $2c\log n\cdot \mu_1/\omega$ vertices of $\tau_k$ independently with replacement. We then use \Cref{lem: marked-vertex} with $\sigma_{l,r}$ and every sampled vertex. It takes $O(\mu_2^5\mu_1\cdot\log n/(\omega\mu_3))$ time. In the case that we succeed in finding $\tau'_{l,r}$ for all $\sigma_{l,r}$ via sampling, we get $(\tau'_{l,r})_{r\in [(\mu_2-1)/\mu_3]}$ as a surrogate of $\sigma_l$.. 

	Otherwise, take some $\sigma_{l,r}$ for which we fail to get a $\tau'_{l,r}$. It implies that no sampled vertex is marked by $\sigma_{l,r}$. By Lemma~\ref{lem:Chernoff}, $\sigma_{l,r}$ is not $\omega$-dense with probability at least $1-n^{-10}$ for sufficiently large $c$. Conditioned on that $\sigma_{l,r}$ is not $\omega$-dense. Let $\+S$ be a set that contains all $\tau_k$'s vertices $v_i$ with $v_i \in \+{B}(w_{b_{l, r}}, \delta)$.
	Let $\+{S}'$ be an emptyset.
	We invoke {\sc DisWave}$(\tau_k, \sigma_{l,r}, \delta, \+S, \+{S}')$ to get the output set $\+{DW}^{w_{b_{l, r+1}}}$ for the vertex $w_{b_{l, r+1}}$ in $O(\mu_1\mu_3)$ time. %Let $V$ contain all vertices $v_i$ such that $\+{DW}^{w_{b_{l, r+1}}}_{v_i} = 1$. 
	By definition, every vertex in $\+{DW}^{w_{b_{l, r+1}}}$ is marked by $\sigma_{l,r}$. Hence, its size is at most $\omega$. Moreover, $\+{DW}^{w_{b_{l, r+1}}}$ must contain a vertex marked by $\sigma_l$ because the discrete Fr\'echet matching between $\sigma_l$ and any subcurve of $\tau_k$ must matches $\sigma_{l,r}$ to some subcurve of $\tau_k$. We run the algorithm in \Cref{lem: marked-vertex} on $\sigma_l$ and every vertex in $\+{DW}^{w_{b_{l, r+1}}}$ to find $\tau'$. It takes $O(\omega\mu_2^4)$ time.

	With probability at least $1-n^{-10}$, we either find a subcurve $\tau'$ of $\tau_k$ with $\tilde{d}_F(\tau', \sigma_l)\le (3+2\epsilon)\delta$ or subcurves $\tau'_{l,r}$ of $\tau_k$ for all $r\in [(\mu_2-1)/\mu_3]$ such that $\tilde{d}_F(\tau'_{l,r}, \sigma_{l,r})\le (3+2\epsilon)\delta$. When we get $\tau'$, we execute a query {\sc DisCover}$(\tau', (4+2\epsilon)\delta, \+{DA}_k^{b_l})$ and assign the answer to $\+I^4$ in $O(\mu_1)$ time. 

	When we get $\tau'_{l,r}$ for all $\sigma_{l,r}$'s, we execute {\sc DisCover} queries with $\tau'_{l,1}, \tau'_{l,2}, \ldots, \tau'_{l, (\mu_2-1)/\mu_3}$ progressively. Specifically, we first carry out a query {\sc DisCover}$(\tau'_{l,1}, (4+2\epsilon)\delta, \+{DA}_k^{wb_l})$ in $O(\mu_1)$ time to get an array $\+S^1$. For any $r\in [2, (\mu_2-1)/\mu_3]$, suppose that we have gotten $\+S^{r-1}$, we proceed to execute a query {\sc DisCover}$(\tau'_{l,r}, (4+2\epsilon)\delta, \+S^{r-1})$ to get $\+S^r$. In the end, we set $\+I^4=\+S^{(\mu_2-1)/\mu_3}$. It takes $O(\mu_1\mu_2/\mu_3)$ time. %Thus, we get the flowing lemma.

	\begin{lemma}\label{lem: Dis I_4}
    	We can construct $\+I^4$ in  $O(\mu_1(\mu_3+\mu_2/\mu_3+\mu_2^5\log n/(\omega\mu_3))+\omega\mu_2^4)$  time such that $\+I^4$ contains all vertices of type 4 with probability at least $1-n^{-10}$, and every vertex $v_i\in \+I^4$ forms a $((7+\varepsilon)\delta)$-reachable pair with $w_{b_{l+1}}$.
	\end{lemma}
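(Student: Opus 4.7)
The plan is to mirror the structure of Lemma~\ref{lem: I_4} from the continuous case, adapting each step to the discrete setting. The running time bound follows by tallying the costs laid out in the construction: sampling $O(\log n\cdot\mu_1/\omega)$ vertices for each of the $(\mu_2-1)/\mu_3$ sub-subcurves and running Lemma~\ref{lem: marked-vertex} on every pair contributes $O(\mu_2^5\mu_1\log n/(\omega\mu_3))$; if some $\sigma_{l,r}$ is not $\omega$-dense, the fallback {\sc DisWave} call costs $O(\mu_1\mu_3)$, and the subsequent at most $\omega$ invocations of Lemma~\ref{lem: marked-vertex} cost $O(\omega\mu_2^4)$; finally, the chain of $(\mu_2-1)/\mu_3$ {\sc DisCover} queries, each taking $O(\mu_1)$ time by Lemma~\ref{lem:dis-cover}, contributes $O(\mu_1\mu_2/\mu_3)$.

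For the completeness direction, suppose $v_i$ is of type 4, so there is $v_{i'}\in \+{DA}_k^{b_l}$ with $\tilde{d}_F(\tau[v_{i'}, v_i], \sigma_l)\le \delta$. In particular, some subcurve of $\tau_k$ lies within discrete Fr\'echet distance $\delta$ from every $\sigma_{l,r}$, so every $\omega$-dense $\sigma_{l,r}$ has at least $\omega$ marked vertices; Lemma~\ref{lem:Chernoff} then guarantees that sampling hits a marked vertex for every such $\sigma_{l,r}$ with probability at least $1-n^{-10}$ after choosing $c$ large enough. If instead some $\sigma_{l,r}$ fails to be $\omega$-dense, the fallback {\sc DisWave} enumerates all of its marked vertices, at least one of which must be marked by $\sigma_l$ as well, so Lemma~\ref{lem: marked-vertex} produces a single surrogate $\tau'$ with $\tilde{d}_F(\tau',\sigma_l)\le (3+2\epsilon)\delta$. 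In the single-surrogate case, $\tilde{d}_F(\tau[v_{i'}, v_i], \tau')\le (4+2\epsilon)\delta$ by the triangle inequality, so the call to {\sc DisCover}$(\tau', (4+2\epsilon)\delta, \+{DA}_k^{b_l})$ places $v_i$ in $\+I^4$. In the multi-surrogate case I would show by induction on $r$ that $\+S^r$ contains every $v'\in \tau_k$ admitting some $v_x\in \+{DA}_k^{b_l}$ with $\tilde{d}_F(\tau[v_x, v'], \sigma[w_{b_l}, w_{b_{l,r+1}}])\le \delta$: any such discrete matching must pair $w_{b_{l,r}}$ with at least one vertex $v_z$ of $\tau[v_x,v']$, the inductive hypothesis places $v_z$ in $\+S^{r-1}$, and $\tilde{d}_F(\tau[v_z, v'], \tau'_{l,r})\le (4+2\epsilon)\delta$ lets {\sc DisCover} insert $v'$ into $\+S^r$.

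For the reachability direction, an analogous induction shows that every $v_i\in \+I^4$ admits $v_{i'}\in \+{DA}_k^{b_l}$ with $\tilde{d}_F(\tau[v_{i'}, v_i], \sigma_l)\le (1+\epsilon)(4+2\epsilon)\delta + (3+2\epsilon)\delta$ by combining the $(1+\epsilon)(4+2\epsilon)\delta$ slack furnished by {\sc DisCover} with the $(3+2\epsilon)\delta$ slack of each surrogate; with $\epsilon=\varepsilon/10$ this simplifies to $(7+\varepsilon)\delta$. Since $\+{DA}_k^{b_l}$ is $(7+\varepsilon)$-approximate reachable for $w_{b_l}$, there is a matching between $\tau[v_1, v_{i'}]$ and $\sigma[w_1, w_{b_l}]$ realizing distance at most $(7+\varepsilon)\delta$, which we concatenate with a matching witnessing $\tilde{d}_F(\tau[v_{i'}, v_i], \sigma_l)\le (7+\varepsilon)\delta$ to obtain a matching between $\tau[v_1, v_i]$ and $\sigma[w_1, w_{b_{l+1}}]$ of distance at most $(7+\varepsilon)\delta$, establishing the reachability claim. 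The main obstacle will be verifying the inductive factoring of the discrete matching through each dividing vertex $w_{b_{l,r}}$ cleanly, since a discrete matching does not decompose as transparently as a continuous parameterization, but the standard observation that any matching must associate $w_{b_{l,r}}$ with at least one vertex of $\tau_k$ is enough to carry the induction through.
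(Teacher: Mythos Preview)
Your proposal is correct and follows the paper's proof essentially step for step: the same running-time accounting, the same Chernoff/fallback dichotomy for locating surrogates, and the same two inductions on $r$ for completeness and reachability. The only slip is that {\sc DisCover} is \emph{exact} in the discrete setting (no $(1+\epsilon)$ blowup, unlike the continuous {\sc Cover}), so the paper obtains the tighter bound $(4+2\epsilon)\delta+(3+2\epsilon)\delta=(7+4\epsilon)\delta$ rather than your $(1+\epsilon)(4+2\epsilon)\delta+(3+2\epsilon)\delta$; with $\epsilon=\varepsilon/10$ both are at most $(7+\varepsilon)\delta$, so your argument still goes through.
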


	\begin{proof}
		The running time follows the construction procedure. We focus on proving $\+I^4$'s property. 

		Suppose that there is a vertex $v_i$ of type 4. It implies that $\sigma_l$ is within a discrete Fr\'echet distance $\delta$ of some subcurve of $\tau_k$. Hence, with probability at least $1-n^{-10}$, we can either find a single subcurve $\tau'$ of $\tau_k$ such that $\tilde{d}_F(\tau', \sigma_l)\le (3+2\epsilon)\delta$ or subcurves $\tau'_{l,r}$ of $\tau_k$ for every $r\in [(\mu_2-1)/\mu_3]$ such that $\tilde{d}_F(\tau'_{l,r}, \sigma_{l,r})\le (3+2\epsilon)\delta$.

		When we get a single subcurve $\tau'$ of $\tau_k$ such that $\tilde{d}(\tau', \sigma_l) \leq (3+2\epsilon)\delta$. We execute a query {\sc DisCover}$(\tau', (4+2\epsilon)\delta, \+{DA}_k^{b_l})$ and assign the solution to $\+I^4$. Suppose $v_i$ belongs to type 4, according to the definition, there is a vertex $v_{i'} \in \tau_k$ covered by $\+{DA}^{b_l}_k$ such that $v_{i'} \leq_\tau v_i$ and $\tilde{d}_F(\tau[v_{i'}, v_i], \sigma_l) \leq \delta$. Since $\tilde{d}_F(\tau', \sigma_l) \leq (3 + 2\epsilon)\delta$, we have $\tilde{d}_F(\tau', \tau[v_{i'}, v_i]) \leq (4+2\epsilon)\delta$ by triangle inequality. Thus, $v_i\in \+{I}^4$ by the definition of {\sc DisCover}. 
    
		When we get subcurves $\tau'_{l, r}$ of $\tau_k$ for all $r \in [(\mu_2 - 1) / \mu_3]$, we will construct a sequence $\+S^1, \+S^2,\ldots, $ $\+S^{(\mu_2-1)/\mu_3}$ and set $\+I^4=\+S^{(\mu_2-1)/\mu_3}$.
		We prove that $\+S^r$ contains all vertices $v_y\in \tau_k$ satisfying that there is a vertex $v_x\in \+{DA}^{b_l}_{k}$ with $v_x\le_\tau v_y$ and $d_F(\tau[v_x, v_y], \sigma[w_{b_l}, w_{b_l, r+1}])\le \delta$ by induction on $r$. 

		When $r=1$, the analysis is the same as the analysis in the case where we find $\tau'$. When $r\ge 2$, assume that $\+S^{r-1}$ satisfies the property. Take any vertex $v_y\in \tau_k$ satisfying that there is a vertex $v_x\in \+{DA}^{b_l}_k$ with $v_x\le_\tau v_y$ and $\tilde{d}_F(\tau[v_x, v_y], \sigma[w_{b_l}, w_{b_l, r+1}])\le \delta$.
		The discrete Fr\'echet matching between $\tau[v_x, v_y]$ and $\sigma[w_{b_l}, w_{b_l, r+1}]$ must matching the vertex $w_{b_l, r}$ to some vertex $v_z\in \tau[v_x, v_y]$. It implies that $\tilde{d}_F(\tau[v_x,v_z], \sigma[w_{b_l}, w_{b_l,r}])\le \delta$ and $\tilde{d}_F(\tau[v_z, v_y], \sigma[w_{b_l, r}, w_{b_l, r+1}])\le \delta$. By the induction hypothesis, the vertex $v_z\in \+S^{r-1}$. By the construction procedure, $\+S^r$ is the answer for {\sc DisCover}$(\tau'_{l,r}, (4+2\epsilon)\delta, \+S^{r-1})$. Since $\sigma_{l,r}=\sigma[w_{b_l,r}, w_{b_l, r+1}]$ is within a discrete Fr\'echet distance $(3+2\epsilon)\delta$ to $\tau'_{l,r}$, by the triangle inequality, $\tilde{d}_F(\tau'_{l,r}, \sigma_{l,r})\le (4+2\epsilon)\delta$. Hence, $v_y$ must belong to $\+S^r$ by the definition of the query {\sc DisCover} as $v_z$ is covered by $\+S^{r-1}$. We finish proving the property for all $\+S^r$'s. Since $\+I^4=\+S^{(\mu_2-1)/\mu_3}$ and $w_{b_l, (\mu_2-1)/\mu_3+1}=w_{b_{l+1}}$, all vertices belongs to type 4 are contained by $\+I^4$.

		Next, we prove that all points covered by $\+I^4$ can form $((7+\varepsilon)\delta)$-reachable pairs with $w_{b_{l+1}}$.
		In the case that we find $\tau'$, $\+I^4$ is the answer of the query {\sc DisCover}$(\tau', (4+2\epsilon)\delta, \+{DA}^{b_l}_{k})$.
		For all vertex $v_i\in \+{I}^4$, there is a vertex $v_{i'} \leq_\tau v_i$ covered by $\+{DA}_{k}^{b_l}$ such that $\tilde{d}_F(\tau', \tau[v_{i'}, v_i]) \leq (4+2\epsilon)\delta$. Since $\epsilon = \varepsilon/10$, by triangle inequality, we have $\tilde{d}_F(\tau[v_{i'}, v_i], \sigma_l) \leq (7+4\epsilon)\delta < (7+\varepsilon)\delta$.
		Since $v_{i'}\in \+{DA}_{k}^{b_l}$, we have $\tilde{d}_F(\tau[v_1, v_{i'}], \sigma[w_1, w_{b_l}]) \leq (7+\varepsilon)\delta$. We can construct a matching between $\tau[v_1, v_i]$ and $\sigma[w_1, w_{b_{l+1}}]$ by concatenating the discrete Fr\'echet matching between $\tau[v_{i'}, v_i]$ and $\sigma_l$ to the discrete Fr\'echet matching between $\tau[v_1, v_{i'}]$ and $\sigma[w_1, w_{b_l}]$. The matching realizes a distance at most $(7+\varepsilon)\delta$. Thus, we have $(v_{i}, w_{b_{l+1}})$ is a $((7 + \varepsilon)\delta)$-reachable pair.

		When we get $\tau'_{l,1}, \tau'_{l,2}, \ldots, \tau'_{l, (\mu_2-1)/\mu_3}$, we will construct a sequence $\+S^1, \+S^2,\ldots, \+S^{(\mu_2-1)/\mu_3}$ and set $\+I_4=\+S^{(\mu_2-1)/\mu_3}$. We prove that  all vertices $v_y$ in $\+S^r$ can form $((7+\varepsilon)\delta)$-reachable pairs with $w_{b_l, r+1}$ by induction on $r$.

		When $r=1$, the analysis is the same as the analysis in the case where we find $\tau'$. When $r\ge 2$, assume that $\+S^{r-1}$ satisfies the property. Take any vertex $v_y\in \+S^r$. Since $\+S^r$ is the answer for {\sc DisCover}$(\tau'_{l,r},(4+2\epsilon)\delta,\+S^{r-1})$, by the definition of the query {\sc DisCover}, we can find a vertex $v_x\in \+S^{r-1}$ such that $v_x\le_\tau v_y$ and $\tilde{d}_F(\tau[v_x,v_y], \tau'_{l,r}) \le (4+2\epsilon)\delta$. Given that $\tilde{d}_F(\tau'_{l,r}, \sigma_{l,r})\le (3+2\epsilon)\delta$, we have $\tilde{d}_F(\tau[v_x,v_y], \sigma_{l,r})\le (7+4\epsilon)\delta$ by the triangle inequality. Note that $\epsilon=\varepsilon/10$. Hence, $\tilde{d}_F(\tau[v_x,v_y], \sigma_{l,r})\le (7+\varepsilon)\delta$. By the induction hypothesis, it holds that $\tilde{d}_F(\tau[v_1, v_x], \sigma[w_1, w_{b_l,r}])\le (7+\varepsilon)\delta$. Hence, we can construct a matching between $\tau[v_1, v_y]$ and $\sigma[w_1, w_{b_l, r+1}]$ by concatenating the discrete Fr\'echet matching between $\tau[v_1, v_x]$ and $\sigma[w_1, w_{b_l, r}]$ and the discrete Fr\'echet matching between $\tau[v_x,v_y]$ and $\sigma_{l,r}$. The matching realizes a distance at most $(7+\varepsilon)\delta$. Therefore, $\tilde{d}_F(\tau[v_1, v_y], \sigma[w_1, w_{b_l, r+1}])\le (7+\varepsilon)\delta$. We finish proving the property for all $\+S^r$'s.  Since $\+I^4=\+S^{(\mu_2-1)/\mu_3}$ and $w_{b_l, (\mu_2-1)/\mu_3+1}=w_{b_{l+1}}$, all vertices in $\+I^4$ can form $((7+\varepsilon)\delta)$-reachable pairs with $w_{b_{l+1}}$.
	\end{proof}

	%Next, we construct $\+{DA}^{v_{a_{k+1}}}_{l}$ based on $\+I^1$ and $\+I^2$ and construct $\+{DA}_{k}^{w_{b_{l+1}}}$ based on $\+I^3$ and $\+I^4$. For every $j\in [b_l, b_{l+1}]$, we set $\+{DA}^{v_{a_{k+1}}}_{w_j} = 0$ if both $\+I^1_{w_j}$ and $\+I^2_{w_j}$ are $0$. Otherwise, we set $\+{DA}^{v_{a_{k+1}}}_{w_j} = 1$ . 
	%We can construct $\+{DA}_{k}^{w_{b_{l+1}}}$ in a similar way. 
	Finally, we can set $\+{DA}^{a_{k+1}}_{l}$ to the union of $\+I^1$ and $\+I^2$, and set $\+{DA}_{k}^{b_{l+1}}$ to be the union of $\+I^3$ and $\+I^4$.
	It takes $O(\mu_1)$ time in total. We finish implementing {\sc DisReach}. 
	
	Putting Lemma~\ref{lem: marked-vertex},~\ref{lem:dis-cover},~\ref{lem: Dis I_1},~\ref{lem: Dis I_2},~\ref{lem: Dis I_3} and~\ref{lem: Dis I_4} together, we have the following lemma. We treat both $\varepsilon$ and $d$ as fixed constants.

	\begin{lemma}\label{lem: Disreach}
    	There is an algorithm that preprocesses every $\tau_k$ in $O(\mu_1^5)$time to implement the procedure {\sc DisReach} in $O(\mu_1(\mu_3+\mu_2/\mu_3+\mu_2^5\log n/(\omega\mu_3))+\omega\mu_2^4)$ time with success probability at least $1-n^{-10}$.
	\end{lemma}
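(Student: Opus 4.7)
The plan is to assemble Lemma~\ref{lem: Disreach} by aggregating the already-established bounds on the preprocessing cost, the per-query construction cost of each $\+I^r$, the correctness of the combined output, and the success probability of the sampling step. The main structural fact we rely on is the type-classification observation (the discrete analogue of Lemma~\ref{lem:cover1}): any vertex of $\sigma_l$ that is $\delta$-reachable in the true sense with $v_{a_{k+1}}$ must belong to T1 or T2, and any vertex of $\tau_k$ that is $\delta$-reachable with $w_{b_{l+1}}$ must belong to T3 or T4. Given this, setting $\+{DA}^{a_{k+1}}_l = \+I^1 \cup \+I^2$ and $\+{DA}_k^{b_{l+1}} = \+I^3 \cup \+I^4$ is automatically $(7+\varepsilon)$-approximate reachable for $v_{a_{k+1}}$ and $w_{b_{l+1}}$ respectively, since each $\+I^r$ is already shown to cover its corresponding type and every vertex it contains is $((7+\varepsilon)\delta)$-reachable with the relevant endpoint.

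For the preprocessing, I would sum the costs accrued in the two paragraphs labelled ``Preprocessing for $\+I^1, \+I^2, \+I^3$'' and ``Preprocessing for $\+I^4$'': constructing $\zeta_k$, $\suf$, $\pre$ and the associated matchings $\Mpre$, $\Msuf$ via Lemma~\ref{lem: discrete simplification} and Lemma~\ref{lem: discrete matching} costs $O(\mu_1^2 \log \mu_1)$ for the simplifications plus $O(\mu_1 \mu_2)$ for the matchings; computing $\bar{\zeta}_i, \tilde{\zeta}_i, \bar{\+M}_i, \tilde{\+M}_i$ for every vertex $v_i$ of $\tau_k$ (Lemma~\ref{lem: marked-vertex}) costs $O(\mu_1^3 \log \mu_1)$ for the simplifications plus $O(\mu_1^2 \cdot \mu_1 \mu_2) = O(\mu_1^3 \mu_2)$ for the matchings; and the {\sc DisCover} data structure of Lemma~\ref{lem:dis-cover} costs $O(\mu_1^5)$. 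Treating $\varepsilon$ and $d$ as constants, the dominating term is $O(\mu_1^5)$.

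For the query time, I would simply sum the bounds from Lemmas~\ref{lem: Dis I_1}, \ref{lem: Dis I_2}, \ref{lem: Dis I_3}, \ref{lem: Dis I_4}: $O(\mu_2^2)$, $O(\mu_1 + \mu_2^2)$, $O(\mu_1 + \mu_2^2)$, and $O(\mu_1(\mu_3 + \mu_2/\mu_3 + \mu_2^5 \log n / (\omega \mu_3)) + \omega \mu_2^4)$ respectively; the fourth dominates, giving the stated $O(\mu_1(\mu_3+\mu_2/\mu_3+\mu_2^5\log n/(\omega\mu_3))+\omega\mu_2^4)$ bound. The final $O(\mu_1)$ step of forming the two unions does not affect the asymptotics.

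For the success probability, only $\+I^4$ is randomized. By Lemma~\ref{lem: Dis I_4}, the event that $\+I^4$ covers all T4 vertices holds with probability at least $1-n^{-10}$, so the entire {\sc DisReach} call succeeds with probability at least $1-n^{-10}$. The only subtle point, and the one I expect to require the most care, is making sure the union $\+I^1 \cup \+I^2$ (resp.\ $\+I^3 \cup \+I^4$) inherits both halves of the $(7+\varepsilon)$-approximate reachability definition: completeness follows from the T1--T4 classification, while the per-vertex $((7+\varepsilon)\delta)$-reachability is inherited vertex-by-vertex from each individual $\+I^r$. Once these are aligned, the lemma follows directly.
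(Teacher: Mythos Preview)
Your proposal is correct and matches the paper's approach exactly: the paper simply states ``Putting Lemma~\ref{lem: marked-vertex},~\ref{lem:dis-cover},~\ref{lem: Dis I_1},~\ref{lem: Dis I_2},~\ref{lem: Dis I_3} and~\ref{lem: Dis I_4} together, we have the following lemma,'' and your write-up is a faithful, more detailed unpacking of that aggregation. (One harmless slip: the matching cost over all $v_i$ is $O(\mu_1\cdot\mu_1\mu_2)=O(\mu_1^2\mu_2)$, not $O(\mu_1^3\mu_2)$, but it is dominated by $O(\mu_1^5)$ either way.)
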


	We can set $\mu_1=m^{0.24}$, $\mu_2=m^{0.02}$, $\mu_3=m^{0.01}$, and $\omega=m^{0.12}$ to get the running time of {\sc DisReach} to be $o(\mu_1\mu_2)$.
	
	\noindent {\bf Dynamic programming for using \pmb{\sc DisReach.}} We show how {\sc DisReach} helps us realize a subquadratic decision procedure.
	We first compute all reachable pairs of $v_1$ and $w_1$ in $O(n+m)$ time. We then generate $\+{DA}^{1}_{l}$ and $\+{DA}^{1}_{k}$. For every $l\in [(m-1)/\mu_2]$ and every $j\in [b_l, b_{l+1}-1]$, set $\+{DA}^{1}_{w_j} = \+{DA}^{1}_{w_j}$. It takes $O(m)$ time in total. We can generate $\+{DA}^{1}_{k}$ for $w_1$ and all $k\in [(n-1)/\mu_1]$ in $O(n)$ time in the same way.

	We are now ready to invoke {\sc DisReach}$(\tau_1, \sigma_1, \+{DA}^{a_1}_{1}, \+{DA}_{1}^{b_1})$ to get $\+{DA}^{a_2}_{1}$ and $\+{DA}_{1}^{b_2}$. We proceed to invoke the procedure {\sc DisReach}$(\tau_2, \sigma_1, \+{DA}^{a_2}_{1}, \+{DA}_{2}^{b_1})$.  We can repeat the process for all $k\in [(n-1)/\mu_1]$ to get $\+{DA}_{k}^{b_2}$ of $w_{b_2}$ for all $k\in [(n-1)/\mu_1]$.

	We are now ready to invoke {\sc Reach}$(\tau_1, \sigma_2, \+{DA}^{a_1}_{2}, \+{DA}_{1}^{b_2})$. We can repeat the process above to get $\+{DA}_{k}^{b_3}$ of $w_{b_3}$ for all $k\in [(n-1)/\mu_1]$. In this way, we can finally get a $(7+\varepsilon)$-approximate reachability set for $w_m$ after calling {\sc DisReach} for $mn/(\mu_1\mu_2)$ times. Finally, we finish the decision by checking whether $\+{DA}^{m}_{v_n} = 1$ in $O(1)$ time. If so, return yes; otherwise, return no.

	As shown above, we can call {\sc DisReach} for $mn/(\mu_1\mu_2)$ times to get a decision procedure. It succeeds if all these invocations of {\sc DisReach} succeed. The total running time is $O(nm^{0.99})$.

	\begin{theorem}\label{thm:Dis Frechet}
    	Given two polygonal curves $\tau$ and $\sigma$ in $\mathbb{R}^d$ for some fixed $d$, there is a randomized $(7+\varepsilon)$-approximate decision procedure for determining $\tilde{d}_F(\tau, \sigma)$ in $O(nm^{0.99})$ time with success probability as least $1-n^{-7}$. 
	\end{theorem}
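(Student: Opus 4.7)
The plan is to mirror the dynamic programming used in Theorem~\ref{thm:Frechet}, substituting \pmb{\sc DisReach} for \pmb{\sc Reach} and \pmb{\sc DisWave} for \pmb{\sc WaveFront}. First I would compute $\+{DA}^{1}_l$ for every $l$ and $\+{DA}_{k}^{1}$ for every $k$ by running \pmb{\sc DisWave}$(\tau,\sigma,\delta,\{v_1\},\{w_1\})$ once in $O(nm)$--\emph{wait}, to stay within budget I instead compute only the $\delta$-reachable vertices adjacent to $v_1$ and $w_1$ in $O(n+m)$ time, producing starting sets that are trivially $(7+\varepsilon)$-approximate reachable for $v_{a_1}$ and $w_{b_1}$. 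Then for each $\tau_k$ I carry out the preprocessing guaranteed by Lemma~\ref{lem: Disreach} in $O(\mu_1^5)$ time, totalling $O((n/\mu_1)\mu_1^5)=O(n\mu_1^4)$ across all $k$.

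Next, I would walk through the subcurve grid in the same order as in Section~\ref{sec: overview}: for $l=1,2,\ldots,(m-1)/\mu_2$ and $k=1,2,\ldots,(n-1)/\mu_1$, invoke \pmb{\sc DisReach}$(\tau_k,\sigma_l,\+{DA}^{a_k}_l,\+{DA}_{k}^{b_l})$ to produce $\+{DA}^{a_{k+1}}_l$ and $\+{DA}_{k}^{b_{l+1}}$, which feed the next invocations. By Lemma~\ref{lem: Disreach} each invocation preserves the $(7+\varepsilon)$-approximate reachability invariant. After $mn/(\mu_1\mu_2)$ invocations, I read off whether $v_n\in \+{DA}^{m}_{(m-1)/\mu_2}$ (equivalently, whether $\+{DA}_{(n-1)/\mu_1}^{b_{(m-1)/\mu_2+1}}$ contains $v_n$) in $O(1)$ time, returning yes or no accordingly. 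Correctness of the decision is then immediate from the definition of $(7+\varepsilon)$-approximate reachability: a yes answer certifies $\tilde{d}_F(\tau,\sigma)\le (7+\varepsilon)\delta$, and if $\tilde{d}_F(\tau,\sigma)\le\delta$ then by an induction on $(k,l)$ the genuine $\delta$-reachability of $(v_n,w_m)$ forces the approximate set to contain $v_n$, so the answer is yes with the stated probability.

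For the running time, I would plug in $\mu_1=m^{0.24}$, $\mu_2=m^{0.02}$, $\mu_3=m^{0.01}$, $\omega=m^{0.12}$; Lemma~\ref{lem: Disreach} gives a per-call cost of $O(\mu_1(\mu_3+\mu_2/\mu_3+\mu_2^5\log n/(\omega\mu_3))+\omega\mu_2^4)=O(m^{0.25}\log n)$, so the $mn/(\mu_1\mu_2)=O(nm^{0.74})$ invocations contribute $O(nm^{0.99}\log n)$. The preprocessing $O(n\mu_1^4)=O(nm^{0.96})$ and the $O(n+m)$ initialization are lower-order. For the success probability, each invocation of \pmb{\sc DisReach} succeeds with probability at least $1-n^{-10}$; the number of invocations is at most $nm^{0.74}\le n^{1.74}$ since $m\le n$, so by a union bound the procedure succeeds with probability at least $1-n^{-8}\ge 1-n^{-7}$.

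The only step that is not essentially mechanical is verifying that every invocation of \pmb{\sc DisReach} is fed input sets that really are $(7+\varepsilon)$-approximate reachable for $v_{a_k}$ and $w_{b_l}$, so that its output (also $(7+\varepsilon)$-approximate reachable by Lemmas~\ref{lem: Dis I_1}--\ref{lem: Dis I_4}) can legitimately serve as input for the next invocation. I would argue this by a joint induction on $(k,l)$: the base cases $k=1$ and $l=1$ are immediate from the $O(n+m)$-time initialization, and the inductive step follows because $\+{DA}^{a_{k+1}}_l=\+I^1\cup\+I^2$ and $\+{DA}_{k}^{b_{l+1}}=\+I^3\cup\+I^4$ contain all $\delta$-reachable vertices (Lemma~\ref{lem:cover1} adapted to the discrete setting) while every member forms a $((7+\varepsilon)\delta)$-reachable pair with $v_{a_{k+1}}$ or $w_{b_{l+1}}$ respectively. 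This invariant, combined with the parameter calculation and union bound above, yields the theorem.
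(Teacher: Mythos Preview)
Your proposal is correct and follows essentially the same route as the paper: initialize the approximate reachable sets for $v_1$ and $w_1$ in $O(n+m)$ time, preprocess each $\tau_k$ per Lemma~\ref{lem: Disreach}, sweep the $(\tau_k,\sigma_l)$ grid with \textsc{DisReach}, and union-bound over the $mn/(\mu_1\mu_2)$ invocations. The only discrepancy is that you retain a $\log n$ factor in the per-call cost and hence state $O(nm^{0.99}\log n)$, whereas the paper absorbs the $m^{0.21}\log n$ term into $O(m^{0.25})$ (so that the total is $O(nm^{0.99})$ as in the theorem statement); your accounting is arguably more careful, and in any case the difference is negligible for the claimed strongly-subquadratic bound.
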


We use the approach developed in~\cite{bringmann2016approximability} to get an approximation algorithm. It gets an $\alpha$-approximate algorithm for computing $\tilde{d}_F(\tau, \sigma)$ by carrying out $O(\log n)$ instances of any $\alpha$-approximate decision procedure. In our case, the approximate algorithm succeeds if all these instances succeed.

\begin{theorem}\label{thm:approx_Dis_Frechet}
    Given two sequences $\tau$ and $\sigma$ in $\mathbb{R}^d$ for some fixed $d$, there is a randomized $(7+\varepsilon)$-approximate algorithm for computing $\tilde{d}_F(\tau, \sigma)$ in $O(nm^{0.99}\log n)$ time with success probability at least $1-n^{-6}$.
\end{theorem}
	
	\section{Discussion and future works}
	We present the first constant approximation algorithms for the Fr\'echet distance and the discrete Fr\'echet distance that run in strongly subquadratic time. Our algorithms are randomized and achieve approximation ratios of $(7+\varepsilon)$. While we follow a conventional approach that divides the input curves $\tau$ and $\sigma$ into short subcurves $\tau_k$'s and $\sigma_l$'s and accelerates the propagation of reachablity information involving every pair of $\tau_k$ and $\sigma_l$, we present a new framework for achieving the speedup. Basically, we classify the points in reachablity intervals on $\tau_k$ and $\sigma_l$ into four types and present fast algorithms to cover them approximately and fast. While the first three types can be easily dealt with by curve simplification, the type 4 requires new ideas. The handling of type 4 is the bottlenecks of both approximation ratio and the running time. The root of the ratio $7+\varepsilon$ is the ratios of $(3+2\epsilon)$ in Lemma~\ref{lem: marked-edge} and~\ref{lem: marked-vertex}. Currently, if we have to use a subcurve $\tau'$ of $\tau_k$ as the surrogate of some subcurve $\sigma'$ of $\sigma_l$, we can only guarantee an upper bound of $(3+2\epsilon)\delta$ on the (discrete) Fr\'echet distance between $\tau'$ and $\sigma'$. It causes an approximation ratio of $(7+O(\epsilon))$ inevitably due to the triangle inequality. An improvement of Lemma~\ref{lem: marked-edge} and~\ref{lem: marked-vertex} will lead to an improvement of approximation ratio directly. As for the running time, though the current setting of parameters provides us with an exponent $0.99$, which is not best possible, it is unlikely that the running time can be improved significantly without a more efficient handling of type 4. Because of the high time compleixty in finding the surrogates of subcurves of $\sigma_l$ and answering the query {\sc (Dis)Wave} (Lemma~\ref{lem: marked-edge},~\ref{lem:cover},~\ref{lem: marked-vertex}, and~\ref{lem:dis-cover}), the current setting of parameters is near the optimal.
	
	Improving the approximation performance and the running time can be a natural problem in the near furture. Since our algorithms are randomized, whether there is a \emph{deterministic} constant approximation algorithm in strongly subquadratic time for the (discrete) Fr\'echet distance is sill a big open problem.

	\appendix

	\cancel{
	\section{Summary of previous works}
	
	\begin{table}[h!]
		\centering
		\resizebox{\textwidth}{!}{ 
			\begin{tabular}{c|c|c|c}
				Setting & Solution & Running time &Reference\\ \hline
				$d_F$ & Exact &  $O((n^2m+nm^2)\log(mn))$ & \cite{Godau1991ANM}\\
				$\tilde{d}_F$ & Exact& $O(nm)$ & \cite{eiter1994computing}\\
				$d_F$ & Exact& $O(nm\log(mn))$ & \cite{AG1995}\\
				$\tilde{d}_F$ in $\mathbb{R}^2$& Exact & $O(mn\log\log n/\log n)$ & \cite{AAKS2013}\\
				$d_F$ in $\mathbb{R}^2$, pointer machine & Exact& $O(mn\sqrt{\log n}(\log\log n)^{3/2})$ expected & \cite{buchin2014four}\\ 
				$d_F$ in $\mathbb{R}^2$, wordRAM machine& Exact& $O(mn(\log\log n)^2)$ expected & \cite{buchin2014four}\\
				$d_F$ in $\mathbb{R}$ & Exact & $O(m^2\log^2n+n\log n)$ & \cite{blank2024faster}\\
				$d_F$ & Exact & $O(mn(\log\log n)^{2+\mu}\log n/\log^{1+\mu}m)$ expected & \cite{Cheng2024FrchetDI}\\ 
				$\tilde{d}_F$, $m=n$ & $O(\alpha)$-approx. & $O(n\log n+n^2/\alpha)$ & \cite{bringmann2016approximability}\\
				$\tilde{d}_F$, $m=n$ & $O(\alpha)$-approx.& $O(n\log n+n^2/\alpha^2)$ & \cite{chan2018improved} \\
				$d_F$, $m=n$ & $O(\alpha)$-approx.& $O((n^3/\alpha^2)\log^3n)$ & \cite{colombe2021approximating}\\
				$d_F$ &$O(\alpha)$-approx.&  $O((n+mn/\alpha)\log^3n)$ & \cite{van2023subquadratic}\\
				$d_F$ & $O(\alpha)$-approx.&  $O((n+mn/\alpha)\log^2n)$ & \cite{vanderhorst_et_al:LIPIcs.SoCG.2024.63}\\
				$d_F$& $(7+\varepsilon)$-approx. & $O(nm^{0.99}\log n)$ &Theorem \ref{thm:approx_Frechet}\\
				$\tilde{d}_F$ & $(7+\varepsilon)$-approx. & $O(nm^{0.99}\log n)$ & Theorem  \ref{thm:approx_Dis_Frechet}
				
			\end{tabular}
		}
		\caption{Previous results}
		\label{tab:previous works}
	\end{table}
	
	\section{Proof of Lemma~\ref{lem: wave}}\label{sec:proof_wave}
	\noindent\textbf{Lemma~\ref{lem: wave}.}~~\emph{Given two polygonal curves $\tau$ and $\sigma$ in $\mathbb{R}^d$, a value $\delta>0$, an array $\+S$ induced by $\tau$, and an array $\+S'$ induced by $\sigma$, calling {\sc WaveFront}$(\tau, \sigma, \delta, \+S, \+S')$ returns $(\+{W}^{v_i})_{i\in [n]}$ and $(\+{W}^{w_j})_{j\in[m]}$:} 
		
		%$(\+{W}^{v_1}, \+{W}^{v_2}, \ldots, \+{W}^{v_n})$ and $(\+{W}^{w_1}, \+{W}^{w_2}, \ldots,\+{W}^{w_m})$ such that:
		
		\begin{itemize}
			\item \emph{$\+{W}^{v_i}$ is an array induced by $\sigma$ for all $i\in [1, n]$. A point $p\in w_jw_{j+1}$ belongs to  $\+W\reachVI{i}{j}$ {\bf if and only if} there is a point $q$ covered by $\+S'$ such that $q\le_{\sigma} p$ and $d_F(\tau[v_1, v_i], \sigma[q, p])\le \delta$ or there is a point $x$ covered by $\+S$ such that $x\le_{\tau} v_i$ and $d_F(\tau[x, v_i], \sigma[w_1, p])\le \delta$.}
			\item \emph{$\+{W}^{w_j}$ is an array induced by $\tau$ for all $j\in [1, m]$. A point $x\in v_iv_{i+1}$ belongs to $\+W\reachWJ{j}{i}$ {\bf if and only if} there is a point $y$ covered by $\+S$ such that $y\le_{\tau} x$ and $d_F(\tau[y,x], \sigma[w_1, w_j])\le \delta$ or there is a point $p$ covered by $\+S'$ such that $p\le_{\sigma}w_j$ and $d_F(\tau[v_1, x], \sigma[p, w_j])\le \delta$}
	\end{itemize} 

	\begin{proof}
	We prove the properties for $\+{W}^{v_i}$ and $\+{W}^{w_j}$ by induction on $i$ and $j$. Consider $\+{W}^{v_1}$ and $\+{W}^{w_1}$ as the base case. We first prove that $\+{W}^{v_1}$ holds the property. For $j=1$, suppose that $\+W\reachVI{1}{1}$ is not empty. There are two cases. When $v_1\in \+S_{v_1}$ and $w_1\in \+B(v_1, \delta)$, $\+W\reachVI{1}{1}= w_1w_2\cap \+B(v_1, \delta)$ according to the initialization. It implies that for any $p\in \+W\reachVI{1}{1}$, the entire line segment $w_1p$ is inside $\+B(v_1, \delta)$. Hence, $d_F(\tau[v_1, v_1], \sigma[w_1, p])\le\delta$. In the remaining case, $\+W\reachVI{1}{1} = s_1w_2\cap \+B(v_1, \delta)$, where $s_1$ is the start of $\+S'_{w_1}$. Let $q$ be the start of $\+W\reachVI{1}{1}$. We have $q\le_{\sigma}p$ and $d_F(\tau[v_1, v_1], \sigma[q, p])\le \delta$. We finish the proof for the necessity. As for the sufficiency, in the case that $v_1\in \+S_{v_1}$ and $w_1\in \+B(v_1, \delta)$, $\+W\reachVI{1}{1}=w_1w_2\cap \+B(v_1, \delta)$. For any point $p\in w_1w_2$, if there is a point $q$ covered by $\+S'$ such that $d_F(\tau[v_1, v_1], \sigma[q,p])\le \delta$ or there is a point $x$ covered by $\+S$ such that $d_F(\tau[x, v_1], \sigma[w_1, p])\le \delta$, $p$ must lie inside $\+B(v_1, \delta)$. Hence, $p$ belongs to $\+W\reachVI{1}{1}=w_1w_2\cap \+B(v_1, \delta)$. When $v_1\not\in \+S_{v_1}$ or $w_1\not\in \+B(v_1, \delta)$, there cannot be any $x$ covered by $\+S$ such that $x\le_\tau v_1$ and $d_F(\tau[x, v_1], \sigma[w_1, p])\le \delta$. If $d_F(\tau[v_1, v_1], \sigma[q, p])\le \delta$ for some $q\in \+S'_{w_1}$, $\+S'_{w_1}$ cannot be empty, and $p$ is behind the start $s_1$ of $\+S'_{w_1}$ along $w_1w_2$. Hence, $p\in \+W\reachVI{1}{1}=s_1w_2\cap \+B(v_1,\delta)$. We finish the proof for the sufficiency for $\+W\reachVI{1}{1}$.
 
    %for any point $p\in w_1w_2$, when it happens that $d_F(\tau[v_1, v_1], \sigma[q, p])\le \delta$ for some $q\in \+S'_{w_1}$, $\+S'_{w_1}$ cannot be empty, $p$ is behind the start $s$ of $\+S'_{w_1}$ along $w_1w_2$, and $p\in \+B(v_1, \delta)$, Hence, $p\in \+W\reachVI{1}{1}$. In the case $p$ satisfies $d_F(\tau[x, v_1], \sigma[w_1, p])\le \delta$ for some $x\in S_{v_1}$, $x$ must be $v_1$. It implies that $v_1\in S_{v_1}$ and $w_1\in \+B(v_1, \delta)$. Hence, $p$ belongs to $\+W\reachVI{1}{1}=w_1w_2\cap \+B(v_1, \delta)$.
		
		%that satisfies $d_F(\tau[v_1, v_1], \sigma[q, p])\le \delta$ for some $q$ covered by $\+S'$ or $d_F(\tau[x, v_1], \sigma[w_1, p])\le \delta$, $p$ must belong to $\+B(v_1, \delta)$.
		
	For $j\in [2, m-1]$, assume that $\+W\reachVI{1}{j-1}$ holds the property. If $\+W\reachVI{1}{j}$ is not empty, take a point $p$ in it. If $w_j\in \+W\reachVI{1}{j-1}$, it holds that $d_F(\tau[v_1, v_1], \sigma[q, w_j])$ for some $q$ covered by $\+S'$ or $d_F(\tau[x, v_1], \sigma[w_1, w_j])\le \delta$ for some $x$ covered by $\+S$. It implies that $d(w_j, v_1)\le \delta$. Provided that $p\in \+W\reachVI{1}{1} =w_jw_{j+1}\cap\+B(v_1, \delta)$, the entire line segment $w_jp$ is inside $\+B(v_1, \delta)$. By matching $w_jp$ to $v_1$, we have $d_F(\tau[v_1, v_1], \sigma[q, p])$ or $d_F(\tau[x, v_1], \sigma[w_1, p])\le \delta$. If $w_j\not\in \+W\reachVI{1}{j-1}$, a non-empty $\+W\reachVI{1}{j}$ implies that $\+S'_{w_j}$ is not empty, and $\+W\reachVI{1}{j}=s_jw_{j+1}\cap \+B(v_1, \delta)$, where $s_j$ is the start of $\+S'_{w_j}$. Let $q$ be the start of $\+W\reachVI{1}{j}$. Any point $p\in \+W\reachVI{1}{j}$ satisfies that $d_F(\tau[v_1, v_1], \sigma[q,p])\le \delta$. It finish the proof of necessity. For the sufficiency, in the case that $p\in w_jw_{j+1}$ satisfies $d_F(\tau[x, v_1], \sigma[w_1,p])\le \delta$ for $x$ covered by $\+S$, $x$ must be $v_1$. Hence, $d_F(\tau[x, v_1], \sigma[w_1, w_j])$ is at most $\sigma$ as well. It implies that $w_j\in \+W\reachVI{1}{j-1}$ by the induction hypothesis. Hence, $\+W\reachVI{1}{j}=w_jw_{j+1}\cap \+B(v_1, \delta)$, and $p\in \+W\reachVI{1}{j}$ as $p\in \+B(v_1, \delta)$. If $p$ satisfies $d_F(\tau[v_1, v_1], \sigma[q, p])$ for $q$ covered by $\+S'$, we distinguish two cases based on whether $q\le_{\sigma} w_j$. If $q\le_{\sigma} w_j$, $w_j\in \+W\reachVI{1}{j-1}$ by the induction hypothesis and $p\in \+W\reachVI{1}{j}$ as $p\in \+B(v_1, \delta)$; otherwise, $q\in \+S'_{w_j}$, it implies that $\+S'_{w_j}$ cannot be empty, $p$ is behind the start $s_j$ of $\+S'_{w_j}$ along $w_jw_{j+1}$, which implies that $p\in \+W\reachVI{1}{j}$. We finish proving that $\+{W}^{v_1}$ satisfies the property. We can prove that $\+{W}^{w_1}$ satisfies the property by induction on $i$ by similar analysis.
		
		Take $\+W\reachVI{i}{j}$ for $i\ge 2$ and any $j\in [m-1]$. Assume that $\+W\reachVI{i-1}{j}$ and $\+W\reachWJ{j}{i-1}$ hold the property. We first prove the necessity. Suppose that $\+W\reachVI{i}{j}\not=\emptyset$. According to the recurrence, either $\+W\reachVI{i-1}{j}$ or $\+W\reachWJ{j}{i-1}$ is not empty. When $\+W\reachWJ{j}{i-1}$ is not empty, $\+W\reachVI{i}{j}=w_jw_{j+1}\cap\+B(v_i,\delta)$ according to the recurrence. Take a point $p\in \+W\reachVI{i}{j}$. For any point $x\in \+W\reachWJ{j}{i-1}$, by induction hypothesis, there is a point $y$ covered by $\+S$ such that $y\le_{\tau} x$ and $d_F(\tau[y,x], \sigma[w_1, w_j])\le \delta$ or there is a point $q$ covered by $\+S'$ such that $q\le_{\sigma}w_j$ $d_F(\tau[v_1, x], \sigma[q, w_j])\le \delta$. Hence, $d(x, w_j)\le \delta$. Given that $d(p, v_i)\le \delta$, the Fr\'echet distance between $w_jp$ and $xv_i$ is at most $\delta$. It implies that $d_F(\tau[y,v_i], \sigma[w_1, p])\le \delta$ or $d_F(\tau[v_1, v_i], \sigma[q, p])\le \delta$. 
		
		When $\+W\reachWJ{j}{i-1}=\emptyset$ and $\+W\reachVI{i-1}{j}\not=\emptyset$, $\+W\reachVI{i}{j}=\ell w_{j+1}\cap \+B(v_i, \delta)$, where $\ell$ is the start of $\+W\reachVI{i-1}{j}$. By induction hypothesis, there is a point $q$ covered by $\+S'$ such that $d_F(\tau[v_1, v_{i-1}], \sigma[q,\ell])\le \delta$ or there is a point $x$ covered by $\+S$ such that $d_F(\tau[x, v_{i-1}], \sigma[w_1, \ell])\le \delta$. Hence, $d(v_{i-1}, \ell)\le \delta$. Given that $d(v_i, p)\le \delta$, the Fr\'echet distance between $v_{i-1}v_i$ and $\ell p$ is at most $\delta$. It implies that $d_F(\tau[v_1, v_i], \sigma[q,p])\le \delta$ or $d_F(\tau[x,v_i], \sigma[w_1, p])\le\delta$. We have proven the necessity part for $\+W\reachVI{i}{j}$.
		
		As for the sufficiency, for any point $p\in w_{j}w_{j+1}$, suppose that there is a point $x$ covered by $\+S$ such that $x\le_\tau v_i$ and $d_F(\tau[x, v_i], \sigma[w_1, p])$. If $x\le_\tau v_{i-1}$, the Fr\'echet matching between $\tau[x, v_i]$ and $\sigma[w_1, p]$ either matches $w_j$ to some point $y\in v_{i-1}v_i$ or matches $v_{i-1}$ to some point $q\in w_jw_{j+1}$. In the former case, $y$ belongs to $\+W\reachWJ{j}{i-1}$ by induction hypothesis. It implies that $\+W\reachWJ{j}{i-1}$ is not empty. Hence, $\+W\reachVI{i}{j}$ is $w_jw_{j+1}\cap \+B(v_i, \delta)$. Provided that $p\in w_jw_{j+1}$ and $d(v_i, p)\le \delta$, $p\in \+W\reachVI{i}{j}$. In the later case, $q\in \+W\reachVI{i-1}{j}$ by the hypothesis. Hence, $\+W\reachVI{i-1}{j}$ is not empty. Let $\ell$ be the start of $\+W\reachVI{i-1}{j}$. It also implies that $\ell\le_\sigma p$. According to the recurrence, if $\+W\reachWJ{j}{i-1}\not=\emptyset$, $\+W\reachVI{i}{j}$ is set to be $w_jw_{j+1}\cap \+B(v_i, \delta)$ that includes $p$. Otherwise, $\+W\reachVI{i}{j}$ is set to be $\ell w_{j+1}\cap \+B(v_i, \delta)$ as $\+W\reachWJ{j}{i-1}\emptyset$ and $\+W\reachVI{i-1}{j}\not=\emptyset$. Since $\ell\le_\sigma p$, $p$ belongs to $\+W\reachVI{i}{j}$ as well.
        %In the former case, $q\in \+W\reachVI{i-1}{j}$ by the hypothesis. It implies that the start of $\+W\reachVI{i-1}{j}$ is not behind $q$ along $w_jw_{j+1}$. As $q\le_\sigma p$, the start of $\+W\reachVI{i-1}{j}$ is not behind $p$ along $w_jw_{j+1}$ as well. Hence, $p$ belongs to $\+W\reachVI{i}{j}$, which is either $w_jw_{j+1}\cap \+B(v_i, \delta)$ or $\ell w_{j+1}\cap \+B(v_i, \delta)$, where $\ell$ is the start of $\+W\reachVI{i-1}{j}$. In the later case, $y$ belongs to $\+W\reachWJ{j}{i-1}$ by induction hypothesis. It implies that $\+W\reachWJ{j}{i-1}$ is not empty. Hence, $\+W\reachVI{i}{j}$ is $w_jw_{j+1}\cap \+B(v_i, \delta)$ and includes $p$. 
        If $v_{i-1}\le_\tau x$, $w_j$ is matched to some point $y\in v_{i-1}v_i$ by the Fr\'echet matching between $\tau[x, v_i]$ and $\sigma[w_1, p]$. By induction hypothesis, $y$ belongs to $\+W\reachVI{j}{i-1}$. We can prove the $p\in \+W\reachVI{i}{j}$ via the analysis above as well. 
		
		In the case that there is a point $q$ covered by $\+S'$ with $d_F(\tau[v_1, v_i], \sigma[q,p])\le\delta$, we can prove that $p\in \+W\reachVI{i}{j}$ in a similar way. Take $\+W\reachWJ{j}{i}$ for $j\ge 2$ and any $i\in [n-1]$. Assume that $\+W\reachVI{i}{j-1}$ and $\+W\reachWJ{j-1}{i}$ hold the property, we can prove that $\+W\reachWJ{j}{i}$ holds the property via a similarly. This completes the proof.
			
	\end{proof}
	}
	
	\cancel{
		\section{Proof of Lemma~\ref{lem:matching}}\label{sec:proof_simp}
		
		\noindent\textbf{Lemma~\ref{lem:matching}.}~~\emph{Given two curves $\tau$ and $\sigma$ in $\mathbb{R}^d$, suppose that $d_F(\tau, \sigma)\le \delta$. There is an $O(mn)$-time algorithm for computing a matching $\+M$ between $\tau$ and $\sigma$ such that $d_{\+M}(\tau, \sigma)\le \delta$. The matching $\+M$ can be stored in $O(n+m)$ space such that for any point $x\in \tau$, we can retrieve a point $\+M(x)\in \sigma$ in $O(\log m)$ time, and for any point $p\in \sigma$, we can retrieve a point $\+M(p)\in \tau$ in $O(\log n)$ time..}
		
		\begin{proof}
		We can construct $\+M$ by calling {\sc Propagate}$(\tau, \sigma, \delta)$ and then backtracking the output $\+R^{v_i}$ and $\+R^{w_j}$ for all $i\in[n]$ and $j\in[m]$. We first fix $\+M(v_i)$ and $\+M(w_j)$ for all $i\in [n]$ and $j\in [m]$.
		
		Provided that $d_F(\tau, \sigma)\le \delta$, $d(v_n, w_m)\le \delta$ and both $\reachVI{n}{m-1}$ and $\reachWJ{m}{n-1}$ are non-empty. We set $\+M(v_n)=w_m$ and $\+M(w_m)=v_n$ as initialization. We then present a recurrence for determining the matching partners for the remaining vertices. Our recurrence always guarantees that $\+M(v_i)$ belongs to $v_i$'s reachability interval for all $i\in [n]$, and so does $\+M(w_j)$ for all $j\in [m]$.
		
		Now suppose that we have determined $\+M(v_i)$ and $\+M(w_j)$ for some $i> 1$ and some $j>1$, but $\+M(v_{i-1})$ and $\+M(w_{j-1})$ have not been determined yet. The recurrence maintains an invariant that $\+M(w_j)\in v_{i-1}v_i$ or $\+M(v_i)\in w_{j-1}w_{j}$. In the case that $\+M(w_j)\in v_{i-1}v_i$, it implies that $\reachWJ{j}{i-1}\not=\emptyset$. Hence, either $\reachVI{i-1}{j-1}$ or $\reachWJ{j-1}{i-1}$ is non-empty. If $\reachVI{i-1}{j-1}\not=\emptyset$, we pick an arbitrary point $p$ in it and set $\+M(v_{i-1})=p$. Because $d_F(\tau[v_1, v_{i-1}],\sigma[w_1, p])\le \delta$ by the definition of $\reachVI{i-1}{j-1}$, and we can match $pw_j$ to $v_{i-1}\+M(w_j)$ by a linear interpolation as $d(p, v_{i-1})\le \delta$. It is clear that the invariant still holds. In the case where $\reachVI{i-1}{j-1}=\emptyset$ and $\reachWJ{j-1}{i-1}\not=\emptyset$. By the procedure {\sc Propagate}$(\tau, \sigma, \delta)$, the start of $\reachWJ{j-1}{i-1}$ cannot be behind $\+M(w_j)$ along $v_{i-1}v_i$. We set $\+M(w_{j-1})$ to be the start of $\reachWJ{j-1}{i-1}$ and match $\+M(w_{j-1})\+M(w_j)$ to $w_{j-1}w_{j}$ by linear interpolation. The invariant is persevered as well.

		In the case that $\+M(v_i)\in w_{j-1}w_j$, it implies that $\reachVI{i}{j-1}\not=\emptyset$. We can proceed to determine $\+M(w_{j-1})$ or $\+M(v_{i-1})$ in a similar way. We can repeat the above procedure until we have determined $\+M(v_i)$ for all $i\in [n]$ or $\+M(w_j)$ for all $j\in [m]$. Suppose we have matched $v_1$ and there are still some vertices $w_1, w_2,\ldots,w_j$ of $\sigma$ remaining to be matched. Provided that $\+M(v_1)$ belongs to $v_1$'s reachability interval in $\sigma[w_j, w_m]$, it implies that the entire subcurve $\sigma[w_1, w_j]$ locates inside $\+B(v_1, \delta)$. We set $\+M(w_{j'})=v_1$ for all $j'\in [1,j]$. In the case where we have matched $w_1$ and there are still some vertices $v_1, v_2,\ldots, v_i$ of $\tau$ remaining to be matched. We can set $\+M(v_{i'})=w_1$ according to the same analysis. It takes $O(nm)$ time to deal with all vertices of $\tau$ and $\sigma$.
		
		%It is sufficient to store $\+M(v_i)$ and $\+M(w_j)$ for all $i\in [n]$ and $j\in [m]$ to store the entire matching $\+M$. Note that all points in $\tau$ and $\sigma$ other than vertices are handled by linear interpolation. In the following lemma, we present that for any points $x\in \tau$ and $p\in \sigma$ that may not be vertices, we can also retrieve $\+M(x)$ and $\+M(p)$ efficiently.
		
		We store $\+M(v_i)$ and $\+M(w_j)$ for all $i\in [n]$ and $j\in [m]$. It takes $O(mn)$ time and $O(n+m)$ space, and $d_{\+M}(\tau, \sigma)\le \delta$ according to the procedure. 
		
		Next, we show how to retrieve $\+M(x)$ for any point $x\in \tau$. If $x$ happens to be some vertex $v_i$, we can access $\+M(v_i)$ in $O(1)$ as it is stored explicitly. Otherwise, suppose that $x$ belongs to the edge $v_iv_{i+1}$. We first get $\+M(v_i)$ and $\+M(v_{i+1})$. The entire subcurve $\sigma[\+M(v_i), \+M(v_{i+1})]$ is matched to the edge $v_iv_{i+1}$. If $\sigma[\+M(v_i), \+M(v_{i+1})]$ is a line segment and does not contain any vertices of $\sigma$, it means that the matching between $v_iv_{i+1}$ and $\sigma[\+M(v_i), \+M(v_{i+1})]$ is a linear interpolation between them. We can calculate $\+M(x)$ in $O(1)$ time. 
		
		If $\sigma[\+M(v_i), \+M(v_{i+1})]$ contains vertices $w_j, w_{j+1}, \ldots, w_{j_1}$ of $\sigma$, then $\+M(w_{j'})\in v_iv_{i+1}$ for all $j'\in [j, j_1]$. The points $\+M(w_{j'})$'s partition $v_iv_{i+1}$ into $j_1-j+2 = O(|\sigma|)$ disjoint segments. By a binary search, we can find out $x$ belongs to which segment in $O(\log|\sigma|)$ time. Suppose that $x\in \+M(w_{j'})\+M(w_{j'+1})$. Given that the matching between $\+M(w_{j'})\+M(w_{j'+1})$ and $w_{j'}w_{j'+1}$ is a linear interpolation, we can proceed to calculate $\+M(x)$ in $O(1)$ time.
		
		For any point $p\in \sigma$, we can retrieve $\+M(p)$ in $O(\log|\tau|)$ time similarly.
		
	\end{proof}
		%The Fr\'echet matching must match some point 
		%For any $j\in [m-1]$, according to the recurrence, $\+W\reachVI{2}{j}$ is empty if both $\+W\reachVI{1}{j}$ and $\+W\reachWJ{j}{1}$ are empty. If $\+W\reachWJ{j}{1}\not=\emptyset$, $\+W\reachVI{2}{j}=\+B(v_2, \delta)\cap w_jw_{j+1}$, in the remaining case where $\+W\reachWJ{j}{1}=\emptyset$ and $\+W\reachVI{1}{j}\not=\emptyset$, $\+W\reachVI{2}{j}=\+B(v_2, \delta)\cap \ell w_{j+1}$ where $\ell$ is the start of $\+W\reachVI{1}{j}$. 
		
		\section{Proof of Lemma~\ref{lem: marked-edge}}\label{sec:proof_marked}
		
		\noindent\textbf{Lemma~\ref{lem: marked-edge}.}~~\emph{We can preprocess $\tau_k$ in $O\left(\epsilon^{-O(d)}\mu_1^4\log\mu_1\log\log\mu_1\right)$ time such that given any subcurve $\sigma'$ of $\sigma_l$ and an edge $v_iv_{i+1}$ of $\tau_k$, there is an $O(\mu_2^4)$-time algorithm that returns null or a subcurve $\tau'$ of $\tau_k$ with $d_F(\tau', \sigma')\le (3+2\epsilon)\delta$. If the algorithms returns null, $v_iv_{i+1}$ is not marked by $\sigma'$.}
		
		\HQ{The running time can be easily reduced to $O(\mu_2^2)$.}
		
		\begin{proof}
			%Recall that $\mu_1=m^{0.24}$ and $\mu_2=m^{0.02}$. The construction of $\bar{\zeta}_i$, $\zeta_{\bar{i}}$, $\bar{\+M}_i$, and $\+M_{\bar{i}}$ takes $O\left(\varepsilon^{-O(d)}\mu_1^4\log\mu_1\log\log\mu_1\right)$ time for all $i\in [a_k, a_{k+1}]$. By substituting $\mu_1$ into the term, we get the preprocessing time. 
                The preprocessing time follows the construction of $\bar{\zeta}_i$, $\tilde{\zeta}_i$,$\bar{\+M}_i$, and $\tilde{\+M}_i$ directly. We proceed to show how to find $\tau'$.
                
			Suppose that we are given a subcurve $\sigma'=\sigma[w_j, w_{j_1}]$ of $\sigma_l$ and an edge $v_iv_{i+1}$. If there is subcurve of $\tau[\bar{v}_i,\tilde{v}_{i+1}]$ within a Fr\'echet distance $\delta$ to $\sigma'$, we present how to find a subcurve $\tau'$ of $\tau[\bar{v}_i,\tilde{v}_{i+1}]$ with $d_F(\tau', \delta')\le (3+2\epsilon)\delta$ based on $\bar{\zeta}_i$, $\tilde{\zeta}_{i+1}$, $\bar{\+M}_i$, and $\tilde{\+M}_{i+1}$. We first construct a new curve $\zeta'$ by appending $\tilde{\zeta}_{i+1}$ to $\bar{\zeta}_i$. That is, we join the last vertex of $\bar{\zeta}_i$ and the first vertex of $\tilde{\zeta}_{i+1}$ by a line segment to generate $\zeta'$. Since the new line segment is within a Fr\'echet distance $(1+\epsilon)\delta$ to the edge $v_iv_{i+1}$, we have $d_F(\tau[\bar{v}_i, \tilde{v}_{i+1}], \zeta')\le (1+\epsilon)\delta$. Given that there is a subcurve of $\tau[\bar{v}_i, \tilde{v}_{i+1}]$ within a Fr\'echet distance $\delta$ to  $\sigma'$, there is a subcurve $\zeta''$ of $\zeta'$ with $d_F(\zeta'', \sigma')\le (2+\epsilon)\delta$ by the triangle inequality. 
			
			We aim to find such a $\zeta''$. For every edge of $\zeta'$, if it intersects the ball $\+B(w_j, (2+\epsilon)\delta)$, we take the minimum point $x$ in the intersection with respect to $\le_{\zeta'}$. We insert $x$ into a set $X$. If this edge intersects the ball $\+B(w_{j_1}, (2+\epsilon)\delta)$, we take the maximum point $y$ in the intersection with respect to $\le_{\zeta'}$, and insert $y$ into another set $Y$. 
			
			The existence of $\zeta''$ implies the existence of some subcurve that starts from a point in $X$, ends at a point in $Y$, and locates within a Fr\'echet distance $(2+\epsilon)\delta$ to $\sigma'$. Because there are a point $x\in X$ in the same edge of $\zeta'$ as the start of $\zeta''$ and a point $y\in Y$ in the same edge of $\zeta'$ as the end of $\zeta''$. By definition of $X$ and $Y$, $\zeta'[x,y]$ includes $\zeta''$. We can extend the Fr\'echet matching between $\zeta''$ and $\sigma'$ to a matching between $\zeta'[x,y]$ and $\sigma'$ by matching the line segment between $x$ and the start of $\zeta''$ to $w_j$, and matching the line segment between the end of $\zeta''$ and $y$ to $w_{j_1}$. The matching realizes a distance at most $(2+\epsilon)\delta$.
		
		We test all subcurves of $\zeta'$ starting from some point in $X$ and ending at some point in $Y$. There are $O(|\zeta'|^2)=O(\mu_2^2)$ subcurves to be tested. For every subcurve, we check whether the Fr\'echet distance between it and $\sigma'$ is at most $(2+\epsilon)\delta$. If so, we return it as $\zeta''$. If there is not a satisfactory $\zeta''$ after trying all $O(\mu_2^2)$ subcurves, it means that there is no subcurve of $\zeta'$ within a Fr\'echet distance $(2+\epsilon)\delta$ to $\sigma'$. It implies that no subcurve of $\tau[\bar{v}_i, \tilde{v}_{i+1}]$ is within a Fr\'echet distance $\delta$ to $\sigma'$. Hence, the edge $v_iv_{i+1}$ is not marked by $\sigma'$. We return null. It takes $O(\mu_2^4)$ time. 
		
		Suppose that we have found $\zeta''=\zeta'[x,y]$. We proceed to find a subcurve $\tau'$ of $\tau_k$ within a Fr\'echet distance $(1+\epsilon)\delta$ to $\zeta''$ based on $\bar{\+M}_i$ and $\tilde{\+M}_{i+1}$. Recall that $\zeta'$ is a concatenation of $\bar{\zeta}_i$ and $\tilde{\zeta}_{i+1}$. It means that $x$ must locates in $\bar{\zeta}_i$, or $\tilde{\zeta}_{i+1}$, or the new line segment that joins the last vertex of $\bar{\zeta}_i$ and the first vertex of $\tilde{\zeta}_{i+1}$. So does $y$. We define a point $x'$ in $\tau_k$ that corresponds to $x$ as follows. We set $x'=\bar{\+M}_i(x)$ if $x\in \bar{\zeta}_i$, set $x'=\tilde{\+M}_{i+1}(x)$ if $x\in \tilde{\zeta}_{i+1}$, and set $x'$ to be the point matched to $x$ by the Fr\'echet matching between $v_iv_{i+1}$ and the new line segment otherwise. We can define $y'$ for $y$ similarly. It is clear from the construction that $d_F(\tau[x',y'], \zeta'')\le (1+\varepsilon)\delta$. Hence, $d_F(\tau[x',y'], \sigma')\le (3+2\epsilon)\delta$ by the triangle inequality. We set $\tau'=\tau[x', y']$. It takes an extra time of $O(\log\mu_1)=O(\log m)$. We complete the proof.
		
	\end{proof}
	}
		\section{Data structure for answering the query {\sc Cover}}\label{sec:cover}
		
		We restate the query {\bf{\sc Cover}} with a subcurve $\tau'$ of $\tau_k$, an array $\+S$ induced by $\tau_k$, and $\delta'>0$. The start and end of $\tau'$ may not be vertices. The value $\epsilon\in(0,1)$ is known during preprocessing.
		
		\vspace{2pt}
		
		\noindent\pmb{{\sc Cover}$(\tau', \delta',\+S)$.} The answer is another array $\bar{\+S}$ induced by $\tau_k$ satisfing that:
		
		\begin{itemize}
			\item for any point $x\in\tau_k$, if there is a point $y$ covered by $\+S$ such that $y\le_{\tau} x$ and $d_F(\tau[y,x], \tau')\le \delta'$, then $x$ is covered by $\bar{\+S}$;
		\item for any point $x$ covered by $\bar{\+S}$, there is a point $y$ covered by $\+S$ such that $y\le_{\tau} x$ and $d_F(\tau[y,x], \tau')\le (1+\epsilon)\delta'$.
		\end{itemize}

		The underlying idea of the data structure is to use the planarity of the free space diagram, especially the observation in Lemma~\ref{lem:planarity}. Suppose that $\tau'$ is a vertex-to-vertex subcurve of $\tau_k$ from $v_{i_1}$ to $v_{i_2}$. For every edge $v_iv_{i+1}$ of $\tau_k$, let $\ell_i$ be the start of $\+S_{v_i}\cap \+B(v_{i_1},\delta')$ if $\+S_{v_i}$ is not null. Consider the free space diagram induced by $\tau'$ and $\tau_k$. The first useful observation is that for any point $y\in \tau_k$, if there is a point $x\in \+S_{v_i}$ such that $(y, v_{i_2})$ is $\delta'$-reachable from $(x, v_{i_1})$, then $(y, v_{i_2})$ is $\delta'$-reachable from $(\ell_i, v_{i_1})$ as well. Because $(x, v_{i_1})$ is $\delta'$-reachable from $(\ell_i, v_{i_1})$. It means that it is sufficient to store all points $y\in \tau_k$ such that $(y, v_{i_2})$ is $\delta'$-reachable from some $(\ell_i, v_{i_1})$ to answer the query {\sc Cover}$(\tau', \delta', \+S)$. Moreover, we can answer the query in $O(\mu_1)$ time due to Lemma~\ref{lem:planarity}. 

		Suppose that for every $\ell_i$, we have stored an array $\+R^i$ induced by $\tau_k$ that covers all points $y$ satisfying that $(y, v_{i_2})$ is $\delta'$-reachable from $(\ell_i, v_{i_1})$. The solution to {\sc Cover} is the union of points covered by $\+R^i$ for all $\ell_i$. We can construct such a union greedily. Basically, assume that we have already constructed the array $\+S'$, which is the union of $\+R^1-\+R^{i-1}$, and we know the maximum integer $i^*$ such that $\+S'_{v_{i^*}}$ is not empty. To merge $\+R^i$, we only need to replace $\+S'_{v_{i^*+1}}, \+S'_{v_{i^*+2}},\ldots, \+S'_{v_{a_{k+1}}-1}$ by $\+R^i_{v_{i^*+1}}, \+R^i_{v_{i^*+2}},\ldots, \+R^i_{v_{a_{k+1}}-1}$. The reason is as follows. By Lemma~\ref{lem:planarity}, if some point $y\in v_{i'}v_{i'+1}$ with $i'\le i^*$ is covered by $\+R^i$, then $y$ must be covered by some $\+R^{\bar{i}}$ with $\bar{i}<i$ and therefore covered by $\+S'$. We can use the above greedy algorithm to answer {\sc Cover}$(\tau', \delta', \+S)$ in $O(\mu_1)$ time. %The second useful observation is that we can 
		
		But we do not know $\ell_i$ during preprocessing given that $\+S$ is arbitrary, and the curve $\tau'$ may not be a vertex-to-vertex subcurve of $\tau_k$. We use discretization and a decomposition of $\tau'$ to resolve the issues. The discretization is the reason for introducing approximation in the definition of {\sc Cover}.

		\vspace{4pt}

		\noindent\underline{\bf Data structure.} %We carry out precomputation for every vertex-to-vertex subcurve of $\tau_k$ to construct several arrays induced by $\tau_k$ as our data structure. Specifically, for any pair of integers $i, i_1\in [a_k, a_{k+1}]$, take the 
		Our data structure consists of arrays induced by $\tau_k$ for every vertex-to-vertex subcurve of $\tau_k$.
		For every subcurve $\tau[v_{i_1}, v_{i_2}]$ and every edge $v_{i}v_{i+1}$ of $\tau_k$. Test whether $\+B(v_i, \delta')\cap v_{i}v_{i+1}\not=\emptyset$. If it is not empty, we discretize it to get a sequence of points $(p_1, p_2,\ldots, p_a)$ such that $p_1$ is the start of intersection, $d(p_1, p_{a'})=(a'-1)\epsilon\delta'$ for all $a'\in[a]$, and $a=\lfloor\frac{L}{\epsilon\delta'}\rfloor+1$, where $L$ is the length of the intersection. Note that $L$ is at most $2\delta'$, which implies that $a\le 2/\epsilon+1$. For an arbitrary array $\+S$, the discretization allows us to approximate $\ell_i$, which is the start of $\+S_{v_i}\cap \+B(v_{i_1}, \delta')$. %It will be clear in the query algorithm that this discretization helps us deal with arbitrary $\+S$ appearing in the query {\sc Cover}. 
		
		For every point $p_b$ in the sequence derived from the discretization of $\+B(v_{i_1}, \delta')\cap v_iv_{i+1}$, we store all points $y$ with $(y, v_{i_2})$ being $\delta'$-reachable from $(p_b, v_{i_1})$. Specially, we generate an array $\+A$ induced by $\tau_k$ such that $\+A_{v_i}=p_b$ and all the other elements in $\+A_{v_i}$ are empty. We also take an array $\+A'$ induced by $\tau[v_{i_1}, v_{i_2}]$ such that all elements in $\+A'$ are empty. We then invoke {\sc WaveFront}$(\tau_k, \tau[v_{i_1}, v_{i_2}], \delta', \+A, \+A')$. Take the output array $\+{W}^{v_{i_2}}$ for the vertex $v_{i_2}$. By the definition of {\sc WaveFront}, a point $y$ is covered by $\+{W}^{v_{i_2}}$ if and only if $(y, v_{i_2})$ is $\delta'$-reachable from $(x, v_{i_1})$. We then store $\+{W}^{v_{i_2}}$ in our data structure, which is a 4D array $\+D$ such that the element $\+D[i_1, i_2, i, b]$ stores $\+{W}^{v_{i_2}}$. In the case where $\+B(v_{i_1}, \delta')\cap v_{i}v_{i+1}$ is empty and $p_b$ is not well-defined, we set $\+D[i_1, i_2, i, b]$ to be null. %The index of $\+D$ consists of 4 integers. The fist two integers $i$ and $i_1$ specify the vertex-to-vertex subcurve $\tau[v_i, v_{i_1}]$ of $\tau_k$. They satisfy that $i, i_1\in [a_k, a_{k+1}]$ and $i\le i_1$\footnote{In the case that $i>i_1$, the corresponding element in $\+D$ is null.}. The third integer $i'\in [a_k, a_{k+1}-1]$ specifies an edge $v_{i'}v_{i'+1}$ of $\tau_k$.

		We also maintain a 4D array {\sc Max}. Specifically, let $\bar{\+A}$ denote the array stored in $\+D[i_1, i_2, i, b]$ for ease of notation. Take the maximum $i^*\in [a_k, a_{k+1}]$ such that $\bar{\+A}_{v_{i^*}}$ is not empty, we set {\sc Max}$[i_1, i_2, i, b]$ to be $i^*$. We set {\sc Max}$[i, i_1, i',\alpha]$ to be $-1$ if such an $i^*$ does not exist.
  
        %For any $a'\in [2/\epsilon+1]$, if the $a'$-th element in $\+D[i, i_1, i']$ is null, we set the $a'$-th element in {\sc Max}$[i, i_1, i']$ to be -1; otherwise, let $\bar{\+A}$ be the $a'$-th element in $\+D[i, i_1, i']$, take the maximum $i^*\in [a_k, a_{k+1}]$ such that $\bar{\+A}_{v_{i^*}}$ is not empty, we set the $a'$-th element in {\sc Max}$[i, i_1, i']$ to be $i^*$. We set the $a'$-th element in {\sc Max}$[i, i_1, i']$ to be -1 if such an $i^*$ does not exist.

		%Let {\sc Max}$[i, i_1,i',a']$ be the maximum $i^*\in [a_k, a_{k+1}]$ such that the corresponding line segment $\bar{\+A}^{i, i_1}_{i',a', v_{i^*}}$ on $v_{i^*}v_{i^*+1}$ is non-empty. We let $\+D^{i, i_1}_{i'}$ contain $\bar{\+A}^{i, i_1}_{i',a'}$ for all $a'\in [a]$. 
		For every subcurve $\tau[v_{i_1}, v_{i_2}]$ and every edge $v_{i}v_{i+1}$, it takes $O(\mu_1\cdot(i_2-i_1+1))=O(\mu_1^2)$ time to determine $\+D[i_1, i_2, i,b]$ and {\sc Max}$[i_1, i_2, i,b]$ for every $b\in [2/\epsilon+1]$. Given that there are $O(\mu_1^3)$ distinct combinations of $i_1, i_2$ and $i$, it takes $O(\mu_1^5/\epsilon)$ time to construct $\+D$ and {\sc Max}. %to finish the preprocessing.

		\cancel{
        For any subcurve $\tau[v_i, v_{i_1}]$, take two arbitrary pairs $(i', \alpha)$ and $(i'', \alpha')$ such that $i'<i''$. Suppose that both $\+D[i, i_1, i', \alpha]$ and $\+D[i, i_1, i'', \alpha']$ are not null, and {\sc Max}$[i, i_1, i', \alpha]=i^*$ that is not -1.
		We present a useful property regarding $\+D[i, i_1, i', \alpha]$ and $\+D[i, i_1, i'', \alpha']$. %Take any subcurve $\tau[v_i, v_{i_1}]$ and an edge $v_{i'}v_{i'+1}$. Suppose that there is a point $p_{a'}\in \+B(v_i, \delta')\cap v_{i'}v_{i'+1}$ such that the $a'$-th element $i^*$ of {\sc Max}$[i, i_1, i']$ is not -1. Let $\bar{\+A}$ be the $a'$-th element in $\+D[i, i_1, i']$. If there is some $\bar{i}<i^*$ with $\bar{\+A}_{\tilde{v}_i}=\emptyset$, then for all $i''>i'$ and $a''\in [2/\epsilon+1]$, either $\bar{\+A'}$ is null or $\bar{\+A'}_{\tilde{v}_i}$ is empty, where $\bar{\+A'}$ is the $a''$-th element in $\+D[i, i_1, i'']$.
		
		\begin{lemma}\label{lem:no-crossing}
			For any point $x$ in the edge $v_bv_{b+1}$ of $\tau_k$, suppose that $b\le i^*$. If $x$ is covered by $\+D[i, i_1, i'', \alpha']$, then $x$ must be covered by $\+D[i, i_1, i',\alpha]$.
   
        %For any $\tau[v_i, v_{i_1}]$ and $v_{i'}v_{i'+1}$, let $\bar{\+A}$ and $i^*$ be the $a'$-th elements in $\+D[i, i_1, i']$ and {\sc Max}$[i, i_1, i']$, respectively, for some $a'\in [2/\epsilon+1]$. If $i^*\not=$-1, take any $\bar{i}< i^*$. For all $i''> i'$ and all $a''\in [2/\epsilon+1]$, either $\bar{\+A'}$ is null or $\bar{\+A'}_{v_{\bar{i}}}\subset \bar{\+A}_{v_{\bar{i}}}$,  where $\bar{\+A'}$ is the $a''$-th element in $\+D[i, i_1, i'']$.
		\end{lemma}
	
		\begin{proof}
			We prove by contradiction. Assume that there is a point $x\in v_bv_{b+1}$ with $b\le i^*$ such that $x$ is covered by $\+D[i, i_1, i'', \alpha']$, and $x$ is not covered by $\+D[i, i_1, i',\alpha]$. By the definition of $\+D[i, i_1, i'', \alpha']$, there is a point $y\in v_{i''}v_{i''+1}$ such that $y\le_\tau x$ and $d_F(\tau[y,x], \tau[v_i, v_{i_1}])\le \delta'$. Let $F$ be the Fr\'echet matching between $\tau[y, x]$ and $\tau[v_i, v_{i_1}]$.
   
            %Assume that there is a point $x\in \bar{\+A'}_{v_{\bar{i}}}$ such that $x$ does not belong to $\bar{\+A}_{v_{\bar{i}}}$ for some $\bar{i}< i^*$. By the definition of $\bar{\+A'}$, there is a point $y\in v_{i''}v_{i''+1}$ such that $y\le_\tau x$ and $d_F(\tau[y, x], \tau[v_i, v_{i_1}])\le \delta'$. Let $F$ be the Fr\'echet matching between $\tau[y, x]$ and $\tau[v_i, v_{i_1}]$.
			
			Given that $i^*\not=$-1, by definition, the end $x'$ of $\+B(v_{i_1}, \delta')\cap  v_{i^*}v_{i^*+1}$ is covered by $\+D[i, i_1, i', \alpha]$. In addition, we can find a point $y'\in v_{i'}v_{i'+1}$ with $y'\le_\tau x'$ and $d_F(\tau[y', x'], \tau[v_i, v_{i_1}])\le \delta'$. Let $F'$ be the Fr\'echet matching between $\tau[y', x']$ and $\tau[v_i, v_{i_1}]$.
			
			Since $i''>i'$ and $b\le i^*$, it satisfies that $y'\le_\tau y$ and $x\le x'$. In other word, $\tau[y, x]$ is a subcurve of $\tau[y', x']$. It means that there is point $z\in \tau[v_i, v_{i_1}]$ such that $F(z)=F'(z)$. Otherwise, $F'(v_{i_1})$ is strictly in front of $F(v_{i_1})$ along $\tau$, and $F'$ cannot be a matching between $\tau[y', x']$ and $\tau[v_i, v_{i_1}]$. 
			
			We proceed to claim that $d_F(\tau[y', x], \tau[v_i, v_{i_i}])\le \delta'$. It implies that $x$ must be covered by $\+D[i, i_1, i', \alpha]$, which is a contradiction. To prove the claim, we can construct a matching between $\tau[y', x]$ and $\tau[v_i, v_{i_1}]$ by matching $\tau[y', z]$ to $\tau[v_i, F'(z)]$ according to $F'$ and matching $\tau[z, x]$ to $\tau[F'(z), v_{i_1}]$ according to $F$. It is clear that the matching realizes a distance at most $\delta'$. Hence, the claim is true and we have a contradiction. This completes the proof.
		\end{proof}
		}

		\vspace{2pt}
		
		\noindent\underline{\bf Query algorithm.} Given an arbitrary subcurve $\tau'$ of $\tau_k$ and an arbitrary array $\+S$ induced by $\tau_k$, we present how to answer {\sc Cover}$(\tau',\delta',\+S)$ in $O(\mu_1)$ time based on the arrays $\+D$ and {\sc Max}.
		
		%We present how to answer queries with a subcurve $\tau[x, y]$ of $\tau_k$ and an interval sequence $\+S=(S_1, S_2,\ldots, S_{\mu_1-1})$ efficiently using the trie. 
		Suppose that $\tau'=\tau[x,y]$. Note that $x$ and $y$ may not be vertices of $\tau_k$. In the case where $\tau[x,y]$ has at most 2 edges, 
		%we invoke the procedure {\sc WaveFront} directly. 
		let $\+S'$ be an array induced by $\tau[x, y]$ with all elements being empty. We call {\sc WaveFront}$(\tau_k, \tau[x,y], \delta', \+S, \+S')$. Let $\bar{\+S}$ be the output array for $y$. By the definition of {\sc WaveFront}, $\bar{\+S}$ is a feasible answer for the query. It takes $O(\mu_1)$ time.
		
		%Set $\+S' = (\+A[|\tau[x,y]|][1],\+A[|\tau[x,y]|][2],\ldots, \+A[|\tau[x,y]|][\mu_1-1])$.
		
		When $\tau[x,y]$ has more than 2 edges, it must contain at least 2 vertices of $\tau_k$. That is, $\tau[x,y]=(x, v_{i_1}, v_{i_1+1},\ldots, v_{i_2}, y)$ with $i_2>i_1$. We process $xv_{i_1}$, the vertex-to-vertex subcurve $\tau[v_{i_1}, v_{i_2}]$, and $v_{i_2}y$ progressively. Let $\+S'$ be an array induced by $xv_i$ that contains an empty set. We first call {\sc WaveFront}$(\tau_k, xv_{i_1}, \delta', \+S, \+S')$ to get the output array $\+{W}^{v_{i_1}}$ for $v_{i_1}$. Set $\+S^1=\+{W}^{v_{i_1}}$. By the definition of {\sc WaveFront}, any point $y'$ in $\tau_k$ is covered by $\+S^1$ if and only if there is a point $x'$ covered by $\+S$ such that $(y', v_{i_1})$ is $\delta'$-reachable from $(x',x)$. %$y'\le_\tau x'$ and $d_F(\tau[y', x'], xv_i)\le \delta$. 
		
		%Let $\+A$ be the output that contains the corresponding intervals for $x$ and $v_i$. We construct an intermediate interval sequence $\+S_1=(S_{1,1}, S_{1,2},\ldots, S_{1,\mu_1-1})$ by setting $S_{1, a}=\+A[2][a]$ for all $a\in [\mu_1-1]$. Note that any non-empty $S_{1,a}$ locates within the intersection $\+B(v_i, (2+\varepsilon)\delta)\cap v_av_{a+1}$ by Lemma~\ref{lem: wave}. 
		
		We then deal with $\tau[v_{i_1}, v_{i_2}]$ to construct another intermediate array $\+S^2$ based on $\+S^1$, $\+D$ and {\sc Max}. %The goal is to ensure that every point $x'$ in $\tau_k$ satisfying that there is a point $y'$ covered by $\+S^1$ such that $y'\le_\tau x'$ and $d_F(\tau[y',x'], \tau[v_i, v_{i_{i_1}}])\le\delta'$ is covered by $\+S^2$, and any point $x''$ covered by $\+S^2$ satisfies that there is a point $y''$ covered by $\+S^1$ such that $y''\le_\tau x''$ and $d_F(\tau[y'',x''], \tau[v_i, v_{i_{i_1}}])\le(1+\epsilon)\delta'$.
		We first process $\+S^1$ to facilitate the use of $\+D$. Initialize a new array $\+{NS}$ induced by $\tau_k$. For every edge $v_{i}v_{i+1}$ of $\tau_k$, check $\+S^1_{v_{i}}\cap\+B(v_{i_1}, \delta')$. If the intersection is empty, set $\+{NS}_{v_{i}}$ to be empty; otherwise, we access the sequence $(p_1, p_2,\ldots, p_a)$ derived from the discretization of $\+B(v_{i_1}, \delta')\cap v_{i}v_{i+1}$ to find the maximum $b$ such that $p_{b}v_{i+1}$ includes $\+S^1_{v_{i}}\cap \+B(v_{i_1}, \delta')$. Note that the distance between $p_{b}$ and the start of $\+S^1_{v_{i}}\cap\+B(v_{i_1}, \delta')$ is at most $\epsilon\delta'$. We set $\+{NS}_{v_{i}}$ to be $p_{b}v_{i+1}\cap \+B(v_{i_1},\delta')$.

        We construct $\+S^2$ such that a point $y'\in \tau_k$ is covered by $\+S^2$ if and only if there is a point $x'$ covered by $\+{NS}$ such that $(y', v_{i_1})$ is $\delta'$-reachable from $(x', v_{i_1})$. Note that a point $y'$ is covered by $\+S^2$ if and only if there is a pair $(i, b)$ such that $\+{NS}_{v_{i}}=p_b v_{i'+1}\cap \+B(v_i, \delta')$ and $y'$ is covered by $\+D[i_1, i_2, i, b]$.  The construction of $\+S^2$ follows the greedy method before the data structure construction.
        
		%For every edge $v_{i''}v_{i''+1}$ of $\tau_k$, we aim to find the minimum $i'\le i''$ with non-empty $\+{NS}_{v_{i'}}$ such that the corresponding point $p_{a'}$ of $\+{NS}_{v_{i'}}$ satisfies that the $a'$-th element in {\sc Max}$[i, i_1, i']$ is at least $i''$. Let $\bar{\+A}$ be the $a'$-th element in $\+D[i, i_1, i']$. We then set $\+S^2_{v_{i''}}=\bar{\+A}_{v_{i''}}$. If such an $i'$ does not exist, we set $\+S^2_{v_{i''}}$ to be empty. 
		
		%To implement the idea efficiently, 
        Initialize all elements in $\+S^2$ to be empty. We traverse $\+{NS}$ to update $\+S^2$ progressively. We use $i$ to index the current element in $\+{NS}$ within the traversal, and use the $i'$ to index the element being updated in $\+S^2$. Initialize both $i$ and $i'$ to be $a_k$. For $\+{NS}_{v_{i}}$, if it is empty, we increase $i$ by 1. If $\+{NS}_{v_{i}}$ is not empty and $i'<i$, repeat setting $\+S^2_{v_{i'}}$ to be empty and increasing $i'$ by 1 until $i'$ equals to $i$. Let $p_b$ be the start of $\+{NS}_{v_{i}}$. Let $\bar{\+A}$ denote the array $\+D[i_1, i_2, i, b]$ for ease of notation. If {\sc Max}$[i_1, i_2, i,b]$ is equal to $-1$, it means that no point in $\tau_k$ is covered by $\bar{\+A}$. We increase $i$ by 1. Otherwise, some point(s) is covered by $\bar{\+A}$. There are two cases. If {\sc Max}$[i_1, i_2, i, b]\le i'$, by Lemma~\ref{lem:planarity}, all points covered by $\bar{\+A}$ are covered by some $\+D[i_1, i_2, i'', b']$ with $i''<i$, and $\bar{\+A}$ can not introduce any new point to $\+S^2$. We increase $i$ by one. Otherwise, we repeat setting $\+S^2_{v_{i'}}$ to be $\bar{\+A}_{v_{i'}}$ and increasing $i'$ by 1 until $i'$ equals to {\sc Max}$[i_1, i_2, i, b]+1$. We stop updating $\+S^2$ until $i'$ becomes $a_{k+1}$ or $i=a_{k+1}+1$. Since every element in $\+S^2$ is updated once and each update takes $O(1)$ time. We finish constructing $\+S^2$ in $O(\mu_1)$ time.
		
		%{\color{red} Prove properties of $\+S^2$.}
		
		In the end, we invoke {\sc WaveFront}$(\tau_k, v_{i_2}y, \delta', \+S^2, \+S')$ in $O(\mu_1)$ time, where $\+S'$ is an array induced by $v_{i_2}y$ that contains an empty set. Take the output array $\+{W}^y$ for $y$. We set $\bar{\+S}=\+{W}^y$. The array $\bar{\+S}$ is a feasible answer for the query {\sc Cover}$(\tau', \delta',\+S)$ via the following analysis. 
		
		\vspace{4pt}
		
		\noindent\textbf{Lemma~\ref{lem:cover}.}~\emph{
				Fix $\tau_k$ and $\delta'$. For any $\epsilon\in (0,1)$, there is a data structure of size $O(\mu_1^4/\epsilon)$ and preprocessing time $O(\mu_1^5/\epsilon)$ that answers in $O(\mu_1)$ time the query {\sc Cover} for any $\tau'$ and $\+S$.}
		
		\begin{proof}
			 The preprocessing time, data structure size and the query time follows the above procedure. We focus on proving the feasibility of $\bar{\+S}$.
			
			When $\tau'$ has at most 2 edges. The answer $\bar{\+S}$ is  returned by {\sc WaveFront}$(\tau_k, \tau', \delta', \+S, \+S')$ for the last vertex of $\tau'$, where $\+S$ is an array induced by $\tau'$ with all elements being empty. By the definition of {\sc WaveFront}, a point $y'$ is covered by $\bar{\+S}$ if and only if there is a point $x'$ covered by $\+S$ with $x'\le_\tau y'$ and $d_F(\tau[x',y'], \tau')\le \delta'$ as all elements in $\+S'$ are empty. It establishes that $\bar{\+S}$ is a feasible solution.
			
			The rest of the proof focuses on the case where $\tau'$ has more than 2 edges. As discussed in the query algorithm, we can divide $\tau'$ into 3 parts: a line segment $xv_{i_1}$, a vertex-to-vertex subcurve $\tau[v_{i_1}, v_{i_2}]$ of $\tau_k$, and a line segment $v_{i_2}y$. Take any point $y'$ for which we can find a point $x'$ covered by $\+S$ such that $x'\le_\tau y'$ and $d_F(\tau[x', y'], \tau')\le \delta'$. Let $F$ be the Fr\'echet matching between $\tau[x', y']$ and $\tau'$. We prove step by step that $F(v_{i_1})$ is covered by $\+S^1$, $F(v_{i_1})$ is covered by $\+S^2$, and $x'$ is covered by $\bar{\+S}$.
			
			Given that $d_F(\tau[x', F(v_{i_1})], xv_{i_1})\le \delta'$, $F(v_{i_1})$ must be covered by $\+S^1$ by the definition of {\sc Cover}. Suppose that $F(v_{i_1})$ belongs to the edge $v_{i}v_{i+1}$, and $F(v_{i_2})$ belongs to the edge $v_{i'}v_{i'+1}$. Note that $d(F(v_{i_1}), v_{i_1})\le \delta'$. Within the construction of $\+S^2$, we have $\+S^1_{v_{i}}\cap \+B(v_{i_1}, \delta')\not=\emptyset$. It implies that $F(v_{i_1})$ belongs to $\+{NS}_{v_{i}}$. Since a point $y''\in \tau_k$ is covered by $\+S^2$ if and only if there is a point $x''$ covered by $\+{NS}$ such that $x''\le_\tau y''$ and $d_F(\tau[x'', y''], \tau[v_{i_1}, v_{i_2}])\le \delta'$. Hence, $F(v_{i_2})$ belongs to $\+S^2_{v_{i'}}$.
   
            %Let $p_{a'}$ be the start of $\+{NS}_{v_{i'}}$. By Lemma~\ref{lem: wave} and the definition of the $a'$-th elements $\bar{\+A}$ and $i^*$ in $\+D[i, i_1, i']$ and {\sc Max}$[i, i_1, i']$, respectively, $\bar{\+A}_{v_{i''}}$ contains $F(v_{i_1})$ and $i^*\ge i''$. 
			
			%Within the construction of $\+S^2$, $\+S^2_{v_{i''}}$ is updated by $\+{NR}_{\bar{v}_i}$ with $i_{<}\le i'$ and the $a''$-th element in {\sc Max}$[i, i_1, i_{<}]$ is at least $i''$. By Lemma~\ref{lem:no-crossing}, $F(v_{i_1})$ belongs to $\+S^2_{v_{i''}}$ as $F(v_{i_1})$ belongs to $\bar{\+A}_{v_{i''}}$. 
            Given that $d_F(\tau[F(v_{i_2}), y'], v_{i_2}y)\le \delta'$ and $\bar{\+S}$ is returned by {\sc WaveFront}$(\tau_k, v_{i_2}y, \delta', \+S^2,\+S')$, $y'$ must be covered by $\bar{\+S}$. We have finished the first part of the proof that a point $y'\in \tau_k$ must be covered by $\bar{\+S}$ if there is a point $x'$ covered by $\+S$ with $x'\le_\tau y'$ and $d_F(\tau[x',y'], \tau')\le \delta'$.
			
			Next, we prove that for every point $y'$ covered by $\bar{\+S}$, there is a point $x'$ covered by $\+S$ with $x'\le_\tau y'$ and $d_F(\tau[x', y'], \tau')\le (1+\epsilon)\delta$. By the definition of {\sc Cover}, there is a point $z$ covered by $\+S^2$ such that $z\le_\tau y'$ and $d_F(\tau[z, y'], v_{i_2}y)\le \delta'$. 
			
			We proceed to prove that all points $z$ covered by $\+S^2$ satisfy that there is a point $z'$ covered by $\+S^1$ such that $z'\le_\tau z$ and $d_F(\tau[z', z], \tau[v_{i_1}, v_{i_2}])\le (1+\epsilon)\delta$. Suppose that $z$ belongs to the edge $v_{i}v_{i+1}$. By the construction of $\+S^2$, supposed that $\+S^2_{v_{i}}$ is updated by $\+{NS}_{v_{i'}}$. Let $p_{b}$ be the start of $\+{NS}_{v_{i'}}$. It satisfies that $\+S^1_{v_{i'}}\cap \+B(v_{i_1}, \delta')$ is not empty. Let $\ell$ be the start of this intersection. We also have $p_{b}\le_\tau \ell$ and $d(p_{b}, \ell)\le \epsilon\delta$. By definition, the point $z$ satisfies that there is a point $z''\in \+{NR}_{v_{i'}}$ with $z''\le_\tau z$ and $d_F(\tau[z'', z], \tau[v_{i_1}, v_{i_2}])\le \delta'$. Since we can further extend the Fr\'echet matching between $\tau[z'', z]$ and $\tau[v_{i_1}, v_{i_2}]$ to a matching between $\tau[p_{b}, z]$ and $\tau[v_{i_1}, v_{i_2}]$ by matching the line segment $p_{b}z''$ to $v_i$ and the line segment $p_{b}z''$ is inside the ball $\+B(v_i, \delta')$, the matching realizes a distance at most $\delta'$. Hence, $d_F(\tau[p_{b}, z], \tau[v_{i_1}, v_{i_2}])\le \delta'$. Provided that $d_F(\tau[p_{b}, z], \tau[\ell, z])\le \epsilon\delta$, we have $d_F(\tau[\ell, z], \tau[v_{i_1}, v_{i_2}])\le (1+\epsilon)\delta'$ by the triangle inequality. Note that $\ell$ is covered by $\+S^1$. This finishes proving that all points $z$ covered by $\+S^2$ satisfy that there is a point $z'$ covered by $\+S^1$ such that $z'\le_\tau z$ and $d_F(\tau[z', z], \tau[v_{i_1}, v_{i_2}])\le (1+\epsilon)\delta$.
			
			By the above analysis, we know that all points $y'$ covered by $\bar{\+S}$ satisfy that there is a point $z'$ covered by $\+S^1$ such that $z'\le_\tau y'$ and $d_F(\tau[z', y'], \tau[v_{i_1}, y])\le (1+\epsilon)\delta$. Since $z'$ is covered by $\+S^1$ and satisfies that there is point $x'$ covered by $\+S$ with $x'\le_\tau z'$ and $d_F(\tau[x', z'], xv_{i_1})\le \delta'$, we have $d_F(\tau[x', y'], \tau')\le (1+\epsilon)\delta$. We finish proving that $\bar{\+S}$ satisfies the second constraint. This establishes the feasibility of $\bar{\+S}$.
		\end{proof}

	\cancel{	
		
	\section{Proof of Lemma~\ref{lem: I_1}}\label{sec:proof_I_1}
	
	\noindent\textbf{Lemma~\ref{lem: I_1}.}~\emph{We can construct $\+I^1$ in $O(\mu_2^2)$ time such that for any $j\in [m-1]$, if $\reachVI{a_{k+1}}{j}$ is of type 1, $\reachVI{a_{k+1}}{j}\subset \+I^1_{w_j}$, and every point covered by $\+I^1$ forms a $((7+\varepsilon)\delta)$-reachable pair with $v_{a_{k+1}}$. }
	
	\begin{proof}
		The running time follows the construction procedure. We focus on proving the property of $\+I^1$.
	
	If $\zeta_k$ is null, it means that $\tau_k$ at a Fr\'echet distance more than $\delta$ to all curves of at most $\mu_2$ vertices, which implies that reachability intervals of type 1 does not exist. According to the construction procedure, all elements in $\+I^1$ are set to be empty. There is nothing to prove.
	
	When $\zeta_k$ is not null, for any $j\in [m-1]$, suppose that there is a reachability interval $\reachVI{a_{k+1}}{j}$ of type 1 in $w_jw_{j+1}$. For any point $p\in \reachVI{a_{k+1}}{j}$, we claim that there is a point $q$ covered by $\appReachVIArray{a_k}{l}$ such that $q\le_{\sigma_l} p$ and $d_F(\tau_k, \sigma[q, p])\le \delta$. By the definition of type 1 reachability interval, for the start $\ell^{a_{k+1}}_{w_j}$ of $\reachVI{a_{k+1}}{j}$, there is a point $q'$ covered by $\appReachVIArray{a_k}{l}$ such that $q'\le_{\sigma_l} \ell^{a_{k+1}}_{w_j}$ and $d_F(\tau_k, \sigma[q', \ell^{a_{k+1}}_{w_j}])\le \delta$. We have $q'\le_{\sigma_l} p$ as $\ell^{a_{k+1}}_{w_j}\le_{\sigma_l} p$. In addition, since both $\ell^{a_{k+1}}_{w_j}$ and $p$ are within a distance $\delta$ to $v_{a_{k+1}}$, we can extend the Fr\'echet matching between $\tau_k$ and $\sigma[q', \ell^{a_{k+1}}_{w_j}]$ to a matching between $\tau_k$ and $\sigma[q', p]$ by matching the line segment $\ell^{a_{k+1}}_{w_j}p$ to $v_{a_{k+1}}$. The matching realize a distance at most $\delta$. We finish proving the claim. 
	
	We then reveal the connection between $\reachVI{a_{k+1}}{j}$ and $\zeta_k$ via the triangle inequality. Given that $d_F(\tau_k, \zeta_k)\le (1+\epsilon)\delta$, for any point $p\in \reachVI{a_{k+1}}{j}$, there is a point $q$ covered by $\appReachVIArray{a_k}{l}$ such that $q\le_{\sigma_l}p$ and $d_F(\zeta_k, \sigma[q, p])\le(2+\epsilon)\delta$ by the triangle inequality. By Lemma~\ref{lem: wave}, all points $p'\in w_jw_{j+1}$ that satisfy that there is a point $q$ covered by $\appReachVIArray{a_k}{l}$ such that $q\le_{\sigma_l} p'$ and $d_F(\zeta_k, \sigma[q,p'])\le (2+\epsilon)\delta$ belong to $\+I^1_{w_j}=\+A^1_{w_j}$. Hence, $\reachVI{a_{k+1}}{j}\subset \+I^1_{w_j}$.
	
	Next, suppose that $\+I^1_{w_j}=\+A^1_{w_j}$ is non-empty for some $j\in [m-1]$. By Lemma~\ref{lem: wave}, any point $p\in \+I^1_{w_j}$ satisfies that there is a point $q$ covered by $\appReachVIArray{a_k}{l}$ such that $q\le_{\sigma_l} p$ and $d_F(\zeta_k, \sigma[q,p])\le (2+\epsilon)\delta$ as all elements in $\+S^k$ are empty. By the triangle inequality, $d_F(\tau_k, \sigma[q,p])\le (3+2\epsilon)\delta$ as $d_F(\tau_k, \zeta)\le (1+\epsilon)\delta$. Note that $\epsilon=\varepsilon/10$. Provided that $\appReachVIArray{a_k}{l}$ is $(7+\varepsilon)$-approximate reachable for $v_{a_k}$, it holds that $d_F(\tau[v_1, v_{a_k}], \sigma[w_1, q])\le(7+\varepsilon)\delta$. It implies that we can construct a matching between $\tau[v_1, v_{a_{k+1}}]$ and $\sigma[w_1, p]$ by concatenating the Fr\'echet matching between $\tau_k$ and $\sigma[q,p]$ to the Fr\'echet matching between $\tau[v_1, v_{a_k}]$ and $\sigma[w_1, q]$. The matching realizes a distance at most $(7+\varepsilon)\delta$. Hence, all points in $\+I^1_{w_j}$ can form $((7+\varepsilon)\delta)$-reachable pairs with $v_{a_{k+1}}$.
\end{proof}
		
	\section{Proof of Lemma~\ref{lem: I_2}}\label{sec:proof_I_2}
	
	\noindent\textbf{Lemma~\ref{lem: I_2}.}~\emph{We can construct $\+I^2$ in $O(\mu_1\log \mu_2+\mu_2^2)$ time such that for any $j\in [m-1]$, if $\reachVI{a_{k+1}}{j}$ is of type 2, then $\reachVI{a_{k+1}}{j}\subset \+I^2_{w_j}$, and every point covered by $\+I^2$ forms a $((7+\varepsilon)\delta)$-reachable pair with $v_{a_{k+1}}$.  }

	\begin{proof}
		The running time follows the construction procedure directly. We focus on proving the property of $\+I^2$.
		
	Suppose that there is a reachablity interval $\reachVI{a_{k+1}}{j}$ of type 2 in $w_jw_{j+1}$. For any point $p\in \reachVI{a_{k+1}}{j}$, we claim that there is a point $x$ covered by $\appReachWJArray{b_l}{k}$ with $d_F(\tau[x, v_{a_{k+1}}], \sigma[w_{b_l}, p])\le \delta$. By the definition of type 2 reachability interval, for the start $\ell^{a_{k+1}}_{w_j}$ of $\reachVI{a_{k+1}}{j}$, there is a point $x$ covered by $\appReachWJArray{b_l}{k}$ such that $d_F(\tau[x, v_{a_{k+1}}], \sigma[w_{b_l}, \ell^{a_{k+1}}_{w_j}])\le \delta$. Since both $\ell^{a_{k+1}}_{w_j}$ and $p$ are within a distance $\delta$ to $v_{a_{k+1}}$, we can extend the Fr\'echet matching between $\tau[x, v_{a_{k+1}}]$ and $\sigma[w_{b_l}, \ell^{a_{k+1}}_{w_j}]$ to a matching between $\tau[x, v_{a_{k+1}}]$ and $\sigma[w_{b_l}, p]$ by matching the line segment $\ell^{a_{k+1}}_{w_j}p$ to $v_{a_{k+1}}$. The matching realizes a distance at most $\delta$. We finish proving the claim.
	
	We then reveal the connection between $\reachVI{a_{k+1}}{j}$ and $\suf$, $\+S$. Let $\bar{v}_a$ be the last vertex of $\suf$. Note that $x\in \tau[v_{i_{\text{suf}}}, v_{a_{k+1}}]$. Given that $d_{\Msuf}(\tau[x, v_{a_{k+1}}], \suf[\Msuf(x), \bar{v}_a])\le (1+\epsilon)\delta$, we have $d_F(\tau[x, v_{a_{k+1}}], \suf[\Msuf(x), \bar{v}_a])\le (1+\epsilon)\delta$. By triangle inequality, $d_F(\suf[\Msuf(x), \bar{v}_a], \sigma[w_{b_l}, p])\le(2+\epsilon)\delta$. Suppose that $\Msuf(x)$ is on the edge $v'_ev'_{e+1}$, and $\+S_{v'_e}=x'y'$. Since $x$ is covered by $\appReachWJArray{b_l}{k}$, we have $x'\le_{\suf} \Msuf(x)\le_{\suf} y'$. We argue that $d_F(\suf[x', \bar{v}_a], \sigma[w_{b_l}, p])\le (2+\epsilon)\delta$. Suppose that $x'$ is $\Msuf(y)$ for some $y$ covered by $\appReachWJArray{b_l}{k}$. We have $d(x', y)\le (1+\epsilon)\delta$. Provided that $\appReachWJArray{b_l}{k}$ is $(7+\varepsilon)$-approximate reachable for $w_{b_l}$, $d(y, w_{b_l})\le \delta$ by definition. Therefore, $d(x', w_{b_l})\le(2+\epsilon)\delta$ by the triangle inequality. Note that $d(\Msuf(x), w_{b_l})\le (2+\epsilon)\delta$ as well. It implies that the entire line segment $x'\Msuf(x)$ is inside the ball $\+B(w_{b_l}, (2+\epsilon)\delta)$. We can generate a matching between $\suf[x', \bar{v}_a]$ and $\sigma[w_{b_l},p]$ by matching the line segment $x'\Msuf(x)$ to $w_{b_l}$ and matching $\suf[\Msuf(x), \bar{v}_a]$ to $\sigma[w_{b_l}, p]$ by their Fr\'echet matching. The matching realizes a distance of at most $(2+\epsilon)\delta$. We finish proving the argument.
	
	Given that $x'\in \+S_{v'_e}$, $p$ must be included by $\+I^2_{w_j}=\+{W}^{\bar{v}_a}_{w_j}$ by Lemma~\ref{lem: wave}. Hence, $\reachVI{a_{k+1}}{j}\subset \+I^2_{w_j}$.
	
	Next, suppose that $\+I^2_{w_j}=\+{W}^{\bar{v}_a}_{w_j}$ is non-empty for some $j\in [m-1]$. By Lemma~\ref{lem: wave}, any point $p\in \+I^2_{w_j}$ satisfies that there is a point $x''$ in $\+S_{v'_e}$ such that  $d_F(\suf[x'', \bar{v}_a], \sigma[w_{b_l},p])\le (2+\epsilon)\delta$ as all elements in $\+S'$ are empty. Take the start $x'$ of $\+S_{v'_e}$. We proceed to prove that $d_F(\suf[x', \bar{v}_a], \sigma[w_{b_l}, p])\le (2+\epsilon)\delta$. It is sufficient to prove that the entire line segment $x'x''$ is inside the ball $\+B(w_l, (2+\epsilon)\delta)$. Because we can generate a matching between $\suf[x', \bar{v}_a]$ and $\sigma[w_{b_l},p]$ by matching the line segment $x'x''$ to $w_{b_l}$ and matching $\suf[x'', v'_{|\suf}|]$ to $\sigma[w_{b_l}, p]$ by their Fr\'echet matching. The matching realizes a distance of at most $(2+\epsilon)\delta$. Note that $d(w_{b_l}, x'')\le (2+\epsilon)\delta$. Suppose that $x'=\Msuf(y)$ for some $y$ covered by $\appReachWJArray{b_l}{k}$. We have $d(y, x')\le (1+\epsilon)\delta$ and $d(y, w_{b_l})\le \delta$ as $\appReachWJArray{b_l}{k}$ is $(7+\varepsilon)$-approximate reachable for $w_{b_l}$. By the triangle inequality, $d(x', w_{b_l})\le(2+\epsilon)\delta$. Hence, the entire line segment $x'x''$ is inside the ball $\+B(w_{b_l}, (2+\epsilon)\delta)$, and $d_F(\suf[x', \bar{v}_a], \sigma[w_{b_l},p])\le (2+\epsilon)\delta$.
	
	By the triangle inequality, $d_F(\tau[y, v_{a_{k+1}}], \sigma[w_{b_l},p])\le (3+2\epsilon)\delta$ as $d_F(\tau[y, v_{a_{k+1}}], \suf[x', \bar{v}_a])\le (1+\epsilon)\delta$. Note that $\epsilon=\varepsilon/10$.
	%By definition, $x'$ equals to $\Msuf(y)$ for some $y$ covered by $\appReachWJArray{b_l}{k}$. We proceed to prove that $d_F(\tau[y, v_{a_{k+1}}], \sigma[w_{b_l}, p])\le (3+2\varepsilon)\delta$. It is sufficient to prove that the entire line segment $y\Msuf(x'')$ is inside the ball $\+B(w_{b_l}, (3+2\varepsilon)\delta)$.  For the point $y$, given that $d(y, x')\le (1+\varepsilon)\delta$ and $d(w_{b_l})$
	Provided that $\appReachWJArray{b_l}{k}$ is $(7+\varepsilon)$-approximate reachable for $w_{b_l}$, $d_F(\tau[v_1, y], \sigma[w_1, w_{b_l}])\le(7+\varepsilon)\delta$. It implies that we can construct a matching between $\tau[v_1, v_{a_{k+1}}]$ and $\sigma[w_1, p]$ by concatenating the Fr\'echet matching between $\tau[y, v_{a_{k+1}}]$ and $\sigma[w_{b_l},p]$ to the Fr\'echet matching between $\tau[v_1, y]$ and $\sigma[w_1, w_{b_l}]$. The matching realizes a distance at most $(7+\varepsilon)\delta$. Hence, all points in $\+I^2_{w_j}$ can form $((7+\varepsilon)\delta)$-reachable pairs with $v_{a_{k+1}}$.
	
\end{proof}
	
	\section{Proof of Lemma~\ref{lem: I_3}}\label{sec:proof_I_3}
	
	\noindent\textbf{Lemma~\ref{lem: I_3}.}~\emph{We can construct $\+I^3$ in $O(\mu_1+\mu_2^2)$ time. For any $i\in [n-1]$, if $\reachWJ{b_{l+1}}{i}$ is of type 3, then $\reachWJ{b_{l+1}}{i}\subset \+I^3_{v_i}$, and every point covered by $\+I^3$ forms a $((7+\varepsilon)\delta)$-reachable pair with $w_{b_{l+1}}$.  }
	
	\begin{proof}
		The running time follows the construction procedure directly. We focus on proving the property of $\+I^3$.
		
	Suppose that there is a reachability interval $\reachWJ{b_{l+1}}{i}$ of type 3 in $v_iv_{i+1}$. By definition, $i$ must belong to $[a_k, i_{\text{pre}}]$. According to the construction procedure,  $\+I^3_{v_i}$ is determined by $\+{W}^{w_{b_{l+1}}}$ instead of being set to $\emptyset$ directly. We first claim that for any point $x\in \reachWJ{b_{l+1}}{i}$, there is a point $p$ covered by $\appReachVIArray{a_k}{l}$ with $d_F(\tau[v_{a_k}, x], \sigma[p, w_{b_{l+1}}])\le \delta$. By the definition of type 3 reachability interval, for the start $\ell^{b_{l+1}}_{v_i}$ of $\reachWJ{b_l}{i}$, there is a point $p$ covered by $\appReachVIArray{a_k}{l}$ with $d_F(\tau[v_{a_k}, \ell^{b_{l+1}}_{v_i}], \sigma[p, w_{b_{l+1}}])\le \delta$. Since $\ell^{b_{l+1}}_{v_i}\le_{\tau} x$ and both $\ell^{b_{l+1}}_{v_i}$ and $x$ are within a distance $\delta$ to $w_{b_{l+1}}$, we can extend the Fr\'echet matching between $\tau[v_{a_k}, \ell^{b_{l+1}}_{v_i}]$ and $\sigma[p, w_{b_{l+1}}]$ to a matching between $\tau[v_{a_k},x]$ and $\sigma[p, w_{b_{l+1}}]$ by matching the line segment $\ell^{b_{l+1}}_{v_i}x$ to $w_{b_{l+1}}$. The matching realizes a distance at most $\delta$. We finish proving the claim.
	
	We then reveal the connection between $\reachWJ{b_{l+1}}{i}$ and $\pre$, $\+{W}^{w_{b_{l+1}}}$. Specifically, we prove that for any point $x\in \reachWJ{b_{l+1}}{i}$, $\Mpre(x)$ is covered by $\+{W}^{w_{b_{l+1}}}$. Given that $d_{\Mpre}(\tau[v_{a_k}, v_{i_{\text{pre}}}], \pre)\le (1+\epsilon)\delta$, the subcurve $\tau[v_{a_k}, x]$ can be matched to $\pre[v''_1, \Mpre(x)]$ within a distance of $(1+\epsilon)\delta$, which implies that $d_F(\tau[v_{a_k}, x], \pre[v''_1, \Mpre(x)])\le(1+\epsilon)\delta$. Hence, $d_F(\pre[v''_1, \Mpre(x)], \sigma[p, w_{b_{l+1}}])\le(2+\epsilon)\delta$ by the triangle inequality. Provided that $p$ is covered by $\appReachVIArray{a_k}{l}$, and $\+{W}^{w_{b_{l+1}}}$ is returned by the invocation of {\sc WaveFront}$(\pre, \sigma_l, (2+\epsilon)\delta, \+S, \appReachVIArray{a_k}{l})$, $\Mpre(x)$ must belong to some line segment in $\+{W}^{w_{b_{l+1}}}$ by Lemma~\ref{lem: wave}.
	
	Given that $\Mpre(x)$ is covered by $\+{W}^{w_{b_{l+1}}}$, $\+I^3_{v_i}$ is not empty by the construction. In addition, the start $x'$ and end $y'$ of $\+I^3_{v_i}$ satisfies that $x'\le_{\tau} x\le_{\tau} y'$. Hence, $\reachWJ{b_{l+1}}{i}\subset \+I^3_{v_i}$.
	
	Next, suppose that $\+I^3_{v_i}=x'y'$ is not empty for some $i\in [n-1]$. By construction, $i$ belongs to $[a_k, i_{\text{pre}}-1]$, and both $\Mpre(x')$ and $\Mpre(y')$ are covered by $\+{W}^{w_{b_{l+1}}}$. By Lemma~\ref{lem: wave}, $\Mpre(x')$ satisfies that there is some point $p$ in some $\appReachVIArray{a_k}{l}$ such that $d_F(\pre[v''_1, \Mpre(x')], \sigma[p, w_{b_{l+1}}])\le (2+\epsilon)\delta$. So does $\Mpre(y')$. It implies that $d(w_{b_{l+1}}, \Mpre(x'))\le (2+\epsilon)\delta$ and $d(w_{b_{l+1}}, \Mpre(y'))\le (2+\epsilon)\delta$. Provided that $d(x', \Mpre(x'))\le (1+\epsilon)\delta$ and $d(y', \Mpre(y'))\le (1+\epsilon)\delta$, both $d(x', w_{b_{l+1}})$ and $d(y', w_{b_{l+1}})$ are at most $(3+2\epsilon)\delta$ by the triangle inequality. Hence, $\+I^3_{v_i}$ is within the ball $\+B(w_{b_{l+1}}, (3+2\epsilon)\delta)$.
	
	We proceed to prove that $(x', w_{b_{l+1}})$ is $((7+\varepsilon)\delta)$-reachable. Since $d_F(\pre[v''_1, \Mpre(x'), \sigma[p, w_{b_{l+1}}]])\le(2+\epsilon)\delta$ for a point $p$ covered by $\appReachVIArray{a_k}{l}$, and $d_F(\tau[v_{a_k}, x'], \pre[v''_1, \Mpre(x')])\le (1+\epsilon)\delta$, we have $d_F(\tau[v_{a_k}, x'], \sigma[p, w_{b_{l+1}}])\le (3+2\epsilon)\delta$ by the triangle inequality. Note that $\epsilon=\varepsilon/10$. Given that $\appReachVIArray{a_k}{l}$ is $(7+\varepsilon)$-approximate reachable for $v_{a_k}$, it holds that $d_F(\tau[v_1, v_{a_k}], \sigma[w_1, p])\le (7+\varepsilon)\delta$. Hence, we can construct a matching between $\tau[v_1, x']$ and $\sigma[w_1, w_{b_{l+1}}]$ by concatenating the Fr\'echet matching between $\tau[v_{a_k}, x']$ and $\sigma[p, w_{b_{l+1}}]$ and the Fr\'echet matching between $\tau[v_1, v_{a_k}]$ and $\sigma[w_1, p]$. The matching realizes a distance at most $(7+\varepsilon)\delta$. Therefore, $d_F(\tau[v_1, x'], \sigma[w_1, w_{b_{l+1}}])\le (7+\varepsilon)\delta$, it means that $(x', w_{b_{l+1}})$ is $((7+\varepsilon)\delta)$-reachable.
	
	For any point $x''\in \+I^3_{v_i}$, we can extend the Fr\'echet matching between $\tau[v_1, x']$ and $\sigma[w_1, w_{b_{l+1}}]$ to a matching between $\tau[v_1, x'']$ and $\sigma[w_1, w_{b_{l+1}}]$ by matching the line segment $x'x''$ to $w_{b_{l+1}}$. The matching realizes a distance at most $(7+\varepsilon)\delta$. Therefore, $d_F(\tau[v_1, x''], \sigma[w_1, w_{b_{l+1}}])\le (7+\varepsilon)\delta$, it means that $(x'', w_{b_{l+1}})$ is $((7+\varepsilon)\delta)$-reachable.
	
\end{proof}

	\section{Proof of Lemma~\ref{lem: I_4}}\label{sec:proof_I_4}
	
	\noindent\textbf{Lemma~\ref{lem: I_4}.}~\emph{We can construct $\+I^4$ in $O(\mu_1(\mu_3+\mu_2/\mu_3+\mu_2^5\log n/(\omega\mu_3))+\omega\mu_2^4)$ time such that for any $i\in [n-1]$, if $\reachWJ{b_{l+1}}{i}$ is of type 4, then $\reachWJ{b_{l+1}}{i}\subset \+I^4_{v_i}$ with probability at least $1-n^{-10}$, and every point covered by $\+I^4$ forms a $((7+\varepsilon)\delta)$-reachable pair with $w_{b_{l+1}}$.}
	
	\begin{proof}
		The running time follows the construction procedure directly. We focus on proving the property of $\+I^4$.
		
		Suppose that there is a reachability interval $\reachWJ{b_{l+1}}{i}$ of type 4 in $v_iv_{i+1}$. It implies that $\sigma_l$ is within a Fr\'echet distance $\delta$ to some subcurve of $\tau_k$. Hence, with probability at least $1-n^{-10}$, we can either find a single subcurve $\tau'$ of $\tau_k$ such that $d_F(\tau', \sigma_l)\le (3+2\epsilon)\delta$ or subcurves $\tau'_{l,r}$ of $\tau_k$ for every $r\in [(\mu_2-1)/\mu_3]$ such that $d_F(\tau'_{l,r}, \sigma_{l,r})\le (3+2\epsilon)\delta$.
		
		When we find $\tau'$, $\+I^4$ is the answer of the query {\sc Cover}$(\tau', (4+2\epsilon)\delta, \appReachWJArray{b_l}{k})$. Note that any point $x\in \reachWJ{b_{l+1}}{i}$ satisfies that there is a point $y\in \appReachWJArray{b_l}{k}$ such that $y\le_\tau x$ and $d_F(\tau[y, x], \sigma_l)\le \delta$. By the triangle inequality, $d_F(\tau[y, x], \tau')\le (4+2\epsilon)\delta$. Hence, $x$ must be covered by $\+I^4$ by the definition of the query {\sc Cover}.
		
		When we get $\tau'_{l,1}, \tau'_{l,2}, \ldots, \tau'_{l, (\mu_2-1)/\mu_3}$, we will construct a sequence $\+S^1, \+S^2,\ldots, \+S^{(\mu_2-1)/\mu_3}$ and set $\+I_4=\+S^{(\mu_2-1)/\mu_3}$. We prove that $\+S^r$ contains all points $x\in \tau_k$ satisfying that there is a point $y\in \appReachWJArray{b_l}{k}$ with $y\le_\tau x$ and $d_F(\tau[y, x], \sigma[w_{b_l}, w_{b_l, r+1}])\le \delta$ by induction on $r$. 
		
		When $r=1$, the analysis is the same as the analysis in the case that we find $\tau'$. When $r\ge 2$, assume that $\+S^{r-1}$ satisfies the property. Take any point $x\in \tau_k$ satisfying that there is a point $y\in \appReachWJArray{b_l}{k}$ with $y\le_\tau x$ and $d_F(\tau[y, x], \sigma[w_{b_l}, w_{b_l, r+1}])\le \delta$. The Fr\'echet matching between $\tau[y, x]$ and $\sigma[w_{b_l}, w_{b_l, r+1}]$ must matching the vertex $w_{b_l, r}$ to some point $z\in \tau[y, x]$. It implies that $d_F(\tau[y,z], \sigma[w_{b_l}, w_{b_l,r}])\le \delta$ and $d_F(\tau[z, x], \sigma[w_{b_l, r}, w_{b_l, r+1}])\le \delta$. By the induction hypothesis, the point $z$ is covered by $\+S^{r-1}$. By the construction procedure, $\+S^r$ is the answer for {\sc Cover}$(\tau'_{l,r}, (4+2\epsilon)\delta, \+S^{r-1})$. Since $\sigma_{l,r}=\sigma[w_{b_l,r}, w_{b_l, r+1}]$ is within a Fr\'echet distance $(3+2\epsilon)\delta$ to $\tau'_{l,r}$, by the triangle inequality, $d_F(\tau'_{l,r}, \sigma_{l,r})\le (4+2\epsilon)\delta$. Hence, $x$ must be covered by $\+S^r$ by the definition of the query {\sc Cover} as $z$ is covered by $\+S^{r-1}$. We finish proving the property for all $\+S^r$'s. Since $\+I^4=\+S^{(\mu_2-1)/\mu_3}$ and $w_{b_l, (\mu_2-1)/\mu_3+1}=w_{b_{l+1}}$, all points in $\reachWJ{b_{l+1}}{i}$ are covered by $\+I^4$. 
		
		Next, we prove that all points covered by $\+I^4$ can form $((7+\varepsilon)\delta)$-reachable pairs with $w_{b_{l+1}}$. In the case that we find $\tau'$, $\+I^4$ is the answer of the query {\sc Cover}$(\tau', (4+2\epsilon)\delta, \appReachWJArray{b_l}{k})$. By Lemma~\ref{lem:cover}, all points $x$ covered by $\+I^4$ satisfied that there is a point $y$ covered by $\appReachWJArray{b_l}{k}$ such that $y\le_\tau x$ and $d_F(\tau[y,x], \tau')\le (1+\epsilon)\cdot (4+2\epsilon)\delta$. Given that $d_F(\tau', \sigma_l)\le (3+2\epsilon)\delta$, we have $d_F(\tau[y,x], \sigma_l)\le (7+8\epsilon+2\epsilon^2)\delta$ by the triangle inequality. Note that $\epsilon=\varepsilon/10$. Hence, $d_F(\tau[y,x], \sigma_l)\le (7+\varepsilon)\delta$. Given that $\appReachWJArray{b_l}{k}$ is $(7+\varepsilon)$-approximate reachable for $w_{b_l}$, it holds that $d_F(\tau[v_1, y], \sigma[w_1, w_{b_l}])\le (7+\varepsilon)\delta$. Hence, we can construct a matching between $\tau[v_1, x]$ and $\sigma[w_1, w_{b_{l+1}}]$ by concatenating the Fr\'echet matching between $\tau[v_1,y]$ and $\sigma[w_1, w_{b_l}]$ and the Fr\'echet matching between $\tau[y,x]$ and $\sigma_l$. The matching realizes a distance at most $(7+\varepsilon)\delta$. Therefore, $d_F(\tau[v_1, x], \sigma[w_1, w_{b_{l+1}}])\le (7+\varepsilon)\delta$, it means that $(x, w_{b_{l+1}})$ is $((7+\varepsilon)\delta)$-reachable.
		
		When we get $\tau'_{l,1}, \tau'_{l,2}, \ldots, \tau'_{l, (\mu_2-1)/\mu_3}$, we will construct a sequence $\+S^1, \+S^2,\ldots, \+S^{(\mu_2-1)/\mu_3}$ and set $\+I_4=\+S^{(\mu_2-1)/\mu_3}$. We prove that  all points $x$ in $\+S^r$ can form $((7+\varepsilon)\delta)$-reachable pairs with $w_{b_l, r+1}$ by induction on $r$.
		
		When $r=1$, the analysis is the same as the analysis in the case that we find $\tau'$. When $r\ge 2$, assume that $\+S^{r-1}$ satisfies the property. Take any point $x$ covered by $\+S^r$. Since $\+S^r$ is the answer for {\sc Cover}$(\tau'_{l,r},(4+2\epsilon)\delta,\+S^{r-1})$, by the definition of the query {\sc Cover}, we can find a point $y$ covered by $\+S^{r-1}$ such that $y\le_\tau x$ and $d_F(\tau[y,x], \tau'_{l,r})\le (1+\epsilon)\cdot(4+2\epsilon)\delta$. Given that $d_F(\tau'_{l,r}, \sigma_{l,r})\le (3+2\epsilon)\delta$, we have $d_F(\tau[y,x], \sigma_{l,r})\le (7+8\epsilon+2\epsilon^2)\delta$ by the triangle inequality. Note that $\epsilon=\varepsilon/10$. Hence, $d_F(\tau[y,x], \sigma_{l,r})\le (7+\varepsilon)\delta$. By the induction hypothesis, it holds that $d_F(\tau[v_1, y], \sigma[w_1, w_{b_l,r}])\le (7+\varepsilon)\delta$. Hence, we can construct a matching between $\tau[v_1, x]$ and $\sigma[w_1, w_{b_l, r+1}]$ by concatenating the Fr\'echet matching between $\tau[v_1, y]$ and $\sigma[w_1, w_{b_l, r}]$ and the Fr\'echet matching between $\tau[y,x]$ and $\sigma_{l,r}$. The matching realizes a distance at most $(7+\varepsilon)\delta$. Therefore, $d_F(\tau[v_1, x], \sigma[w_1, w_{b_l, r+1}])\le (7+\varepsilon)\delta$. We finish proving the property for all $\+S^r$'s.  Since $\+I^4=\+S^{(\mu_2-1)/\mu_3}$ and $w_{b_l, (\mu_2-1)/\mu_3+1}=w_{b_{l+1}}$, all points covered by $\+I^4$ can form $((7+\varepsilon)\delta)$-reachable pairs with $w_{b_{l+1}}$.
	\end{proof}

	}
	%We proceed to construct another sequence $\+S$ of $\mu$
	
	%Let $e$ be the end of the intersection between $\+B(w_{b_l}, (2+\varepsilon)\delta)$ and $v'_{s,i}v'_{s,i+1}$. 
	
	%Note that if an edge of $\suf$ is completely inside the subcurve $\suf[\Msuf(\tilde{\ell}'_{b_l,i}),\Msuf(\tilde{r}'_{b_l,i})]$ for some $i\in [i_{\text{suf}}, a_{k+1}]$. Then it will be associated to 

	%\subsection{Approximation guarantee}

	%\section{Subquadratic decision algorithm for discrete Fr\'echet distance}

	\cancel{
\section{Subquadratic decision algorithm for discrete Fr\'echet distance}\label{sec:discrete}
% \subsection{Background}
Let $\tau$ and $\sigma$ be two polygonal curves of vertices in $\mathbb{R}^d$ with $n$ and $m$ vertices, respectively. Recall that the  \emph{discrete} Fr\'echet distance of $\tau, \sigma$ is $\tilde{d}_F(\tau, \sigma) = \inf_{\+M}d_{\+M}(\tau, \sigma)$, where $\+M$ is a matching that matches at least one vertex of $\sigma$ to a vertex of $\tau$, and vice versa.  Given a boolean array $\+S$ of $n$ items \emph{induced} by $\tau$, let $\+{S}_{v_i}$ be the corresponding element for $v_i$. A vertex $v_x$ is \emph{covered} by $\+S$ if $\+S_{v_x} = 1$.

A vertex $v_i$ of $\tau$ and a vertex $w_j$ of $\sigma$ can form a \emph{$\delta$-reachable pair} $(v_i, w_j)$ if and only if $\tilde{d}_F(\tau[v_1, v_i], \sigma[w_1, w_j]) \leq \delta$. For a vertex $v_i$ of $\tau$, a vertex $w_j$ of $\sigma$ belongs to the \emph{reachability set} of $v_i$ with respect to $\delta$ if and only if $(v_i, w_j)$ is a reachable pair. We use an array $\+{DR}^{v_i}$ induced by $\sigma$ to indicate whether $w_j$ belongs to the reachability set of $v_i$. Specially, $\+{DR}^{v_i}_{w_j} = 1$ if $(v_i, w_j)$ is a reachable pair; otherwise, $\+{DR}^{v_i}_{w_j} = 0$. We can define the reachability set of a vertex $w_j$ of $\sigma$ analogously. We use $\+{DR}^{w_j}$ to represent it.

Given $\tau$, $\sigma$ and a value $\delta>0$, we can use a discrete version of \pmb{\sc Propagate} to compute the reachability sets for all vertices of $\tau$ and $\sigma$. We describe it as \pmb{\sc DisPropagate$(\tau, \sigma, \delta)$}. 

	\vspace{8pt}

\noindent \pmb{\sc DisPropagate$(\tau, \sigma, \delta)$.} The procedure is a simple dynamic programming. It returms $(\+{DR}^{v_i})_{i \in [n]}$ and $(\+{DR}^{w_j})_{j \in [m]}$.

We first initialize the reachability set for $v_1$ and $w_1$. If $d(v_1, w_1) \leq \delta$, set $\+{DR}^{v_1}_{w_1} = 1$; otherwise, set $\+{DR}^{v_1}_{w_1} = 0$. For all $j \in [2, m]$, if $d(v_1, w_j) \leq \delta$ and $\+{DR}^{v_1}_{w_{j-1}} = 1$, set $\+{DR}^{v_1}_{w_j} = 1$; otherwise, set $\+{DR}^{v_1}_{w_j} = 0$. 
We compute $\+{DR}^{w_1}_{v_i}$ for all $i \in [n]$ in a similar way.

After the initialization, we use the following recurrence to update $\+{DR}^{v_i}$ for $i \in [2, n]$ and $\+{DR}^{w_j}$ for $j \in [2, m]$. We set $\+{DR}^{v_i}_{w_j} = 1$ if and only if at least one of $\+{DR}^{v_{i-1}}_{w_j}, \+{DR}^{v_i}_{w_{j-1}}, \+{DR}^{v_{i-1}}_{w_{j-1}}$ is $1$ and $d(v_i, w_j) \leq \delta$. Then we set $\+{DR}^{w_j}_{v_i} = 1$ if and only if $\+{DR}^{v_i}_{w_j} = 1$.

By executing the recurrence for all $i \in [n]$ and $j \in [m]$, we get reachability sets for all $v_i$ and $w_j$. This completes {\sc{DisPropagate}}.

	\vspace{8pt}

In our decision algorithm for the discrete Fr\'echet distance, we will use the discrete version of {\sc WaveFront}. We call it \pmb{{\sc DisWaveFront}}. We will invoke it in a form of {{\sc DisWaveFront}}$(\tau, \sigma, $ $\delta, \+S, \+{S}')$. The three parameters $\tau, \sigma$ and $\delta$ are the same as those in the input of {\sc DisPropagate}. The parameters $\+S$ and $\+{S}'$ are arrays induced by $\tau$ and $\sigma$, respectively.

Let $\+{DW}^{v_i}$ be an array induced by $\sigma$ for all $i \in [n]$, and $\+{DW}^{w_j}$ be an array induced by $\tau$ for all $j \in [m]$.
Within {\sc DisWaveFront}, we first initialize $\+{DW}^{v_1}$. For $j = 1$, set $\+{DW}^{v_1}_{w_1} = 1$ if $\+{S}'_{w_1} = 1$ and $d(v_1, w_1) \leq \delta$; otherwise, set $\+{DW}^{v_1}_{w_1} = 0$. For $j \in [2, m]$, set $\+{DW}^{v_1}_{w_j} = 1$ if either $\+{S}'_{w_j} = 1$ and $d(v_1, w_j) \leq \delta$ or $\+{DW}^{v_1}_{w_{j-1}} = 1$ and $d(v_1, w_j) \leq \delta$. Then we set $\+{DW}^{w_j}_{v_1} = 1$ if and only if $\+{DW}^{v_1}_{w_j} = 1$. We initialize $\+{DW}^{w_1}$ in a similar way.

We then use the same recurrence to update $\+{DW}^{v_i}$ for $i \in [2, n]$ and $\+{DW}^{w_j}$ for $j \in [2, m]$ as {\sc DisPropagate}. 
Note that {\sc DisPropagate}$(\tau, \sigma, \delta)$ is equivalent to {\sc DisWaveFront}$(\tau, \sigma, $ $\delta, \+{DR}^{w_1}, \+{DR}^{v_1})$. The procedure {\sc DisWaveFront} also runs in $O(nm)$ and returns arrays $(\+{DW}^{v_i})_{i \in [n]}$ and $(\+{DW}^{w_j})_{j \in [m]}$ as the output. The output possesses some favorable properties as shown in the following lemma.

\begin{lemma}
\label{lem: DisWaveFront proverty}
Given two polygonal curves $\tau, \sigma$ in $\mathbb{R}^d$, a value $\delta > 0$ and an array $\+{S}$ induced by $\tau$ and an array $\+{S}'$ induced by $\sigma$, calling {\sc DisWaveFront}($\tau$, $\sigma$, $\delta$, $\+{S}$, $\+{S}'$) returns $(\+{DW}^{v_i})_{i \in [n]}$ and $(\+{DW}^{w_j})_{j \in [m]}$:
\begin{itemize}
    \item $\+{DW}^{v_i}$ is an array induced by $\sigma$ for all $i \in [n]$, $\+{DW}^{v_i}_{w_j} = 1$ {\bf if and only if} there exists a vertex $w_p$ covered by $\+{S}'$ such that $w_p \leq_{\sigma} w_j$ and $\tilde{d}_F(\tau[v_1, v_i], \sigma[w_p, w_j]) \leq \delta$ or there exists a vertex $v_x$ covered by $\+{S}$ such that $v_x \leq v_i$ and $\tilde{d}_F(\tau[v_x, v_i], \sigma[w_1, w_j]) \leq \delta$.
    \item $\+{DW}^{w_j}$ is an array induced by $\tau$ for all $j \in [m]$, $\+{DW}^{w_j}_{v_i} = 1$ {\bf if and only if} there exists a vertex $v_x$ covered by $\+{S}$ such that $v_x \leq_{\tau} v_i$ and $\tilde{d}_F(\tau[v_x, v_i], \sigma[w_1, w_j]) \leq \delta$ or there exists a vertex $w_p$ covered by $\+{S}'$ such that $w_p \leq_{\sigma} w_j$ and $\tilde{d}_F(\tau[v_1, v_i], \sigma[w_p, w_j]) \leq \delta$.
\end{itemize}
\end{lemma}
\begin{proof}
We prove the properties for $\+{DW}^{v_i}$ and $\+{DW}^{w_j}$ by induction on $i$ and $j$.  Consider $\+{DW}^{v_1}$ and $\+{DW}^{w_1}$ as the base case. We first prove that $\+{DW}^{v_1}$ satisfies the property. According to the initialization, we have $\+{DW}^{v_1}_{w_1}$ holds the property. For $j \in [2, m]$, suppose $\+{DW}^{v_1}_{w_j} = 1$, we have either $\+{DW}^{v_i}_{w_{j-1}} = 1$ and $d(v_1, w_j) \leq \delta$ or $w_j \in \+{S}'$ and $d(v_1, w_j) \leq \delta$. If $w_j \in \+{S}'$ and $d(v_1, w_j)\leq \delta$, we have $\+{DW}^{v_1}_{w_j}$ satisfies the necessity; otherwise, we match $w_j$ to $v_1$. Then, we check whether $w_{j-1} \in \+{S}'$, if so, we have $\tilde{d}_F(\tau[v_1, v_1], \sigma[w_{j-1}, w_j]) \leq \delta$ and we finish the proof of necessity; otherwise, we match $w_{j-1}$ to $v_1$ and repeat the verification for $w_{j-2}$. This process continues until there is a vertex $w_p$ such that either $p = 1$ or $w_p \in \+{S}'$, then we get a matching $\+{M}$ such that $\tilde{d}_F(\tau[v_1, v_1], \sigma[w_p, w_j]) \leq  d_{\+{M}}(\tau[v_1, v_1], \sigma[w_p, w_j]) \leq \delta$. We finish proving that $\+{DW}^{v_1}$ satisfies the necessity. For the sufficiency, suppose there exists a vertex $w_p$ covered by $\+{S}'$ such that $w_p \leq_{\sigma} w_j$ and $\tilde{d}_F(\tau[v_1, v_1], \sigma[w_p, w_j]) \leq \delta$. Obviously, we have $\+{DW}^{v_1}_{w_j} = 1$ according to the initialization. We can prove that $\+{DW}^{w_1}$ satisfies the property by similar analysis.

Take $\+{DW}^{v_i}_{w_j}$ for $i \ge 2$ and $j \in [m]$. Assume $\+{DW}^{v_{i-1}}_{w_j}, \+{DW}^{v_i}_{w_{j-1}}$, $\+{DW}^{v_{i-1}}_{w_{j-1}}$, $\+{DW}_{v_{i-1}}^{w_j}$, $\+{DW}_{v_i}^{w_{j-1}}$ and $\+{DW}_{v_{i-1}}^{w_{j-1}}$ satisfy the necessity. Suppose $\+{DW}^{v_i}_{w_j} = 1$, we have at least one of $\+{DW}^{v_{i-1}}_{w_j}, \+{DW}^{v_i}_{w_{j-1}}$, $\+{DW}^{v_{i-1}}_{w_{j-1}}$, $\+{DW}_{v_{i-1}}^{w_j}$, $\+{DW}_{v_i}^{w_{j-1}}$ and $\+{DW}_{v_{i-1}}^{w_{j-1}}$ equal to 1 and $d(v_i, w_j) \leq \delta$. We assume $\+{DW}^{v_i}_{w_{j-1}} = 1$, since $\+{DW}^{v_i}_{w_{j-1}}$ satisfies the necessity, there exists a vertex $w_p$ covered by $\+{S}'$ such that $w_p \leq_{\sigma} w_{j-1}$ and $\tilde{d}_F(\tau[v_1, v_i], \sigma[w_p, w_{j-1}]) \leq \delta$. Given that $d(v_i, w_j) \leq \delta$, we have $\tilde{d}_F(\tau[v_1, v_i], \sigma[w_p, w_{j}]) \leq \delta$. We can prove that $\+{DW}^{w_j}_{v_i}$ for $j \in [2, m]$ satisfies the necessity in a same way.

Next, we prove the sufficiency for $\+{DW}^{v_i}_{w_j}$. Suppose $\+{DW}^{v_{i-1}}_{w_j}, \+{DW}^{v_i}_{w_{j-1}}$, $\+{DW}^{v_{i-1}}_{w_{j-1}}$, $\+{DW}_{v_{i-1}}^{w_j}$, $\+{DW}_{v_i}^{w_{j-1}}$ and $\+{DW}_{v_{i-1}}^{w_{j-1}}$ satisfy the sufficiency and  there exists a vertex $w_p$ covered by $\+{S}'$ such that $\tilde{d}_F(\tau[v_1, v_i], \sigma[w_p, w_j]) \leq \delta$, there is a matching $\+{M}$ between $\tau[v_1, v_i]$ and $\sigma[w_p, w_j]$ such that $d_{\+{M}}(\tau[v_1, v_i], \sigma[w_p, w_j]) \leq \delta$. At least one pair among $(v_i, w_{j-1})$, $(v_{i-1}, w_j)$ and $(v_{i-1}, w_{j-1})$ is matched by $\+{M}$. Suppose $\+{M}$ matches $v_{i-1}$ to $w_j$, we have $\tilde{d}_F(\tau[v_1, v_{i-1}], \sigma[w_p, w_j]) \leq \delta$. Since $\+{DW}^{v_{i-1}}_{w_j}$ satisfies the sufficiency, we have $\+{DW}^{v_{i-1}}_{w_j} = 1$. Since $d(v_i, w_j) \leq \delta$, we have $\+{DW}^{v_i}_{w_j} = 1$. We can prove that $\+{DW}^{w_j}_{v_i}$ for $j \in [2, m]$ satisfies the sufficiency in a same way.

% If $i = 1$, we finish the proof. When $i \ge 2$, for $k \in [2, i-1]$, suppose $\+{DW}^{v_k}_{\+{M}(v_k)} = 1$. We want to prove $\+{DW}^{v_{k+1}}_{\+{M}(v_{k+1})} = 1$. If $\+M(v_{k+1}) \leq \+{M}(v_k)+1$ we have $\+{DW}^{v_{k+1}}_{\+{M}(v_{k+1})} = 1$, since $d(v_{k+1}, \+M(v_{k+1})) \leq \delta$ and $\+{DW}^{v_k}_{\+{M}(v_{k})} = 1$. If $\+M(v_{k+1}) > \+{M}(v_k)+1$, for all $t \in [\+{M}(v_{k}), \+{M}(v_{k+1})]$, we have $\+{M}(w_t) = v_k$. 
\end{proof}

Suppose that $\tilde{d}_F(\tau, \sigma)\le \delta$. We use the following lemma to compute a matching $\+M$ between $\tau$ and $\sigma$ satisfying $d_{\+M}(\tau, \sigma)\le \delta$ explicitly. Note that $\+{M}$ matches each vertex of $\tau$ to a vertex of $\sigma$ and vice verse. We can construct such a matching by backtracking the output of {\sc DisPropagate}$(\tau, \sigma, \delta)$. For a vertex $p$ in $\tau$ or $\sigma$, let $\+M(p)$ be a vertex matched to $p$ by $\+M$.

\begin{lemma}
\label{lem: discrete matching}
Given two polygonal curves $\tau, \sigma$ in $\mathbb{R}^d$, suppose $\tilde{d}_F(\tau, \sigma) \leq \delta$. There is an $O(nm)$-time algorithm for computing a matching $\+M$ between $\tau, \sigma$ such that $d_{\+M}(\tau, \sigma) \leq \delta$. $\+M$ can be stored in $O(n+m)$ space such that for any vertex $p$ of $ \tau $ or $ \sigma$, we can retrieve a vertex $\+{M}(p)$ in $O(1)$ time.
\end{lemma}
\begin{proof}
We can construct $\+M$ by calling {\sc DisPropagate}$(\tau, \sigma, \delta)$ and then backtracking the output $\+{DR}^{v_i}$ for all $i\in[n]$. For every vertex $p$ of $\tau$ or $\sigma$, we compute $\+{M}(p)$ and store it. Let $\+{L}$ be an empty sequence.
Since $\tilde{d}_F(\tau, \sigma) \leq \delta$, we have $\+{DR}^{v_n}_{w_m} = 1$, we add $(v_n, w_m)$ to $\+{L}$. According to the procedure of {\sc DisPropagate}, at least one of $\+{DR}^{v_{n-1}}_{w_m}$, $\+{DR}^{v_n}_{w_{m-1}}$ and $\+{DR}^{v_{n-1}}_{w_{m-1}}$ equal to $1$. Suppose $\+{DR}^{v_{n-1}}_{w_m} = 1$, then we add $(v_{n-1}, w_m)$ to $\+{L}$ and we have at least one of $\+{DR}^{v_{n-2}}_{w_m}$, $\+{DR}^{v_{n-1}}_{w_{m-1}}$ and $\+{DR}^{v_{n-2}}_{w_{m-1}}$ equal to $1$. We repeat this process until we add $(v_1, w_1)$ to $\+{L}$. Finally, we get a sequences for pairs of vertices from $(v_n, w_m)$ to $(v_1, w_1)$.

Traverse $\+L$ in order, for each pair $(v_i, w_j)$, if $\+{M}(v_i)$ doesn't exist, we set $\+{M}(v_i) = w_j$; if $\+{M}(w_j)$ doesn't exist, we set $\+{M}(w_j) = v_i$; if neither $\+{M}(v_i)$ nor $\+{M}(w_j)$ exists, we set $\+{M}(v_i) = w_j$ and $\+{M}(w_j) = v_i$. According to the procedure of {\sc DisPropagate($\tau, \sigma, \delta$)}, we have $d(v_i, w_j) \leq \delta$ for all pairs $(v_i, w_j) \in \+L$. So, the matching $\+M$ satisfies $d_{\+{M}}(\tau, \sigma) \leq \delta$. We store $M$ in a linear space, then for any vertex $p \in \tau \cup \sigma$, we can retrieve a vertex $\+{M}(p)$ in $O(1)$ time.
\end{proof}

We will need to simplify a curve $\tau$ under the discrete Fr\'echet distance. We employ an approximate algorithm developed in \cite[Lemma 23]{filtser2023approximate}

\begin{lemma}[Lemma 23~\cite{filtser2023approximate}]
\label{lem: discrete simplification}
Let $\tau$  be a curve consisting of $n$ points in $\mathbb{R}^d$. Given $\delta > 0$, and $\varepsilon \in (0, 1]$, there is an algorithm returns curve $\tau^*$ in $O(\frac{d \cdot n \log n}{\varepsilon} + n \cdot \varepsilon^{-4.5} \log \frac{1}{\varepsilon})$ time such that $\tilde{d}_{F}(\tau, \tau^*) \leq (1+\varepsilon)\delta$. Furthermore, for every curve $\tau'$ with $|\tau'| < |\tau^*|$, it holds that $\tilde{d}_{F}(\tau, \tau') > \delta$.
\end{lemma}	

% \subsection{Overview of the decision procedure}
For any $v_i$ and $\alpha > 1$, we call a super set of $v_i$'s reachability set \emph{$\alpha$-approximate reachability set} if every vertex $w_j$ in this set satisfies that $(v_i, w_j)$ is an $\alpha\delta$-reachable pair and $d(v_i, w_j) \leq \delta.$
For any subcurve $\sigma[w_j, w_{j_1}]$, a boolean array $\+{S}$ induced by $\sigma[w_j, w_{j_1}]$ is \emph{$\alpha$-approximate reachable} for $v_i$ if there is an $\alpha$-approximate reachability set $P$ of $v_i$ such that a vertex $w_{j'}$ is covered by $\+S$ if and only if $w_{j'}$ belongs to $P \cap \sigma[w_j, w_{j_1}]$. We can define the approximate set for a vertex $w_j$ of $\sigma$ similarly. 

For any $\varepsilon \in (0, 1)$, the decision procedure aims to compute a $(7+\varepsilon)$-approximate reachability set for $w_m$. We first split $\tau$ and $\sigma$ into short subsequences. Let $\mu_1$ and $\mu_2$ be two integers with $\mu_1>\mu_2$. Define $a_k=(k-1)\mu_1+1$ for $k\in [(n-1)/\mu_1]$. We divide $\tau$ into a collection $(\tau_1, \tau_2,\ldots, \tau_{(n-1)/\mu_1})$ of $(n-1)/\mu_1$ subsequences such that $\tau_k=\tau[v_{a_k}, v_{a_{k+1}}]$. Note that every subsequence $\tau_k$ contains $\mu_1+1$ vertices, and $\tau_{k-1}\cap \tau_k=v_{a_k}$. 

We divide $\sigma$ into shorter subsequences. Define $b_l=(l-1)\mu_2+1$ for $l\in [(m-1)/\mu_2]$. We divide $\sigma$ into a collection $(\sigma_1, \sigma_2,\ldots, \sigma_{(m-1)/\mu_2})$ of $(m-1)/\mu_2$ subsequences such that $\sigma_l=\sigma[w_{b_l}, w_{b_{l+1}}]$. Every subsequence $\sigma_l$ contains $\mu_2+1$ vertices, and $\sigma_{l-1}\cap \sigma_l=w_{b_l}$.

For any $i\in[n]$ and any $l\in [(m-1)/\mu_2]$, let $\+{DA}^{v_i}_{l}$ be an array induced by $\sigma_l$ that is $(7+\varepsilon)$-approximate reachable for $v_i$. For any $j\in [m]$ and any $k\in [(n-1)/\mu_1]$, let $\+{DA}^{w_j}_{k}$ be an array induced by $\tau_k$ that is $(7+\varepsilon)$-approximate reachable for $w_j$.
We present the discrete counterpart \pmb{\sc DisReach} of {\sc Reach} for any $k$ and $l$. We design {\sc DisReach} to run in $O(\mu_1\mu_2)^{1-c}$ time for some constant $c \in (0,1)$.

	\vspace{4pt}
	
	\noindent\fbox{\parbox{\dimexpr\linewidth-2\fboxsep-2\fboxrule\relax}{\textbf{Procedure {\sc DisReach}}

	\vspace{4pt}

	\textbf{Input:} a subcurve $\tau_k$, a subcurve $\sigma_l$, an array $\+{DA}^{v_{a_k}}_l$ induced by $\sigma_l$ that is $(7+\varepsilon)$-approximate reachable for $v_{a_k}$, an array $\+{DA}^{w_{b_l}}_{k}$ induced by $\tau_k$ that is $(7+\varepsilon)$-approximate reachable for $w_{b_l}$. 
	
	\vspace{4pt}
	
	\textbf{Output:} an array $\+{DA}^{v_{a_{k+1}}}_{l}$ induced by $\sigma_l$ that is $(7+\varepsilon)$-approximate reachable for $v_{a_{k+1}}$ and an array $\+{DA}^{w_{b_{l+1}}}_{k}$ induced by $\tau_k$ that is $(7+\varepsilon)$-approximate reachable for $w_{b_{l+1}}$.
 }
 }

	\vspace{4pt}
% \subsubsection{Classification of reachable vertices}
As with the design of {\sc Reach}, we classify vertices in reachability set of $v_{a_{k+1}}$ on $\sigma_l$ and vertices in reachability set of $w_{b_{l+1}}$ on $\tau_k$ into four types and deal with each type separately. 

Suppose that $w_j$ belongs to the reachability set of $v_{a_{k+1}}$ on $\sigma_l$, the discrete Fr\'echet matching $\tau[v_1, v_{a_{k+1}}]$ and $\sigma[w_1, w_j]$ either matches $v_{a_{k}}$ to some vertex in $\sigma_l$ or matches $w_{b_l}$ to some vertex in $\tau_k$. If $v_{a_k}$ is matched to a vertex $w_p$ in $\sigma_l$, the vertex $w_p$ belongs to the reachability set of $v_{a_k}$ on $\sigma_l$ by definition.
% And we have $\tilde{d}_F(\tau_k, \sigma[w_p, w_j]) \leq \delta$.
Since $\+{DA}^{v_{a_k}}_{l}$ is $(7+\varepsilon)$approximate-reachable for $v_{a_k}$, $w_p$ must be covered by $\+{DA}^{v_{a_k}}_{l}$. If $w_{b_l}$ is matched to a vertex $v_x$ in $\tau_k$, the vertex $v_x$ belongs to the reachability set of $w_{b_l}$ on $\tau_k$ by definition, and $v_x$ must be covered by $\+{DA}^{w_{b_l}}_{k}$. We classify vertices of $\sigma_l$ that belong to $v_{a_{k+1}}$'s reachability set into type 1 and type 2 based on the existences of vertices $w_p$ and $v_x$. We classify vertices that belong to $w_{b_{l+1}}$'s reachability set on $\tau_k$ into type 3 and type 4 similarly.

\noindent\fbox{\parbox{\dimexpr\linewidth-2\fboxsep-2\fboxrule\relax}{
% \begin{quote}
\noindent\textbf{T1.} $w_j$: $\+{DR}^{v_{a_{k+1}}}_{w_j} = 1$ and there is a vertex $w_p$ covered by $\+{DA}^{v_{a_k}}_{l}$ such that $w_p \leq_{\sigma} w_j$ and $\tilde{d}_F(\tau_k, \sigma[w_p, w_j]) \leq \delta.$

\noindent\textbf{T2.} $w_j$: $\+{DR}^{v_{a_{k+1}}}_{w_j} = 1$ and there is a vertex $v_x$ covered by $\+{DA}_{k}^{w_{b_l}}$ such that $\tilde{d}_F(\tau[v_x, v_{a_{k+1}}],$ $ \sigma[w_{b_l}, w_j]) \leq \delta$.

\noindent\textbf{T3.} $v_i$: $\+{DR}_{v_i}^{w_{b_{l+1}}} = 1$ and there is a vertex $w_p$ covered by $\+{DA}^{v_{a_k}}_{l}$ such that $\tilde{d}_F(\tau[v_{a_k}, v_i],$ $ \sigma[w_p, w_{b_{l+1}}]) \leq \delta$.

\noindent\textbf{T4.} $v_i$: $\+{DR}_{v_i}^{w_{b_{l+1}}} = 1$ and there is a vertex $v_x$ covered by $\+{DA}_{k}^{w_{b_l}}$ such that $v_x \leq_{\tau} v_i$ and  $\tilde{d}_F(\tau[v_x, v_i],$ $ \sigma_l) \leq \delta.$
% \end{quote}
}}

It is possible that some vertex $w_j$ satisfies the requirements of both type 1 and type 2, because the discrete Fr\'echet matching between $\tau[v_1, v_{a_{k+1}}]$ and $\sigma[w_1, w_j]$ is not unique. In this case, we will classify it as type 1. It is also possible that some  vertex $v_i$ satisfies the requirements of both type 3 and type 4. In this case, we will classify it as type 3.

\vspace{4pt}

% \subsection{Subquadratic decision algorithm for Discrete Fr\'echet distance}
Let $\+I^1$ and $\+I^2$ be Boolean arrays induced by $\sigma_l$, and let $\+I^3$ and $\+I^4$ be Boolean arrays induced by  $\tau_k$. 
We construct $\+I^1$ such that for every $w_j$, $\+{I}^1_{w_j} = 1$ if $w_j$ is of type 1, and $(w_j, v_{a_{k+1}})$ is a $(7+\varepsilon)\alpha$-reachable pair if $\+{I}^1_{w_j} = 1$.
Similarly, we construct $\+{I}^2, \+{I}^3, \+{I}^4$. 
Our decision algorithm for computing the discrete Fr\'echet distance still consists of a preprocessing phase, and a dynamic programming that uses {\sc DisReach}. 

% We still set $\mu_1=m^{0.24}$, $\mu_2=m^{0.02}$, $\mu_3=m^{0.01}$, and $\omega=m^{0.12}$ for presentation.

% \subsubsection{Preprocessing}
We preprocess $\tau_k$ to generate $\zeta_k, \zeta_k^{\text{suf}}, \zeta_k^{\text{pre}}$ and several data structures. 

\vspace{4pt}

\noindent\textbf{Preprocessing for $\+I^1$, $\+I^2$ and $\+I^3$.} We generate $\zeta_k$, $\zeta_k^{\text{suf}}$ and $\zeta_k^{\text{pre}}$ as follows. Set $\epsilon = \varepsilon / 10$ in \Cref{lem: discrete simplification}, we first call the algorithm in \Cref{lem: discrete simplification}  with $\tau_k$ and $\delta$. If the algorithm returns a curve of at most $\mu_2 + 1$ vertices, we set $\zeta_k$ to be the curve returned; otherwise, set $\zeta_k$ to be null. It takes $O(\frac{d \cdot \mu_1 \log \mu_1}{\epsilon} + \mu_1\cdot\epsilon^{-4.5}\log \frac{1}{\epsilon})$ time. To generate $\zeta_k^{\text{suf}}$, we try to invoke the curve simplification algorithm with $\tau[v_i, v_{a_{k+1}}]$ and $\delta$ for all $i\in [a_k+1, a_{k+1}]$. We then identify the minimum $i$ such that the algorithm returns a curve of at most $\mu_2+1$ vertices and set $\zeta_k^{\text{suf}}$ to be the corresponding output. It takes $O(\frac{d \cdot \mu_1^2 \log \mu_1}{\epsilon} + \mu_1^2\cdot\epsilon^{-4.5}\log \frac{1}{\epsilon})$ time as there are $O(\mu_1)$ suffixes to try. We can generate $\zeta_k^{\text{pre}}$ by trying all prefixes of $\tau_k$ in $O(\frac{d \cdot \mu_1^2 \log \mu_1}{\epsilon} + \mu_1^2\cdot\epsilon^{-4.5}\log \frac{1}{\epsilon})$ time in the same way. Let $\tau[v_{a_k},v_{i_{\text{pre}}}]$ and $\tau[v_{i_{\text{suf}}}, v_{a_{k+1}}]$ be the corresponding prefix and suffix of $\zeta_k^{\text{pre}}$ and $\zeta_k^{\text{suf}}$, respectively.

To implement {\sc DisReach}, we also need to compute and store a matching $\Mpre$ between $\tau[v_{a_k}, v_{i_{\text{pre}}}]$ and $\zeta_k^{\text{pre}}$ with $d_{\Mpre}(\tau[v_{a_k}, v_{i_{\text{pre}}}], \zeta_k^{\text{pre}})\le (1+\epsilon)\delta$ and a matching $\Msuf$ between $\tau[v_{i_{\text{suf}}}, v_{a_{k+1}}]$ and $\zeta_k^{\text{suf}}$ with $d_{\Msuf}(\tau[v_{i_{\text{suf}}}, v_{a_{k+1}}], \zeta_k^{\text{suf}})\le (1+\epsilon)\delta$. Given that $\tilde{d}_F(\tau[v_{a_k}, v_{i_{\text{pre}}}], \zeta_k^{\text{pre}})\le (1+\epsilon)\delta$ and $\tilde{d}_F(\tau[v_{i_{\text{suf}}}, v_{a_{k+1}}], \zeta_k^{\text{suf}})\le (1+\epsilon)\delta$ by \Cref{lem: discrete simplification}, we use algorithm in \Cref{lem: discrete matching} and finish it in $O(\mu_1\mu_2)$ time.

\vspace{4pt}

\noindent\textbf{Preprocessing for $\+I^4$.} 
For any subcurve $\sigma'$ of $\sigma_l$, a vertex $v_i$ of $\tau_k$ is marked by $\sigma'$ if there is a subcurve $\tau'$ of $\tau_k$ that contains $v_i$ such that $\tilde{d}_F(\tau', \sigma') \leq \delta$.
We first preprocess $\tau_k$ so that for any $l\in [(m-1)/\mu_2]$ and any subcurve $\sigma'=\sigma[w_j, w_{j_1}]$ of $\sigma_l$, given a  vertex marked by $\sigma'$, we can find a subcurve of $\tau_k$ that is close to $\sigma'$ efficiently. 
	
For every vertex $v_i$ of $\tau_k$, we identify the longest suffix of $\tau[v_{a_k}, v_i]$ and the longest prefix of $\tau[v_i, v_{a_{k+1}}]$ such that running the algorithm in \Cref{lem: discrete simplification} on them returns a curve of at most $\mu_2+1$ vertices with respect to $\delta$. It takes $O\left(\frac{d \cdot \mu_1^2 \log \mu_1}{\epsilon} + \mu_1^2\cdot\epsilon^{-4.5}\log \frac{1}{\epsilon} \right)$ time as there are $O(\mu_1)$ prefixes and suffixes to try. Let $\tau[\bar{v}_i, v_i]$ and $\tau[v_i, \tilde{v}_i]$ be the corresponding suffix and prefix, respectively. Let $\bar{\zeta}_i$ and $\tilde{\zeta}_i$ be the simplified curves for $\tau[\bar{v}_i, v_i]$ and $\tau[v_i, \tilde{v}_i]$, respectively. It satisfies that $d_F(\tau[\bar{v}_i, v_i], \bar{\zeta}_i)\le (1+\epsilon)\delta$ and $d_F(\tau[v_i, \tilde{v}_i], \tilde{\zeta}_i)\le (1+\epsilon)\delta$. By \Cref{lem: discrete matching}, we further construct a matching $\bar{\+M}_i$ between $\tau[\bar{v}_i, v_i]$ and $\bar{\zeta}_i$ and a matching $\tilde{\+M}_i$ between $\tau[v_i, \tilde{v}_i]$ and $\tilde{\zeta}_i$ in $O(\mu_1\mu_2)$ time. It takes $O\left(\frac{d \cdot \mu_1^3 \log \mu_1}{\epsilon} + \mu_1^3\cdot\epsilon^{-4.5}\log \frac{1}{\epsilon} \right)$ time to process all $v_i$'s longest suffix and longest prefix in $\tau_k$.

Next, we present how to find a subcurve of $\tau_k$ that is close to $\sigma'$ based on the above preprocessing. Suppose that $v_i$ is marked by $\sigma'$. We observe that there must be a subcurve $\tau'$ of $\tau[\bar{v}_i, \tilde{v}_i]$ with $\tilde{d}_F(\tau', \sigma')\le \delta$. By definition, there is a subcurve $\tau[v_x,v_y]$ of $\tau_k$ such that $\tilde{d}_F(\tau[v_x, v_y], \sigma')\le \delta$ and $v_i \in \tau[v_x, v_y]$. It holds that $\bar{v}_i\le_\tau v_x$. Otherwise, there is smaller $v_{i'} <_{\tau} \bar{v}_i$ such that calling algorithm in \Cref{lem: discrete simplification} with $\tau[v_{i'}, v_i]$ returns a curve of at most $\mu_2+1$ vertices, which is a contradiction. We have $v_y\le_\tau \tilde{v}_i$ via the similar analysis.  Hence, $\tau[v_x,v_y]$ is a subcurve of $\tau[\bar{v}_i, \tilde{v}_i]$. Intuitively, we can concatenate the last vertex of $\bar{\zeta}_i$ and  the first vertex of $\tilde{\zeta}_i$ to get a new curve $\zeta'$ of at most $2\mu_2+2$ vertices. Obviously, $\tilde{d}_F(\tau[\bar{v}_i, \tilde{v}_i], \zeta') \leq (1+\epsilon)\delta$. Since there is a subcurve $\tau'$ of $\tau[\bar{v}_i, \tilde{v}_i]$ such that $\tilde{d}_F(\tau', \sigma') \leq \delta$, there is a subcurve $\zeta''$ of $\zeta'$ such that $\tilde{d}_F(\zeta'', \sigma') \leq (2+\epsilon)\delta$, we aim to find a $\zeta''$.

For every vertex of $\zeta'$ if it  belongs to the $\+{B}(w_j, (2+\epsilon)\delta)$, we insert it into a set $X$. If this vertex belongs to the ball $\+{B}(w_{j_1}, (2+\epsilon)\delta)$, we insert it into another set $Y$.  The existence of $\zeta''$ implies the existence of some subcurve that starts from a point in $X$, ends at a point in $Y$, and locates within a discrete Fr\'echet distance $(2+\epsilon)\delta$ to $\sigma'$. We test all subcurves of $\zeta'$ starting from some point in $X$ and ending at some point in $Y$. There are $O(|\zeta'|^2)=O(\mu_2^2)$ subcurves to be tested. For every subcurve, we check whether the discrete Fr\'echet distance between it and $\sigma'$ is at most $(2+\epsilon)\delta$. If so, we return it as $\zeta''$. It takes $O(\mu_2^4)$ time.

Suppose $\zeta'' = \zeta'[x, y]$. We finally find a subcurve of $\tau_k$ within a Fr\'echet distance $(1+\epsilon)\delta$ to $\zeta''$ based on $\bar{\+M}_i$ and $\tilde{\+M}_i$. Since $\zeta'$ is a concatenation of $\bar{\zeta}_i$ and $\tilde{\zeta}_i$. If $x \in \bar{\zeta}_i$, we set $x' = \bar{\+{M}}_i(x)$; otherwise, we set $x' = \tilde{\+M}_i(x)$. We can define $y'$ for $y$ similarly. And we have $\tilde{d}_F(\tau[x', y'], \zeta'') \leq (1+\epsilon)\delta$. Hence, $\tilde{d}_F(\tau[x', y'], \sigma') \leq (3+2\epsilon)\delta$ by the triangle inequality. 
\begin{lemma}\label{lem: marked-vertex}
    We can preprocess $\tau_k$ in $O\left(\frac{d \cdot \mu_1^3 \log \mu_1 }{\epsilon} + \mu_1^3 
  \cdot\epsilon^{-4.5}\log \frac{1}{\epsilon} \right)$ time such that for any subcurve $\sigma'$ of $\sigma_l$ and a vertex of $\tau_k$, there is an $O(\mu_2^4)$-time algorithm that returns a subcurve $\tau'$ of $\tau_k$ with $d_F(\tau', \sigma')\le (3+2\epsilon)\delta$ or null. If the algorithm returns null, the vertex is not marked by $\sigma'$. 
\end{lemma} 

In the remaining part, we preprocess $\tau_k$ to answer the query {\bf{\sc DisCover}} with a subcurve $\tau'$ of $\tau_k$, a Boolean array $\+S$ induced by $\tau_k$, $\delta'>0$. 

\vspace{2pt}

\noindent\pmb{{\sc DisCover}$(\tau', \delta', \+S)$.} The query output is another Boolean array $\bar{\+S}$ induced by $\tau_k$ such that a vertex $v_x$ is covered by $\bar{\+S}$ if and only if there is a vertex $v_y$ covered by $\+{S}$ with $v_y\leq_{\tau} v_x$ and $\tilde{d}_F(\tau[v_y, v_x], \tau') \leq \delta'$. 

Suppose that $\tau'=\tau[v_i, v_{i_1}]$. Note that $\bar{\+S}$ is equal to the output array $\+{DW}^{v_{i_1}}$ returned by the invocation {\sc DisWaveFront}$(\tau_k, \tau', \delta', \+S, \+S')$ for the last vertex $v_{i_1}$ of $\tau'$, where $\+S'$ is a boolean array induced by $\tau'$ with all elements being 0.

\vspace{4pt}

\noindent\underline{\bf Data structure.} Our data structure consists of arrays induced by $\tau_k$ for every vertex-to-vertex subcurve of $\tau_k$. Specifically, for any pair of integers $i, i_1\in [a_k, a_{k+1}]$ with $i_1 \ge i$, take the subcurve $\tau[v_i, v_{i_1}]$. We proceed to take a vertex $v_{i'}$ of $\tau_k$. We aim to identify all vertices $v_x$ in $\tau_k$ such that $v_{i'}\le_\tau v_x$ and $\tilde{d}_F(\tau[v_{i'}, v_x], \tau[v_i, v_{i_1}])\le \delta'$. 
Let $\+{D}[i, i_1, i']$ be an array induced by $\tau_k$, if $v_x$ in $\tau_k$ satisfies that $v_{i'} \leq_{\tau} v_x$ and $\tilde{d}_F(\tau[v_{i'}, v_x], \tau[v_i, v_{i_1}]) \leq \delta'$, $\+{D}[i, i_1, i']_{v_x} = 1$; otherwise, $\+{D}[i, i_1, i']_{v_x} = 0$.

Let $\+{A}$ be an array induced by $\tau_k$, we set $\+{A}_{v_{i'}} = 1$ and all other elements to be $0$. Let $\+{A}'$ be an array induced by $\tau[v_i, v_{i_1}]$ and we set all elements being $0$.
We invoke {\sc DisWaveFront}($\tau_k$, $\tau[v_{i}, v_{i_1}]$, $\delta'$, $\+{A}$, $\+{A}'$) and get output array $\+{DW}^{v_{i_1}}$ of $v_{i_1}$. Set $\+{D}[i, i_1, i'] = \+{DW}^{v_{i_1}}$. This finishes the construction of $\+{D}[i, i_1, i']$.

We also maintain an array {\sc Max} such that for all $i, i_1, i'$, take the maximum $i^* \in [a_k, a_{k+1}]$ such that $\+{D}[i,i_1, i']_{v_{i^*}} = 1$, we set {\sc Max}$[i, i_1, i'] = i^*$, if such an $i^*$ does not exist, we set {\sc Max}$[i, i_1, i'] = -1$.

For every subcurve $\tau[v_i, v_{i_1}]$ and every vertex $v_{i'}$, it takes $O\left(\mu_1(i_1 - i) \right)$ time to determine every element of $\+{D}[i, i_1, i']$ and {\sc Max}$[i, i_1, i']$. Since there are $O(\mu_1^3)$ distinct combinations of $i, i_1$ and $i'$. It takes $O(\mu_1^5)$ time to construct $\+{D}$ and {\sc Max} for all $i, i_1$ and $i'$ as the data structure.

We present a useful property of $\+D$ and {\sc Max}. 
\begin{lemma}
    \label{lem: dis property of D and MAX}
    For any $\tau[v_i, v_{i_1}]$ and $v_{i'}$, let $i^* = ${\sc Max}$[i, i_{1}, i']$. If $i^* \neq -1$, take any $\bar{i} < i^*$. For all $i'' > i'$, either {\sc Max}$[i, i_{1}, i''] = -1$ or $\+{D}[i, i_1, i'']_{v_{\bar{i}}} \leq \+{D}[i, i_1, i']_{v_{\bar{i}}}$. 
\end{lemma}
\begin{proof}
Given $i, i_1, i'$, suppose $i^*=${\sc Max}$[i, i_1, i']$. Take $i'' > i'$. If {\sc Max}$[i, i_1, i''] = -1$, there is nothing to prove. Suppose {\sc Max}$[i, i_1, i''] \neq -1$, we want to prove for any $\bar{i} < i^*$, $\+{D}[i, i_1, i'']_{v_{\bar{i}}} \leq \+{D}[i, i_1, i']_{v_{\bar{i}}}$.  We prove this by contradiction. Assume there exists $\bar{i} < i^*$ such that $\+{D}[i, i_1, i'']_{v_{\bar{i}}} > \+{D}[i, i_1, i']_{v_{\bar{i}}}$, that means $\+{D}[i, i_1, i'']_{v_{\bar{i}}} = 1$ and $\+{D}[i, i_1, i']_{v_{\bar{i}}} = 0$. Thus, there exist discrete Fr\'echet matching $\+{M}'$ and $\+{M}''$ such that $d_{\+{M}'}(\tau[i, i_1], \tau[i', i^*]) \leq \delta'$ and $d_{\+{M}''}(\tau[i, i_1], \tau[i'', \bar{i}]) \leq \delta'$. Since $i'' > i'$ and $\bar{i} < i^*$, we have $\tau[i'', \bar{i}]$ is a subcurve of $\tau[i', i^*]$. It means that there is a vertex $v_z \in \tau[v_i, v_{i_1}]$ such that $\+{M}'(v_z) = \+{M}''(v_z)$. Otherwise, $M''(v_{i_1})$ is strictly in fromt of $M'(v_{i_1})$ along $\tau$, and $\+{M}''$ can not be a matching between $\tau[v_{i''}, v_{\bar{i}}]$ and $\tau[v_{i}, v_{i_1}]$. 

Thus, we can construct a discrete Fr\'echet matching $\+{M}$ between $\tau[v_i, v_{i_1}]$ and $\tau[v_{i'}, v_{\bar{i}}]$ by matching $\tau[v_i, v_z]$ to $\tau[v_{i'}, \+{M}'(v_z)]$ according to $\+{M}'$ and matching $\tau[v_z, v_{i_1}]$ to $\tau[\+{M}'(v_z), v_{\bar{i}}]$ according to $\+{M}''$. It is clear that $d_{\+{M}}(\tau[v_i, v_{i_1}], \tau[v_{i'}, v_{\bar{i}}]) \leq \delta'$, which contradicts $\+{D}[i, i_1, i']_{v_{\bar{i}}} = 0$. 
\end{proof}
  
\noindent\underline{\bf Query algorithm.} Given an arbitrary subcurve $\tau' = \tau[v_i, v_{i_1}]$ of $\tau_k$ and an arbitrary array $\+S$ induced by $\tau_k$, we present how to answer the query {\sc DisCover} in $O(\mu_1)$ time by using the data structure.

We traverse $\+S$ to update $\bar{\+{S}}$ progressively. We use $i'$ to index current element in $\+{S}$ within the traversal, and use $b$ to index the element being updated in $\bar{\+{S}}$. Initialize both $i'$ and $b$ to be $a_k$. For $\+{S}$, if $\+{S}_{v_{i'}} = 0$ or {\sc Max}$[i, i_1, i'] = -1$, we increase $i'$ by 1. Otherwise, if $b < i'$, repeat setting $\bar{\+{S}}_{v_{b}} = 0$ and increasing $b$ by $1$ until $b$ equals to $i'$. Then, we set $\bar{\+S}_{v_{b}} = \+{D}[i, i_1, i']_{v_{b}}$ and increasing $b$ by $1$ until $b$ equals to {\sc Max}$[i, i_1, i'] + 1$, then we increase $i'$ by $1$ and repeat the procedure above. We stop updating $\bar{\+{S}}$ until $i' = a_{k+1} + 1$ or $b = a_{k+1}+1$. If $i' = a_{k+1} + 1$ and $b < a_{k+1}+1$, we set all elements of $\bar{\+{S}}[v_{b}, v_{a_{k+1}}] = 0$. By Lemma~\ref{lem: dis property of D and MAX}, a vertex $v_x$ is covered by $\bar{\+S}$ if and only if there is another vertex $v_{i'}$ covered by $\+S$ such that $v_x$ is covered by $\+D[i, i_1, i']$.

Since every element of $\bar{\+S}$ is updated once and each update takes $O(1)$ time. We finish constructing $\bar{\+{S}}$ in $O(\mu_1)$ time. Thus, we get the following lemma:

\begin{lemma}\label{lem:dis-cover}
    Fix $\tau_k$ and $\delta'$. There is a data structure of size\cancel{$O(\mu_1^4$} $O(\mu_1^4)$ and preprocessing time\cancel{$O(\mu_1^5/\varepsilon)$} $O(\mu_1^5)$ that answers the query {\sc DisCover} for any $\tau'$ and $\+S$ in $O(\mu_1)$ time . 
\end{lemma}

% \subsubsection{Implementation of {\sc DisReach}}
As discussed above, given $\tau_k$, $\sigma_l$, an array $\+{DA}^{v_{a_{k}}}_{l}$ induced by $\sigma_l$ that is $(7+\varepsilon)$-reachable for $v_{a_k}$ and an array $\+{DA}_{k}^{w_{b_l}}$ induced by $\tau_k$ that is $(7+\varepsilon)$-reachable for $w_{b_l}$, we implement {\sc DisReach} to compute an array $\+{DA}^{v_{a_{k+1}}}_{l}$ induced by $\sigma_l$ that is $(7+\varepsilon)$-reachable for $v_{a_{k+1}}$ and an array $\+{DA}_{k}^{w_{b_{l+1}}}$ induced by $\tau_k$ that is $(7+\varepsilon)$-reachable for $w_{b_{l+1}}$. We first construct $\+I^1$, $\+I^2$, $\+I^3$ and $\+I^4$.
	
	\vspace{4pt}

\noindent\textbf{Construction of $\+I^1$.}  
If $\zeta_k$ is not null,  let $\+{S}^k$ be an array induced by $\zeta_k$ such that all elements are $0$. We invoke {\sc DisWaveFront}$(\zeta_k, \sigma_l, (2+\epsilon)\delta,  \+{S}^k, \+{DA}^{v_{a_k}}_l)$. Let $\+A^1$ be the output array for the last vertex of $\zeta_k$. Set $\+I^1 = \+{A}^1$. If $\zeta_k$ is null, set all elements in $\+I^1$ to be $0$. It runs in $O(\mu_2^2)$ time. The array $\+I^1$ satisfies the following property.

\begin{lemma}~\label{lem: Dis I_1}
We can construct $\+I^1$ in\cancel{$O(\mu_2^2)$} $O(\mu_2^2)$ time such that for any $w_j\in [w_{b_{l}}, w_{b_{l+1}}]$, if $w_j$ belongs to type 1, $\+I^1_{w_j} = 1$, and every vertex $w_p \in \sigma_l$ covered by $\+{I}^1$ forms a $((7+\varepsilon)\delta)$-reachable pair with $v_{a_{k+1}}$. 
\end{lemma}
\begin{proof}
The running time can be derived from the construction procedure. We focus on proving the property of $\+I^1$. 
  
If $\zeta_k$ is null, it means that $\tau_k$ at a discrete Fr\'echet distance more than $\delta$ to all curves of at most $\mu_2+1$ vertices, which implies that reachable vertex of type 1 does not exist. According to the above procedure, all elements in $\+I^1$ are set to be $0$. There is nothing to prove. 

If $\zeta_k$ is not null, we have $\+I^1 = \+A^1$. Suppose $w_j$ belongs to type 1. There is a vertex $w_p \in \sigma_l$ covered by $\+{DA}^{v_{a_k}}_l$ such that $w_p \leq_\sigma w_j$ and $\tilde{d}_F(\tau_k, \sigma[w_p, w_j]) \leq \delta$ by definition. According to \Cref{lem: discrete simplification}, $\tilde{d}_F(\tau_k, \zeta_k) \leq (1+\epsilon)\delta$. We have $\tilde{d}_F(\zeta_k, \sigma[w_p, w_j]) \leq (2 + \epsilon) \delta$ via the triangle inequality. Recall that $\+{A}^1$ is the output array for the last vertex of $\zeta_k$. According to \Cref{lem: DisWaveFront proverty}, $\+A^{1}_{w_j} = 1$ if and only if there exists a vertex $w_p \leq_{\sigma} w_j$ covered by $\+{DA}^{v_{a_k}}_l$ such that $\tilde{d}_{F}(\zeta_k, \sigma[w_p, w_j]) \leq (2+\epsilon)\delta$. So, we have $\+I^1_{w_j} = \+{A}^1_{w_j} = 1$.

Next, we prove that every vertex $w_j \in \sigma_l$ with $\+{I}^1_{w_j} = 1$ can form a $((7+\varepsilon)\delta)$-reachable pair with $v_{a_{k+1}}$. Suppose $\+{I}^1_{w_j} = 1$. 
Since all elements of $\+{S}^k$ are $0$, according to \Cref{lem: DisWaveFront proverty}, there exists $w_p \leq_\sigma w_j$ such that $\tilde{d}_F(\zeta_k, \sigma[w_p, w_j]) \leq (2+\epsilon)\delta$ and $w_p$ is covered by $\+{DA}^{v_{a_k}}_l$, which means $\tilde{d}_{F}(\tau[v_1, v_{a_k}], \sigma[w_1, w_p]) \leq (7 + \varepsilon) \delta$. 
Since $\epsilon = \varepsilon/10$, by triangle inequality, we have $\tilde{d}_F(\tau_k, \sigma[w_p, w_j]) \leq (3+2\epsilon)\delta < (7+\varepsilon)\delta$. It implies that we can construct a matching between $\tau[v_1, v_{a_{k+1}}]$ and $\sigma[w_1, w_j]$ by concatenating the discrete Fr\'echet matching between $\tau_k$ and $\sigma[w_p,w_j]$ to the discrete Fr\'echet matching between $\tau[v_1, v_{a_k}]$ and $\sigma[w_1, w_p]$. The matching realizes a distance at most $(7+\varepsilon)\delta$. Thus, we have $(v_{a_{k+1}}, w_j)$ is a $(7 + \varepsilon)$-reachable pair. 
 \end{proof}

\noindent\textbf{Construction of $\+I^2$.}
We first construct an array $\+S$ induced by $\zeta_k^{\text{suf}}$ based on $\+{DA}_k^{w_{b_l}}$. Recall that $\zeta_k^{\text{suf}}$ is a simplification of the suffix $\tau[v_{i_{\text{suf}}}, v_{a_{k+1}}]$. We also have a matching $\Msuf$ between $\zeta_k^{\text{suf}}$ and $\tau[v_{i_{\text{suf}}}, v_{a_{k+1}}]$ with $d_{\Msuf}(\tau[v_{i_{\text{suf}}}, v_{a_{k+1}}], \zeta_k^{\text{suf}})\le (1+\epsilon)\delta$. According to \Cref{lem: discrete matching}, for every $v_i \in [v_{i_{\text{suf}}}, v_{a_{k+1}}]$, we can retrieve $\Msuf(v_i)$ of $\zeta_k^{\text{suf}}$ in $O(1)$ time.
For all $i \in [i_{\text{suf}}, v_{a_{k+1}}]$, we set $\+S_{\Msuf(v_i)} = 1$ if $\+{DA}_{v_i}^{w_{b_l}} = 1$. We set all other elements in $\+{S}$ to be $0$. We finish constructing $\+{S}$ in $O(\mu_1)$ time.

Let $\+{S}'$ be an array induced by $\sigma_l$ with all elements being $0$, we invoke {\sc DisWaveFromt}($\zeta_k^{\text{suf}}$, $\sigma_l$, $(2+\epsilon)\delta$, $\+{S}$, $\+{S}'$). Let $\+{DW}^{\bar{v}_a}$ be the output array for the last vertex $\bar{v}_a$ of $\zeta^{\text{suf}}_k$. Set $\+{I}^2 = \+{DW}^{\bar{v}_a}$.  We construct $\+{I}^2$ in $O(\mu_2^2 + \mu_1)$ time. The array $\+I^2$ satisfies the following property.

\begin{lemma}~\label{lem: Dis I_2}
    We can construct $\+I^2$ in\cancel{$O(\mu_2^2)$} $O(\mu_1 + \mu_2^2)$ time such that for any $w_j\in [w_{b_{l}}, w_{b_{l+1}}]$, if $w_j$ belongs to type 2, $\+I^2_{w_j} = 1$, and every vertex $w_p \in \sigma_l$ covered by $\+{I}^2$ forms a $((7+\varepsilon)\delta)$-reachable pair with $v_{a_{k+1}}$. 
\end{lemma}
\begin{proof}
The running time can be derived from the construction procedure. We focus on proving the property of $\+I^2$. 

By definition, if $w_j$ belongs to type 2, there exists a vertex $v_x$ covered by $\+{DA}^{w_{b_l}}_k$ such that $\tilde{d}_F(\tau[v_x, v_{a_{k+1}}],$ $\sigma[w_{b_l}, w_j]) \leq \delta$.
Let the last vertex of $\+{\zeta}^{\text{suf}}_k$ be $\bar{v}_a$, we have $\Msuf(v_{a_{k+1}}) = \bar{v}_a$.
According to \Cref{lem: discrete simplification}, we have $\tilde{d}_F(\tau[v_x, v_{a_{k+1}}], \zeta_k^{\text{suf}}[\Msuf(v_x),$ $ \bar{v}_a]) \leq (1+\epsilon)\delta$. By triangle inequality, we have $\tilde{d}_F(\sigma[w_{b_l}, w_j],\zeta_k^{\text{suf}}[\Msuf(v_x), \bar{v}_a]) \leq (2+\epsilon)\delta$ and $\Msuf(v_x)$ is covered by $\+{S}$ according to the construction of $\+{S}$. According to \Cref{lem: DisWaveFront proverty}, we have $\+{DW}_{w_j}^{\bar{v}_a} = 1$. Thus, we have $\+{I}^2_{w_j} = 1$.

Next, we will prove that every vertex $w_j$ covered by $\+{I}^2$ can form a $((7 + \varepsilon) \delta)$-reachable pair with $v_{a_{k+1}}$. Suppose $\+{I}^2_{w_j} = 1$, we have $\+{DW}_{w_j}^{\bar{v}_a} = 1$. Since all elements of $\+{S}'$ are $0$, according to \Cref{lem: DisWaveFront proverty}, there exists a vertex $p$ covered by $\+{S}$ such that $\tilde{d}_F(\sigma[w_{b_l}, w_j],\zeta_k^{\text{suf}}[p, \bar{v}_a]) \leq (2+\epsilon)\delta$. 
% Note that $\Msuf(p)$ is covered by $\+{DA}^{w_{b_l}}_k$, according to the construction procedure of $\+{S}$.
According to \Cref{lem: discrete simplification}, we have $\tilde{d}_F(\tau[\Msuf(p), v_{a_{k+1}}], \zeta_k^{\text{suf}}[p, \bar{v}_a]) \leq (1 + \epsilon)\delta$. By triangle inequality, we have $\tilde{d}_F(\tau[\Msuf(p), v_{a_{k+1}}], \sigma[w_{b_{l}}, w_j]) \leq (3 + 2 \epsilon)\delta < (7+\varepsilon)\delta$. According to the procedure of constructing $\+{S}$, $\Msuf(p)$ is covered by $\+{DA}^{w_{b_l}}_{k}$, which means $\tilde{d}_F(\tau[v_1, \Msuf(p)], \sigma[w_1, w_{b_l}]) \leq (7+\varepsilon)\delta$. We can construct a matching between $\tau[v_1, v_{a_{k+1}}]$ and $\sigma[w_1, w_j]$ by concatenating the discrete Fr\'echet matching between $\tau[\Msuf(p), v_{a_{k+1}}]$ and $\sigma[w_{b_l},w_j]$ to the discrete Fr\'echet matching between $\tau[v_1, \Msuf(p)]$ and $\sigma[w_1, w_{b_l}]$. The matching realizes a distance at most $(7+\varepsilon)\delta$. Thus, we have $(v_{a_{k+1}}, w_j)$ is a $(7 + \varepsilon)$-reachable pair.  
\end{proof}

\noindent\textbf{Construction of $\+I^3$.} Let $\+{S}$ be an array induced by $\zeta_k^{\text{pre}}$ with all elements being $0$. We invoke {\sc DisWaveFront}($\zeta_k^{\text{pre}}$, $\sigma_l$, $(2+\epsilon)\delta$, $\+{S}$, $\+{DA}^{v_{a_k}}_{l}$). Let $\+{DW}^{w_{b_{l+1}}}$ be the output array for $w_{b_{l+1}}$. For all $ i \in [a_{k}, i_{\text{pre}}]$, set $\+{I}^3_{v_i} = 1$ if $\+{DW}_{\Mpre(v_i)}^{w_{b_{l+1}}} = 1$; otherwise, set $\+{I}^3_{v_i} = 0$. For all $i \in (i_{\text{pre}}, a_{k+1}]$, set $\+{I}^3_{v_i} = 0$. We construct $\+{I}^3$ in $O(\mu_2^2 + \mu_1)$ time. The array $\+I^3$ satisfies the following property.

\begin{lemma}~\label{lem: Dis I_3}
    We can construct $\+I^3$ in\cancel{$O(\mu_2^2)$} $O(\mu_1 + \mu_2^2)$ time such that for any $v_i\in [v_{a_{k}}, v_{a_{k+1}}]$, if $v_i$ belongs to type 3, $\+I^3_{v_i} = 1$, and every vertex $v_x$ covered by  $\+I^3$ forms a $((7+\varepsilon)\delta)$-reachable pair with $w_{b_{l+1}}$. 
\end{lemma}
\begin{proof}
The running time can be derived from the construction procedure. We focus on proving the property of $\+I^3$. 

Suppose $v_i$ belongs to type 3. By definition, there is a vertex $w_p$ covered by $\+{DA}^{v_{a_k}}_l$ such that $\tilde{d}_F(\tau[v_{a_k}, v_i],$ $ \sigma[w_p, w_{b_{l+1}}]) \leq \delta$. According to \Cref{lem: discrete simplification}, we have $\tilde{d}_F(\tau[v_{a_k}, v_i], \zeta_k^{\text{pre}}[\Mpre(v_{a_k}), $ $\Mpre(v_i)] \leq (1+\epsilon)\delta$. By triangle inequality, we have $\tilde{d}_F(\zeta_k^{\text{pre}}[\Mpre(v_{a_k}), \Mpre(v_{i})], \sigma[w_p, w_{b_{l+1}}]) \leq (2+\epsilon)\delta$. According to \Cref{lem: DisWaveFront proverty}, we have $\+{DW}_{\Mpre(v_i)}^{w_{b_{l+1}}} = 1$. Thus, we have $\+{I}^3_{v_i} = 1$.

Next, we will prove that every vertex $v_i$ with $\+{I}^3_{v_i} = 1$ can form a  $((7+\varepsilon)\delta)$-reachable pair with $w_{b_{l+1}}$. Suppose $\+{I}^3_{v_i} = 1$, we have $\+{DW}_{\Mpre(v_i)}^{w_{b_{l+1}}} = 1.$ Since all elements of $\+{S}$ are $0$, according to \Cref{lem: DisWaveFront proverty}, there exists a vertex $w_p$ covered by $\+{DA}^{v_{a_k}}_l$ such that $\tilde{d}_{F}(\zeta_k^{\text{pre}}[\Mpre(v_{a_k}), \Mpre(v_i)],$ $ \sigma[w_{p}, w_{b_{l+1}}]) \leq (2+\epsilon)\delta$. According to \Cref{lem: discrete simplification}, we have $\tilde{d}_F(\tau[v_{a_k}, v_i], \zeta_k^{\text{pre}}[\Mpre(v_{a_k}), \Mpre(v_i)] \leq (1+\epsilon)\delta$. By triangle inequality, we have $\tilde{d}_F(\tau[v_{a_{k}}, v_i], \sigma[w_p, w_{b_{l+1}}]) \leq (3+2\epsilon)\delta < (7+\varepsilon)\delta$. Since $w_p$ is covered by $\+{DA}^{v_{a_k}}_l$, we have $\tilde{d}_F(\tau[v_1, v_{a_k}], \sigma[w_1, w_p]) \leq (7+\varepsilon)\delta$. We can construct a matching between $\tau[v_1, v_i]$ and $\sigma[w_1, w_{b_{l+1}}]$ by concatenating the discrete Fr\'echet matching between $\tau[v_{a_k}, v_{i}]$ and $\sigma_l$ to the discrete Fr\'echet matching between $\tau[v_1, v_{a_k}]$ and $\sigma[w_1, w_{b_l}]$. The matching realizes a distance at most $(7+\varepsilon)\delta$. Thus, we have $(v_i, w_{b_{l+1}})$ is a $(7+\varepsilon)$-reachable pair.
\end{proof}

\noindent\textbf{Construction of $\+I^4$.} 	For every $\sigma_{l,r}$ and a constant $c>0$ whose value will be specified later, we sample a set of $2c\log n\cdot \mu_1/\omega$ vertices of $\tau_k$ independently with replacement. We then use \Cref{lem: marked-vertex} with $\sigma_{l,r}$ and every sampled vertex. It takes $O(\mu_2^5\mu_1\cdot\log n/(\omega\mu_3))$ time. In the case that we succeed in finding $\tau'_{l,r}$ for all $\sigma_{l,r}$ via sampling, we get $(\tau'_{l,r})_{r\in [(\mu_2-1)/\mu_3]}$ as a surrogate of $\sigma_l$.. 

Otherwise, take some $\sigma_{l,r}$ for which we fail to get a $\tau'_{l,r}$. It implies that no sampled vertex is marked by $\sigma_{l,r}$. 
A  subcurve $\sigma'$ of $\sigma_l$ is called $\omega$-dense if it marks at least $\omega$ vertices of $\tau_k$.
By Lemma~\ref{lem:Chernoff}, $\sigma_{l,r}$ is not $\omega$-dense with probability at least $1-n^{-10}$ for sufficiently large $c$. Conditioned on that $\sigma_{l,r}$ is not $\omega$-dense. We identify all vertices $v_x$ in $\tau_k$ such that there is a vertex $v_y$ in $\tau_k$ with $v_y\le_\tau v_x$ and $d_F(\tau[v_y, v_x], \sigma_{l,r})\le \delta$. 
Let $\+S$ be an array induced by $\tau_k$ such that $\+S_{v_i}=1$ if $v_i \in \+{B}(w_{b_{l, r}}, \delta)$.
Let $\+{S}'$ be an array induced by $\sigma_{l, r}$ with all elements being $0$.
We invoke {\sc DisWaveFront}$(\tau_k, \sigma_{l,r}, \delta, \+S, \+{S}')$ to get the output array $\+{DW}^{w_{b_{l, r+1}}}$ for the vertex $w_{b_{l, r+1}}$ in $O(\mu_1\mu_3)$ time. Let $V$ contain all vertices $v_i$ such that $\+{DW}^{w_{b_{l, r+1}}}_{v_i} = 1$. By definition, every vertex in $V$ is marked by $\sigma_{l,r}$. Hence, $|V|<\omega$. $V$ must contain a vertex marked by $\sigma_l$ because the discrete Fr\'echet matching between $\sigma_l$ and any subcurve of $\tau_k$ must matches $\sigma_{l,r}$ to some subcurve of $\tau_k$. We run algorithm in  \Cref{lem: marked-vertex} on $\sigma_l$ and every vertex in $V$ to find $\tau'$. It takes $O(\omega\mu_2^4)$ time.

Hence, with probability at least $1-n^{-10}$, we either find a single subcurve $\tau'$ of $\tau_k$ such that $\tilde{d}_F(\tau', \sigma_l)\le (3+2\epsilon)\delta$ or subcurves $\tau'_{l,r}$ of $\tau_k$ for all $r\in [(\mu_2-1)/\mu_3]$ such that $\tilde{d}_F(\tau'_{l,r}, \sigma_{l,r})\le (3+2\epsilon)\delta$. When we get $\tau'$, we execute a query {\sc DisCover}$(\tau', (4+2\epsilon)\delta, \+{DA}_k^{w_{b_l}})$ and set $\+I^4$ to be the answer in $O(\mu_1)$ time. 

When we get $\tau'_{l,r}$ for all $\sigma_{l,r}$'s, we execute {\sc DisCover} queries with $\tau'_{l,1}, \tau'_{l,2}, \ldots, \tau'_{l, (\mu_2-1)/\mu_3}$ progressively. Specifically, we first carry out a query {\sc DisCover}$(\tau'_{l,1}, (4+2\epsilon)\delta, \+{DA}_k^{w_{b_l}})$ in $O(\mu_1)$ time to get an array $\+S^1$. For any $r\in [2, (\mu_2-1)/\mu_3]$, suppose that we have gotten $\+S^{r-1}$, we proceed to execute a query {\sc DisCover}$(\tau'_{l,r}, (4+2\epsilon)\delta, \+S^{r-1})$ to get $\+S^r$. In the end, we set $\+I^4=\+S^{(\mu_2-1)/\mu_3}$. It takes $O(\mu_1\mu_2/\mu_3)$ time. Thus, we get the flowing lemma.

\begin{lemma}\label{lem: Dis I_4}
    We can construct $\+I^4$ in  $O(\mu_1(\mu_3+\mu_2/\mu_3+\mu_2^5\log n/(\omega\mu_3))+\omega\mu_2^4)$  time such that for any $i\in [a_k, a_{k+1}]$, if $v_i$ belongs to type 4, $\+{I}^4_{v_i} = 1$ with probability at least $1-n^{-10}$, and every vertex $v_x$ covered by $\+I^4$ forms a $((7+\varepsilon)\delta)$-reachable pair with $w_{b_{l+1}}$.
\end{lemma}
\begin{proof}
The running time can be derived from the construction procedure. We focus on proving the property of $\+I^4$. 

Suppose that there is a vertex $v_i$ of type 4. It implies that $\sigma_l$ is within a discrete Fr\'echet distance $\delta$ to some subcurve of $\tau_k$. Hence, with probability at least $1-n^{-10}$, we can either find a single subcurve $\tau'$ of $\tau_k$ such that $\tilde{d}_F(\tau', \sigma_l)\le (3+2\epsilon)\delta$ or subcurves $\tau'_{l,r}$ of $\tau_k$ for every $r\in [(\mu_2-1)/\mu_3]$ such that $\tilde{d}_F(\tau'_{l,r}, \sigma_{l,r})\le (3+2\epsilon)\delta$.

When we get a single subcurve $\tau'$ of $\tau_k$ such that $\tilde{d}(\tau', \sigma_l) \leq (3+2\epsilon)\delta$. We execute a query {\sc DisCover}$(\tau', (4+2\epsilon)\delta, \+{DA}_k^{w_{b_l}})$ and we find all vertices $v_y$ that satisfies there is vertex $v_x \leq_{\tau} v_y$ covered by $\+{DA}^{w_{b_l}}_k$ such that $\tilde{d}_F(\tau' ,\tau[v_x, v_y]) \leq (4+2\epsilon)\delta$. We set $\+{I}^4_{v_y} = 1$. Suppose $v_i$ belongs to type 4, according to the definition, there is a vertex $v_p \in \tau_k$ covered by $\+{DA}^{w_{b_l}}_k$ such that $v_p \leq_\tau v_i$ and $\tilde{d}_F(\tau[v_p, v_i], \sigma_l) \leq \delta$. Since $\tilde{d}_F(\tau', \sigma_l) \leq (3 + 2\epsilon)\delta$, we have $\tilde{d}_F(\tau', \tau[v_p, v_i]) \leq (4+2\epsilon)\delta$ by triangle inequality. Thus, $\+{I}^4_{v_i} = 1$. 
    
When we get subcurves $\tau'_{l, r}$ of $\tau_k$ for all $r \in [(\mu_2 - 1) / \mu_3]$, we will construct a sequence $\+S^1, \+S^2,\ldots, $ $\+S^{(\mu_2-1)/\mu_3}$ and set $\+I^4=\+S^{(\mu_2-1)/\mu_3}$.
We prove that $\+S^r$ contains all vertex $v_x\in \tau_k$ satisfying that there is a vertex $v_y\in \+{DA}^{w_{b_l}}_{k}$ with $v_y\le_\tau v_x$ and $d_F(\tau[v_y, v_x], \sigma[w_{b_l}, w_{b_l, r+1}])\le \delta$ by induction on $r$. 

When $r=1$, the analysis is the same as the analysis in the case that we find $\tau'$. When $r\ge 2$, assume that $\+S^{r-1}$ satisfies the property. Take any vertex $v_x\in \tau_k$ satisfying that there is a vertex $v_y\in \+{DA}^{w_{b_l}}_k$ with $v_y\le_\tau v_x$ and $\tilde{d}_F(\tau[v_y, v_x], \sigma[w_{b_l}, w_{b_l, r+1}])\le \delta$.
The discrete Fr\'echet matching between $\tau[v_y, v_x]$ and $\sigma[w_{b_l}, w_{b_l, r+1}]$ must matching the vertex $w_{b_l, r}$ to some vertex $v_z\in \tau[v_y, v_x]$. It implies that $\tilde{d}_F(\tau[v_y,v_z], \sigma[w_{b_l}, w_{b_l,r}])\le \delta$ and $\tilde{d}_F(\tau[v_z, v_x], \sigma[w_{b_l, r}, w_{b_l, r+1}])\le \delta$. By the induction hypothesis, the vertex $v_z$ is covered by $\+S^{r-1}$. By the construction procedure, $\+S^r$ is the answer for {\sc DisCover}$(\tau'_{l,r}, (4+2\epsilon)\delta, \+S^{r-1})$. Since $\sigma_{l,r}=\sigma[w_{b_l,r}, w_{b_l, r+1}]$ is within a discrete Fr\'echet distance $(3+2\epsilon)\delta$ to $\tau'_{l,r}$, by the triangle inequality, $\tilde{d}_F(\tau'_{l,r}, \sigma_{l,r})\le (4+2\epsilon)\delta$. Hence, $v_x$ must be covered by $\+S^r$ by the definition of the query {\sc DisCover} as $v_z$ is covered by $\+S^{r-1}$. We finish proving the property for all $\+S^r$'s. Since $\+I^4=\+S^{(\mu_2-1)/\mu_3}$ and $w_{b_l, (\mu_2-1)/\mu_3+1}=w_{b_{l+1}}$, all vertices belongs to type 4 are covered by $\+I^4$.

Next, we prove that all points covered by $\+I^4$ can form $((7+\varepsilon)\delta)$-reachable pairs with $w_{b_{l+1}}$.
In the case that we find $\tau'$, $\+I^4$ is the answer of the query {\sc DisCover}$(\tau', (4+2\epsilon)\delta, \+{DA}^{w_{b_l}}_{k})$.
For all vertex $v_i$ with $\+{I}^4_{v_i} = 1$, there is a vertex $v_x \leq_\tau v_i$ covered by $\+{DA}_{k}^{w_{b_l}}$ such that $\tilde{d}_F(\tau', \tau[v_x, v_i]) \leq (4+2\epsilon)\delta$. Since $\epsilon = \varepsilon/10$, by triangle inequality, we have $\tilde{d}_F(\tau[v_x, v_i], \sigma_l) \leq (7+4\epsilon)\delta < (7+\varepsilon)\delta$.
Since $v_x$ is covered by $\+{DA}_{k}^{w_{b_l}}$, we have $\tilde{d}_F(\tau[v_1, v_x], \sigma[w_1, w_{b_l}]) \leq (7+\varepsilon)\delta$. We can construct a matching between $\tau[v_1, v_i]$ and $\sigma[w_1, w_{b_{l+1}}]$ by concatenating the discrete Fr\'echet matching between $\tau[v_x, v_i]$ and $\sigma_l$ to the discrete Fr\'echet matching between $\tau[v_1, v_x]$ and $\sigma[w_1, w_{b_l}]$. The matching realizes a distance at most $(7+\varepsilon)\delta$. Thus, we have $(v_{i}, w_{b_{l+1}})$ is a $((7 + \varepsilon)\delta)$-reachable pair.

When we get $\tau'_{l,1}, \tau'_{l,2}, \ldots, \tau'_{l, (\mu_2-1)/\mu_3}$, we will construct a sequence $\+S^1, \+S^2,\ldots, \+S^{(\mu_2-1)/\mu_3}$ and set $\+I_4=\+S^{(\mu_2-1)/\mu_3}$. We prove that  all vertices $v_x$ in $\+S^r$ can form $((7+\varepsilon)\delta)$-reachable pairs with $w_{b_l, r+1}$ by induction on $r$.

When $r=1$, the analysis is the same as the analysis in the case that we find $\tau'$. When $r\ge 2$, assume that $\+S^{r-1}$ satisfies the property. Take any vertex $v_x$ covered by $\+S^r$. Since $\+S^r$ is the answer for {\sc DisCover}$(\tau'_{l,r},(4+2\epsilon)\delta,\+S^{r-1})$, by the definition of the query {\sc DisCover}, we can find a vertex $v_y$ covered by $\+S^{r-1}$ such that $v_y\le_\tau v_x$ and $\tilde{d}_F(\tau[v_y,v_x], \tau'_{l,r}) \le (4+2\epsilon)\delta$. Given that $\tilde{d}_F(\tau'_{l,r}, \sigma_{l,r})\le (3+2\epsilon)\delta$, we have $\tilde{d}_F(\tau[v_y,v_x], \sigma_{l,r})\le (7+4\epsilon)\delta$ by the triangle inequality. Note that $\epsilon=\varepsilon/10$. Hence, $\tilde{d}_F(\tau[v_y,v_x], \sigma_{l,r})\le (7+\varepsilon)\delta$. By the induction hypothesis, it holds that $\tilde{d}_F(\tau[v_1, v_y], \sigma[w_1, w_{b_l,r}])\le (7+\varepsilon)\delta$. Hence, we can construct a matching between $\tau[v_1, v_x]$ and $\sigma[w_1, w_{b_l, r+1}]$ by concatenating the discrete Fr\'echet matching between $\tau[v_1, v_y]$ and $\sigma[w_1, w_{b_l, r}]$ and the discrete Fr\'echet matching between $\tau[v_y,v_x]$ and $\sigma_{l,r}$. The matching realizes a distance at most $(7+\varepsilon)\delta$. Therefore, $\tilde{d}_F(\tau[v_1, v_x], \sigma[w_1, w_{b_l, r+1}])\le (7+\varepsilon)\delta$. We finish proving the property for all $\+S^r$'s.  Since $\+I^4=\+S^{(\mu_2-1)/\mu_3}$ and $w_{b_l, (\mu_2-1)/\mu_3+1}=w_{b_{l+1}}$, all vertices covered by $\+I^4$ can form $((7+\varepsilon)\delta)$-reachable pairs with $w_{b_{l+1}}$.
\end{proof}

Next, we construct $\+{DA}^{v_{a_{k+1}}}_{l}$ based on $\+I^1$ and $\+I^2$ and construct $\+{DA}_{k}^{w_{b_{l+1}}}$ based on $\+I^3$ and $\+I^4$. For every $j\in [b_l, b_{l+1}]$, we set $\+{DA}^{v_{a_{k+1}}}_{w_j} = 0$ if both $\+I^1_{w_j}$ and $\+I^2_{w_j}$ are $0$. Otherwise, we set $\+{DA}^{v_{a_{k+1}}}_{w_j} = 1$ . 
We can construct $\+{DA}_{k}^{w_{b_{l+1}}}$ in a similar way. It takes $O(\mu_1)$ time in total. We finish implementing {\sc DisReach}. Putting Lemma~\ref{lem: marked-vertex},~\ref{lem:dis-cover},~\ref{lem: Dis I_1},~\ref{lem: Dis I_2},~\ref{lem: Dis I_3} and~\ref{lem: Dis I_4} together,  and setting $\mu_1=m^{0.24}$, $\mu_2=m^{0.02}$, $\mu_3=m^{0.01}$, and $\omega=m^{0.12}$, we have the following lemma.

\begin{lemma}\label{lem: Disreach}
    There is an algorithm that preprocesses every $\tau_k$ in\cancel{$O(\mu_1^5)$} $O(m^{1.2})$ time to implement the procedure {\sc DisReach} in \cancel{$O(\mu_1(\mu_3+\mu_2/\mu_3))$} $O(m^{0.25})$ time with success probability at least $1-n^{-10}$.
\end{lemma}
\noindent {\bf Dynamic programming for using \pmb{\sc DisReach.}} We show how {\sc DisReach} helps us realize a subquadratic decision procedure.
We first compute all reachable pairs of $v_1$ and $w_1$ in $O(n+m)$ time. We then generate $\+{DA}^{v_1}_{l}$ and $\+{DA}^{w_1}_{k}$. For every $l\in [(m-1)/\mu_2]$ and every $j\in [b_l, b_{l+1}-1]$, set $\+{DA}^{v_1}_{w_j} = \+{DR}^{v_1}_{w_j}$. It takes $O(m)$ time in total. We can generate $\+{DA}^{w_1}_{k}$ for $w_1$ and all $k\in [(n-1)/\mu_1]$ in $O(n)$ time in the same way.

We are now ready to invoke {\sc DisReach}$(\tau_1, \sigma_1, \+{DA}^{v_{a_1}}_{1}, \+{DA}_{1}^{w_{b_1}})$ to get $\+{DA}^{v_{a_2}}_{1}$ and $\+{DA}_{1}^{w_{b_2}}$. We proceed to invoke the procedure {\sc DisReach}$(\tau_2, \sigma_1, \+{DA}^{v_{a_2}}_{1}, \+{DA}_{2}^{w_{b_1}})$.  We can repeat the process for all $k\in [(n-1)/\mu_1]$ to get $\+{DA}_{k}^{w_{b_2}}$ of $w_{b_2}$ for all $k\in [(n-1)/\mu_1]$.

We are now ready to invoke {\sc Reach}$(\tau_1, \sigma_2, \+{DA}^{v_{a_1}}_{2}, \+{DA}_{1}^{w_{b_2}})$. We can repeat the process above to get $\+{DA}_{k}^{w_{b_3}}$ of $w_{b_3}$ for all $k\in [(n-1)/\mu_1]$. In this way, we can finally get a $(7+\varepsilon)$-approximate reachability set for $w_m$ after calling {\sc DisReach} for $mn/(\mu_1\mu_2)$ times. Finally, we finish the decision by checking whether $\+{DA}^{w_m}_{v_n} = 1$ in $O(1)$ time. If so, return yes; otherwise, return no.

As shown above, we can call {\sc DisReach} for $mn/(\mu_1\mu_2)$ times to get a decision procedure. It succeeds if all these invocations of {\sc DisReach} succeed. The total running time is $O(nm^{0.99})$.

\begin{theorem}\label{thm:Dis Frechet}
    Given two polygonal curves $\tau$ and $\sigma$ in $\mathbb{R}^d$ for some fixed $d$, there is a randomized $(7+\varepsilon)$-approximate decision procedure for determining $\tilde{d}_F(\tau, \sigma)$ in $O(nm^{0.99})$ time with success probability as least $1-n^{-7}$. 
\end{theorem}

We use the approach developed in~\cite{bringmann2016approximability} to get an approximation algorithm. It gets an $\alpha$-approximate algorithm for computing $\tilde{d}_F(\tau, \sigma)$ by carrying out $O(\log n)$ instances of any $\alpha$-approximate decision procedure. In our case, the approximate algorithm succeeds if all these instances succeed.

\begin{theorem}\label{thm:approx_Dis_Frechet}
    Given two sequences $\tau$ and $\sigma$ in $\mathbb{R}^d$ for some fixed $d$, there is a randomized $(7+\varepsilon)$-approximate algorithm for computing $\tilde{d}_F(\tau, \sigma)$ in $O(nm^{0.99}\log n)$ time with success probability at least $1-n^{-6}$.
\end{theorem}
	}

	\bibliographystyle{alpha}
	\bibliography{ref}
	
\end{document}